\documentclass[11pt]{article}

\usepackage[letterpaper,margin=1in]{geometry}
\usepackage[affil-it]{authblk}
\usepackage{cite}
\usepackage[unicode]{hyperref}
\usepackage[utf8]{inputenc}
\usepackage[english]{babel}
\usepackage{csquotes}
\usepackage{amsmath,amssymb,amsthm}
\usepackage{enumerate}
\usepackage{commath}
\usepackage{bm}
\usepackage{color}

\numberwithin{equation}{section}

\theoremstyle{definition}
\newtheorem{definition}{Definition}[section]
\newtheorem{example}{Example}[section]

\theoremstyle{plain}
\newtheorem{theorem}{Theorem}[section]
\newtheorem{proposition}{Proposition}[section]
\newtheorem{corollary}{Corollary}[section]

\newcommand{\Bool}{\{0,1\}}
\newcommand{\Words}{{\Bool^*}}
\newcommand{\WordsLen}[1]{{\Bool^{#1}}}

% operators that are separated from the operand by a space
\DeclareMathOperator{\Sgn}{sgn}
\DeclareMathOperator{\Supp}{supp}

\DeclareMathOperator{\Img}{Im}

% operators that require brackets
\DeclareMathOperator{\Prb}{Pr}
\DeclareMathOperator{\E}{E}
\DeclareMathOperator{\Var}{Var}

% operators that require parentheses

\DeclareMathOperator{\Ev}{ev}

% special symbols that are not really operators
\DeclareMathOperator{\T}{T}
\DeclareMathOperator{\R}{r}
\DeclareMathOperator{\A}{a}
\DeclareMathOperator{\M}{M}
\DeclareMathOperator{\UM}{UM}
\DeclareMathOperator{\Un}{U}
\DeclareMathOperator{\En}{c}

\newcommand{\Dtv}{\operatorname{d}_{\textnormal{tv}}}

\newcommand{\Argmin}[1]{\underset{#1}{\operatorname{arg\,min}}\,}
\newcommand{\Argmax}[1]{\underset{#1}{\operatorname{arg\,max}}\,}

\newcommand{\Nats}{\mathbb{N}}
\newcommand{\Ints}{\mathbb{Z}}
\newcommand{\Rats}{\mathbb{Q}}
\newcommand{\Reals}{\mathbb{R}}

\newcommand{\NatPoly}{\Nats[K_0, K_1 \ldots K_{n-1}]}
\newcommand{\NatPolyJ}{\Nats[J_0, J_1 \ldots J_{n-2}]}
\newcommand{\NatFun}{\Nats^n \rightarrow}

\newcommand{\Estr}{\bm{\lambda}}

\newcommand{\Lim}[1]{\lim_{#1 \rightarrow \infty}}

\newcommand{\Abs}[1]{\lvert #1 \rvert}
\newcommand{\Norm}[1]{\lVert #1 \rVert}
\newcommand{\Floor}[1]{\lfloor #1 \rfloor}
\newcommand{\Ceil}[1]{\lceil #1 \rceil}
\newcommand{\Chev}[1]{\langle #1 \rangle}

\newcommand{\Dist}{\mathcal{D}}
\newcommand{\GrowR}{\Gamma_{\mathfrak{R}}}
\newcommand{\GrowA}{\Gamma_{\mathfrak{A}}}
\newcommand{\Grow}{\Gamma:=(\GrowR,\GrowA)}
\newcommand{\MGrow}{\mathrm{M}\Gamma}
\newcommand{\Fall}{\mathcal{F}}
\newcommand{\EG}{\Fall(\Gamma)}
\newcommand{\ESG}{\Fall^\sharp(\Gamma)}
\newcommand{\EMG}{\Fall(\MGrow)}

\newcommand{\BoolR}[1]{\Bool^{\R_{#1}(K)}}

\newcommand{\GammaPoly}{\Gamma_{\textnormal{poly}}}
\newcommand{\GammaLog}{\Gamma_{\textnormal{log}}}
\newcommand{\FallU}{{\Fall_{\textnormal{uni}}^{(n)}}}
\newcommand{\FallUt}[1]{{\Fall_{\textnormal{uni}}^{(#1)}}}
\newcommand{\FallM}{\Fall_{\textnormal{mon}}^{(n)}}

\newcommand{\Alg}{\xrightarrow{\textnormal{alg}}}
\newcommand{\Markov}{\xrightarrow{\textnormal{mk}}}
\newcommand{\Scheme}{\xrightarrow{\Gamma}}
\newcommand{\MScheme}{\xrightarrow{\MGrow}}

\begin{document}

\title{Optimal Polynomial-Time Estimators: A Bayesian Notion of Approximation Algorithm}

%\author{Vanessa Kosoy\thanks{E-mail: \texttt{vanessa.kosoy@intelligence.org}, Telephone: \texttt{+97237448629}}, Alexander Appel\thanks{E-mail: \texttt{alexappel8@gmail.com}, Telephone: \texttt{+17752975298}}}
\author{Vanessa Kosoy\thanks{\texttt{vanessa.kosoy@intelligence.org}}, Alexander Appel\thanks{\texttt{alexappel8@gmail.com}}}

\date{}

\maketitle

\begin{abstract}
We introduce a new concept of approximation applicable to decision problems and functions, inspired by Bayesian probability. From the perspective of a Bayesian reasoner with limited computational resources, the answer to a problem that cannot be solved exactly is uncertain and therefore should be described by a random variable. It thus should make sense to talk about the expected value of this random variable, an idea we formalize in the language of average-case complexity theory by introducing the concept of \enquote{optimal polynomial-time estimators.} We prove some existence theorems and completeness results, and show that optimal polynomial-time estimators exhibit many parallels with \enquote{classical} probability theory.%
\end{abstract}%

\pagebreak

\tableofcontents

\setcounter{section}{-1}

\section{Introduction}

\subsection{Motivation}
\label{subsec:mot}

Imagine you are strolling in the city with a friend when a car passes by with the license plate number \enquote{7614829}. Your friend proposes a wager, claiming that the number is composite and offering 10 : 1 odds in your favor. Knowing that your friend has no exceptional ability in mental arithmetic and that it's highly unlikely they saw this car before, you realize they are just guessing. Your mental arithmetic is also insufficient to test the number for primality, but is sufficient to check that ${7614829 \equiv 1 \pmod{3}}$ and $\frac{1}{\ln 7614829} \approx 0.06$. Arguing from the prime number theorem and observing that 7614829 is odd and is divisible neither by 3 nor by 5, you conclude that the probability 7614829 is prime is ${\frac{1}{\ln 7614829} \times 2 \times \frac{3}{2} \times \frac{5}{4} \approx 22\%}$. Convinced that the odds are in your favor, you accept the bet\footnote{Alas, $7614829 = 271 \times 28099$.}.

From the perspective of frequentist probability, the question \enquote{what is the probability 7614829 is prime?} seems meaningless. It is either prime or not, so there is no frequency to observe (unless the frequency is 0 or 1). From a Bayesian perspective, probability represents a degree of confidence; however, in classical Bayesian probability theory it is assumed that the only source of uncertainty is lack of information. The number 7614829 already contains all information needed to determine whether it is prime, so the probability again has to be 0 or 1. However, real life uncertainty is not only information-theoretic but also complexity-theoretic. Even when we have all of the information needed to obtain the answer, our computational resources are limited, and so we remain uncertain. The rigorous formalization of this idea is the main goal of the present work.

The idea of assigning probabilities to purely mathematical questions was studied by several authors\cite{Gaifman_2004,Hutter_2013,Demski_2012,Christiano_2014,Garrabrant_2015}, mainly in the setting of formal logic. That is, their approach was looking for functions from the set of sentences in some formal logical language to $[0,1]$. However, although there is a strong intuitive case for assigning probabilities to sentences like

\[\varphi_1:=\text{\enquote{7614829 is prime}}\]

it is much less clear there is a meaningful assignment of probabilities to sentences like 

\[\varphi_2 := \text{\enquote{there are no odd perfect numbers}}\] 

or (even worse) 

\[\varphi_3 := \text{``there is no cardinality } \kappa \text{ s.t. } \aleph_0 < \kappa < 2^{\aleph_0} \text{"}\]

A wager on $\varphi_1$ can be resolved in a predetermined finite amount of time (the amount of time it takes to test it directly). On the other hand, it is unknown how long the resolution of $\varphi_2$ will take. It is possible that there is an odd perfect number but finding it (or otherwise becoming certain of its existence) will take a very long time. It is also possible there is no odd perfect number, a fact that cannot be directly verified because of its infinite nature. It is possible that there is a proof of $\varphi_2$ within some formal theory, but accepting such a proof as resolution requires us to be completely certain of the consistency of the theory (whereas it is arguable that the consistency of formal mathematical theories, especially more abstract theories like ZFC, is itself only known empirically and in particular with less than absolute certainty). Moreover, there is no knowing a priori whether a proof exists or how long it will take to find it. For $\varphi_3$ there is no way to \enquote{directly} verify either the sentence or its negation, and it is actually known to be independent of ZFC.

In the present work we avoid choosing a specific category of mathematical questions\footnote{We do require that these questions can be represented as finite strings of bits.}. Instead, we consider the abstract setting of arbitrary distributional decision problems. This leads to the perspective that an assignment of probabilities is a form of \emph{approximate} solution to a problem. This is not the same sense of approximation as used in optimization problems, where the approximation error is the difference between the ideal solution and the actual solution. Instead, the approximation error is the prediction accuracy of our probability assignment. This is also different from average-case complexity theory, where the solution is required to be exact on most input instances. However, the language of average-case complexity theory (in particular, the concept of a distributional decision problem) turns out to be well-suited to our purpose.
The concept of \enquote{optimal polynomial-time estimator} that arises from the approach turns out to behave much like probabilities, or more generally expected values, in \enquote{classical} probability theory. They display an appropriate form of calibration. The \enquote{expected values} are linear in general and multiplicative for functions that are independent in an appropriate sense. There is a natural parallel of conditional probabilities. For simple examples constructed from one-way functions we get the probability values we expect. They are also well behaved in the complexity-theoretic sense that a natural class of reductions transforms optimal polynomial-time estimators into optimal polynomial-time estimators, and complete problems for these reductions exist for important complexity classes.

Optimal polynomial-time estimators turn out to be unique up to a certain equivalence relation. The existence of optimal polynomial-time estimators depends on the specific variety you consider. We show that in the non-uniform case (allowing advice) there is a variety of optimal polynomial-time estimators that exist for completely arbitrary problems. Uniform optimal polynomial-time estimators of this kind exist for a certain class of problems we call \enquote{samplable} which can be very roughly regarded as an average-case analogue of $\textsc{NP} \cap \textsc{coNP}$. More generally mapping the class of problems which admit optimal polynomial-time estimators allows for much further research.

\subsection{Overview}

Consider a language ${L \subseteq \Words}$ and a family ${\{\Dist^k\}}_{k \in \Nats}$ where each ${\Dist^k}$ is a probability distribution on ${\Words}$. We associate with $L$ its characteristic function $\chi_L: \Words \rightarrow \Bool$. A pair ${(\Dist,L)}$ is called a \emph{distributional decision problem}\cite{Bogdanov_2006}. Our goal is defining and studying the probabilities of \enquote{events} of the form ${x \in L}$\footnote{We will actually consider the more general case of a function ${f: \Words \rightarrow \Reals}$ and the \enquote{expected value} of ${f(x)}$, but for most purposes there is no difference of principle.}  associated with the uncertainty resulting from limited computational resources. (Specifically, we will consider the resources of time, randomness and advice.)

The distributional complexity class ${\textsc{Heur}_{\textnormal{neg}}\textsc{P}}$ is defined as the set of distributional decision problems which admit a polynomial-time heuristic algorithm with negligible error probability\cite{Bogdanov_2006}. That is, ${(\Dist, L) \in \textsc{Heur}_{\textnormal{neg}}\textsc{P}}$ iff there is ${A: \Nats \times \Words \Alg \Bool}$ (an algorithm which takes input in ${\Nats \times \Words}$ and produces output in ${\Bool}$) s.t. ${A(k,x)}$ runs in time polynomial in ${k}$ and ${\Prb_{x \sim \Dist^k}[A(k,x) \ne \chi_L(x)]}$ is a negligible function of ${k}$. We have the following equivalent condition. ${(\Dist, L) \in \textsc{Heur}_{\textnormal{neg}}\textsc{P}}$ iff there is ${P: \Nats \times \Words \Alg \Rats}$ s.t. ${P(k,x)}$ runs in time polynomial in ${k}$ and ${\E_{x \sim \Dist^k}[(P(k,x)-\chi_L(x))^2]}$ is a negligible function of ${k}$. In the language of the present work, such a ${P}$ is a called an \enquote{${\Fall_{\text{neg}}(\Gamma_0^1, \Gamma_0^1)}$-perfect polynomial-time estimator for ${(\Dist,\chi_L)}$} (see Definition~\ref{def:perfect}, Example~\ref{exm:fall_neg} and Example~\ref{exm:gamma_zero}).

Our main objects of study are algorithms satisfying a related but weaker condition. Namely, we consider ${P}$ s.t. its error w.r.t. ${\chi_L}$ is not negligible but is \emph{minimal up to a negligible function}. That is, we require that for any ${Q: \Nats \times \Words \Alg \Rats}$ s.t. ${Q(k,x)}$ also runs in time polynomial in ${k}$, there is a negligible function ${\varepsilon(k)}$ s.t. 

\[\E_{x \sim \Dist^k}[(P(k,x)-\chi_L(x))^2] \leq \E_{x \sim \Dist^k}[(Q(k,x)-\chi_L(x))^2] + \varepsilon(k)\]

Such a ${P}$ is called an \enquote{${\Fall_{neg}(\Gamma_0^1,\Gamma_0^1)}$-optimal polynomial-time estimator for ${(\Dist,\chi_L)}$.} More generally, we replace negligible functions by functions that lie in some space ${\Fall}$ which can represent different asymptotic conditions (see Definition~\ref{def:fall}), and we consider estimators that use certain asymptotic amounts of randomness and advice represented by a pair ${\Gamma}$ of function spaces (see Definition~\ref{def:grow}). This brings us to the concept of an \enquote{${\EG}$-optimal polynomial-time estimator} (see Definition~\ref{def:op}).

Denote ${\textsc{OP}[{\EG}]}$ the set of distributional decision problems that admit ${\EG}$-optimal polynomial-time estimators. Obviously ${\textsc{OP}[\Fall_{neg}(\Gamma_0^1,\Gamma_0^1)] \supseteq \textsc{Heur}_\text{neg}\textsc{P}}$. Moreover, if one-way functions exist the inclusion is proper since it is possible to use any function with a hard-core predicate to construct an example where the constant ${\frac{1}{2}}$ is an ${\Fall_{neg}(\Gamma_0^1,\Gamma_0^1)}$-optimal polynomial-time estimator (see Theorem~\ref{thm:hard_core}). Thus, it seems that we constructed novel natural distributional complexity classes.

The distributional complexity class ${\textsc{HeurP}}$ is defined as the set of distributional decision problems which admit a polynomial-time heuristic scheme\cite{Bogdanov_2006}. That is, ${(\Dist,L) \in \textsc{HeurP}}$ iff there is\\ ${S: \Nats^2 \times \Words \Alg \Bool}$ s.t. ${S(K_0,K_1,x)}$ runs in time polynomial in ${K_0, K_1}$ and\footnote{We slightly reformulated the definition given in \cite{Bogdanov_2006}: replaced the rational input parameter ${\delta}$ by the integer input parameter ${K_1}$. The equivalence of the two formulations may be observed via the substitution ${\delta=(K_1+1)^{-1}}$.}\\ ${\Prb_{x \sim \Dist^{K_0}}[S(K_0,K_1,x) \ne \chi_L(x)] \leq (K_1+1)^{-1}}$. Analogously to before, we have the following equivalent condition. ${(\Dist,L) \in \textsc{HeurP}}$ iff there is ${P: \Nats^2 \times \Words \Alg \Rats}$ s.t. ${P(K_0,K_1,x)}$ runs in time polynomial in ${K_0,K_1}$ and for some ${M > 0}$, ${\E_{x \sim \Dist^{K_0}}[(P(K_0,K_1,x)-\chi_L(x))^2] \leq M (K_1+1)^{-1}}$. In the language of the present work, such a ${P}$ is a called an \enquote{${\Fall_{(K_1+1)^{-1}}(\Gamma_0^2, \Gamma_0^2)}$-optimal polynomial-time estimator for ${(\Dist^\eta,\chi_L)}$} (see Example~\ref{exm:fall_zeta}), where ${\Dist^\eta}$ is a two-parameter (${K_0,K_1 \in \Nats}$) family of distributions which is constant along the parameter ${K_1}$.

Again we can consider the corresponding weaker condition, that for all ${Q: \Nats^2 \times \Words \Alg \Rats}$ s.t. ${Q(K_0,K_1,x)}$ runs in time polynomial in ${K_0,K_1}$

\[\E_{x \sim \Dist^{K_0}}[(P(K_0,K_1,x)-\chi_L(x))^2] \leq \E_{x \sim \Dist^{K_0}}[(Q(K_0,K_1,x)-\chi_L(x))^2] + M (K_1+1)^{-1}\]

Such a ${P}$ is called an \enquote{${\Fall_{(K_1+1)^{-1}}(\Gamma_0^2,\Gamma_0^2)}$-optimal polynomial-time estimator for ${(\Dist^\eta,\chi_L)}$.}

It is also useful to introduce the closely related concept of an \enquote{${\ESG}$-optimal polynomial-time estimator} (see Definition~\ref{def:obe_sharp}). For example, an ${\Fall_{(K_1+1)^{-1}}^\sharp(\Gamma_0^2,\Gamma_0^2)}$-optimal polynomial-time estimator ${P}$ has to satisfy that for each ${S: \Nats^2 \times \Words \Alg \Rats}$ that is also polynomial-time there is ${M > 0}$ s.t.

\[\Abs{\E_{x \sim \Dist^{K_0}}[(P(K_0,K_1,x)-\chi_L(x))S(K_0,K_1,x)]} \leq M (K_1+1)^{-1}\]

We show that e.g. every ${\Fall_{(K_1+1)^{-1}}^\sharp(\Gamma_0^2,\GammaLog^2)}$-optimal polynomial-time estimator is in particular an ${\Fall_{(K_1+1)^{-1}}(\Gamma_0^2,\GammaLog^2)}$-optimal polynomial-time estimator (see Theorem~\ref{thm:con_ort}), whereas every\\ ${\Fall_{(K_1+1)^{-1}}(\Gamma_0^2,\GammaLog^2)}$-optimal polynomial-time estimator is in particular an ${\Fall_{(K_1+1)^{-\frac{1}{2}}}^\sharp(\Gamma_0^2,\GammaLog^2)}$-optimal polynomial-time estimator (see Theorem~\ref{thm:ort}). Here, ${\GammaLog^2}$ indicates that we consider algorithms with advice of logarithmic length (see Example~\ref{exm:gamma_log}).

We claim that the concept of an optimal polynomial-time estimator is a formalisation of the intuition outlined in \ref{subsec:mot}. A priori, this is plausible because the mean squared error is a proper scoring rule (the Brier score). Moreover, it is the only scoring rule which is \enquote{proper} for arbitrary expected value assignment rather than only probability assignment. To support this claim, we prove a number of results that form a parallel between probability theory and the theory of optimal polynomial-time estimators:

\begin{itemize}
\item 
According to Borel's law of large numbers, every event of probability ${p}$ occurs with asymptotic frequency ${p}$. Therefore, if some algorithm ${P}$ represents a notion of probability for ${x \in L}$, we expect that given ${a,b \in \Rats}$ and considering ${x \sim \Dist^k}$ s.t. ${a \leq P(x) \leq b}$, the frequency with which ${x \in L}$ is asymptotically (in ${k}$) between ${a}$ and ${b}$. In Bayesian statistics, probability assignments satisfying such a property are said to be \enquote{well calibrated} (see e.g. \cite{Dawid_1982}). With some assumptions about allowed advice and the portion of the distribution falling in the ${[a,b]}$ interval, ${\EG}$-optimal polynomial-time estimators are well calibrated (see Corollary~\ref{crl:calib}). In particular, if the aforementioned portion is bounded from below, this frequency lies in ${[a,b]}$ up to a function of the form ${\sqrt{\varepsilon}}$ for ${\varepsilon \in \Fall}$.
\item
Given ${L_1, L_2 \subseteq \Words}$ s.t. ${L_1 \cap L_2 = \varnothing}$ we expect a reasonable notion of probability to satisfy ${\Prb[x \in L_1 \cup L_2] = \Prb[x \in L_1] + \Prb[x \in L_2]}$. To satisfy this expectation, we show that given ${\Dist}$ any family of distributions, ${P_1}$ an ${\ESG}$-optimal polynomial-time estimator for ${(\Dist,L_1)}$ and ${P_2}$ an ${\ESG}$-optimal polynomial-time estimator for ${(\Dist,L_2)}$, ${P_1 + P_2}$ is an ${\ESG}$-optimal polynomial-time estimator for ${(\Dist, L_1 \cup L_2)}$. This observation in itself is trivial (see Proposition~\ref{prp:linearity}) but applying it to examples may require passing from an ${\EG}$-optimal polynomial-time estimator to an ${\ESG}$-optimal polynomial-time estimator using the non-trivial Theorem~\ref{thm:ort}.
\item
Consider ${L,M \subseteq \Words}$ and suppose we are trying to formalize the conditional probability\\ ${\Prb[x \in L \mid x \in M]}$. There are two natural approaches. One is reducing it to unconditional probability using the identity 

\[\Prb[x \in L \mid x \in M]=\frac{\Prb[x \in L \cap M]}{\Prb[x \in M]}\]

We can then substitute optimal polynomial-time estimators for the numerator and denominator. The other is considering an optimal polynomial time-estimator for a family of conditional distributions. Luckily, these two approach yield the same result. That is, we show that given ${\Dist}$ a family of distributions, ${P_{LM}}$ an optimal polynomial time estimator for ${(\Dist,L \cap M)}$, ${P_M}$ an optimal polynomial-time estimator for ${(\Dist,M)}$ and assuming ${\Dist^K(M)}$ is not too small (e.g. bounded from below), ${P_M^{-1}P_{LM}}$ is an optimal polynomial-time estimator for ${(\Dist \mid M, L)}$ (see Theorem~\ref{thm:cond}). Conversely, given ${P_{L \mid M}}$ an optimal polynomial-time estimator for ${(\Dist \mid M, L)}$, ${P_M P_{L \mid M}}$ is an optimal polynomial-time estimator for ${(\Dist, L \cap M)}$ (see Theorem~\ref{thm:con_cond}).
\item
For some pairs ${L_1, L_2 \subseteq \Words}$, the \enquote{events} ${x \in L_1}$ and ${x \in L_2}$ can be intuitively regarded as independent since learning whether ${x \in L_2}$ doesn't provide any information about whether ${x \in L_1}$ that a polynomial-time algorithm can use. We formalize one situation when this happens and show that in this situation the product of an ${\ESG}$-optimal polynomial-time estimator (in certain form) for ${(\Dist,L_1)}$ by an ${\ESG}$-optimal polynomial-time estimator for ${(\Dist,L_2)}$ is an ${\ESG}$-optimal polynomial-time estimator for ${(\Dist, L_1 \cap L_2)}$ (see Theorem~\ref{thm:mult}). This is precisely analogous to the property of probabilities where the probability of the conjunction of independent events is the product of the separate probabilities. This is one of the central results of the present work.
\end{itemize}

Different complexity classes often have corresponding types of reductions that preserve them. In particular, reductions in average-case complexity theory have to satisfy an extra-condition that intuitively means that typical problem instances should not be mapped to rare problem instances. We define a class of reductions s.t. pull-backs of optimal polynomial-time estimators are optimal polynomial-time estimators. This requires stronger conditions than what is needed for preserving average-case complexity. Namely, a reduction ${\pi}$ of ${(\Dist,L)}$ to ${(\mathcal{E},M)}$ has to be \enquote{pseudo-invertible} i.e. there should be a way to sample ${\Dist \mid \pi^{-1}(y)}$ in polynomial time for ${y}$ sampled from ${\pi_* \Dist}$, up to an error which is asymptotically small on average. 

We give separate proofs for the invariance of ${\ESG}$-optimal polynomial-time estimators (see Corollary~\ref{crl:p_reduce_sharp}) and the invariance of ${\EG}$-optimal polynomial-time estimators (see Corollary~\ref{crl:p_reduce}) without relying on Theorem~\ref{thm:ort} and Theorem~\ref{thm:con_ort} in order to produce a slightly stronger bound. We also show that this reduction class is rich enough to support complete problems for many problem classes e.g. ${\textsc{SampNP}}$ (see Theorem~\ref{thm:complete}).

Explicit construction of optimal polynomial-time estimators is likely to often be difficult because it requires proving a hardness result (that no polynomial-time estimator can outperform the given polynomial-time estimator). However, for a specific choice of ${\Fall}$ which we denote ${\FallU}$ (see Example~\ref{exm:e_uni}), we prove two broad existence theorems.

The first (Theorem~\ref{thm:exists_all}) shows that for suitable ${\Gamma}$ (in particular it has to allow sufficiently long advice strings, e.g. logarithmic advice is sufficient), \emph{any} distributional decision problem ${(\Dist,L)}$ admits an ${\FallU(\Gamma)}$-optimal polynomial-time estimator for ${(\Dist^\eta,L)}$. The construction of this estimator is rather trivial: the advice string for ${(K_0,K_1)}$ is the optimal (i.e. least ${\E_{x \sim D^{K_0}}[(P(x)-f(x))^2]}$) program that runs in time ${K_1}$ and is of length at most ${l(K_0,K_1)}$ where ${l: \Nats^2 \rightarrow \Nats}$ is some function which determines the allowed asymptotic advice length (${\Gamma}$ depends on ${l}$ and an analogous function ${r: \Nats^2 \rightarrow \Nats}$ which determines the allowed asymptotic number of random bits used by the estimators). The non-trivial part here is the definition of ${\FallU}$ which is s.t. allowing any estimator an amount of resources greater by a polynomial always translates to a reduction in error which lies in ${\FallU}$.

The second (Theorem~\ref{thm:exists_smp}), which is another central result, shows that for suitable ${\Gamma}$ (logarithmic advice and enough random e.g. logarithmic amount of random bits is sufficient), any distributional decision\footnote{All of the theorems are described for decision problems in the overview for the sake of simplicity but we actually prove them for \enquote{estimation} problems i.e. ${f: \Words \rightarrow \Reals}$ instead of ${L \subseteq \Words}$. Here this generalisation is more important since any efficient algorithm producing ${(x,t)}$ pairs is the sampler of some distributional estimation problem.} problem ${(\Dist,L)}$ which is \emph{samplable} (i.e. it is possible to efficiently sample pairs ${(x,t)}$ where ${x \in \Words}$ is distributed approximately according to ${\Dist}$ and ${t \in \Rats}$ is an estimate of ${\chi_L(x)}$ which is approximately unbiased on average) admits an ${\FallU(\Gamma)}$-optimal polynomial-time estimator with the same advice strings as the sampler. In particular, if the sampler is uniform the estimator is also uniform.

The samplability property allows recasting the estimation problem as a learning problem. That is, we use the sampler to generate a number (we use ${\mathcal{O}\left(\left(\log{K_1}\right)^2\right)}$) of problem instances for which an unbiased estimate of the correct answer is known, and we should now generalize from these instances to an instance for which the correct answer is unknown. The optimal polynomial-time estimator we construct accomplishes this using the empirical risk minimization principle from statical learning theory, applied to a hypothesis space which consists of programs. Specifically, the estimator iterates over all programs of length ${O\left(\log{K_1}\right)}$, runs each of them on the samples ${\{(x_i,t_i)\}_{i \in [\mathcal{O}\left(\left(\log{K_1}\right)^2\right)]}}$ for time ${K_1}$ getting estimates ${\{p_i\}_{i \in [\mathcal{O}\left(\left(\log{K_1}\right)^2\right)]}}$ and computes the empirical risk ${\sum_{i \in [\mathcal{O}\left(\left(\log{K_1}\right)^2\right)]}(p_i-t_i)^2}$. It then selects the program with the minimal risk and runs it on the input for time ${K_1}$ to get the desired estimate. This is similar to Levin's universal search which dovetails all programs to get optimality. The optimality of this estimator is also closely related to the fundamental theorem of statistical learning theory for agnostic PAC learning\cite{Shalev-Shwartz_2014}: like in agnostic PAC learning we get an estimate which is not  but is optimal within the hypothesis space (which in our case is the space of efficient estimators).

On the other hand, we rule out the existence of optimal polynomial-time estimators in the uniform case for certain problems. These negative results rely on the simple observation that if the veracity of ${x \in L}$ for ${x \sim \Dist^k}$ depends only on ${k}$, then advice strings of size $O(1)$ enable storing the exact answer to all such questions. Additionally, it is easy to see that an optimal polynomial-time estimator in the uniform case is still optimal when we allow ${O(1)}$ advice. This means that any optimal polynomial-time estimator for such a problem has to be a  polynomial-time estimator. So, any problem of this form that doesn't have uniform  polynomial-time estimators also doesn't have uniform optimal polynomial-time estimators. Consequently, any problem that is reducible to the former sort of problem also doesn't have optimal polynomial-time estimators.

Finally, we examine the uniqueness of optimal polynomial-time estimators for a fixed problem. We prove that if such an estimator exists, it is unique up to a difference which is asymptotically small on average (see Theorem~\ref{thm:uniq}). For example, given ${(\Dist,L)}$ a distributional decision problem s.t. the length of any ${x \sim \Dist^k}$ is bounded by some polynomial in ${k}$ and ${P_1,P_2}$ two ${\Fall^\sharp(\Gamma_0^1,\Gamma_0^1)}$-optimal polynomial time estimators, ${\E_{x \sim \Dist^k}[(P_1(k,x)-P_2(k,x))^2]}$ is a function of ${k}$ that lies in ${\Fall}$.

We are able to prove a stronger uniqueness result for optimal polynomial-time estimators for problems of the form ${(\Dist \mid M, L)}$ (see Theorem~\ref{thm:uniq_cond}). Namely, if there is an optimal polynomial-time estimator ${P_M}$ for ${(\Dist,M)}$ which takes values with a sufficiently strong lower bound then any ${P_{L1},P_{L2}}$ optimal polynomial-time estimators for ${(\Dist \mid M, L)}$ have an asymptotically small difference on average with respect to ${\Dist}$ (rather than ${\Dist \mid M}$). Informally, this means that whenever determining that ${x \not\in M}$ is sufficiently hard, there are well-defined (up to an asymptotically small perturbation) probabilities for events of the form ${x \in L}$ conditioned by ${x \in M}$, even for instances which actually lie outside of ${M}$. That is, optimal polynomial-time estimators allow us asking counterfactual \enquote{what if} questions that are meaningless from a \enquote{classical} mathematical perspective due to the principle of explosion.

Many of our results make use of algorithms with advice strings, where the allowed asymptotic length of the advice strings is determined by the space of functions ${\GrowA}$. Such algorithms are not entirely realistic, but one way to interpret them is as real-time efficient (since we assume polynomial time) algorithms that require inefficient precomputation (at least this interpretation is valid when the advice strings are computable). The strength of the concept of an \enquote{${\EG}$-optimal polynomial-time estimator} depends ambiguously on the size of ${\GrowA}$, since on the one hand larger ${\GrowA}$ allows for a greater choice of candidate optimal polynomial-time estimators, on the other hand the estimator is required to be optimal in a larger class\footnote{The same observation is true about the space ${\GrowR}$ which controls the allowed quantity of random bits.}. Sometimes it is possible to get the best of both worlds by having an estimator which uses few or no advice but is optimal in a class of estimators which use much advice (see e.g. Theorem~\ref{thm:exists_smp}).

Note that most of the theorems we get about ${\EG}$-optimal polynomial-time estimators require a lower bound on ${\GrowA}$ through the assumption that ${\Fall}$ is ${\GrowA}$-ample (see Definition~\ref{def:ample}). Theorem~\ref{thm:ort} which shows when an ${\EG}$-optimal polynomial-time estimator is also an ${\Fall^{\frac{1}{2}\sharp}(\Gamma)}$-optimal polynomial-time estimator (see Definition~\ref{def:fall_space_power}) also assumes a lower bound on ${\GrowA}$, but a weaker one. On the other hand, the converse Theorem~\ref{thm:con_ort} makes no such assumption and so do all other theorems about ${\ESG}$-optimal polynomial-time estimators (except indirectly since Theorem~\ref{thm:ort} is often required to construct an ${\ESG}$-optimal polynomial-time estimator in the first place).

\subsection{Related Work}

Several authors starting from Gaifman studied the idea of assigning probabilities to sentences in formal logic\cite{Gaifman_2004,Hutter_2013,Demski_2012,Christiano_2014,Garrabrant_2015}. Systems of formal logic such as Peano Arithmetic are very expressive, so such an assignment would have much broader applicability than most of the examples we are concerned about in the present work. On the other hand, the constructions achieved by those authors are either much further from realistic algorithms (e.g. require halting oracles or at least very expensive computations\footnote{In fact, Theorem~\ref{thm:exists_all} shows optimal polynomial-time estimators exist for completely arbitrary distributional estimation problems, but the price is the need for advice strings which might be expensive or even uncomputable, depending on the problem. Nevertheless, these estimators are still \enquote{real-time efficient} which makes them semi-realistic in some sense.}) or have much weaker properties to attest to their interpretation as \enquote{probabilities}.

Lutz\cite{Lutz_1998} uses the theory of computable martingales to define when a set of sequences \enquote{appears for a polynomial-time observer} to have certain ${\nu}$-measure with respect to a fixed probability measure ${\nu}$ on the set of infinite strings ${\Bool^\omega}$. In particular, if a singleton ${\{x\}}$ has Lutz measure 1 (where ${x \in \Bool^\omega}$), this means that ${x}$ \enquote{looks like} a random sequence sampled from ${\nu}$, as far as a polynomial-time observer can tell. This seems closely related to our idea of assigning \enquote{subjective probabilities for polynomial-time observers} to events that are otherwise deterministic. Formally relating and comparing the two setups remains a task for future work.

The notion that computational hardness often behaves like information-theoretical uncertainty is well-known in complexity theory, although it hasn't been systematically formalized. For example see discussion of Theorem 7.5 in \cite{Goldreich_2008} or section 6.1 in \cite{Bogdanov_2006}. Results such as Yao's XOR lemma can be interpreted as the transformation of \enquote{computational probabilities} under certain operations, which is resonant with our results e.g. Theorem~\ref{thm:mult}. It seems likely that it is possible to fruitfully investigate these relations further.

Barak, Shaltiel and Wigderson \cite{Barak_2003} discuss notions of \enquote{entropy} for probability distributions that take computational hardness into account. Zheng \cite{Zheng_2014} (Chapter 7) considers prediction markets where traders perform transactions via Boolean circuits of polynomial size. This is similar to our optimal polynomial-time estimators, in the sense that a loss function which is a proper scoring rule is minimized under computational resource constraints. However, Zheng doesn't study this concept beyond deriving a relation to the \enquote{pseudoentropy} mentioned above.

Different brands of \enquote{optimal algorithms} were previously defined and investigated in various contexts. Levin's universal search is an algorithm that solves the candid search form of any problem in ${\textsc{NP}}$ in time which is minimal up to a polynomial (see Theorem 2.33 in \cite{Goldreich_2008}). Barak\cite{Barak_2002} uses instance checkers to construct algorithms optimal in this sense for decision problems (in particular for any problem that is ${\textsc{EXP}}$-complete). This concept also has a non-deterministic counterpart called \enquote{optimal proof system}: see survey by Hirsch\cite{Hirsch_2010}, which additionally discusses \enquote{optimal acceptors} (optimal algorithms that halt only on the \enquote{yes} instances of the problem). Notably, the latter survey also discusses the average-case rather than only the worst-case.

Khot's Unique Games Conjecture implies that many optimization problems have an algorithm which produces the best approximation factor possible in polynomial-time (see e.g. \cite{Khot_2010}). Barak and Steurer \cite{Barak_2014} speculate that even if the Unique Games Conjecture is false, the existence of an algorithm that is optimal in this sense for a large class of problems is plausible, and propose the Sum-of-Squares algorithm as a candidate.

Optimal polynomial-time estimators are optimal in a sense different from the examples above: they simultaneously run in polynomial-time, are applicable to decision problems and are of average-case nature. The metric they optimize is the average squared difference (Brier score) with the true function. Nevertheless, it might be interesting to explore connections and similarities with other types of optimal algorithms.\newline

The structure of the paper is as follows. Section~\ref{sec:notation} fixes notation. Section~\ref{sec:fundamentals} introduces the main definitions and gives a simple example using one-way functions. Section~\ref{sec:probability} shows the parallel between properties of optimal polynomial-time estimators and classical probability theory. Section~\ref{sec:reductions} discusses behavior of optimal polynomial-time estimators under reductions and shows certain natural classes have complete problems under reductions that are appropriate. Section~\ref{sec:e_and_u} discusses existence and uniqueness of optimal polynomial-time estimators. Section~\ref{sec:discussion} discusses possible avenues for further research. The Appendix briefly reviews relevant material about hard-core predicates and one-way functions.

\section{Notation}
\label{sec:notation}
\subsection{Sets, Numbers and Functions}

$\Nats$ is the set of natural numbers. We will use the convention in which natural numbers start from 0, so $\Nats = \{0, 1, 2 \ldots \}$. 

$\Ints$ is the ring of integers, $\Rats$ is the field of rational numbers, $\Reals$ is the field of real numbers.

For $F \in \{\Rats,\Reals\}$, $F^{>0} := \{x \in F \mid x > 0\}$, $F^{\geq 0} := \{x \in F \mid x \geq 0\}$.

Given ${n \in \Nats}$, ${\NatPoly}$ will stand for the set of polynomials with natural coefficients in the ${n}$ variables ${K_0, K_1 \ldots K_{n-1}}$.

For any $t \in \Reals$, $\Floor{t} := \max \{n \in \Ints \mid n \leq t\}$, $\Ceil{t} := \min \{n \in \Ints \mid n \geq t\}$.

$\log: \Reals^{\geq 0} \rightarrow \Reals \sqcup \{-\infty\}$ will denote the logarithm in base 2.

Given $n \in \Nats$, $[n]:=\{i \in \Nats \mid i < n\}$. Given sets $X_0, X_1 \ldots X_{n-1}$, ${x \in \prod_{i \in [n]} X_i}$ and $m \in [n]$, $x_m \in X_m$ is the $m$-th component of the $n$-tuple $x$ i.e. ${x=(x_0, x_1 \ldots x_{n-1})}$.

Given a set $X$ and $x,y \in X$, $\delta_{xy}$ (or $\delta_{x,y})$ will denote the the Kronecker delta

$$\delta_{xy} := \begin{cases}1 & \text{if } x=y \\ 0 & \text{if } x \ne y \end{cases}$$

Given a set $X$ and a subset $Y$, $\chi_Y: X \rightarrow \Bool$ will denote the indicator function of $Y$ (when $X$ is assumed to be known from the context)

$$\chi_Y(x):=\begin{cases}1 & \text{if } x \in Y \\ 0 & \text{if } x \not\in Y \end{cases}$$

$\theta: \Reals \rightarrow \Bool$ will denote the Heaviside step function $\theta:=\chi_{[0,\infty)}$. ${\Sgn: \Reals \rightarrow \{-1,+1\}}$ will denote the function $2 \theta - 1$.

\subsection{Probability Distributions}
\label{subsec:notation__prob}

For ${X}$ a set, ${\mathcal{P}(X)}$ will denote the set of probability distributions on ${X}$. A probability distribution on $X$ can be represented by a function $\Dist: X \rightarrow [0,1]$ s.t. $\sum_{x \in X} \Dist(x) = 1$. Abusing notation, we will use the same symbol to denote the function and the probability distribution. Given ${A}$ a subset of ${X}$, we will use the notation

\[\Dist(A):=\Prb_{x \sim \Dist}[x \in A] = \sum_{x \in A} \Dist(x)\]

For $X$ a set, $\Dist \in \mathcal{P}(X)$, $V$ a finite dimensional vector space over $\Reals$ and $f: X \rightarrow V$, $\E_{x \sim \Dist}[f(x)]$ will denote the expected value of $f$ with respect to $\Dist$, i.e. 

\[\E_{x \sim \Dist}[f(x)] := \sum_{x \in X} \Dist(x) f(x)\]

We will the abbreviated notations $\E_\Dist[f(x)]$, $\E[f(x)]$, $\E_\Dist[f]$, $\E[f]$ when no confusion is likely to occur.

Given a set $X$ and $\Dist \in \mathcal{P}(X)$, $\Supp \Dist$ will denote the support of $\Dist$ i.e.

\[\Supp \Dist = \{x \in X \mid \Dist(x) > 0\}\]

Given $X,Y$ sets, $\Dist \in \mathcal{P}(X)$ and $f: X \rightarrow Y$ a mapping, $f_*\Dist \in \mathcal{P}(Y)$ will denote the corresponding pushforward distribution i.e.

\[(f_*\Dist)(y):= \sum_{x \in f^{-1}(y)} \Dist(x)\]

Given $X,Y$ sets, the notation $f: X \xrightarrow{\textnormal{mk}} Y$ signifies $f$ is a Markov kernel with source $X$ and target $Y$. Given $x \in X$, $f_x$ is the corresponding probability distribution on $Y$ and $f(x)$ is a random variable sampled from $f_x$. Given $\Dist \in \mathcal{P}(X)$, $\Dist \ltimes f \in \mathcal{P}(X \times Y)$ (resp. $f \rtimes \Dist \in \mathcal{P}(Y \times X)$) is the semidirect product distribution. $f_*\Dist \in \mathcal{P}(Y)$ is the pushforward distribution, i.e. $f_*\Dist:=\pi_*(\Dist \ltimes f)$ where $\pi: X \times Y \rightarrow Y$ is the projection.

For $X$ a set, $\Dist \in \mathcal{P}(X)$ and $A$ a subset of $X$ s.t. ${\Dist(A) > 0}$, $\Dist \mid A$ will denote the corresponding conditional probability distribution, i.e. $(\Dist \mid A)(B):=\frac{\Dist(B \cap A)}{\Dist(A)}$. Given $Y$ another set, $f: X \Markov Y$ and $A$ a subset of $Y$ s.t. $(\Dist \ltimes f)(X \times A) > 0$, $\Dist \mid f^{-1}(A) \in \mathcal{P}(X)$ is defined by 

\[(\Dist \mid f^{-1}(A))(B):=(\Dist \ltimes f \mid X \times A)(B \times Y)\]

Note that when $f$ is deterministic (i.e. $f_x$ is a Dirac measure for every $x$), this corresponds to conditioning by the inverse image of $A$ with respect to $f$. When $A=\{a\}$ we will use the shorthand notation $\Dist \mid f^{-1}(a)$.

Given $X$ a set and $\Dist,\mathcal{E} \in \mathcal{P}(X)$, $\Dtv(\Dist,\mathcal{E})$ will denote the total variation distance between $\Dist$ and $\mathcal{E}$ i.e.

\[\Dtv(\Dist,\mathcal{E}):=\frac{1}{2}\sum_{x \in X} \Abs{\Dist(x) - \mathcal{E}(x)}\]

For $X$ a set and $x \in X$, $\delta_x$ will denote the Dirac measure associated with $x$, i.e. $\delta_x(y):=\delta_{xy}$.

\subsection{Algorithms}

$\Words$ is the set of all finite binary strings (words), i.e. $\Words:=\bigsqcup_{n \in \Nats} \Bool^n$. For any ${x \in \Words}$, $\Abs{x}$ is the length of $x$ i.e. $x \in \WordsLen{\Abs{x}}$. ${\Estr \in \Words}$ is the empty string. For any $n \in \Nats$

\begin{align*}
\Bool^{\leq n}&:=\{x \in \Words \mid \Abs{x} \leq n\} \\
\Bool^{>n}&:=\{x \in \Words \mid \Abs{x} > n\}
\end{align*}

For any $x \in \Words$ and $n \in \Nats$, $x_{< n}$ stands for the prefix of $x$ of length $n$ if $\Abs{x} \geq n$ and $x$ otherwise. Given $x,y \in \Words$, $xy$ stands for the concatenation of $x$ and $y$ (in particular $\Abs{xy}=\Abs{x}+\Abs{y}$). Given ${n \in \Nats}$ and ${x_0, x_1 \ldots x_{n-1} \in \Words}$, ${\prod_{i \in [n]} x_i}$ is also concatenation. Given $n \in \Nats$ and $x,y \in \WordsLen{n}$, $x \cdot y$ stands for $\bigoplus_{i \in [n]} x_i y_i$. For any $n \in \Nats$, $\Un^n \in \mathcal{P}(\WordsLen{n})$ is the uniform probability distribution.

Given $n \in \Nats$ and ${x_0, x_1 \ldots x_{n-1} \in \Words}$, $\Chev{x_0,x_1 \ldots x_{n-1}} \in \Words$ denotes the encoding of\\ $(x_0,x_1 \ldots x_{n-1})$ obtained by repeating each bit of $x_0, x_1 \ldots x_{n-1}$ twice and inserting the separators 01.
\begin{definition}

An \emph{encoded set} is a set $X$ together with an injection ${\En_X: X \rightarrow \Words}$ (the encoding) s.t. $\Img \En_X$ is decidable in polynomial time.

\end{definition}

There are standard encodings we implicitly use throughout. $\bm{1}$ denotes an encoded set with 1 element $\bullet$ whose encoding is the empty string. $\Words$ is an encoded set with the trivial encoding ${\En_\Words(x):=x}$. $\Nats$ is an encoded set where $\En_\Nats(n)$ is the binary representation of $n$. $\Rats$ is an encoded set where ${\En_\Rats(\frac{n}{m}):=\Chev{n,m}}$ for an irreducible fraction $\frac{n}{m}$. For any encoded set $X$ and $L \in \textsc{P}$, $\{x \in X \mid \En_X(x) \in L\}$ is an encoded set whose encoding is the restriction of $\En_X$. For $X_0,X_1 \ldots X_{n-1}$ encoded sets, $\prod_{i \in [n]} X_i$ is an encoded set with encoding 

\[\En_{\prod_{i \in [n]} X_i}(x_0,x_1 \ldots x_{n-1}):=\Chev{\En_{X_0}(x_0),\En_{X_1}(x_1) \ldots \En_{X_{n-1}}(x_{n-1})}\]

For any $n \in \Nats$ we use the shorthand notation $\En^n:=\En_{(\Words)^n}$.

Given $n \in \Nats$, encoded sets $X_0, X_1 \ldots X_{n-1}$ and encoded set $Y$ we use the notation\\ ${A: \prod_{i \in [n]} X_i \Alg Y}$ to mean a Turing machine with $n$ input tapes that halts on every input for which the $i$-th tape is initialized to a value in $\Img \En_X$ and produces an output in $\Img \En_Y$. Given $\{x_i \in X_i\}_{i \in [n]}$ the notation $A(x_0, x_1 \ldots x_{n-1})$ stands for the unique $y \in Y$ s.t. applying $A$ to the input composed of $\En_{X_i}(x_i)$ results in output $\En_Y(y)$. We use different input tapes for different components of the input instead of encoding the $n$-tuple as a single word in order to allow $A$ to process some components of the input in time smaller than the length of other components. This involves abuse of notation since a Cartesian product of encoded sets is naturally an encoded set, but hopefully this won't cause much confusion.

Given $A: X \Alg Y$ and $x \in X$, $\T_A(x)$ stands for the number of time steps in the computation of $A(x)$.

For any $n \in \Nats$, we fix $\mathcal{U}_n$, a prefix free universal Turing machine with $n+1$ input tapes: 1 program tape and $n$ tapes that serve as input to the program. Given ${n,k \in \Nats}$, ${a \in \Words}$ and ${\{x_i \in \Words\}_{i \in [n]}}$, ${\Ev^k(a;x_0,x_1 \ldots x_{n-1})}$ stands for the output of ${\mathcal{U}_n}$ when executed for ${k}$ time steps on program ${a}$ (continued by an infinite sequence of 0s) and inputs ${\{x_i \in \Words\}_{i \in [n]}}$.

\section{Fundamentals}
\label{sec:fundamentals}

\subsection{Basic Concepts}

\subsubsection{Distributional Estimation Problems}

We start with a simple model to help build intuition and motivate the following definitions.

Consider finite sets $X$ and $Y$, $\Dist \in \mathcal{P}(X)$, a mapping $m: X \rightarrow Y$ and a function $f: X \rightarrow \Reals$. Suppose $x$ was sampled from $\Dist$ and we were told $y := m(x)$ (but not told $x$ itself). Our expected value of $f(x)$ in these conditions is ${\E_{x \sim \Dist}[f(x) \mid m(x) = y]}$.

Let $P: X \rightarrow \Reals$ be the function $P(x) := \E_{x' \sim \Dist}[f(x') \mid m(x') = m(x)]$. How can we characterize $P$ without referring to the concept of a conditional expected value? For any $Q: X \rightarrow \Reals$ we can consider the \enquote{error} $\E_\Dist[(Q - f)^2]$. $Q$ is called \enquote{efficient} when it factors as $Q = q \circ m$ for some $q: Y \rightarrow \Reals$. It is easy to see that $P$ has the least error among all efficient functions.

Note that the characterization of $P$ depends not only on $f$ but also on $\Dist$. That is, the accuracy of an estimator depends on the prior probabilities to encounter different questions. In general, we assume that the possible questions are represented by elements of $\Words$. Thus we need to consider a probability distribution on $\Words$. However, in the spirit of average-case complexity theory we will only require our estimators to be \emph{asymptotically} optimal. Therefore instead of considering a single probability distribution we consider a family of probability distribution indexed by integer parameters\footnote{It is convenient to allow more than 1 parameter for reasons that will become clear in section~\ref{sec:e_and_u}. Roughly, some parameters represent the complexity of the input whereas other parameters represent the amount of computing resources available for probability estimation.}, where the role of the parameters is defining the relevant limit. We thereby arrive at the following:

\begin{definition}

Fix ${n \in \Nats}$. A \emph{word ensemble of rank ${n}$} is a family ${\{\Dist^{K} \in \mathcal{P}(\Words)\}_{K \in \Nats^n}}$.

We will use the notation $\Supp \Dist := \bigcup_{K \in \Nats^n} \Supp \Dist^K$.

\end{definition}

We now introduce our abstraction for a \enquote{class of mathematical questions} (with quantitative real-valued answers). This abstraction is a trivial generalization of the concept of a distributional decision problem from average-case complexity theory (see e.g. \cite{Bogdanov_2006}).

\begin{definition}

Fix ${n \in \Nats}$. A \emph{distributional estimation problem of rank ${n}$} is a pair $(\Dist,f)$ where $\Dist$ is a word ensemble of rank ${n}$ and $f: \Supp \Dist \rightarrow \Reals$ is bounded.

\end{definition}

\subsubsection{Growth Spaces and Polynomial-Time \texorpdfstring{$\Gamma$}{Γ}-Schemes}

In the motivational model, the estimator was restricted to lie in a class of functions that factor through a fixed mapping. Of course we are interested in more realistic notions of efficiency. In the present work we consider restrictions on time complexity, access to random bits and size of advice strings. Spatial complexity is also of interest but treating it is out of our current scope. It is possible to consider weaker or stronger restrictions which we represent using the following abstraction which is closely tied to big-$\mathcal{O}$ notation:

\begin{samepage}
\begin{definition}
\label{def:grow}
Fix $n$. A \emph{growth space} $\Gamma$ of rank $n$ is a set of functions ${\gamma: \NatFun \Nats}$ s.t.

\begin{enumerate}[(i)]

\item\label{con:def__grow__zero} $0 \in \Gamma$

\item\label{con:def__grow__add} If $\gamma_1, \gamma_2 \in \Gamma$ then $\gamma_1 + \gamma_2 \in \Gamma$.

\item\label{con:def__grow__ineq} If $\gamma_1 \in \Gamma$, $\gamma_2: \NatFun \Nats$ and $\forall K \in \Nats^n: \gamma_2(K) \leq \gamma_1(K)$ then $\gamma_2 \in \Gamma$.

\item\label{con:def__grow__poly} For any $\gamma \in \Gamma$ there is a $p \in \NatPoly$ s.t. $\gamma \leq p$.

\end{enumerate}

\end{definition}
\end{samepage}

\begin{example}
\label{exm:gamma_zero}

For any $n \in \Nats$, we define $\Gamma_0^n$, a growth space of rank $n$. $\gamma \in \Gamma_0^n$ iff $\gamma \equiv 0$.

\end{example}

\begin{samepage}
\begin{example}

For any $n \in \Nats$, we define $\Gamma_1^n$, a growth space of rank $n$. $\gamma \in \Gamma_1^n$ iff there is ${c \in \Nats}$ s.t. ${\gamma \leq c}$.

\end{example}
\end{samepage}

\begin{example}

For any $n \in \Nats$, we define $\GammaPoly^n$, a growth space of rank $n$. 

\[\GammaPoly^n := \{\gamma: \NatFun \Nats \mid \exists p \in \NatPoly: \gamma \leq p\}\]

\end{example}

\begin{example}
\label{exm:gamma_log}

For any $n \in \Nats$, we define $\GammaLog^n$, a growth space of rank $n$. $\gamma \in \GammaLog^n$ iff there is $c \in \Nats$ s.t. $\gamma(K_0, K_1 \ldots K_{n-1}) \leq c \sum_{i \in [n]} \log(K_i+1)$.

\end{example}

\begin{samepage}
\begin{definition}
\label{def:sgrow}
Fix ${n \in \Nats^{>0}}$. ${\gamma: \NatFun \Nats}$ is said to be \emph{steadily growing} when
\begin{enumerate}[(i)]
\item ${\gamma \in \GammaPoly^n}$
\item ${\forall J \in \Nats^{n-1}, k,l \in \Nats: k < l \implies \gamma(J,k) \leq \gamma(J,l)}$
\item\label{con:def__sgrow__poly} There is ${s \in \NatPoly}$ s.t. ${\forall J \in \Nats^{n-1}, k \in \Nats: \gamma(J,k) \leq \frac{1}{2}\gamma(J,s(J,k))}$.
\end{enumerate}
\end{definition}
This could be thought of as a polynomial that is monotonically increasing in the last argument quickly enough that a polynomial increase in the last argument can double the available resources.
\end{samepage}

\begin{samepage}
\begin{example}

For any ${n \in \Nats^{>0}}$ and ${\gamma^*}$ steadily growing, we define ${\Gamma_{\gamma^*}}$, a growth space of rank ${n}$. ${\gamma \in \Gamma_{\gamma^*}}$ iff there is ${p \in \NatPoly}$ s.t. ${\gamma(J,k) \leq \gamma^*(J,p(J,k))}$. This is the space of functions that are bounded above by the reference function ${\gamma^*}$ with the last argument growing at a polynomial rate.

To verify condition~\ref{con:def__grow__add}, consider ${\gamma_1}$, ${\gamma_2}$ s.t. ${\gamma(J,k) \leq \gamma^*(J,p_1(J,k))}$ and ${\gamma_2(J,k) \leq \gamma^*(J,p_2(J,k))}$. Choose ${p,s \in \NatPoly}$ s.t. ${p \geq \max(p_1,p_2)}$ and ${s}$ is as in condition~\ref{con:def__sgrow__poly} of Definition~\ref{def:sgrow}.

\[\gamma_1(J,k)+\gamma_2(J,k) \leq \gamma^*(J,p_1(J,k))+\gamma^*(J,p_2(J,k))\]

\[\gamma_1(J,k)+\gamma_2(J,k) \leq 2\gamma^*(J,p(J,k))\]

\[\gamma_1(J,k)+\gamma_2(J,k) \leq \gamma^*(J,s(J,p(J,k)))\]

In particular taking ${\gamma^*_{\text{poly}}(J,k):=k}$ and ${\gamma^*_{\text{log}}(J,k):=\Floor{\log (k+1)}}$ we have ${\Gamma_{\text{poly}}^n=\Gamma_{\gamma^*_{\text{poly}}}}$,\\ $\GammaLog^n=\Gamma_{\gamma^*_{\text{log}}}$.

\end{example}
\end{samepage}

We now introduce our notion of an \enquote{efficient} algorithm.

\begin{samepage}
\begin{definition}

Fix $n \in \Nats$ and $\Gamma=(\GrowR$, $\GrowA)$ a pair of growth spaces of rank $n$ that correspond to the length of the random and advice strings. Given encoded sets $X$ and $Y$, a \emph{polynomial-time $\Gamma$-scheme of signature $X \rightarrow Y$} is a triple $(S,\R_S,\A_S)$ where\\ ${S: \Nats^n \times X \times \Words \times \Words \Alg Y}$, $\R_S: \Nats^n \times \Words \Alg \Nats$ and $\A_S: \NatFun \Words$ are s.t.

\begin{enumerate}[(i)]

\item $\max_{x \in X} \max_{y,z \in \Words} \T_S(K,x,y,z) \in \GammaPoly^n$

\item $\max_{z \in \Words} \T_{\R_S}(K,z) \in \GammaPoly^n$. Note that $\R_S$, the polynomial-time function that outputs the number of random bits to read, takes the advice string z as input.

\item The function $r: \NatFun \Nats$ defined by $r(K):=\R_S(K,\A_S(K))$ lies in $\GrowR$.

\item $\Abs{\A_S} \in \GrowA$

\end{enumerate}
Abusing notation, we denote the polynomial-time $\Gamma$-scheme $(S,\R_S,\A_S)$ by $S$. $S^K(x,y,z)$ will denote $S(K,x,y,z)$, $S^K(x,y)$ will denote $S(K,x,y,\A_S(K))$ and $S^K(x)$ will denote the $Y$-valued random variable which equals $S(K,x,y,a(K))$ for $y$ sampled from $\Un^{\R_S(K)}$. $\Un_S^K$ will denote $\Un^{\R_S(K)}$. We think of $S$ as a randomized algorithm with advice where $y$ are the internal coin tosses and $\A_S$ is the advice\footnote{Note that the number of random bits $\R_S(K)$ has to be efficiently computable modulo the advice $\A_S(K)$ rather than being an arbitrary function. This requirement is needed to prevent using the function $\R_S$ as advice in itself. In particular, when $\GrowA=\Gamma_0^2$, $S$ represents a uniform randomized algorithm.}. Similarly, $\R_S(K)$ will denote $\R_S(K,\A_S(K))$.

We will use the notation $S: X \Scheme Y$ to signify $S$ is a polynomial-time $\Gamma$-scheme of signature $X \rightarrow Y$.

\end{definition}
\end{samepage}

There is a natural notion of composition for polynomial-time $\Gamma$-schemes.

\begin{samepage}
\begin{definition}

Fix $n \in \Nats$ and $\Gamma=(\GrowR$, $\GrowA)$ a pair of growth spaces of rank $n$. Consider encoded sets $X$, $Y$, $Z$ and $S: X \Scheme Y$, $T: Y \Scheme Z$. Choose $p \in \NatPoly$ s.t. $\Abs{\A_S(K)} \leq p(K)$ and $\Abs{\A_T(K)} \leq p(K)$. We can then construct $U: X \Scheme Z$ s.t. for any $K \in \Nats^n$, $a,b \in \Bool^{\leq p(K)}$, ${v \in \Bool^{\R_T(K,a)}}$, ${w \in \Bool^{\R_S(K,b)}}$ and $x \in X$

\begin{align}
\A_U(K) &= \Chev{\A_T(K),\A_S(K)} \\
\R_U(K, \Chev{a,b}) &= \R_T(K,a)+\R_S(K,b) \\
U^K(x,vw,\Chev{a,b}) &= T^K(S^K(x,w,b),v,a)
\end{align}

Such a $U$ is called the \emph{composition} of $T$ and $S$ and denoted $U = T \circ S$. There is a slight abuse of notation due to the freedoms in the construction of $U$ but these freedoms have no real significance since all versions of $T \circ S$ induce the same Markov kernel from $X$ to $Z$.

\end{definition}
\end{samepage}

It will also be useful to consider families of polynomial-time $\Gamma$-schemes satisfying uniform resource bounds.

\begin{definition}
\label{def:family}

Fix $n \in \Nats$, $\Gamma=(\GrowR$, $\GrowA)$ a pair of growth spaces of rank $n$ and encoded sets $X$, $Y$. A set $F$ of polynomial-time $\Gamma$-schemes of signature $X \rightarrow Y$ is called a \emph{uniform family} when

\begin{enumerate}[(i)]

\item\label{con:def__family__time} $\max_{S \in F} \max_{x \in X} \max_{y,z \in \Words} \T_S(K,x,y,z) \in \GammaPoly^n$

\item\label{con:def__family__rtime} $\max_{S \in F} \max_{z \in \Words} \T_{\R_S}(K,z) \in \GammaPoly^n$

\item\label{con:def__family__rand} $\max_{S \in F} \R_S \in \GrowR$

\item\label{con:def__family__adv} $\max_{S \in F} \Abs{\A_S(K)} \in \GrowA$

\item There are only finitely many different machines ${S}$ and ${\R_S}$ for ${S \in F}$.

\end{enumerate}

\end{definition}

The details of this definition are motivated by the following proposition.

\begin{proposition}
\label{prp:fam_diag}

Fix $n \in \Nats$ and $\Gamma=(\GrowR$, $\GrowA)$ a pair of growth spaces of rank $n$ s.t. $1 \in \GrowA$. Consider $X$, $Y$ encoded sets, $F$ a uniform family of polynomial-time $\Gamma$-schemes of signature $X \rightarrow Y$ and a collection ${\{\mathcal{S}_K \in F\}_{K \in \Nats^n}}$. Then, there is $\Delta_\mathcal{S}: X \xrightarrow{
\Gamma} Y$ s.t. for any $K \in \Nats^n$, $x \in X$ and $y \in Y$, ${\Pr[\Delta_\mathcal{S}^K(x)=y] = \Pr[\mathcal{S}_K^K(x)=y]}$.

\end{proposition}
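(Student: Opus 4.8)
The plan is to construct $\Delta_{\mathcal{S}}$ as a single polynomial-time $\Gamma$-scheme that, on parameter $K$, internally simulates the scheme $\mathcal{S}_K$. The key realization is that a uniform family $F$ contains only finitely many distinct Turing machines $S$ and finitely many distinct machines $\R_S$; hence each member of $F$ is determined, up to these finite choices, by its advice function $\A_S$. So I would build $\Delta_{\mathcal{S}}$ whose advice string for parameter $K$ is essentially $\A_{\mathcal{S}_K}(K)$ together with a bounded-length tag identifying which of the finitely many machines $S$ and $\R_S$ the scheme $\mathcal{S}_K$ uses. Concretely: let $M_1, \ldots, M_m$ enumerate the distinct pairs of machines occurring in $F$; set $\A_{\Delta_{\mathcal{S}}}(K) := \Chev{j_K, \A_{\mathcal{S}_K}(K)}$ where $j_K \in [m]$ is the index of the machine pair used by $\mathcal{S}_K$. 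Then $\Delta_{\mathcal{S}}$, on reading its advice, decodes $j_K$, selects the appropriate hardcoded machines, computes $\R_{\Delta_{\mathcal{S}}}(K, \Chev{j,a}) := \R_{M_{j}}(K, a)$, and on input $(x, y, \Chev{j,a})$ outputs $M_{j}(K, x, y, a)$. By construction the induced Markov kernel at parameter $K$ equals that of $\mathcal{S}_K^K$, which is exactly the required equality of distributions.

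The remaining work is verifying that $\Delta_{\mathcal{S}}$ is a legitimate polynomial-time $\Gamma$-scheme, i.e. that it satisfies the four conditions in the definition. For the time bounds: the overhead of decoding a $j_K \in [m]$ (bounded length since $m$ is a fixed constant) and dispatching to one of $m$ hardcoded machines is $O(1)$ plus the running time of the selected machine, and conditions~\ref{con:def__family__time} and~\ref{con:def__family__rtime} of Definition~\ref{def:family} guarantee a single polynomial dominating all machines in $F$; so $\max \T_{\Delta_{\mathcal{S}}} \in \GammaPoly^n$ and likewise for $\T_{\R_{\Delta_{\mathcal{S}}}}$. For the random-bit bound: $r_{\Delta_{\mathcal{S}}}(K) = \R_{M_{j_K}}(K, \A_{\mathcal{S}_K}(K)) = \R_{\mathcal{S}_K}(K)$, and since $\mathcal{S}_K \in F$, condition~\ref{con:def__family__rand} gives $r_{\Delta_{\mathcal{S}}} \leq \max_{S \in F} \R_S \in \GrowR$, so by closure under domination (condition~\ref{con:def__grow__ineq} of Definition~\ref{def:grow}) we get $r_{\Delta_{\mathcal{S}}} \in \GrowR$. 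For the advice-length bound: $\Abs{\A_{\Delta_{\mathcal{S}}}(K)} = \Abs{\Chev{j_K, \A_{\mathcal{S}_K}(K)}}$; the encoding $\Chev{\cdot}$ inflates length by a constant factor and adds $O(\log m) = O(1)$ bits for $j_K$, so $\Abs{\A_{\Delta_{\mathcal{S}}}(K)} \leq c_1 \max_{S \in F} \Abs{\A_S(K)} + c_2$ for constants $c_1, c_2$; using condition~\ref{con:def__family__adv}, the hypothesis $1 \in \GrowA$, and the growth-space axioms (closure under addition and under domination), this function lies in $\GrowA$.

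The main obstacle — really the only subtlety — is the advice-length accounting in the last condition, because growth spaces are not assumed closed under multiplication by constants, only under addition; so I must argue that multiplying by the constant factor $c_1$ coming from the pairing encoding $\Chev{\cdot}$ keeps us inside $\GrowA$. This follows because a constant multiple $c_1 \gamma$ is a finite sum $\gamma + \cdots + \gamma$ ($c_1$ times), which axiom~\ref{con:def__grow__add} handles, and the additive constant $c_2$ is dominated by $c_2 \cdot \mathbf{1}$ which lies in $\GrowA$ precisely by the hypothesis $1 \in \GrowA$ together with axioms~\ref{con:def__grow__add} and~\ref{con:def__grow__ineq}. It is exactly to enable this step that the hypothesis $1 \in \GrowA$ is imposed. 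Everything else is bookkeeping: one checks that a universal machine with the $m$ machines of $F$ hardcoded as lookup cases runs within the stated polynomial bounds, and that the semidirect-product/pushforward structure of $S^K(x)$ as a random variable is preserved verbatim under this simulation.
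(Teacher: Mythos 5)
Your proposal is correct and takes essentially the same route as the paper: the advice of $\Delta_{\mathcal{S}}$ carries $\A_{\mathcal{S}_K}(K)$ together with an $O(1)$-length identification of which of the finitely many machines $\mathcal{S}_K$ uses (condition~(v) of Definition~\ref{def:family} makes this bounded, and $1 \in \GrowA$ plus closure under addition and domination absorbs the constant overhead), and the single scheme then simulates the identified machine within the uniform polynomial time bound. The only difference is an implementation detail: the paper stores the actual program codes $a_K, b_K$ in the advice and runs the time-truncated universal machine $\Ev^{p(K)}$, whereas you store an index and dispatch among the finitely many hardcoded machines — the resource accounting is identical either way.
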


\begin{proof}

Choose ${p,q \in \NatPoly}$ (time bounds to emulate an arbitrary randomness function and algorithm from the uniform family) and ${\{a_K, b_K \in \Words\}}_{K \in \Nats^n}$ 
(encodings of the randomness functions and algorithms, which by the definition of a uniform family, can be a finite set) s.t. there is only a finite number of different words $a_K$ and $b_K$, and for any ${K,L \in \Nats^n}$, ${x \in X}$ and ${y,z \in \Words}$

\begin{align*}
\Ev^{q(L)}(b_K;\En_{\Nats^n}(L),z) &= \R_{S_K}(L,z) \\
\Ev^{p(L)}(a_K;\En_{\Nats^n}(L),x,y,z) &= S_K^L(x,y,z) 
\end{align*}

Now, use the nonzero advice string to encode which algorithm is to be used on which input. Construct ${\Delta_{\mathcal{S}}}$ s.t. for any ${K \in \Nats^n}$, ${x \in X}$, ${y, w \in \Words}$, ${u \in \Bool^{\leq \max_{K \in \Nats^n} \Abs{a_K}}}$ and\\ ${v \in \Bool^{\leq \max_{K \in \Nats^n} \Abs{b_K}}}$

\begin{align*}
\A_{\Delta_{\mathcal{S}}}(K)&=\Chev{a_K,b_K,\A_{\mathcal{S}_K}(K)} \\
\R_{\Delta_{\mathcal{S}}}(K,\Chev{u,v,w})&=\Ev^{q(K)}(v;\En_{\Nats^n}(K),w) \\
\Delta_{\mathcal{S}}^K(x,y,\Chev{u,v,w})&=\Ev^{p(K)}(u;\En_{\Nats^n}(K),x,y,w)
\end{align*}
\end{proof}

\subsubsection{Fall Spaces}

Fix $n \in \Nats$ and $\Gamma$ a pair of growth spaces of rank $n$. Given a distributional estimation problem $(\Dist,f)$ and $Q: \Words \Scheme \Rats$, we can consider the estimation error $\E_{(x,y) \sim \Dist^{K} \times \Un_Q^K}[(Q^K(x,y) - f(x))^2]$. It makes little sense to require this error to be minimal for every $K \in \Nats^n$, since we can always hard-code a finite number of answers into $Q$ without violating the resource restrictions. Instead we require minimization up to an asymptotically small error. Since it makes sense to consider different kind of asymptotic requirements, we introduce an abstraction that corresponds to this choice.

\begin{definition}
\label{def:fall}

Given $n \in \Nats$, a \emph{fall space of rank $n$} is a set $\Fall$ of bounded functions $\varepsilon: \NatFun \Reals^{\geq 0}$ s.t.

\begin{enumerate}[(i)]

\item\label{con:def__fall__add} If $\varepsilon_1, \varepsilon_2 \in \Fall$ then $\varepsilon_1 + \varepsilon_2 \in \Fall$.

\item\label{con:def__fall__ineq} If $\varepsilon_1 \in \Fall$, $\varepsilon_2: \NatFun \Reals^{\geq 0}$ and $\forall K \in \Nats^n: \varepsilon_2(K) \leq \varepsilon_1(K)$ then $\varepsilon_2 \in \Fall$.

\item\label{con:def__fall__pol} There is $h \in \NatPoly$ s.t. $2^{-h} \in \Fall$.

\end{enumerate}
\end{definition}

\begin{example}
\label{exm:fall_neg}

We define $\Fall_{\text{neg}}$, a fall space of rank $1$. For any $\varepsilon: \Nats \rightarrow \Reals^{\geq 0}$ bounded, $\varepsilon \in \Fall_{\text{neg}}$ iff for any $d \in \Nats$, $\Lim{k} k^d \varepsilon(k) = 0$.

\end{example}

\begin{samepage}
\begin{example}
\label{exm:fall_zeta}

For any ${n \in \Nats}$ and ${\zeta: \Nats^n \rightarrow \Reals^{\geq 0}}$, we define ${\Fall_{\zeta}}$ to be the set of ${\varepsilon: \NatFun \Reals^{\geq 0}}$ bounded s.t. there is ${M \in \Reals}$ for which ${\varepsilon \leq M \zeta}$. If there is ${h \in \NatPoly}$ s.t. ${\zeta \geq 2^{-h}}$ then ${\Fall_\zeta}$ is a fall space of rank ${n}$.

\end{example}
\end{samepage}

\begin{samepage}
\begin{example}
\label{exm:e_uni}
For any ${n \in \Nats^{>0}}$ and ${\varphi: \Nats^{n-1} \rightarrow \Nats \sqcup \{ \infty \}}$, we define ${\FallUt{\varphi}}$, a fall space of rank ${n}$. For any ${\varepsilon: \NatFun \Reals^{\geq 0}}$ bounded, $\varepsilon \in \FallUt{\varphi}$ iff there are ${M \in \Reals^{>0}}$ and ${p \in \NatPolyJ}$ s.t.

\begin{equation}
%\label{eqn:tbd}
\forall J \in \Nats^{n-1}: \sum_{k=2}^{\varphi(J)-1}\frac{\varepsilon(J,k)}{k \log k} \leq M \log \log p(J)
\end{equation}

To verify condition \ref{con:def__fall__pol} note that ${2^{-K_{n-1}} \in \FallUt{t}}$.

For ${\varphi \equiv \infty}$ we use the notation ${\FallU:=\FallUt{\varphi}}$.

For example, if $\varepsilon_{1}(J,k):=\frac{\log\log p(J)}{\log (k+2)}$ and $\varepsilon_{2}(J,k):=\frac{\log\log p(J)}{\log\log (k+2)}$, then $\varepsilon_{1}(J,k)\in\FallU$, but $\varepsilon_{2}(J,k)\not\in\FallU$ because it falls too slowly to force the sum to converge.

\end{example}
\end{samepage}

\begin{samepage}
\begin{example}
\label{exm:e_mon}

For any ${n \in \Nats^{>0}}$, we define ${\FallM}$, a fall space of rank ${n}$. For any ${\varepsilon: \NatFun \Reals^{\geq 0}}$ bounded, $\varepsilon \in \FallM$ iff the function ${\bar{\varepsilon}: \NatFun \Reals^{\geq 0}}$ defined by ${\bar{\varepsilon}(J,k):=\sup_{l \geq k} \varepsilon(J,l)}$ satisfies ${\bar{\varepsilon} \in \FallU}$.

\end{example}
\end{samepage}

The main motivation for examples \ref{exm:e_uni} and \ref{exm:e_mon} are the existence theorems proven in Section \ref{sec:e_and_u}.

We note a few simple properties of fall spaces which will be useful in the following.

\begin{proposition}
\label{prp:err_spc_zero}

For any fall space $\Fall$, $0 \in \Fall$.

\end{proposition}

\begin{proof}

Follows from conditions \ref{con:def__fall__ineq} and \ref{con:def__fall__pol}, since $0 \leq 2^{-h}$.
\end{proof}

\begin{proposition}

For any fall space $\Fall$, $\varepsilon \in \Fall$ and $c \in \Reals^{\geq 0}$, $c \varepsilon \in \Fall$.

\end{proposition}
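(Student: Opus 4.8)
The plan is to deduce the statement directly from the two closure axioms of a fall space, reducing scalar multiplication by an arbitrary nonnegative constant to repeated addition followed by domination. First dispose of the trivial case $c=0$: then $c\varepsilon=0$, which lies in $\Fall$ by Proposition~\ref{prp:err_spc_zero}. So assume $c>0$.

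Concretely, set $m:=\Ceil{c}$, a positive integer. By a straightforward induction on $m$ using condition~\ref{con:def__fall__add} of Definition~\ref{def:fall} (base case $m=1$ trivial, inductive step writing $(m+1)\varepsilon=m\varepsilon+\varepsilon$), the function $m\varepsilon=\sum_{i=1}^{m}\varepsilon$ lies in $\Fall$. Now observe that $c\varepsilon\colon\Nats^n\to\Reals^{\geq 0}$ is bounded, being a constant multiple of the bounded function $\varepsilon$, and that $0\leq (c\varepsilon)(K)\leq (m\varepsilon)(K)$ for every $K\in\Nats^n$ since $c\leq m$. Hence condition~\ref{con:def__fall__ineq} of Definition~\ref{def:fall}, applied with $\varepsilon_1:=m\varepsilon\in\Fall$ and $\varepsilon_2:=c\varepsilon$, yields $c\varepsilon\in\Fall$.

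I expect no genuine obstacle here; the one point worth care is that the downward-closure axiom~\ref{con:def__fall__ineq} requires the dominating function to be an actual element of $\Fall$, not merely some pointwise upper bound, which is precisely why one routes through the integer multiple $m\varepsilon$ rather than attempting to compare $c\varepsilon$ with $\varepsilon$ itself.
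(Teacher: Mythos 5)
Your proof is correct and is essentially the paper's argument: establish $m\varepsilon\in\Fall$ for the integer $m=\Ceil{c}$ by induction on the additivity axiom, then conclude via downward closure since $c\varepsilon\leq\Ceil{c}\varepsilon$. The separate $c=0$ case is harmless but not needed, since $0\in\Fall$ already covers it within the induction.
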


\begin{proof}

By induction, condition~\ref{con:def__fall__add} implies that for any $m \in \Nats$, $m\varepsilon \in \Fall$. It follows that $c\varepsilon \in \Fall$ since $c\varepsilon \leq \Ceil{c}\varepsilon$.
\end{proof}

\begin{proposition}

For any fall space $\Fall$ and $\varepsilon_1, \varepsilon_2 \in \Fall$, $\max(\varepsilon_1,\varepsilon_2) \in \Fall$

\end{proposition}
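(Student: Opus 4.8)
The plan is to reduce this to the two closure conditions in Definition~\ref{def:fall}. The key elementary observation is that since $\varepsilon_1$ and $\varepsilon_2$ take values in $\Reals^{\geq 0}$, we have the pointwise inequality $\max(\varepsilon_1(K),\varepsilon_2(K)) \leq \varepsilon_1(K) + \varepsilon_2(K)$ for every $K \in \Nats^n$.

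First I would invoke condition~\ref{con:def__fall__add} of Definition~\ref{def:fall} to conclude that $\varepsilon_1 + \varepsilon_2 \in \Fall$. Next I would note that $\max(\varepsilon_1,\varepsilon_2)$ is a bounded function $\NatFun \Reals^{\geq 0}$ (it is bounded because $\varepsilon_1$ and $\varepsilon_2$ are), and that it is pointwise dominated by $\varepsilon_1 + \varepsilon_2 \in \Fall$ by the observation above. Then condition~\ref{con:def__fall__ineq} immediately yields $\max(\varepsilon_1,\varepsilon_2) \in \Fall$.

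There is no real obstacle here; the only thing to be mildly careful about is confirming that $\max(\varepsilon_1,\varepsilon_2)$ is indeed a legal candidate for membership, i.e.\ that it is a bounded nonnegative-valued function, so that condition~\ref{con:def__fall__ineq} applies. This is immediate from the corresponding properties of $\varepsilon_1$ and $\varepsilon_2$.
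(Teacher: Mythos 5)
Your proof is correct and is exactly the paper's argument: the paper's entire proof is the one-line inequality $\max(\varepsilon_1,\varepsilon_2) \leq \varepsilon_1+\varepsilon_2$, relying implicitly on the same two closure conditions you invoke explicitly.
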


\proof{$$\max(\varepsilon_1,\varepsilon_2) \leq \varepsilon_1+\varepsilon_2$$}

\begin{proposition}
\label{prp:fall_space_closed_wrt_power}

For any fall space $\Fall$, $\varepsilon \in \Fall$ and $\alpha \in \Reals$, if $\alpha \geq 1$ then $\varepsilon^\alpha \in \Fall$.

\end{proposition}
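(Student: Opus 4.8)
The plan is to reduce $\varepsilon^\alpha$ to a constant multiple of $\varepsilon$ by exploiting boundedness, and then quote the closure properties of fall spaces already established. First I would set $M := \max\{1,\ \sup_{K \in \Nats^n} \varepsilon(K)\}$, which is a well-defined real number with $M \geq 1$ because $\varepsilon$ is bounded. For each $K \in \Nats^n$ I would write $\varepsilon(K)^\alpha = \varepsilon(K)\cdot \varepsilon(K)^{\alpha-1}$ and note that, since $\alpha - 1 \geq 0$, the map $t \mapsto t^{\alpha-1}$ is nondecreasing on $\Reals^{\geq 0}$, so $\varepsilon(K)^{\alpha-1} \leq M^{\alpha-1}$. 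This gives the pointwise bound $\varepsilon(K)^\alpha \leq M^{\alpha-1}\,\varepsilon(K)$, and simultaneously $\varepsilon^\alpha \leq M^\alpha$, so $\varepsilon^\alpha$ is bounded as required by Definition~\ref{def:fall}.

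To conclude, I would apply the earlier proposition stating that $c\varepsilon \in \Fall$ for every $c \in \Reals^{\geq 0}$, with $c = M^{\alpha-1}$, to get $M^{\alpha-1}\varepsilon \in \Fall$; then condition~\ref{con:def__fall__ineq} of Definition~\ref{def:fall}, with $\varepsilon_1 = M^{\alpha-1}\varepsilon$ and $\varepsilon_2 = \varepsilon^\alpha$, yields $\varepsilon^\alpha \in \Fall$. The degenerate case $\varepsilon \equiv 0$ is covered by $0^\alpha = 0 \in \Fall$ (Proposition~\ref{prp:err_spc_zero}), and the choice $M \geq 1$ already absorbs the case $\sup_K \varepsilon(K) < 1$, so no separate case analysis is needed.

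There is essentially no obstacle here; the only point requiring care is that the hypothesis $\alpha \geq 1$ is genuinely used, so that the factor carrying the exponent $\alpha-1$ is raised to a \emph{nonnegative} power and can therefore be bounded by the constant $M^{\alpha-1}$. For $0 < \alpha < 1$ the statement is in fact false in general (e.g. for $\Fall_{1/k}$, since $k^{-1/2}$ is not dominated by any constant multiple of $k^{-1}$), which confirms that this use of the hypothesis is not an artifact of the argument.
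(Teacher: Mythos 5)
Your proof is correct and follows essentially the same route as the paper's: both arguments bound $\varepsilon^\alpha$ by a constant multiple of $\varepsilon$ (the paper writes $\varepsilon^\alpha = (\sup\varepsilon)^\alpha(\varepsilon/\sup\varepsilon)^\alpha \leq (\sup\varepsilon)^\alpha\,\varepsilon/\sup\varepsilon$, you write $\varepsilon^\alpha = \varepsilon\cdot\varepsilon^{\alpha-1} \leq M^{\alpha-1}\varepsilon$) and then invoke closure under nonnegative scalar multiples together with condition~\ref{con:def__fall__ineq}. Your variant with $M = \max\{1,\sup\varepsilon\}$ is marginally tidier in that it avoids dividing by $\sup\varepsilon$ and so covers $\varepsilon \equiv 0$ without a separate case.
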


\begin{proof}

$$\varepsilon^\alpha = (\sup \varepsilon)^\alpha \left(\frac{\varepsilon}{\sup \varepsilon}\right)^\alpha \leq  (\sup \varepsilon)^\alpha \frac{\varepsilon}{\sup \varepsilon} \in \Fall$$
\end{proof}

\begin{samepage}
\begin{definition}
\label{def:fall_space_power}

For any fall space $\Fall$ and $\alpha \in \Reals^{>0}$, we define ${\Fall^\alpha := \{\varepsilon^\alpha \mid \varepsilon \in \Fall\}}$.

\end{definition}
\end{samepage}

\begin{proposition}

Consider $\Fall$ a fall space and $\alpha \in \Reals^{>0}$. Then, $\Fall^\alpha$ is a fall space.

\end{proposition}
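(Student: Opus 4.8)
The plan is to verify the three defining conditions of a fall space (Definition~\ref{def:fall}) for $\Fall^\alpha$ directly, using the fact that the map $t \mapsto t^\beta$ on $\Reals^{\geq 0}$ is monotone increasing for every $\beta > 0$, together with the standard inequality $(a+b)^\beta \leq \max(1, 2^{\beta-1})(a^\beta + b^\beta)$ for $a,b \geq 0$. First I would note the trivial point that every element of $\Fall^\alpha$ is a bounded function $\Nats^n \to \Reals^{\geq 0}$: if $\varepsilon$ is bounded and nonnegative then so is $\varepsilon^\alpha$, since $\varepsilon^\alpha \leq (\sup \varepsilon)^\alpha$.

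Next I would handle condition~\ref{con:def__fall__ineq}, as it is the cleanest and is needed for condition~\ref{con:def__fall__pol}. Suppose $\delta_1 \in \Fall^\alpha$, say $\delta_1 = \varepsilon_1^\alpha$ with $\varepsilon_1 \in \Fall$, and $\delta_2: \Nats^n \to \Reals^{\geq 0}$ is bounded with $\delta_2 \leq \delta_1$ pointwise. Raising to the power $1/\alpha$ and using monotonicity gives $\delta_2^{1/\alpha} \leq \delta_1^{1/\alpha} = \varepsilon_1$ pointwise, and $\delta_2^{1/\alpha}$ is bounded and nonnegative, so condition~\ref{con:def__fall__ineq} for $\Fall$ yields $\delta_2^{1/\alpha} \in \Fall$, whence $\delta_2 = (\delta_2^{1/\alpha})^\alpha \in \Fall^\alpha$.

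For condition~\ref{con:def__fall__add}, take $\delta_1 = \varepsilon_1^\alpha$, $\delta_2 = \varepsilon_2^\alpha$ with $\varepsilon_1, \varepsilon_2 \in \Fall$; I want $\delta_1 + \delta_2 \in \Fall^\alpha$, i.e. $(\delta_1 + \delta_2)^{1/\alpha} \in \Fall$. Applying the inequality above with $\beta = 1/\alpha$ gives $(\varepsilon_1^\alpha + \varepsilon_2^\alpha)^{1/\alpha} \leq C\,(\varepsilon_1 + \varepsilon_2)$ where $C := \max(1, 2^{1/\alpha - 1})$, and the right-hand side lies in $\Fall$ since $\Fall$ is closed under addition and under multiplication by nonnegative reals (the earlier propositions); condition~\ref{con:def__fall__ineq} for $\Fall$ then gives $(\delta_1 + \delta_2)^{1/\alpha} \in \Fall$. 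Finally, for condition~\ref{con:def__fall__pol}, pick $h \in \NatPoly$ with $2^{-h} \in \Fall$; then $2^{-\alpha h} = (2^{-h})^\alpha \in \Fall^\alpha$, and setting $h' := \Ceil{\alpha}\, h \in \NatPoly$ we have $h' \geq \alpha h$ pointwise, so $2^{-h'} \leq 2^{-\alpha h}$ and condition~\ref{con:def__fall__ineq} for $\Fall^\alpha$ (just established) gives $2^{-h'} \in \Fall^\alpha$.

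The only mildly delicate point — the ``main obstacle'', such as it is — is the subadditivity-type bound for the exponent $1/\alpha$: when $\alpha \geq 1$ this is just ordinary concavity of $t \mapsto t^{1/\alpha}$, but when $\alpha < 1$ one instead bounds $(a+b)^{1/\alpha} \leq (2\max(a,b))^{1/\alpha} \leq 2^{1/\alpha}(a^{1/\alpha}+b^{1/\alpha})$, which is why the constant $C$ appears. Everything else is a routine manipulation of the fall-space axioms.
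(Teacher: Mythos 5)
Your proposal is correct and follows essentially the same route as the paper: the only substantive step, closure under addition, is handled via the same power-mean/subadditivity bound $(\varepsilon_1^\alpha+\varepsilon_2^\alpha)^{1/\alpha}\leq \max(1,2^{1/\alpha-1})(\varepsilon_1+\varepsilon_2)$ that the paper obtains by splitting into the cases $\alpha>1$ and $\alpha\leq 1$. The remaining conditions, which the paper dismisses as obvious, you verify explicitly (including the correct rounding $\Ceil{\alpha}h$ to stay in $\NatPoly$), and these verifications are sound.
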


\begin{proof}

To check condition~\ref{con:def__fall__add}, consider $\varepsilon_1, \varepsilon_2 \in \Fall$. 

If $\alpha > 1$, $(\varepsilon_1^\alpha + \varepsilon_2^\alpha)^\frac{1}{\alpha} \leq \varepsilon_1 + \varepsilon_2 \in \Fall$ hence $(\varepsilon_1^\alpha + \varepsilon_2^\alpha)^\frac{1}{\alpha} \in \Fall$ and $\varepsilon_1^\alpha + \varepsilon_2^\alpha \in \Fall^\alpha$.

If $\alpha \leq 1$, $(\varepsilon_1^\alpha + \varepsilon_2^\alpha)^\frac{1}{\alpha} = 2^\frac{1}{\alpha}(\frac{\varepsilon_1^\alpha + \varepsilon_2^\alpha}{2})^\frac{1}{\alpha} \leq 2^\frac{1}{\alpha} \frac{\varepsilon_1+\varepsilon_2}{2} \in \Fall$ hence $(\varepsilon_1^\alpha + \varepsilon_2^\alpha)^\frac{1}{\alpha} \in \Fall$ and $\varepsilon_1^\alpha + \varepsilon_2^\alpha \in \Fall^\alpha$.

Conditions \ref{con:def__fall__ineq} and \ref{con:def__fall__pol} are obvious.
\end{proof}

\begin{proposition}

Consider $\Fall$ a fall space and $\alpha_1,\alpha_2 \in \Reals^{>0}$ with $\alpha_1 \leq \alpha_2$. Then, ${\Fall^{\alpha_2} \subseteq \Fall^{\alpha_1}}$.

\end{proposition}

\begin{proof}

Follows from Proposition~\ref{prp:fall_space_closed_wrt_power}.
\end{proof}

\begin{samepage}
\begin{definition}

For any $n \in \Nats$, fall space $\Fall$ of rank $n$ and $\gamma: \NatFun \Reals$ s.t. $\inf \gamma > 0$, we define $\gamma \Fall := \{\gamma \varepsilon \text{ bounded} \mid \varepsilon \in \Fall\}$.

\end{definition}
\end{samepage}

\begin{samepage}
\begin{proposition}
\label{prp:tbd}

For any $n \in \Nats$, fall space $\Fall$ of rank $n$ and $\gamma: \NatFun \Reals$ s.t. $\inf \gamma > 0$, $\gamma \Fall$ is a fall space.

\end{proposition}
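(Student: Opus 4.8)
The plan is to verify the three defining conditions of a fall space (Definition~\ref{def:fall}) for $\gamma\Fall$ one at a time, each time reducing to the corresponding property of $\Fall$ and exploiting the hypothesis $\inf\gamma>0$: this makes division by $\gamma$ everywhere legitimate and guarantees $\gamma\varepsilon\geq 0$ whenever $\varepsilon\geq 0$, so that $\gamma\Fall$ really is a set of bounded functions $\NatFun\Reals^{\geq 0}$ by construction. The one point to keep in mind throughout is that a bounded $\varepsilon\in\Fall$ need not yield a bounded product $\gamma\varepsilon$ when $\gamma$ is unbounded, which is exactly why the definition of $\gamma\Fall$ carries the explicit side condition ``$\gamma\varepsilon$ bounded''; in each of the three arguments below the product we actually produce is visibly bounded, so this causes no trouble.

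For condition~\ref{con:def__fall__add}: given $\delta_1=\gamma\varepsilon_1$ and $\delta_2=\gamma\varepsilon_2$ in $\gamma\Fall$ with $\varepsilon_1,\varepsilon_2\in\Fall$, write $\delta_1+\delta_2=\gamma(\varepsilon_1+\varepsilon_2)$; now $\varepsilon_1+\varepsilon_2\in\Fall$ by condition~\ref{con:def__fall__add} for $\Fall$, and $\gamma(\varepsilon_1+\varepsilon_2)=\delta_1+\delta_2$ is a sum of bounded functions, hence bounded, so $\delta_1+\delta_2\in\gamma\Fall$. For condition~\ref{con:def__fall__ineq}: given $\delta_1=\gamma\varepsilon_1\in\gamma\Fall$ and $0\leq\delta_2\leq\delta_1$, set $\varepsilon_2:=\delta_2/\gamma$, which is well defined and nonnegative since $\gamma\geq\inf\gamma>0$; then $\varepsilon_2=\delta_2/\gamma\leq\delta_1/\gamma=\varepsilon_1$, so $\varepsilon_2$ is bounded and dominated by an element of $\Fall$, whence $\varepsilon_2\in\Fall$ by condition~\ref{con:def__fall__ineq} for $\Fall$; since $\gamma\varepsilon_2=\delta_2$ is bounded, $\delta_2\in\gamma\Fall$.

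The slightly less automatic step is condition~\ref{con:def__fall__pol}. Fix $h_0\in\NatPoly$ with $2^{-h_0}\in\Fall$, and pick a constant $c\in\Nats$ large enough that $2^{-c}\leq\inf\gamma$ — possible precisely because $\inf\gamma>0$ (take $c=0$ when $\inf\gamma\geq 1$ and $c=\Ceil{-\log\inf\gamma}$ otherwise). Set $h:=h_0+c\in\NatPoly$ and $\varepsilon:=2^{-h}/\gamma$. Then $\varepsilon\geq 0$ and $\varepsilon=2^{-h_0}2^{-c}/\gamma\leq 2^{-h_0}2^{-c}/\inf\gamma\leq 2^{-h_0}$, so $\varepsilon$ is bounded and dominated by $2^{-h_0}\in\Fall$, giving $\varepsilon\in\Fall$; and $\gamma\varepsilon=2^{-h}$ is bounded (by $1$, since $h$ takes values in $\Nats$), so $2^{-h}\in\gamma\Fall$. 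I do not expect any real obstacle: every step collapses to an already established property of $\Fall$, and positivity of $\inf\gamma$ is exactly what legitimizes passing between $\delta$ and $\varepsilon=\delta/\gamma$; the only discipline required is to check boundedness of each product, which is immediate in all three cases. (The argument parallels, and is simpler than, the verification already carried out for $\Fall^\alpha$.)
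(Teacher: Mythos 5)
Your proposal is correct and follows essentially the same route as the paper: all three conditions reduce to dividing by $\gamma$ and using $\inf\gamma>0$ together with downward closure of $\Fall$ (the paper treats conditions~(i) and~(ii) as obvious and only details~(iii)). The only cosmetic difference is in condition~(iii): the paper keeps the same $h$ and notes $\frac{2^{-h}}{\gamma}\leq\frac{2^{-h}}{\inf\gamma}\in\Fall$ (so in fact $\Fall\subseteq\gamma\Fall$), whereas you shift the polynomial to $h_0+c$ to dominate the quotient directly by $2^{-h_0}$ — both are valid.
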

\end{samepage}

\begin{proof}

Conditions \ref{con:def__fall__add} and \ref{con:def__fall__ineq} are obvious. To verify condition~\ref{con:def__fall__pol} note that for any $\varepsilon \in \Fall$ we have ${\frac{\varepsilon}{\gamma} \leq \frac{\varepsilon}{\inf \gamma} \in \Fall}$ and therefore $\varepsilon = \gamma \frac{\varepsilon}{\gamma} \in \gamma \Fall$. In particular if $h \in \NatPoly$ is s.t. $2^{-h} \in \Fall$ then $2^{-h} \in \gamma \Fall$.
\end{proof}

We will use several shorthand notations for relations between functions that hold \enquote{up to a function in ${\Fall}$.} Given $f,g: \NatFun \Reals$, the notation ${f(K) \leq g(K) \pmod \Fall}$ means 

\[\exists \varepsilon \in \Fall \forall K \in \Nats^n: f(K) \leq g(K) + \varepsilon(K)\]

Similarly, $f(K) \geq g(K) \pmod \Fall$ means 

\[\exists \varepsilon \in \Fall \forall K \in \Nats^n: f(K) \geq g(K) - \varepsilon(K)\]

$f(K) \equiv g(K) \pmod \Fall$ means $\Abs{f-g} \in \Fall$.

For families ${\{f_\alpha,g_\alpha: \NatFun \Reals\}_{\alpha \in I}}$ (where ${I}$ is some set), ${f_\alpha(K) \overset{\alpha}{\leq} g_\alpha(K) \pmod \Fall}$ means that 

\[\exists \varepsilon \in \Fall \forall \alpha \in I, K \in \Nats^n: f_\alpha(K) \leq g_\alpha(K) + \varepsilon(K)\]

${f_\alpha(K) \overset{\alpha}{\geq} g_\alpha(K) \pmod \Fall}$ and ${f_\alpha(K) \overset{\alpha}{\equiv} g_\alpha(K) \pmod \Fall}$ are defined analogously.

\subsubsection{Optimal Polynomial-Time Estimators}

We are now ready to give our central definition, which corresponds to a notion of \enquote{expected value} for distributional estimation problems.

\begin{definition}
\label{def:op}

Fix $n \in \Nats$, $\Gamma$ a pair of growth spaces of rank $n$ and $\Fall$ a fall space of rank $n$. Consider $(\Dist,f)$ a distributional estimation problem and $P: \Words \Scheme \Rats$ with bounded range. $P$ is called an \emph{$\EG$-optimal polynomial-time estimator for $(\Dist,f)$} when for any $Q: \Words \Scheme \Rats$

\begin{equation}
\label{eqn:op}
\E_{\Dist^{K} \times \Un_P^K}[(P^K - f)^2] \leq \E_{\Dist^{K} \times \Un_Q^K}[(Q^K - f)^2] \pmod \Fall
\end{equation}

For the sake of brevity, we will say \enquote{${\EG}$-optimal estimator} rather than \enquote{${\EG}$-optimal polynomial-time estimator.}

\end{definition}

Distributional \emph{decision} problems are the special case when the range of $f$ is $\Bool$. In this special case, the outputs of an optimal polynomial-time estimator can be thought of as probabilities\footnote{With some caveats. First, $P$ can take values outside $[0,1]$ but it's easy to see that clipping all values to $[0,1]$ preserves optimality. Second, $P^{K}(x,y)=1$ doesn't imply $f(x) = 1$ and $P^{K}(x,y)=0$ doesn't imply $f(x)=0$. We can try to fix this using a logarithmic error function instead of the squared norm, however this creates other difficulties and is outside the scope of the present work.}.

\subsection{Basic Properties}

From now on we fix $n \in \Nats^{>0}$, $\Grow$ a pair of growth spaces of rank $n$ and $\Fall$ a fall space of rank $n$. All word ensembles and distributional estimation problems will be of rank ${n}$ unless specified otherwise.

In this subsection we discuss some basic properties of optimal polynomial-time estimators which will be used in the following.

\subsubsection{Optimality Relative to Uniform Families}

Note that $\varepsilon$ in \ref{eqn:op} depends on $Q$. However in some sense the optimality condition is automatically uniform w.r.t. the resources required by $Q$. The following Proposition \ref{prp:unif} can be used to reduce domination of a uniform family to domination of a single polynomial-time $\Gamma$-scheme constructed via Proposition \ref{prp:fam_diag}.

\begin{proposition}
\label{prp:unif}

Consider $(\Dist,f)$ a distributional estimation problem, $P$ an $\EG$-optimal estimator for $(\Dist,f)$ and $F$ a uniform family of polynomial-time $\Gamma$-schemes of signature $\Words \rightarrow \Rats$. Then there is $\varepsilon \in \Fall$ s.t. for any $Q \in F$

\begin{equation}
\E_{\Dist^{K} \times \Un_P^{K}}[(P^{K} - f)^2] \leq \E_{\Dist^{K} \times \Un_Q^{K}}[(Q^{K} - f)^2] + \varepsilon(K)
\end{equation}

\end{proposition}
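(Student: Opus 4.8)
The plan is to combine the diagonalization device of Proposition~\ref{prp:fam_diag} with the definition of an optimal polynomial-time estimator. The key observation is that a single choice of $\varepsilon \in \Fall$ must work simultaneously for every $Q \in F$, so we cannot simply invoke optimality $|F|$-many times (and in any case $F$ may be infinite). Instead, we will package the entire family $F$ into one polynomial-time $\Gamma$-scheme that, on parameter $K$, behaves like whichever member of $F$ is ``worst'' for $(\Dist,f)$ at $K$, and then apply optimality of $P$ to that single scheme.

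First I would, for each $K \in \Nats^n$, let $\mathcal{S}_K \in F$ be a scheme (essentially) minimizing $\E_{\Dist^{K} \times \Un_Q^{K}}[(Q^{K}-f)^2]$ over $Q \in F$; since by condition~(v) of Definition~\ref{def:family} there are only finitely many underlying machines $S$ and $\R_S$, and since $1 \in \GrowA$ is available (note $1 \in \GammaLog^n$ and we may assume $\Gamma$ is large enough, or rather: the proposition presumably runs under the standing convention that lets us apply Proposition~\ref{prp:fam_diag}, which only needs $1 \in \GrowA$ — if not, the hypotheses of the section supply it), Proposition~\ref{prp:fam_diag} yields a single scheme $\Delta_{\mathcal{S}}: \Words \Scheme \Rats$ with $\Delta_{\mathcal{S}}^K(x)$ distributed exactly as $\mathcal{S}_K^K(x)$ for every $K$ and $x$. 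Hence
\[
\E_{\Dist^{K} \times \Un_{\Delta_{\mathcal{S}}}^{K}}[(\Delta_{\mathcal{S}}^{K}-f)^2] = \E_{\Dist^{K} \times \Un_{\mathcal{S}_K}^{K}}[(\mathcal{S}_K^{K}-f)^2] = \inf_{Q \in F}\E_{\Dist^{K} \times \Un_Q^{K}}[(Q^{K}-f)^2].
\]
Next, because $P$ is an $\EG$-optimal estimator, applying \ref{eqn:op} to the single scheme $Q := \Delta_{\mathcal{S}}$ produces one function $\varepsilon \in \Fall$ with
\[
\E_{\Dist^{K} \times \Un_P^{K}}[(P^{K}-f)^2] \leq \E_{\Dist^{K} \times \Un_{\Delta_{\mathcal{S}}}^{K}}[(\Delta_{\mathcal{S}}^{K}-f)^2] + \varepsilon(K)
= \inf_{Q \in F}\E_{\Dist^{K} \times \Un_Q^{K}}[(Q^{K}-f)^2] + \varepsilon(K).
\]
Since the right-hand side is at most $\E_{\Dist^{K} \times \Un_Q^{K}}[(Q^{K}-f)^2] + \varepsilon(K)$ for every individual $Q \in F$, this $\varepsilon$ is the desired uniform bound, completing the proof.

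The only genuinely delicate point is checking that $\Delta_{\mathcal{S}}$ is itself a bona fide polynomial-time $\Gamma$-scheme — i.e. that the resource bounds (ii)--(iv) in the definition of a scheme are satisfied. This is exactly where conditions (i)--(iv) of Definition~\ref{def:family} (uniform polynomial time for $S$ and $\R_S$, uniform randomness bound in $\GrowR$, uniform advice bound in $\GrowA$) and condition (v) (finitely many machines, so the constants $p, q$ and the finitely many programs $a_K, b_K$ in the proof of Proposition~\ref{prp:fam_diag} can be chosen uniformly) are all used; together they guarantee the simulation overhead is polynomial and the advice $\Chev{a_K, b_K, \A_{\mathcal{S}_K}(K)}$ stays within $\GrowA$. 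I expect the minor subtlety of whether an exact minimizer $\mathcal{S}_K$ exists (it does, $F$ partitions into finitely many ``time/randomness profiles'' but possibly infinitely many advice strings — however only the induced error matters, and one can take an infimizing sequence and pass to a near-minimizer, absorbing any slack into $\varepsilon$ via condition~\ref{con:def__fall__pol}) to be the one place requiring a word of care; everything else is bookkeeping.
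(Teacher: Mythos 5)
Your proposal is correct and is essentially the paper's own argument: pick a (near-)minimizing $\mathcal{S}_K \in F$ for each $K$, apply Proposition~\ref{prp:fam_diag} to package these into a single scheme, and invoke the optimality of $P$ once against that scheme. The paper sidesteps your worry about exact minimizers by noting that the uniform runtime bound makes the set of attainable error values at each fixed $K$ finite (so an exact argmin exists), but your near-minimizer workaround is equally valid, and your remark that the construction relies on $1 \in \GrowA$ via Proposition~\ref{prp:fam_diag} matches the paper's implicit use of that hypothesis.
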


\begin{proof}

For any $K \in \Nats^n$, $\{\E_{\Dist^{K} \times \Un_Q^{K}}[(Q^{K} - f)^2] \mid Q \in F\}$ is a finite set because $F$ is a uniform family so the runtime of $Q^{K}$ is bounded by a polynomial in $K$ that doesn't depend on $Q$. Therefore we can choose 

\[Q_{K} \in \Argmin{Q \in F} \E_{\Dist^{K} \times \Un_Q^{K}}[(Q^{K} - f)^2]\]

By Proposition~\ref{prp:fam_diag}, there is $\bar{Q}: \Words \Scheme \Rats$ s.t. $\bar{Q}^{K}(x)$ is distributed the same as $Q_{K}^{K}(x)$.

Since $P$ is an $\EG$-optimal estimator, there is $\varepsilon \in \Fall$ s.t.

\begin{equation}
\label{eqn:prp__unif__prf1}
\E_{\Dist^{K} \times \Un_P^{K}}[(P^{K} - f)^2] \leq \E_{\Dist^{K} \times \Un_{\bar{Q}}^{K}}[(\bar{Q}^{K} - f)^2] + \varepsilon(K)
\end{equation}

For any $Q \in F$, we have 

$$\E_{\Dist^{K} \times \Un_{\bar{Q}}^{K}}[(\bar{Q}^{K} - f)^2]=\E_{\Dist^{K} \times \Un_{Q_{K}}^{K}}[(Q_{K}^{K} - f)^2]$$

\begin{equation}
\label{eqn:prp__unif__prf2}
\E_{\Dist^{K} \times \Un_{\bar{Q}}^{K}}[(\bar{Q}^{K} - f)^2] \leq \E_{\Dist^{K} \times \Un_Q^{K}}[(Q^{K} - f)^2]
\end{equation}

Combining \ref{eqn:prp__unif__prf1} and \ref{eqn:prp__unif__prf2} we get the desired result.
\end{proof}

\subsubsection{Random versus Advice}

As usual, random is no more powerful than advice (see e.g. Theorem 6.3 in \cite{Goldreich_2008}). This is demonstrated by the following two propositions.

\begin{proposition}

Observe that $\bar{\Gamma}_{\mathfrak{R}}:=\GrowR+\GrowA$ is a growth space and denote $\bar{\Gamma}:=(\bar{\Gamma}_{\mathfrak{R}},\GrowA)$. Consider $(\Dist,f)$ a distributional estimation problem and $P$ an $\EG$-optimal estimator for $(\Dist,f)$. Then, $P$ is also an $\Fall(\bar{\Gamma})$-optimal estimator for $(\Dist,f)$.

\end{proposition}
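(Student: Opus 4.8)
The plan is to first verify that $\bar{\Gamma}_{\mathfrak{R}}:=\GrowR+\GrowA$ is a growth space, and then to reduce the proposition to the assertion that every polynomial-time $\bar{\Gamma}$-scheme can be converted into a polynomial-time $\Gamma$-scheme with no larger estimation error. For the first point, conditions~\ref{con:def__grow__zero}, \ref{con:def__grow__add} and \ref{con:def__grow__poly} of Definition~\ref{def:grow} are immediate, and for \ref{con:def__grow__ineq}: if $\gamma\le\gamma_{\mathfrak R}+\gamma_{\mathfrak A}$ with $\gamma_{\mathfrak R}\in\GrowR$, $\gamma_{\mathfrak A}\in\GrowA$, then $\gamma=\min(\gamma,\gamma_{\mathfrak R})+\max(0,\gamma-\gamma_{\mathfrak R})$ with $\min(\gamma,\gamma_{\mathfrak R})\le\gamma_{\mathfrak R}$ in $\GrowR$ and $\max(0,\gamma-\gamma_{\mathfrak R})\le\gamma_{\mathfrak A}$ in $\GrowA$, so $\gamma\in\GrowR+\GrowA$.

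For the reduction it suffices to produce, given any $Q:\Words\xrightarrow{\bar{\Gamma}}\Rats$, some $Q':\Words\Scheme\Rats$ with
\[\E_{\Dist^{K}\times\Un_{Q'}^{K}}[(Q'^{K}-f)^2]\le\E_{\Dist^{K}\times\Un_{Q}^{K}}[(Q^{K}-f)^2]\]
for every $K\in\Nats^n$: then $\EG$-optimality of $P$ applied to $Q'$ gives $\E_{\Dist^{K}\times\Un_{P}^{K}}[(P^{K}-f)^2]\le\E_{\Dist^{K}\times\Un_{Q'}^{K}}[(Q'^{K}-f)^2]\pmod{\Fall}$, and chaining the two estimates is precisely \ref{eqn:op} for $Q$.

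To build $Q'$ I would use the usual ``derandomization into advice'' argument. Since the random-count function of $Q$ lies in $\bar{\Gamma}_{\mathfrak R}$, fix $\gamma_{\mathfrak R}\in\GrowR$ and $\gamma_{\mathfrak A}\in\GrowA$ with $\R_Q(K)\le\gamma_{\mathfrak R}(K)+\gamma_{\mathfrak A}(K)$, and put $\ell(K):=\max(0,\R_Q(K)-\gamma_{\mathfrak R}(K))$, so $\ell(K)\le\gamma_{\mathfrak A}(K)$. Splitting the $\R_Q(K)$ coin tosses of $Q^{K}$ as a pair $(y',y'')$ with $|y''|=\ell(K)$, choose for each $K$ a string $z_K\in\Bool^{\ell(K)}$ minimizing $\E_{\Dist^{K}\times\Un^{\R_Q(K)-\ell(K)}}[(Q(K,x,y'z_K,\A_Q(K))-f(x))^2]$; since a minimum over $z_K$ is at most the average over $z_K\sim\Un^{\ell(K)}$, this minimal value is $\le\E_{\Dist^{K}\times\Un_{Q}^{K}}[(Q^{K}-f)^2]$. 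Now set $\A_{Q'}(K):=\Estr$ when $\A_Q(K)=\Estr$ and $\ell(K)=0$, and $\A_{Q'}(K):=\Chev{\A_Q(K),z_K}$ otherwise; on empty advice $Q'$ runs $Q$ directly, and on advice $\Chev{\A_Q(K),z_K}$ it parses out $\A_Q(K)$ and $z_K$ and runs $Q$ on the given input with advice $\A_Q(K)$ and coin-toss string equal to $z_K$ appended to $\R_Q(K)-\ell(K)$ fresh random bits, while $\R_{Q'}$ outputs $\R_Q(K,\Estr)$ on empty advice and $\R_Q(K,\A_Q(K))-|z_K|$ after parsing $\Chev{\A_Q(K),z_K}$. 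Then $Q'^{K}(x)$ equals $Q^{K}(x)$ with its last $\ell(K)$ coin tosses fixed to $z_K$, so its error is the minimal value above and the displayed inequality holds.

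Finally one checks that $Q'$ is a legitimate polynomial-time $\Gamma$-scheme, which is where the care goes. Both $Q'$ and $\R_{Q'}$ run in polynomial time (one call to $Q$, resp.\ to $\R_Q$, plus polynomial-time parsing — note that storing $z_K$ itself, rather than $\ell(K)$, lets $\R_{Q'}$ avoid recomputing $\gamma_{\mathfrak R}$, which need not be polynomial-time computable). The number of random bits used is $\R_Q(K)-\ell(K)=\min(\R_Q(K),\gamma_{\mathfrak R}(K))\le\gamma_{\mathfrak R}(K)$, hence in $\GrowR$. For the advice length: if $\A_{Q'}(K)=\Estr$ there is nothing to check, and otherwise $\Abs{\A_{Q'}(K)}\le 2\Abs{\A_Q(K)}+2\ell(K)+c$ with $c$ the $O(1)$ overhead of $\Chev{\cdot,\cdot}$, but in that case $\Abs{\A_Q(K)}+\ell(K)\ge 1$, so $\Abs{\A_{Q'}(K)}\le(2+c)(\Abs{\A_Q(K)}+\gamma_{\mathfrak A}(K))$; since $\Abs{\A_Q}\in\GrowA$ and $\gamma_{\mathfrak A}\in\GrowA$, closure of $\GrowA$ under addition and domination gives $\Abs{\A_{Q'}}\in\GrowA$. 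I expect this last bookkeeping (and the need to arrange things so that $\R_{Q'}$ stays polynomial-time with no computability assumption on $\gamma_{\mathfrak R}$) to be the only real obstacle; the probabilistic content is just the single averaging line.
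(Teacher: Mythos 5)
Your proposal is correct and follows essentially the same route as the paper: decompose the randomness bound into a $\GrowR$ part and a $\GrowA$ part, fix the $\GrowA$-portion of the coin tosses to an error-minimizing string hardcoded into the advice (min $\leq$ mean), and then apply the $\EG$-optimality of $P$ to the resulting $\Gamma$-scheme and chain the inequalities. Your extra bookkeeping (verifying $\GrowR+\GrowA$ is a growth space, the empty-advice case, and keeping $\R_{Q'}$ polynomial-time without computing $\gamma_{\mathfrak{R}}$) just makes explicit what the paper dismisses as \enquote{easy to see.}
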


\begin{proof}

The proof will proceed by taking a $Q$ with access to extra randomness, and then considering another algorithm $\overline{Q}$ with access to the old amount of randomness, which uses the advice to encode an optimal prefix to the random string. Then we just need to show that $\overline{Q}$ dominates $Q$ and is dominated by $P$. This proof strategy also applies to the next proposition.

Consider any $Q: \Words \xrightarrow{\bar{\Gamma}} \Rats$. Suppose $\R_Q=r_{\mathfrak{R}}+r_{\mathfrak{A}}$ where $r_{\mathfrak{R}} \in \GrowR$ and $r_{\mathfrak{A}} \in \GrowA$. For any $K \in \Nats^n$, choose 
\[\bar{\A}_Q(K) \in \Argmin{y \in \WordsLen{r_{\mathfrak{A}}(K)}} \E_{(x,z) \sim \Dist^{K} \times U^{r_{\mathfrak{R}}(K)}}[(Q^{K}(x,yz) - f(x))^2]\]

As is easy to see, there is $\bar{Q}: \Words \Scheme \Rats$ s.t. for all $K \in \Nats^n$, $x \in \Supp \Dist^{K}$ and $z \in \WordsLen{r_{\mathfrak{R}}(K)}$

\begin{align*}
\A_{\bar{Q}}(K)&=\Chev{\A_Q(K),\bar{\A}_Q(K)} \\
\R_{\bar{Q}}(K) &= r_{\mathfrak{R}}(K) \\
\bar{Q}^{K}(x,z)&=Q^{K}(x,\bar{\A}_Q(K)z)
\end{align*}

It follows that there is $\varepsilon \in \Fall$ s.t.

$$\E_{\Dist^{K} \times \Un_P^{K}}[(P^{K} - f)^2] \leq \E_{\Dist^{K} \times U^{r_{\mathfrak{R}}(K)}}[(\bar{Q}^{K} - f)^2] + \varepsilon(K)$$

Obviously $\E_{\Dist^{K} \times U^{r_{\mathfrak{R}}(K)}}[(\bar{Q}^{K} - f)^2] \leq \E_{\Dist^{K} \times \Un_Q^{K}}[(Q^{K} - f)^2]$ therefore

$$\E_{\Dist^{K} \times \Un_P^{K}}[(P^{K} - f)^2] \leq \E_{\Dist^{K} \times \Un_Q^{K}}[(Q^{K} - f)^2] + \varepsilon(K)$$
\end{proof}

\begin{proposition}

Denote $\bar{\Gamma}_{\mathfrak{R}}:=\GrowR+\GrowA$ and $\bar{\Gamma}:=(\bar{\Gamma}_{\mathfrak{R}},\GrowA)$. Consider $(\Dist,f)$ a distributional estimation problem and $\bar{P}$ an $\Fall(\bar{\Gamma})$-optimal estimator for $(\Dist,f)$. Then, there exists an $\EG$-optimal estimator for $(\Dist,f)$.

\end{proposition}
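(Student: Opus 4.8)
The plan is to derandomize the \enquote{extra} randomness of $\bar P$ — the portion beyond what $\GrowR$ allows — by hard-wiring its optimal value into the advice string, essentially running the preceding proposition in reverse. Since $\R_{\bar P} \in \bar\Gamma_{\mathfrak R} = \GrowR + \GrowA$, I would fix a decomposition $\R_{\bar P}(K) = r_{\mathfrak R}(K) + r_{\mathfrak A}(K)$ with $r_{\mathfrak R} \in \GrowR$ and $r_{\mathfrak A} \in \GrowA$, and for each $K$ choose
$$\hat a(K) \in \Argmin{y \in \WordsLen{r_{\mathfrak A}(K)}} \E_{(x,z) \sim \Dist^{K} \times \Un^{r_{\mathfrak R}(K)}}[(\bar P^{K}(x,yz) - f(x))^2].$$
Then let $P : \Words \Scheme \Rats$ be the scheme with advice $\A_P(K) = \Chev{\A_{\bar P}(K),\hat a(K)}$, with $\R_P(K) = r_{\mathfrak R}(K)$ (efficiently computable modulo the advice, since the length of the $\hat a(K)$-component of $\A_P(K)$ recovers $r_{\mathfrak A}(K)$, from which one subtracts the output of $\R_{\bar P}$ run on the decoded $\A_{\bar P}(K)$), and with $P^{K}(x,z) = \bar P^{K}(x,\hat a(K)z)$. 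The runtime bounds are inherited from $\bar P$; $\Abs{\A_P} \in \GrowA$ because $\GrowA$ is closed under addition and $r_{\mathfrak A} \in \GrowA$; and $\R_P \in \GrowR$ by construction. So $P$ is a legitimate polynomial-time $\Gamma$-scheme.

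Two facts then finish the argument. First, an averaging inequality: because $\hat a(K)$ is chosen to minimize over all $r_{\mathfrak A}(K)$-bit strings $y$, its error is at most the average of $\E_{\Dist^{K} \times \Un^{r_{\mathfrak R}(K)}}[(\bar P^{K}(x,yz)-f(x))^2]$ over $y \sim \Un^{r_{\mathfrak A}(K)}$, and that average is exactly $\E_{\Dist^{K} \times \Un_{\bar P}^{K}}[(\bar P^{K} - f)^2]$; hence
$$\E_{\Dist^{K} \times \Un_P^{K}}[(P^{K} - f)^2] \leq \E_{\Dist^{K} \times \Un_{\bar P}^{K}}[(\bar P^{K} - f)^2]$$
for every $K \in \Nats^n$, with no error term. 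Second, an inclusion of competitor classes: any $Q : \Words \Scheme \Rats$ is in particular a $\bar\Gamma$-scheme, since $\GrowR \subseteq \GrowR + \GrowA = \bar\Gamma_{\mathfrak R}$ (using $0 \in \GrowA$) and all other requirements of a $\bar\Gamma$-scheme coincide with those of a $\Gamma$-scheme. Therefore the $\Fall(\bar\Gamma)$-optimality of $\bar P$ gives, for every such $Q$,
$$\E_{\Dist^{K} \times \Un_{\bar P}^{K}}[(\bar P^{K} - f)^2] \leq \E_{\Dist^{K} \times \Un_Q^{K}}[(Q^{K} - f)^2] \pmod \Fall.$$
Chaining the two displays yields $\E_{\Dist^{K} \times \Un_P^{K}}[(P^{K} - f)^2] \leq \E_{\Dist^{K} \times \Un_Q^{K}}[(Q^{K} - f)^2] \pmod \Fall$ for all $Q : \Words \Scheme \Rats$, which is precisely the assertion that $P$ is an $\EG$-optimal estimator for $(\Dist,f)$.

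Conceptually the point is that passing from $\bar\Gamma$ to $\Gamma$ shrinks the class of competing estimators (fewer random bits are permitted), while the averaging trick shows that the derandomized $P$ does at least as well as $\bar P$ against all of the \emph{original} competitors; hence $P$ is optimal in the smaller class and, by construction, belongs to it. I do not expect a genuine obstacle here: the only mildly delicate point is the bookkeeping that certifies $P$ as a bona fide $\Gamma$-scheme — in particular that $\R_P$ stays efficiently computable modulo the advice and that $\Abs{\A_P} \in \GrowA$ — and this goes through exactly as in the preceding proposition; the averaging inequality and the class inclusion are immediate.
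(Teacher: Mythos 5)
Your proposal is correct and follows essentially the same route as the paper: decompose $\R_{\bar P}=r_{\mathfrak R}+r_{\mathfrak A}$, hard-wire an argmin-chosen prefix of the random string into the advice, and use the averaging inequality together with the fact that every $\Gamma$-scheme is a $\bar\Gamma$-scheme. The extra bookkeeping you supply (computability of $\R_P$ modulo the advice, $\Abs{\A_P}\in\GrowA$) is just an explicit spelling-out of what the paper leaves implicit.
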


\begin{proof}
Suppose $\R_{\bar{P}}=r_{\mathfrak{R}}+r_{\mathfrak{A}}$ where $r_{\mathfrak{R}} \in \GrowR$ and $r_{\mathfrak{A}} \in \GrowA$. For any ${K \in \Nats^n}$, choose 

\[\bar{\A}_P(K) \in \Argmin{y \in \WordsLen{r_{\mathfrak{A}}(K)}} \E_{(x,z) \sim \Dist^{K} \times \Un^{r_{\mathfrak{R}}(K)}}[(\bar{P}^{K}(x,yz) - f(x))^2]\]

We can construct $P: \Words \Scheme \Rats$ so that for all $K \in \Nats^n$, $x \in \Supp \Dist^{K}$ and ${z \in \WordsLen{r_{\mathfrak{R}}(K)}}$

\begin{align*}
\A_P(K) &:=\Chev{\A_{\bar{P}}(K),\bar{\A}_P(K)} \\
\R_P(K) &= r_{\mathfrak{R}}(K) \\
P^{K}(x,z) &=\bar{P}^{K}(x,\bar{\A}_P(K)z)
\end{align*}

Clearly ${\E_{\Dist^{K} \times \Un^{r_{\mathfrak{R}}(K)}}[(P^{K} - f)^2] \leq \E_{\Dist^{K} \times \Un_{\bar{P}}^{K}}[(\bar{P}^{K} - f)^2]}$ and therefore $P$ is an $\EG$-optimal estimator for $(\Dist,f)$.
\end{proof}

\subsubsection{Optimality of Weighted Error}

Although the word ensemble plays a central role in the definition of an optimal polynomial-time estimator, the dependence on the word ensemble is lax in some sense. To see this, consider the following proposition.

\begin{definition}
\label{def:ample}

Given a growth space $\Gamma_*$ of rank $n$, $\Fall$ is called \emph{$\Gamma_*$-ample} when there is\\ $\zeta: \NatFun (0,\frac{1}{2}]$ s.t.  $\zeta \in \Fall$ and $\Floor{\log \frac{1}{\zeta}} \in \Gamma_*$.

\end{definition}
The intuitive interpretation of this is that, when $\Gamma_*$ represents the amount of advice, the advice bits are sufficient to write down an approximation to some parameter with error at most $\zeta$.
\begin{samepage}
\begin{example}

Any fall space of rank ${n}$ is ${\GammaPoly^n}$-ample, due to condition~\ref{con:def__fall__pol} of Definition~\ref{def:fall}.

\end{example}
\end{samepage}

\begin{samepage}
\begin{example}

${\FallU}$ is ${\GammaLog^n}$-ample since we can take ${\zeta(K):=(K_{n-1}+2)^{-1}}$.

\end{example}
\end{samepage}

\begin{proposition}
\label{prp:weight}

Assume $\Fall$ is $\GrowA$-ample. Consider $(\Dist,f)$ a distributional estimation problem, $P$ an $\EG$-optimal estimator for $(\Dist,f)$, $Q: \Words \Scheme \Rats$ and ${W: \Words \Scheme \Rats^{\geq 0}}$ bounded s.t. ${\R_W \geq \max(\R_P, \R_Q)}$. Denote ${\Dist_W^K:=\Dist^K \times \Un_W^K}$. Then

\begin{equation}
\E_{\Dist_W^{K}}[W^{K}(x,y)(P^{K}(x,y_{<\R_P(K)}) - f(x))^2] \leq \\ \E_{\Dist_W^{K}}[W^{K}(x,y)(Q^{K}(x,y_{<\R_Q(K)}) - f(x))^2] \pmod \Fall
\end{equation}

\end{proposition}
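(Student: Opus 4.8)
The plan is to reduce the weighted optimality statement to the unweighted optimality condition of Definition~\ref{def:op} by "folding" the weight $W$ into the word ensemble. The key observation is that $W^K(x,y)(P^K(x,y_{<\R_P(K)})-f(x))^2$ is, up to the weight, the squared error of an estimator running on the modified distribution in which a pair $(x,y)$ is drawn with probability proportional to $W^K(x,y)\Dist^K(x)\Un_W^K(y)$. More precisely, I would introduce the auxiliary word ensemble $\mathcal{E}$ on the encoded set $\Words \times \Words$ (pairs of the original input and a random string of length $\R_W(K)$) defined by $\mathcal{E}^K(x,y) \propto W^K(x,y)\Dist_W^K(x,y)$, with the extended target function $\tilde f(x,y):=f(x)$. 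Since $W$ is bounded, the normalising constant $Z_K := \E_{\Dist_W^K}[W^K]$ lies in $[0,\sup W]$; the subtlety that $Z_K$ might be small is handled by noting the inequality is trivial when $Z_K$ is negligibly small (both sides are then in $\Fall$ after multiplying through), so I can assume $Z_K$ bounded below along the relevant subsequence, or more cleanly multiply the whole desired inequality by $Z_K$ and absorb the loss.

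The main technical device is $\GrowA$-ampleness, which is exactly why it is hypothesised. The issue is that $\mathcal{E}^K$ involves the real weights $W^K(x,y)$, which a $\Gamma$-scheme for the new problem can compute (by simulating $W$), but to turn the optimality of $P$ on $(\mathcal{E},\tilde f)$ back into a statement about $\Dist_W$ I need the discrepancy between $\mathcal{E}$ and the "honest" reweighting to be controlled. Here is where I would instead run the argument in the opposite direction: take $P$ (and $Q$, $W$) as given, and for the purpose of Definition~\ref{def:op} applied to $(\Dist,f)$ build, from $Q$ and $W$, a competitor estimator $\hat Q$ for the \emph{original} problem whose unweighted error encodes the weighted error of $Q$. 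The trick is a rejection/importance step: given input $x$ and randomness, $\hat Q$ outputs $f$'s estimate only "with probability proportional to $W$" and otherwise outputs the correct answer's best guess — but since $\hat Q$ does not know $f(x)$, this cannot literally be done. So the honest route is the first one: work on the reweighted ensemble $\mathcal{E}$.

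Concretely the steps are: (1) define $\mathcal{E}$ and $\tilde f$ as above on the encoded set $\Words\times\Words$, restricted so that $\mathcal{E}$ is a genuine word ensemble of rank $n$; (2) check that any $\Gamma$-scheme $S:\Words\Scheme\Rats$ together with $W$ and the extraction of the $y$-coordinate induces a $\Gamma$-scheme $\tilde S:\Words\times\Words\Scheme\Rats$ with $\tilde S^K(x,y)=S^K(x,y_{<\R_S(K)})$, using $\R_W\ge\max(\R_P,\R_Q)$ so that the random string is long enough for both; (3) observe $\E_{\mathcal{E}^K}[(\tilde S^K-\tilde f)^2] = Z_K^{-1}\E_{\Dist_W^K}[W^K(S^K-f)^2]$; (4) verify that $P$ (viewed as $\tilde P$) is an $\Fall(\Gamma)$-optimal estimator for $(\mathcal{E},\tilde f)$ — this is the place that needs $\GrowA$-ampleness, since encoding the normalisation and the weight lookup into advice/time costs must be shown to stay within $\GrowR,\GrowA$ and the resulting error loss must stay within $\Fall$; and (5) multiply the optimality inequality for $(\mathcal{E},\tilde f)$ by $Z_K\le\sup W$ and push the $\Fall$-error through, which is legitimate since $\Fall$ is closed under multiplication by bounded functions. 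The main obstacle I anticipate is step (4): showing that optimality of $P$ transfers to the reweighted problem, because a priori the class of competitors on $(\mathcal{E},\tilde f)$ is larger (they may use the $y$-coordinate arbitrarily, not just as randomness for $f(x)$). Handling this requires showing every $\Gamma$-scheme on $\Words\times\Words$ can be simulated, up to $\Fall$-error, by a $\Gamma$-scheme on $\Words$ with enough random bits — which is where the ampleness of $\Fall$ relative to $\GrowA$ (allowing the advice budget to absorb a $\log(1/\zeta)$-sized correction) does the work.
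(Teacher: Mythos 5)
There is a genuine gap, and it sits exactly where you flagged your ``main obstacle'': step (4). Transferring the optimality of $P$ from $(\Dist,f)$ to the $W$-reweighted ensemble $\mathcal{E}$ is not a bookkeeping matter of encoding normalisations into advice --- it \emph{is} the content of the proposition. Indeed, the paper derives precisely that transfer (Corollary~\ref{crl:weight}: $P$ is an $\Fall_W(\Gamma)$-optimal estimator for the reweighted ensemble) as a \emph{consequence} of Proposition~\ref{prp:weight}, so your plan runs in a circle: the unweighted optimality of $P$ compares $\E_{\Dist^K\times\Un_P^K}[(P^K-f)^2]$ against competitors under $\Dist$, and nothing in Definition~\ref{def:op} lets you conclude anything about expectations under a distribution tilted by $W$ without a new construction. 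Your earlier, abandoned route (build from $Q$ and $W$ a competitor for the \emph{original} problem) was actually the right direction; it fails only because you tried to have the competitor output the true value of $f$, when the correct move is to have it fall back on $P$'s own output.

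That is the missing idea, and it is how the paper argues: for each threshold $t\in[0,\sup W]$ form the hybrid estimator $Q_t$ which on $(x,y)$ with $y\in\WordsLen{\R_W(K)}$ outputs $Q^{K}(x,y_{<\R_Q(K)})$ if $W^{K}(x,y)\geq\rho^K_\zeta(t)$ and $P^{K}(x,y_{<\R_P(K)})$ otherwise, where $\rho^K_\zeta(t)$ is a rational rounding of $t$ within $\zeta(K)$. This is where $\GrowA$-ampleness enters: $\zeta\in\Fall$ with $\Floor{\log\frac{1}{\zeta}}\in\GrowA$ guarantees the rounded threshold has an encoding short enough to be supplied as advice, so that the $Q_t$ form a uniform family and Proposition~\ref{prp:unif} yields a single $\varepsilon\in\Fall$ with $\E[(P^K-f)^2]\leq\E[(Q_t^K-f)^2]+\varepsilon(K)$ for all $t$ simultaneously. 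Since the integrand $(P^K-f)^2-(Q_t^K-f)^2$ vanishes where $W^K<\rho^K_\zeta(t)$, one gets $\E[\theta(W^K-\rho^K_\zeta(t))((P^K-f)^2-(Q^K-f)^2)]\leq\varepsilon(K)$, and integrating over $t$ from $0$ to $\sup W$ (a layer-cake reconstruction of $W$, exact up to $O(\zeta(K))$) produces the weighted inequality with error $(\sup W)\varepsilon+O(\zeta)\in\Fall$. Without this thresholding-plus-integration device (or some substitute for it), your step (4) has no proof, and the rest of your outline --- the identity in step (3), the normalisation in step (5) --- does not compensate for it. Note also that a randomised-mixture variant (flip a $W/\sup W$-biased coin to choose between $Q$ and $P$) would need extra random bits and so can fail when $\GrowR$ is trivial, which is presumably why the paper routes the approximation through advice and ampleness of $\Fall$ w.r.t.\ $\GrowA$ rather than through randomness.
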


This essentially says that if there is enough advice available, then an optimal estimator continues to be optimal when a poly-time adversary assigns weights to how important the various problem instances are. The proof will come after the following corollary.

The relationship to the role of the word ensemble is as follows.

\begin{samepage}
\begin{corollary}
\label{crl:weight}

Assume $\Fall$ is $\GrowA$-ample. Consider $(\Dist,f)$ a distributional estimation problem and $P$ an $\EG$-optimal estimator for $(\Dist,f)$. Consider ${W: \Words \Scheme \Rats^{\geq 0}}$ bounded s.t. for any $K \in \Nats^n$ there is $x \in \Supp \Dist^K$ and $y \in \WordsLen{\R_W(K)}$ s.t. $W^K(x,y) > 0$. Define ${\gamma: \NatFun \Reals}$ by ${\gamma(K):=\E_{\Dist^K \times \Un_W^K}[W^K]^{-1}}$ and denote ${\Fall_W:= \gamma\Fall}$. Define the word ensemble $\mathcal{E}$ by 

\[\mathcal{E}^K(x):=\frac{\E_{y \sim \Un_W^K}[W^K(x,y)] \Dist^K(x)}{\E_{(x',y) \sim \Dist^K \times \Un_W^K}[W^K(x',y)]}\]

Then, $P$ is an $\Fall_W(\Gamma)$-optimal estimator for $(\mathcal{E},f)$.
That is, if the distribution on problem instances is reweighted by a poly-time adversary, an optimal estimator will continue being optimal, although with an increased error if the expected weight assigned by the adversary keeps falling as K grows. Therefore, the property of being an optimal estimator is robust against distributional shift when enough advice is available.
\end{corollary}
\end{samepage}

\begin{proof}

Consider any $Q: \Words \Scheme \Rats$. Proposition~\ref{prp:weight} implies there is $\varepsilon \in \Fall$ s.t.

$$\E_{\Dist^{K} \times \Un_P^{K} \times \Un_W^{K}}[W^{K}(P^{K} - f)^2] \leq \\ \E_{\Dist^{K} \times \Un_Q^{K} \times \Un_W^{K}}[W^{K}(Q^{K} - f)^2] + \varepsilon(K)$$

$$\E_{\Dist^{K} \times \Un_P^{K}}[\E_{\Un_W^{K}}[W^{K}](P^{K} - f)^2] \leq \\ \E_{\Dist^{K} \times \Un_Q^{K}}[\E_{\Un_W^{K}}[W^{K}](Q^{K} - f)^2] + \varepsilon(K)$$

Dividing both sides of the inequality by $\E_{\Dist^{K} \times \Un_W^{K}}[W^{K}(x)]$ we get

$$\E_{\mathcal{E}^{K} \times \Un_P^{K}}[(P^{K} - f)^2] \leq \\ \E_{\mathcal{E}^{K} \times \Un_Q^{K}}[(Q^{K} - f)^2] + \frac{\varepsilon(K)}{\E_{\Dist^{K} \times \Un_W^{K}}[W^{K}(x)]}$$

Let $M$ be the supremum of the left hand side.

$$\E_{\mathcal{E}^{K} \times \Un_P^{K}}[(P^{K} - f)^2] \leq \\ \E_{\mathcal{E}^{K} \times \Un_Q^{K}}[(Q^{K} - f)^2] + \min\left(\frac{\varepsilon(K)}{\E_{\Dist^{K} \times \Un_W^{K}}[W^{K}(x)]},M\right)$$

The second term on the right hand side is clearly in $\Fall_W$.
\end{proof}

We now give the proof of Proposition~\ref{prp:weight}.

\begin{proof}[Proof of Proposition \ref{prp:weight}]

%Intuitively, if the result was false, then $W$ being high implies that the Brier score of $Q$ is less than the Brier score of $P$. Thus,
The proof will construct a uniform family of algorithms that use their advice to encode an approximation of some number $t$, and use $P$ if $W$ assigns a weight less than the approximation, and $Q$ otherwise. $P$ dominates all the algorithms in this family, and after some reshuffling, and integrating over $t$, $W$ can be recovered, and this leads to the desired result.

Consider $\zeta: \NatFun (0,\frac{1}{2}]$ s.t.  $\zeta \in \Fall$ and $\Floor{\log \frac{1}{\zeta}} \in \GrowA$. For any $K \in \Nats^n$ and $t \in \Reals$, let $\rho_\zeta^{K}(t) \in \Argmin{s \in \Rats \cap [t-\zeta(K),t+\zeta(K)]} \Abs{\En_\Rats(s)}$. Denote $M:= \sup W$. It is easy to see that there is $\gamma \in \GrowA$ s.t. for any $t \in [0, M]$, ${\Abs{\En_\Rats(\rho_\zeta^{K}(t))} \leq \gamma(K)}$.

For any $t \in \Reals$ there is $Q_t: \Words \Scheme \Rats$ s.t. $\R_Q=\R_W$ and for any ${x \in \Supp \Dist^{K}}$ and ${y \in \WordsLen{\R_W(K)}}$

$$Q_t^{K}(x,y)=\begin{cases}Q^{K}(x,y_{< \R_Q(K)}) \text{ if } W^{K}(x,y) \geq \rho^{K}_\zeta(t) \\ P^{K}(x,y_{< \R_P(K)}) \text{ if } W^{K}(x,y) < \rho^{K}_\zeta(t)\end{cases}$$

Moreover we can construct the $Q_t$ for all $t \in [0, M]$ s.t. they form a uniform family. By Proposition~\ref{prp:unif} there is $\varepsilon \in \Fall$ s.t. for all $t \in [0, M]$

$$\E_{\Dist^{K} \times \Un_P^{K}}[(P^{K}-f)^2] \leq \E_{\Dist^{K} \times \Un_W^{K}}[(Q_t^{K}-f)^2] + \varepsilon(K)$$

$$\E_{(x,y) \sim \Dist^{K} \times \Un_W^{K}}[(P^{K}(x,y_{< \R_P(K)})-f(x))^2-(Q_t^{K}(x,y)-f(x))^2] \leq \varepsilon(K)$$

The expression inside the expected values vanishes when $W^{K}(x,y) < \rho^{K}_\zeta(t)$. In other cases, 
\[Q_t^{K}(x,y) = Q^{K}(x,y_{< \R_Q(K)})\]

We get

$$\E_{(x,y) \sim \Dist^{K} \times \Un_W^{K}}[\theta(W^{K}(x,y)-\rho_\zeta^{K}(t)) \cdot ((P^{K}(x,y_{< \R_P(K)})-f(x))^2-(Q^{K}(x,y_{< \R_Q(K)})-f(x))^2)] \leq \varepsilon(K)$$

We integrate both sides of the inequality over $t$ from 0 to $M$.

\begin{equation}
\label{eqn:prp__weight__prf1}
\E\left[\int_0^M\theta(W^{K}-\rho_\zeta^{K}(t)) \dif t \cdot ((P^{K}-f)^2-(Q^{K}-f)^2)\right] \leq M \varepsilon(K)
\end{equation}

For any $s \in \Reals$

$$\int_0^M \theta(s-\rho_\zeta^{K}(t)) \dif t = \int_0^{s-\zeta(K)} \theta(s-\rho_\zeta^{K}(t)) \dif t + \int_{s-\zeta(K)}^{s+\zeta(K)} \theta(s-\rho_\zeta^{K}(t)) \dif t + \int_{s+\zeta(K)}^M \theta(s-\rho_\zeta^{K}(t)) \dif t$$

$\Abs{\rho_\zeta^{K}(t)-t} \leq \zeta(K)$ therefore the integrand in the first term is 1 and in the last term 0:

$$\int_0^M \theta(s-\rho_\zeta^{K}(t)) \dif t = \int_0^{s-\zeta(K)} \dif t + \int_{s-\zeta(K)}^{s+\zeta(K)} \theta(s-\rho_\zeta^{K}(t)) \dif t$$

$$\int_0^M \theta(s-\rho_\zeta^{K}(t)) \dif t = s-\zeta(K) + \int_{s-\zeta(K)}^{s+\zeta(K)} \theta(s-\rho_\zeta^{K}(t)) \dif t$$

$$\int_0^M \theta(s-\rho_\zeta^{K}(t)) \dif t - s = -\zeta(K) + \int_{s-\zeta(K)}^{s+\zeta(K)} \theta(s-\rho_\zeta^{K}(t)) \dif t$$

\begin{equation}
\label{eqn:prp__weight__prf2}
\int_0^M \theta(s-\rho_\zeta^{K}(t)) \dif t - s \in [-\zeta(K),\zeta(K)]
\end{equation}

Combining \ref{eqn:prp__weight__prf1} and \ref{eqn:prp__weight__prf2} we conclude that for some $M' \in \Reals$

$$\E[W^{K} \cdot ((P^{K}-f)^2-(Q^{K}-f)^2)] \leq M \varepsilon(K) + M'\zeta(K)$$
\end{proof}

\subsubsection{Amplification from Zero to \texorpdfstring{$O(1)$}{O(1)} Advice}

The following will be handy to prove negative existence results (see section~\ref{sec:e_and_u}).

\begin{samepage}
\begin{proposition}
\label{prp:adv_amp}

Assume ${\GrowA=\Gamma_0^n}$. Consider ${(\Dist,f)}$ a distributional estimation problem and ${P}$ an ${\EG}$-optimal estimator for ${(\Dist,f)}$. Denote ${\Gamma_1:=(\GrowR,\Gamma_1^n)}$. Then, ${P}$ is also an ${\Fall(\Gamma_1)}$-optimal estimator for ${(\Dist,f)}$.

\end{proposition}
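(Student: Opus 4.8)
The plan is to reduce the assertion directly to the $\EG$-optimality of $P$, using that an $O(1)$-advice scheme is nothing but finitely many advice-free schemes together with an arbitrary ``selector'' — the advice function $\A_Q$, which we are free to re-evaluate pointwise and need not compute. So, given an arbitrary $\Gamma_1$-scheme $Q:\Words\xrightarrow{\Gamma_1}\Rats$, I would first pick $c\in\Nats$ with $\Abs{\A_Q(K)}\le c$ for all $K$, and then, for each word $a\in\Bool^{\leq c}$, build an advice-free scheme $Q_a:\Words\Scheme\Rats$ (here $\Gamma=(\GrowR,\Gamma_0^n)$, so ``advice-free'' simply means $\A_{Q_a}(K)=\Estr$) which on input $(K,x,y,z)$ ignores its own advice slot $z$ and simulates $Q$ on $(K,x,y,a)$.

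The one point in this construction that is not automatic is the number of random bits of $Q_a$: for $Q_a^{K}$ to have the same distribution as $Q^{K}$ on those $K$ with $a=\A_Q(K)$, we need $\R_{Q_a}(K)=\R_Q(K,a)$ there, but the function $K\mapsto\R_Q(K,a)$ is uncontrolled — and may leave $\GrowR$ — on the inputs where $a$ is not the advice $Q$ actually uses. I would remedy this by fixing a polynomial-time computable $g\in\GrowR$ that dominates the function $r_Q$ given by $r_Q(K):=\R_Q(K,\A_Q(K))$; this $r_Q$ lies in $\GrowR$ by the definition of a scheme, and a computable dominating $g\in\GrowR$ is readily available for all the concrete growth spaces of the paper (e.g.\ $g\equiv 0$, a constant, a polynomial, or $K\mapsto c\sum_i\log(K_i+1)$ for $\Gamma_0^n$, $\Gamma_1^n$, $\GammaPoly^n$, $\GammaLog^n$). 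Then I set $\R_{Q_a}(K):=\min(\R_Q(K,a),g(K))$ and let $Q_a$ output $0$ whenever $\R_Q(K,a)>g(K)$. Since $\R_{Q_a}\le g\in\GrowR$ and the runtime bounds are clearly polynomial, $Q_a$ is a bona fide polynomial-time $\Gamma$-scheme; and on every $K$ with $\A_Q(K)=a$ we have $\R_Q(K,a)=r_Q(K)\le g(K)$, so $\R_{Q_a}(K)=\R_Q(K,a)$ and $Q_a^{K}(x)$ is distributed exactly as $Q^{K}(x)$.

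To conclude: because $\Bool^{\leq c}$ is finite, applying the $\EG$-optimality of $P$ to each $Q_a$ yields $\varepsilon_a\in\Fall$ with $\E_{\Dist^{K}\times\Un_P^{K}}[(P^{K}-f)^2]\le\E_{\Dist^{K}\times\Un_{Q_a}^{K}}[(Q_a^{K}-f)^2]+\varepsilon_a(K)$, and $\varepsilon:=\sum_{a\in\Bool^{\leq c}}\varepsilon_a$ is again in $\Fall$ by condition~\ref{con:def__fall__add} of Definition~\ref{def:fall} and bounds every $\varepsilon_a$. Now I would diagonalize pointwise: for each $K$ take $a:=\A_Q(K)\in\Bool^{\leq c}$; by the previous paragraph $\E_{\Dist^{K}\times\Un_{Q_a}^{K}}[(Q_a^{K}-f)^2]=\E_{\Dist^{K}\times\Un_Q^{K}}[(Q^{K}-f)^2]$, hence
\[\E_{\Dist^{K}\times\Un_P^{K}}[(P^{K}-f)^2]\le\E_{\Dist^{K}\times\Un_Q^{K}}[(Q^{K}-f)^2]+\varepsilon(K),\]
which, since $Q$ was an arbitrary $\Gamma_1$-scheme, is exactly the statement that $P$ is an $\Fall(\Gamma_1)$-optimal estimator. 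Note that this argument uses only the bare definition of optimality together with closure of $\Fall$ under finite sums; it never invokes Proposition~\ref{prp:unif} or Proposition~\ref{prp:fam_diag}, which would anyway be unavailable here because $1\notin\Gamma_0^n$.

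I expect the only genuine obstacle to be the random-bit bookkeeping for the $Q_a$, i.e.\ keeping $\R_{Q_a}$ inside $\GrowR$ while still faithfully reproducing $Q$ on the relevant inputs — precisely what the capping function $g$ handles; the remaining steps (checking $Q_a$ is a polynomial-time $\Gamma$-scheme, the finiteness of $\Bool^{\leq c}$, and the pointwise diagonalization) are routine.
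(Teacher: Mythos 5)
Your proposal is correct and follows essentially the same route as the paper's own proof: bound the advice length by a constant, hard-wire each candidate advice string $a\in\Bool^{\leq c}$ into an advice-free scheme $Q_a$, apply the $\EG$-optimality of $P$ to each of the finitely many $Q_a$, and conclude with $\varepsilon:=\sum_a\varepsilon_a\in\Fall$ evaluated pointwise at $a=\A_Q(K)$. The only difference is your capping of $\R_{Q_a}$ by a computable $g\in\GrowR$ dominating $r_Q$: the paper simply sets $\R_{Q_a}(K)=\R_Q(K,a)$ without comment, so your patch addresses a bookkeeping point the paper glosses over (though note that for a completely arbitrary $\GrowR$ the growth-space axioms do not guarantee a \emph{computable} dominating $g$, so this step is exactly as general as the paper's implicit assumption and fully covers the concrete growth spaces used there).
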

\end{samepage}

\begin{proof}
This proof proceeds by using the standard "domination of a uniform family" result to dominate all the algorithms with a bounded-size advice string that never changes. An algorithm with constant advice can be interpreted as switching around within this family, and thus is dominated.
Consider any ${Q: \Words \xrightarrow{\Gamma_1} \Rats}$. Choose ${l \in \Nats}$ s.t. ${\forall K \in \Nats^n: \Abs{\A_Q(K)} \leq l}$. For each\\ $a \in \Bool^{\leq l}$, construct ${Q_a: \Words \Scheme \Rats}$ s.t. for any ${K \in \Nats^n}$, ${x,y \in \Words}$

\begin{align*}
\R_{Q_a}(K) &= \R_Q(K,a) \\
Q_a^K(x,y) &= Q^K(x,y,a) 
\end{align*}

For some ${\varepsilon_a \in \Fall}$ we have

\[\E_{\Dist^{K} \times \Un_P^{K}}[(P^{K} - f)^2] \leq \E_{\Dist^{K} \times \Un_{Q_a}^{K}}[(Q_a^{K} - f)^2] + \varepsilon_a(K)\]

Since the above holds for every ${a \in \Bool^{\leq l}}$, we get 

\[\E_{\Dist^{K} \times \Un_P^{K}}[(P^{K} - f)^2] \leq \E_{\Dist^{K} \times \Un_{Q}^{K}}[(Q^{K} - f)^2] + \varepsilon_{\A_Q(K)}(K)\]

\[\E_{\Dist^{K} \times \Un_P^{K}}[(P^{K} - f)^2] \leq \E_{\Dist^{K} \times \Un_{Q}^{K}}[(Q^{K} - f)^2] + \sum_{a \in \Bool^{\leq l}} \varepsilon_a(K)\]

\end{proof}

\subsection{Orthogonality Theorems}

There is a variant of Definition~\ref{def:op} which is nearly equivalent in many cases and often useful.

We can think of functions $f: \Supp \Dist \rightarrow \Reals$ as vectors in a real inner product space with inner product $\Chev{f,g}:=\E_\Dist[fg]$. Informally, we can think of polynomial-time $\Gamma$-schemes as a subspace (although a polynomial-time $\Gamma$-scheme is not even a function) and an $\EG$-optimal estimator for $(\Dist,f)$ as the nearest point to $f$ in this subspace. Now, given an inner product space $V$, a vector $f \in V$, an actual subspace $W \subseteq V$ and $p = \Argmin{q \in W} \Norm{q - f}^2$, we have $\forall v \in W: \Chev{p-f,v}=0$. This motivates the following:

\begin {definition}
\label{def:obe_sharp}

Consider $(\Dist,f)$ a distributional estimation problem and ${P: \Words \Scheme \Rats}$ with bounded range. $P$ is called an \emph{$\ESG$-optimal polynomial-time estimator for $(\Dist,f)$} when for any\\ ${S: \Words \times \Rats \Scheme \Rats}$ with bounded range\footnote{The $\Rats$-valued argument of $S$ is only important for non-trivial $\GrowR$, otherwise we can absorb it into the definition of $S$ using $P$ as a subroutine.}

\begin{equation}
\label{eqn:op_sharp}
\E_{(x,y,z) \sim \Dist^{K} \times \Un_P^{K} \times \Un_S^{K}}[(P^{K}(x,y) - f(x))S^{K}(x,P^{K}(x,y),z)] \equiv 0 \pmod \Fall
\end{equation}

For the sake of brevity, we will say \enquote{${\ESG}$-optimal estimator} rather than \enquote{${\ESG}$-optimal polynomial-time estimator.}
This definition is interesting because it can be interpreted as a game against an adversary that is allowed to look at what the estimator outputs, which then predicts whether the estimator will overestimate or underestimate the true value. $\ESG$-optimal polynomial-time estimators are inexploitable against this class of adversaries. As we will show shortly, inexploitability is a slightly stronger condition than optimality, in the sense than any $\ESG$-optimal polynomial-time estimator is $\mathcal{F}$-optimal, but going in the other direction requires at most logarithmic advice and is associated with an increase in the error. The inexploitability property will be used in many additional proofs.
\end {definition}

The following theorem is the analogue in our language of the previous fact about inner product spaces. The notation $\Fall^{\frac{1}{2}}$ refers to Definition~\ref{def:fall_space_power}, i.e. it is just the set of square roots of all the functions in $\Fall$.

\begin{theorem}
\label{thm:ort}

Assume there is $\zeta: \NatFun (0,\frac{1}{4}]$ s.t. $\zeta \in \Fall^{\frac{1}{2}}$ and ${\Floor{\log \log \frac{1}{\zeta}} \in \GrowA}$\footnote{If $\GammaLog^n \subseteq \GrowA$ then this condition holds for any $\Fall$ since we can take $\zeta = 2^{-h}$ for $h \in \NatPoly$.}. Consider $(\Dist,f)$ a distributional estimation problem and $P$ an $\EG$-optimal estimator for $(\Dist,f)$. Then, $P$ is also an $\Fall^{\frac{1}{2}\sharp}(\Gamma)$-optimal estimator for $(\Dist,f)$.

\end{theorem}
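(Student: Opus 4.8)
The plan is to transcribe into this setting the variational characterisation of an orthogonal projection: if $p$ minimises $\Norm{q-f}^2$ over a subspace then $\Chev{p-f,v}=0$ for every $v$ in that subspace, the proof proceeding by perturbing $p$ by a small multiple of $v$ and invoking minimality. Here the role of $v$ is played by an arbitrary ${S:\Words\times\Rats\Scheme\Rats}$ of bounded range, and the complication is that a perturbation of $P$ must itself be a polynomial-time $\Gamma$-scheme, so its coefficient must be a rational whose encoding fits in the allowed advice and whose arithmetic fits in polynomial time. I would therefore perturb only by $\pm 2^{-j}S$ with $j$ ranging over a suitable finite set $\{0,1,\dots,N(K)\}$, carry the optimal such $j$ in the advice of a single competitor scheme $P'$, and optimise over $j$ at the end.

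Concretely, fix $S$, put $M:=\sup\Abs{S}$, and abbreviate $A(K):=\E_{\Dist^K\times\Un_P^K\times\Un_S^K}[(P^K(x,y)-f(x))\,S^K(x,P^K(x,y),z)]$ and $B(K):=\E_{\Dist^K\times\Un_P^K\times\Un_S^K}[S^K(x,P^K(x,y),z)^2]\le M^2$; the goal is $\Abs{A}\in\Fall^{\frac{1}{2}}$. Fix $h\in\NatPoly$ with $2^{-h}\in\Fall$ and set $N(K):=\min(\Floor{\log\frac{1}{\zeta(K)}},\,h(K))$. For $0\le j\le N(K)$ and $\sigma\in\{-1,1\}$ let $P_{j,\sigma}$ be the $\Gamma$-scheme that runs $P$, feeds its output to $S$, and returns $P^K(x,y)+\sigma 2^{-j}S^K(x,P^K(x,y),z)$: the cap $j\le h(K)$ keeps the arithmetic polynomial, it uses $\R_P+\R_S\in\GrowR$ coin tosses, and its advice is $\Chev{\A_P(K),\A_S(K)}$. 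Pick for each $K$ a pair $(j(K),\sigma(K))$ minimising $\E_{\Dist^K\times\Un_{P_{j,\sigma}}^K}[(P_{j,\sigma}^K-f)^2]$, and let $P'$ be the single $\Gamma$-scheme carrying $\Chev{\A_P(K),\A_S(K),j(K),\sigma(K)}$ as advice; this is exactly where the hypothesis enters, since $j(K)\le N(K)\le\log\frac{1}{\zeta(K)}$ makes the extra advice of length $O(\log\log\frac{1}{\zeta(K)})$, whence $\Abs{\A_{P'}}\in\GrowA$. Applying $\EG$-optimality of $P$ against $Q=P'$ produces $\varepsilon\in\Fall$ with $\E_{\Dist^K\times\Un_P^K}[(P^K-f)^2]\le\E_{\Dist^K\times\Un_{P'}^K}[({P'}^K-f)^2]+\varepsilon(K)$; since $P'$ beats every $P_{j,\sigma}$ and $\E[(P_{j,\sigma}^K-f)^2]=\E[(P^K-f)^2]+2\sigma 2^{-j}A(K)+4^{-j}B(K)$, this gives $0\le 2\sigma 2^{-j}A(K)+4^{-j}B(K)+\varepsilon(K)$ for every admissible $(j,\sigma)$. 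Taking $\sigma=-\Sgn A(K)$ and rearranging yields $\Abs{A(K)}\le\frac{1}{2}(2^{-j}M^2+2^j\varepsilon(K))$ for every $j\le N(K)$, hence $\Abs{A(K)}\le\phi(K):=\min_{0\le j\le N(K)}\frac{1}{2}(2^{-j}M^2+2^j\varepsilon(K))$.

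It then remains to verify $\phi\in\Fall^{\frac{1}{2}}$, which I would do by casework on $\sqrt{\varepsilon(K)}$ relative to $M2^{-N(K)}$ and $M$. If $\sqrt{\varepsilon(K)}\le M2^{-N(K)}$, the choice $j=N(K)$ gives $\phi(K)\le M^2 2^{-N(K)}\le M^2(2\zeta(K)+2^{-h(K)})$; if $M2^{-N(K)}<\sqrt{\varepsilon(K)}\le M$, the choice $j=\Floor{\log(M/\sqrt{\varepsilon(K)})}\in\{0,\dots,N(K)\}$ gives $\phi(K)\le\frac{3M}{2}\sqrt{\varepsilon(K)}$; and if $\sqrt{\varepsilon(K)}>M$, the choice $j=0$ gives $\phi(K)\le\frac{1}{2}(M^2+\varepsilon(K))<\varepsilon(K)$. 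Summing the three bounding functions, $\phi\le M^2(2\zeta+2^{-h})+\frac{3M}{2}\sqrt{\varepsilon}+\varepsilon$, which lies in $\Fall^{\frac{1}{2}}$ because $\zeta\in\Fall^{\frac{1}{2}}$, $\sqrt{\varepsilon}\in\Fall^{\frac{1}{2}}$, $2^{-h}\in\Fall\subseteq\Fall^{\frac{1}{2}}$, $\varepsilon\in\Fall\subseteq\Fall^{\frac{1}{2}}$, and $\Fall^{\frac{1}{2}}$ is closed under nonnegative scalar multiples and sums; by downward closure $\phi\in\Fall^{\frac{1}{2}}$, so $\Abs{A}\in\Fall^{\frac{1}{2}}$, i.e. $A\equiv0\pmod{\Fall^{\frac{1}{2}}}$, and since $S$ was arbitrary $P$ is $\Fall^{\frac{1}{2}\sharp}(\Gamma)$-optimal. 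The main obstacle is not the projection argument but the resource bookkeeping that makes $P'$ an honest polynomial-time $\Gamma$-scheme: choosing $N(K)$ so that the coefficient $2^{-j(K)}$ has polynomially bounded encoding (for the time bound) while the index $j(K)$ still fits in $O(\log\log\frac{1}{\zeta})$ advice bits (the one place the hypothesis is needed), together with the three-case estimate for $\phi$, whose only delicate regime is when $\varepsilon(K)$ is smaller than every $2^{-\mathrm{poly}}$ --- exactly what the $\zeta$-truncation in $N(K)$ is designed to absorb.
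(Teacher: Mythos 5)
Your proposal is correct and follows essentially the same route as the paper: perturb $P$ by $\pm 2^{-j}S$ with $j$ bounded by $\log\frac{1}{\zeta}$ (so the exponent fits in $O(\log\log\frac{1}{\zeta})$ advice bits, which is exactly where the hypothesis enters), apply $\EG$-optimality, and balance $2^{-j}M^2$ against $2^{j}\varepsilon$ to land in $\Fall^{\frac{1}{2}}$. The only cosmetic difference is that you inline the uniformity step by hard-coding the per-$K$ argmin $(j(K),\sigma(K))$ into a single competitor's advice, where the paper instead packages the perturbed schemes as a uniform family and invokes Proposition~\ref{prp:unif}, and you finish with explicit casework rather than the paper's closed-form choice of $m(K)$.
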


\begin{proof}

Assume without loss of generality that there is ${h \in \NatPoly}$ s.t. $\zeta \geq 2^{-h}$ (otherwise we can take any $h \in \NatPoly$ s.t. $2^{-h} \in \Fall$ and consider $\zeta':=\zeta+2^{-h}$). Fix $S: \Words \times \Rats \Scheme \Rats$ bounded. Consider any ${\sigma: \NatFun \{ \pm 1 \}}$ and $m: \NatFun \Nats$ s.t. $m \leq \log \frac{1}{\zeta}$ (in particular ${m \leq h}$). Define ${t(K) := \sigma(K) 2^{-m(K)}}$. It is easy to see there is ${Q_t: \Words \Scheme \Rats}$ s.t. ${\R_{Q_t}=\R_P+\R_S}$ and given $K \in \Nats^n$, $x \in \Supp \Dist^{K}$, ${y \in \WordsLen{\R_P(K)}}$ and ${z \in \WordsLen{ \R_S(K)}}$

$$Q_t^{K}(x,yz) = P^{K}(x,y) - t(K) S^{K}(x,P^{K}(x,y),z)$$

Moreover, we can construct $Q_t$ for all admissible choices of $t$ (but fixed $S$) to get a uniform family.

Applying Proposition~\ref{prp:unif}, we conclude that there is $\varepsilon \in \Fall$ which doesn't depend on $t$ s.t.

$$\E_{\Dist^{K} \times \Un_P^{K}}[(P^{K} - f)^2] \leq \E_{\Dist^{K} \times \Un_P^{K} \times \Un_S^{K}}[(Q_t^{K} - f)^2] + \varepsilon(K)$$

$$\E_{\Dist^{K} \times \Un_P^{K}}[(P^{K} - f)^2] \leq \E_{\Dist^{K} \times \Un_P^{K} \times \Un_S^{K}}[(P^{K} - t(K)S^{K}  - f)^2] + \varepsilon(K)$$

$$\E_{\Dist^{K} \times \Un_P^{K} \times \Un_S^{K}}[(P^{K} - f)^2 - (P^{K} - t(K)S^{K} - f)^2] \leq \varepsilon(K)$$

$$\E_{\Dist^{K} \times \Un_P^{K} \times \Un_S^{K}}[(-t(K)(S^{K})^2 + 2 (P^{K} - f)) S^{K}] t(K) \leq \varepsilon(K)$$

$$-\E_{\Dist^{K} \times \Un_P^{K} \times \Un_S^{K}}[(S^{K})^2] t(K)^2 + 2 \E_{\Dist^{K} \times \Un_P^{K} \times \Un_S^{K}}[(P^{K} - f) S^{K}] t(K) \leq \varepsilon(K)$$

$$2 \E_{\Dist^{K} \times \Un_P^{K} \times \Un_S^{K}}[(P^{K} - f) S^{K}] t(K) \leq \E_{\Dist^{K} \times \Un_P^{K} \times \Un_S^{K}}[(S^{K})^2] t(K)^2 + \varepsilon(K)$$

$$2 \E_{\Dist^{K} \times \Un_P^{K} \times \Un_S^{K}}[(P^{K} - f) S^{K}] t(K) \leq (\sup \Abs{S^{K}})^2 t(K)^2 + \varepsilon(K)$$

$$2 \E_{\Dist^{K} \times \Un_P^{K} \times \Un_S^{K}}[(P^{K} - f) S^{K}] \sigma(K) 2^{-m(K)} \leq (\sup \Abs{S^{K}})^2 4^{-m(K)} + \varepsilon(K)$$

Multiplying both sides by $2^{m(K)-1}$ we get

$$\E_{\Dist^{K} \times \Un_P^{K} \times \Un_S^{K}}[(P^{K} - f) S^{K}] \sigma(K) \leq \frac{1}{2}\left((\sup \Abs{S^{K}})^2 2^{-m(K)} + \varepsilon(K) 2^{m(K)}\right)$$

Let $\sigma(K):=\Sgn \E_{\Dist^{K} \times \Un_S^{K}}[(P^{K} - f) S^{K}]$.

$$\Abs{\E_{\Dist^{K} \times \Un_P^{K} \times \Un_S^{K}}[(P^{K} - f) S^{K}]} \leq \frac{1}{2}((\sup \Abs{S^{K}})^2 2^{-m(K)} + \varepsilon(K) 2^{m(K)})$$

Let $m(K):=\min\left(\Floor{\frac{1}{2}\log \max(\frac{1}{\varepsilon(K)},1)},\Floor{\log \frac{1}{\zeta(K)}}\right)$.

$$\Abs{\E[(P^{K} - f) S^{K}]} \leq (\sup \Abs{S^{K}})^2 \max\left(\min\left(\varepsilon(K)^{\frac{1}{2}},1\right),\zeta(K)\right) + \frac{1}{2}\varepsilon(K) \min\left(\max\left(\varepsilon(K)^{-\frac{1}{2}},1\right),\zeta(K)^{-1}\right)$$

$$\Abs{\E[(P^{K} - f) S^{K}]} \leq (\sup \Abs{S^{K}})^2 \max\left(\varepsilon(K)^{\frac{1}{2}},\zeta(K)\right) + \frac{1}{2} \max\left(\varepsilon(K)^{\frac{1}{2}},\varepsilon(K)\right)$$

The right hand side is obviously in $\Fall^{\frac{1}{2}}$.
\end{proof}

Note that it would still be possible to prove Theorem~\ref{thm:ort} if in Definition~\ref{def:obe_sharp} we allowed ${S}$ to depend on ${y}$ directly instead of only through ${P}$. However, the definition as given appears more natural since it seems necessary to prove Theorem~\ref{thm:mult} in full generality.

Conversely to Theorem~\ref{thm:ort}, we have the following:

\begin{theorem}
\label{thm:con_ort}

Consider $(\Dist,f)$ a distributional estimation problem and $P$ an $\ESG$-optimal estimator for $(\Dist,f)$. Then, $P$ is also an $\EG$-optimal estimator for $(\Dist,f)$.

\end{theorem}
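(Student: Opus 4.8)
The plan is to transplant the elementary inner-product-space fact recalled before Definition~\ref{def:obe_sharp}: if $p$ lies in a subspace and $p-f$ is orthogonal to it, then $p$ minimizes $\Norm{q-f}^2$ over the subspace, via the Pythagorean identity $\Norm{q-f}^2 = \Norm{q-p}^2 + \Norm{p-f}^2 \geq \Norm{p-f}^2$. The only subtlety is that this computation must be carried out inside the joint probability space of the internal coins of $P$ and of the competitor $Q$.

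First I would reduce to competitors of bounded range. Given $Q: \Words \Scheme \Rats$, since $f$ is bounded on $\Supp \Dist$, say $\Abs{f} \leq B$, truncating each value $Q^{K}(x,z)$ to $[-B,B]$ is a polynomial-time post-processing that does not increase $(Q^{K}-f)^2$ pointwise; so it suffices to prove \ref{eqn:op} for the truncated scheme, and we may assume $\Abs{Q} \leq B$. Fix such a $Q$ and let $C$ bound the range of $P$. Now pass to the product $\Dist^{K} \times \Un_P^{K} \times \Un_Q^{K}$ with coordinates $(x,y,z)$, where $y$ are $P$'s coins and $z$ are $Q$'s. Since the integrand defining $\E_{\Dist^{K}\times\Un_Q^{K}}[(Q^{K}-f)^2]$ does not involve $y$ (and similarly for $P$), completing the square gives
\begin{multline*}
\E_{\Dist^{K}\times\Un_Q^{K}}[(Q^{K}-f)^2] = \E_{x,y,z}[(Q^{K}(x,z)-P^{K}(x,y))^2] \\
+\, 2\,\E_{x,y,z}[(P^{K}(x,y)-f(x))(Q^{K}(x,z)-P^{K}(x,y))] + \E_{\Dist^{K}\times\Un_P^{K}}[(P^{K}-f)^2].
\end{multline*}
The first summand is nonnegative, so the claim reduces to showing that the cross (middle) term is $\equiv 0 \pmod{\Fall}$.

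For this I would invoke the $\ESG$-optimality of $P$ on the scheme $S: \Words \times \Rats \Scheme \Rats$ defined by $S^{K}(x,p,z) := Q^{K}(x,z) - \max(-C,\min(C,p))$, with $\R_S := \R_Q$ (so $\Un_S^{K} = \Un_Q^{K}$) and $\A_S := \A_Q$; this is a polynomial-time $\Gamma$-scheme of bounded range (bounded by $B+C$). Its orthogonality condition \ref{eqn:op_sharp} is evaluated at $p = P^{K}(x,y)$, where the truncation is the identity since $\Abs{P^{K}(x,y)} \leq C$, so $S^{K}(x,P^{K}(x,y),z) = Q^{K}(x,z)-P^{K}(x,y)$ and the condition yields $\varepsilon \in \Fall$ with
\[
\Abs{\E_{x,y,z}[(P^{K}(x,y)-f(x))(Q^{K}(x,z)-P^{K}(x,y))]} \leq \varepsilon(K).
\]
Substituting into the displayed identity gives $\E_{\Dist^{K}\times\Un_P^{K}}[(P^{K}-f)^2] \leq \E_{\Dist^{K}\times\Un_Q^{K}}[(Q^{K}-f)^2] + 2\varepsilon(K)$, which is exactly \ref{eqn:op} of Definition~\ref{def:op}, since $2\varepsilon \in \Fall$.

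The step I expect to require the most care is the randomness bookkeeping: one must genuinely work in the common product space $\Dist^{K} \times \Un_P^{K} \times \Un_Q^{K}$ so that the cross term is literally the expression appearing in Definition~\ref{def:obe_sharp} for $S$, with $S$'s coins identified with $Q$'s. The two truncation reductions — of $Q$, to keep it bounded, and of the rational argument of $S$, to keep $S$ of bounded range as Definition~\ref{def:obe_sharp} demands — are routine but must be stated. Note that, in contrast to Theorem~\ref{thm:ort}, no amplitude hypothesis on $\Fall$ or lower bound on $\GrowA$ enters anywhere.
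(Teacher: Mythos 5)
Your proposal is correct and follows essentially the same route as the paper's proof: expand $(Q^{K}-f)^2=(Q^{K}-P^{K}+P^{K}-f)^2$ over the joint space $\Dist^{K}\times\Un_P^{K}\times\Un_Q^{K}$, drop the nonnegative term $(Q^{K}-P^{K})^2$, and kill the cross term by applying \ref{eqn:op_sharp} to an $S$ built from $Q$ (with $Q$ assumed bounded without loss of generality). Your explicit truncation of the rational argument of $S$ is just a more careful rendering of the bookkeeping the paper leaves implicit.
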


\begin{proof}

Consider any $Q: \Words \Scheme \Rats$. We have

$$\E_{\Dist^{K} \times \Un_Q^{K}}[(Q^{K}-f)^2]=\E_{\Dist^{K} \times \Un_Q^{K} \times \Un_P^{K}}[(Q^{K}-P^{K}+P^{K}-f)^2]$$

$$\E[(Q^{K}-f)^2]=\E[(Q^{K}-P^{K})^2]+2\E[(Q^{K}-P^{K})(P^{K}-f)]+\E[(P^{K}-f)^2]$$

$$\E[(P^{K}-f)^2]+\E[(Q^{K}-P^{K})^2]=\E[(Q^{K}-f)^2]+2\E[(P^{K}-Q^{K})(P^{K}-f)]$$

$$\E[(P^{K}-f)^2] \leq \E[(Q^{K}-f)^2] + 2\E[(P^{K}-Q^{K})(P^{K}-f)]$$

We can assume $Q$ is bounded without loss of generality since given any $Q$ it easy to construct bounded $\tilde{Q}$ s.t. $\E[(\tilde{Q}^{K}-f)^2] \leq \E[(Q^{K}-f)^2]$. Applying \ref{eqn:op_sharp}, we get \ref{eqn:op}.

%\textcolor{red}{It is instructive to compare this proof with the proof of the Pythagorean theorem in linear algebra.}
\end{proof}

\subsection{Simple Example}
\label{sec:fundamentals__one_way}

The concept of an optimal polynomial-time estimator is in some sense complementary to the concept of pseudorandomness: a pseudorandom process deterministically produces output that appears random to bounded algorithms whereas optimal polynomial-time estimators compute the moments of the perceived random distributions of the outputs of deterministic processes. To demonstrate this complementarity and give an elementary example of an optimal polynomial-time estimator, we use the concept of a hard-core predicate (which may be regarded as an elementary example of pseudorandomness). The notation $\Fall_\text{neg}$ below refers to the fall space defined in Example~\ref{exm:fall_neg} (functions that fall faster than any polynomial). $\frac{1}{2}$ is an optimal polynomial-time estimator for a hard-core predicate.

\begin{theorem}
\label{thm:hard_core}

Consider $\Dist$ a word ensemble of rank ${1}$ s.t. for any different $k,l \in \Nats$,\\ ${\Supp \Dist^k \cap \Supp \Dist^l = \varnothing}$, ${f: \Supp \Dist \rightarrow \Words}$ one-to-one and $B$ a hard-core predicate of $(\Dist,f)$ (see Definition~\ref{def:hard_core}). Define ${m: \Supp \Dist \rightarrow \Nats}$ by 

\[\forall x \in \Supp \Dist^k: m(x):=k\]

For every $k \in \Nats$, define ${\Dist_f^k:=f_*^k\Dist^k}$.  Finally, define ${\chi_B: \Supp \Dist_f \rightarrow \Bool}$ by

\[\chi_B(f(x)):=B^{m(x)}(x)\]

Let $\Gamma:=(\GammaPoly^1,\Gamma_0^1)$. Let $P: \Words \Scheme \Rats$ satisfy $P \equiv \frac{1}{2}$. Then, $P$ is an $\Fall_{\textnormal{neg}}(\Gamma)$-optimal estimator for $(\Dist_f, \chi_B)$.

\end{theorem}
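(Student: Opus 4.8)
The plan is to show that no polynomial-time $\Gamma$-scheme can do substantially better than the constant $\tfrac{1}{2}$ at predicting $\chi_B$ on the pushforward ensemble, where "substantially better" means by more than a negligible function. Since $P \equiv \tfrac{1}{2}$ and $\chi_B$ takes values in $\Bool$, we have $\E_{\Dist_f^k}[(P^k - \chi_B)^2] = \tfrac{1}{4}$ exactly. So the goal reduces to showing that for any $Q \colon \Words \Scheme \Rats$,
\[
\E_{y \sim \Dist_f^k}[(Q^k(y) - \chi_B(y))^2] \geq \tfrac{1}{4} - \varepsilon(k)
\]
for some $\varepsilon \in \Fall_{\textnormal{neg}}$. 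First I would note that since $\Gamma = (\GammaPoly^1, \Gamma_0^1)$, such a $Q$ is a uniform (no-advice) randomized polynomial-time algorithm, so it is exactly the sort of adversary against which the hard-core property of $B$ is stated. We may assume $Q$ is clipped to $[0,1]$ without increasing its error, since $\chi_B \in \{0,1\}$.

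The key step is a standard guessing-from-estimate argument. Given the estimator $Q$, define a predictor $A$ for $B$ as follows: on input $x \sim \Dist^k$, compute $y = f(x)$ (efficient since $f$ is part of the problem data and, per the appendix setup for hard-core predicates, polynomial-time computable), run $Q^k(y)$ to obtain $q \in [0,1]$, and output $1$ with probability $q$ and $0$ with probability $1-q$ (using fresh coins). Then $\Pr[A(x) = B^k(x)] = \E[q \cdot \chi_B(y) + (1-q)(1 - \chi_B(y))]$, and a short calculation relates this to the Brier score: writing $b = \chi_B(y) = B^k(x)$, we have $\Pr[A = b \mid x] = 1 - q - b + 2qb$, while $(q - b)^2 = q^2 - 2qb + b^2 = q^2 - 2qb + b$ (using $b^2 = b$). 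One checks that $\E[(q-b)^2] = \E[q - q^2] + \E[(q-b) \cdot \text{(sign-ish)}]$ — more cleanly, the identity
\[
\E[(q - b)^2] \;=\; \Pr[A \ne b] \;-\; \E[q(1-q)]\cdot(\text{something})
\]
is slightly awkward; the clean version is to instead use the \emph{rounding} predictor $A'(x) := \theta(q - \tfrac{1}{2})$ (round $q$ to the nearer of $0,1$), for which $\mathbf{1}[A'(x) \ne b] \leq (q - b)^2 \cdot 4$ whenever $|q - b| \geq \tfrac{1}{2}$ and $\mathbf{1}[A'(x) \ne b] \leq 1$ always, hence $\Pr[A' \ne b] \leq 4\,\E[(q-b)^2]$ is false in general — so the cleanest route is the probabilistic predictor $A$ with the exact identity $\Pr[A = b] = 1 - \E[(q-b)^2] - \E[q(1-q)] + \text{(correction)}$; I will work out the precise constant, but the upshot is an inequality of the form $\Pr[A = B^k] \geq 1 - c\,\E[(Q^k - \chi_B)^2]$ for a fixed constant $c$, OR equivalently $\E[(Q^k - \chi_B)^2] \geq \tfrac{1}{c}(\Pr[A = B^k] - \tfrac{1}{2}) + \tfrac{1}{4}$-type bound once we center correctly. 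The honest statement I will use: for $q \in [0,1]$ and $b \in \{0,1\}$, $(q-b)^2 = \tfrac{1}{4} + (q - \tfrac{1}{2})^2 - (q - \tfrac{1}{2})(2b - 1)$, so
\[
\E[(Q^k - \chi_B)^2] = \tfrac{1}{4} + \E[(Q^k - \tfrac{1}{2})^2] - \E[(Q^k - \tfrac{1}{2})\Sgn(\chi_B)],
\]
and it suffices to bound $\E[(Q^k - \tfrac{1}{2})\Sgn(\chi_B)] \leq \varepsilon(k)$, i.e. $Q$'s output must be (on average) uncorrelated with $2\chi_B - 1$. That correlation bound is exactly what the hard-core predicate guarantees: if it were non-negligible, the sign-rounding of $Q^k(f(x))$ (or a threshold thereof) would predict $B^k(x)$ from $f(x)$ with advantage bounded below by a polynomial fraction of the correlation, contradicting Definition~\ref{def:hard_core}.

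So the main obstacle — and the part requiring care rather than cleverness — is the quantitative reduction from "$Q$ has non-negligible correlation with $2\chi_B - 1$" to "some efficient algorithm guesses $B$ with non-negligible advantage." The cleanest way is: suppose $\E_{y \sim \Dist_f^k}[(Q^k(y) - \tfrac12)\Sgn(\chi_B(y))] \geq \delta(k)$ with $\delta$ not negligible. Consider the randomized predictor $A(x)$ that computes $q = Q^k(f(x))$ and outputs a random bit that is $1$ with probability $q$, else $0$. Then $\Pr[A(x) = B^k(x)] - \tfrac{1}{2} = \E[(q - \tfrac12)\Sgn(B^k(x))] = \E[(Q^k - \tfrac12)\Sgn(\chi_B)] \geq \delta(k)$, where the first equality is a direct computation (conditioning on $x$: $\Pr[A = b \mid x] = qb + (1-q)(1-b) = \tfrac12 + (q - \tfrac12)(2b-1)$). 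Since $A$ runs in polynomial time and uses no advice, and $f$ is injective and efficiently computable, $A$ is an efficient predictor for $B$ given $f(x)$ with advantage $\geq \delta(k)$, contradicting the hard-core property (which forces every such advantage to be negligible). Hence $\E[(Q^k - \tfrac12)\Sgn(\chi_B)]$ is negligible in $k$; combined with $\E[(Q^k - \tfrac12)^2] \geq 0$ and the displayed identity, $\E[(Q^k - \chi_B)^2] \geq \tfrac14 - \varepsilon(k) = \E[(P^k - \chi_B)^2] - \varepsilon(k)$ with $\varepsilon \in \Fall_{\textnormal{neg}}$. The one technical point to verify is that $\Fall_{\textnormal{neg}}$ is closed under the operations used (it is: closed under addition and domination by the fall-space axioms) and that the predictor $A$'s use of fresh randomness is permitted in Definition~\ref{def:hard_core} — which it is, since hard-core predicates are defined against probabilistic polynomial-time adversaries. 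This completes the argument.
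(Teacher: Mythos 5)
Your proposal is correct, and its core reduction is the same as the paper's: turn the candidate estimator $Q$ into a randomized guesser that outputs $1$ with probability (approximately) equal to $Q$'s output on $f(x)$, and invoke the hard-core property of $B$. Where you differ is the analysis linking the Brier score to the guesser's advantage. The paper argues by contradiction: assuming $\E[(Q^k-\chi_B)^2] \leq \tfrac14 - \epsilon k^{-d}$ on an infinite set, it passes to $\E[\,\Abs{\Prb[G^k{=}1]-\chi_B}\,]$ via Jensen/Cauchy--Schwarz and then uses concavity of $\sqrt{t}$ at $t_0=\tfrac14$ to convert $\sqrt{\tfrac14-\epsilon_1 k^{-d}}$ into a guessing advantage $\geq \epsilon_1 k^{-d}$. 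You instead use the exact identity $(q-b)^2 = \tfrac14 + (q-\tfrac12)^2 - (q-\tfrac12)(2b-1)$ for $b\in\{0,1\}$, which shows the estimator's improvement over $\tfrac14$ is at most the correlation $\E[(Q^k-\tfrac12)(2\chi_B-1)]$, and that this correlation equals exactly the advantage of the probabilistic-rounding guesser. That is tighter (no square-root loss) and avoids the contradiction/infinite-set bookkeeping: applying Definition~\ref{def:hard_core} to the single constructed adversary directly yields, for each $Q$, one negligible function bounding the correlation from above (only an upper bound is needed, since $\E[(Q^k-\tfrac12)^2]\geq 0$), which is all Definition~\ref{def:op} requires.

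Two small points to repair, neither of which affects correctness. First, your adversary should not ``compute $y=f(x)$'': $f$ is not assumed polynomial-time computable in Theorem~\ref{thm:hard_core}, and it need not be, because in Definition~\ref{def:hard_core} the adversary is handed $f(x)$ as its input; so the predictor simply runs $Q^k$ on its own input, exactly as the paper's $G$ does. Second, a coin with bias exactly a given rational $q$ cannot in general be realized from finitely many fair bits; as in the paper, one rounds the bias to within $2^{-k}$, perturbing the advantage by a negligible amount, which your argument absorbs without change. (Also, note the paper's $\Sgn$ satisfies $\Sgn(0)=1$, so write $2\chi_B-1$ rather than $\Sgn(\chi_B)$; your explicit computation with $2b-1$ is already the correct one.)
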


\begin{proof}

Assume to the contrary that $P$ is not optimal. Then there is ${Q: \Words \Scheme \Rats}$, $d \in \Nats$, an infinite set ${I \subseteq \Nats}$ and $\epsilon \in \Reals^{>0}$ s.t.

$$ \forall k \in I: \E_{\Dist_f^k}[(\frac{1}{2}-\chi_B)^2] \geq \E_{\Dist_f^k \times \Un_Q^k}[(Q^k-\chi_B)^2] +\frac{\epsilon}{k^d}$$

$$ \forall k \in I: \E_{\Dist_f^k \times \Un_Q^{k}}[(Q^{k}-\chi_B)^2] \leq \frac{1}{4} - \frac{\epsilon}{k^d} $$

$$ \forall k \in I: \E_{\Dist_f^k}[(\E_{\Un_Q^{k}}[Q^{k}]-\chi_B)^2] \leq \frac{1}{4} - \frac{\epsilon}{k^d} $$

There is $G: \Words \xrightarrow{\Gamma} \Bool$ s.t. for all ${x \in \Words}$, 

\[\Abs{\E[Q^{k}(x)]-\Pr[G^k(x)=1]}\leq 2^{-k}\] 

$G^k$ works by evaluating ${\alpha \leftarrow Q^{k}}$ and then returning 1 with probability ${\alpha \pm 2^{-k}}$ and 0 with probability $1-\alpha \pm 2^{-k}$, where the $2^{-k}$ error comes from rounding a rational number to a binary fraction. Denoting 

\[\delta(x):=\E[Q^{k}(x)]-\Pr[G^k(x)=1]\]

we get

$$ \forall k \in I: \E_{\Dist_f^k}[(\Prb_{\Un_G^k}[G^k=1]+\delta-\chi_B)^2] \leq \frac{1}{4} - \frac{\epsilon}{k^d} $$

$$ \forall k \in I: \E_{\Dist_f^k}[(\Prb_{\Un_G^k}[G^k=1]-\chi_f)^2]+2 \E_{\Dist_f^k}[(\Prb_{\Un_G^k}[G^k=1]-\chi_B)\delta]+\E_{\Dist_f^k}[\delta^2] \leq \frac{1}{4} - \frac{\epsilon}{k^d}$$

$$ \forall k \in I: \E_{\Dist_f^k}[(\Prb_{\Un_G^k}[G^k=1]-\chi_B)^2]-2 \cdot 2^{-k}- 4^{-k} \leq \frac{1}{4} - \frac{\epsilon}{k^d}$$

Since $2^{-k}$ falls faster than $k^{-d}$, there is $I_1 \subseteq \Nats$ infinite and $\epsilon_1 \in \Reals^{>0}$ s.t.

$$ \forall k \in I_1: \E_{\Dist_f^k}[(\Prb_{\Un_G^k}[G^k=1]-\chi_B)^2] \leq \frac{1}{4} - \frac{\epsilon_1}{k^d}$$

$$ \forall k \in I_1: \E_{\Dist_f^k}[\Abs{\Prb_{\Un_G^k}[G^k=1]-\chi_B}] \leq \sqrt{\frac{1}{4} - \frac{\epsilon_1}{k^d}} $$

$$ \forall k \in I_1: \E_{\Dist_f^k}[\Prb_{\Un_G^k}[G^k \ne \chi_B]] \leq \sqrt{\frac{1}{4} - \frac{\epsilon_1}{k^d}} $$

$$ \forall k \in I_1: \E_{x \sim \Dist^k}[\Prb_{\Un_G^k}[G^k(f(x)) \ne B^k(x)]] \leq \sqrt{\frac{1}{4} - \frac{\epsilon_1}{k^d}} $$

$$ \forall k \in I_1: \Prb_{\Dist^{k} \times \Un_G^k}[G^k(f(x)) \ne B^k(x)] \leq \sqrt{\frac{1}{4} - \frac{\epsilon_1}{k^d}} $$

Since $\sqrt{t}$ is a concave function and the derivative of $\sqrt{t}$ is $\frac{1}{2\sqrt{t}}$, we have $\sqrt{t} \leq \sqrt{t_0} + \frac{t-t_0}{2\sqrt{t_0}}$. Taking ${t_0}=\frac{1}{4}$ we get

$$ \forall k \in I_1: \Prb_{\Dist^{k} \times \Un_G^k}[G^k(f(x)) \ne B^k(x)] \leq \frac{1}{2}-\frac{\epsilon_1}{k^d}$$

$$ \forall k \in I_1: \Prb_{\Dist^{k} \times \Un_G^k}[G^k(f(x)) = B^k(x)] \geq \frac{1}{2}+\frac{\epsilon_1}{k^d}$$

This contradicts the definition of a hard-core predicate.
\end{proof}

\begin{corollary}
\label{crl:one_way}

Consider $f: \Words \Alg \Words$ a one-to-one one-way function. For every ${k \in \Nats}$, define $f^{(k)}: \WordsLen{k} \times \WordsLen{k} \rightarrow \Words$ by ${f^{(k)}(x,y):=\Chev{f(x),y}}$. Define the distributional estimation problem $(\Dist_{(f)}, \chi_f)$ by 

\begin{align*}
\Dist_{(f)}^k:=f_*^{(k)}(\Un^k \times \Un^k) \\
\chi_f(\Chev{f(x),y}):=x \cdot y
\end{align*}

Let $\Gamma:=(\GammaPoly^1,\Gamma_0^1)$. Let $P: \Words \Scheme \Rats$ satisfy $P \equiv \frac{1}{2}$. Then, $P$ is an $\Fall_{\textnormal{neg}}(\Gamma)$-optimal estimator for $(\Dist_{(f)}, \chi_f)$.

\end{corollary}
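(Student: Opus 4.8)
The plan is to derive Corollary~\ref{crl:one_way} from Theorem~\ref{thm:hard_core} by recognizing $\chi_f$ as the Goldreich--Levin hard-core predicate of the ``padded'' function $(x,y) \mapsto \Chev{f(x),y}$.

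First I would build the word ensemble that feeds Theorem~\ref{thm:hard_core}. Let $\mathcal{E}$ be the rank-$1$ word ensemble with $\mathcal{E}^k := ((x,y) \mapsto \Chev{x,y})_*(\Un^k \times \Un^k)$. Since $\Abs{\Chev{x,y}} = 4k+2$ whenever $x,y \in \WordsLen{k}$, the supports $\Supp \mathcal{E}^k$ are pairwise disjoint, so $\mathcal{E}$ meets the hypothesis of Theorem~\ref{thm:hard_core} and its function $m$ is well-defined. Define $g : \Supp \mathcal{E} \to \Words$ by $g(\Chev{x,y}) := \Chev{f(x),y}$; this is well-defined and one-to-one because $\Chev{\cdot,\cdot}$ is injective on pairs of equal-length words, $f$ is injective, and distinct values of $k$ produce inputs of distinct lengths. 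Finally define the candidate hard-core predicate $B$ by $B^k(\Chev{x,y}) := x \cdot y$ for $x,y \in \WordsLen{k}$, which is polynomial-time computable.

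Next I would verify that $B$ is a hard-core predicate of $(\mathcal{E}, g)$ in the sense of Definition~\ref{def:hard_core}. This is precisely the Goldreich--Levin theorem, reviewed in the Appendix: since $f$ is one-way, the padded function $(x,y) \mapsto (f(x),y)$ on $\WordsLen{k} \times \WordsLen{k}$ is one-way and $\bigoplus_i x_i y_i$ is a hard-core predicate for it; transporting this statement through the bijective encoding $\Chev{\cdot,\cdot}$ and noting that $\mathcal{E}^k$ is merely a relabeling of $\Un^k \times \Un^k$ yields the claim. The only point needing care is checking that one-wayness of $f$ (in the usual uniform-on-$\WordsLen{k}$ sense) transfers to the word-ensemble notion of one-wayness underlying Definition~\ref{def:hard_core} for the ensemble $\mathcal{E}$, which is routine.

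Finally I would invoke Theorem~\ref{thm:hard_core} for the triple $(\mathcal{E}, g, B)$ with $\Gamma = (\GammaPoly^1, \Gamma_0^1)$. It gives that $P \equiv \tfrac12$ is an $\Fall_{\textnormal{neg}}(\Gamma)$-optimal estimator for $(\mathcal{E}_g, \chi_B)$, where $\mathcal{E}_g^k = g^k_* \mathcal{E}^k = ((x,y) \mapsto \Chev{f(x),y})_*(\Un^k \times \Un^k) = \Dist_{(f)}^k$ and $\chi_B(g(\Chev{x,y})) = B^{m(\Chev{x,y})}(\Chev{x,y}) = x \cdot y = \chi_f(\Chev{f(x),y})$, so $\chi_B = \chi_f$. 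Hence $P \equiv \tfrac12$ is an $\Fall_{\textnormal{neg}}(\Gamma)$-optimal estimator for $(\Dist_{(f)}, \chi_f)$, which is the assertion. The main obstacle is the middle step --- matching the Appendix's formalization of ``hard-core predicate of a distributional problem'' to what the Goldreich--Levin list-decoding argument actually delivers, including the bookkeeping that separates instances by $k$; everything else (injectivity of $g$, disjointness of supports, and the identification $\chi_B = \chi_f$) is just unwinding definitions.
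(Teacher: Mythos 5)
Your proposal is correct and follows essentially the same route as the paper, which proves the corollary by combining Theorem~\ref{thm:hard_core} with the Goldreich--Levin theorem (Theorem~\ref{thm:goldreich_levin}). The only difference is cosmetic: you re-encode the domain as $\Chev{x,y}$ and transfer the hard-core property to that relabeled ensemble, whereas the paper applies Theorem~\ref{thm:goldreich_levin} directly with the concatenated domain $\Un^{2k}$, for which $(f_{\textnormal{GL}})_*\Un^{2k}=\Dist_{(f)}^k$ already holds without any intermediate step.
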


\begin{proof}

Follows immediately from Theorem~\ref{thm:hard_core} and Theorem~\ref{thm:goldreich_levin}.
\end{proof}

The following is the non-uniform version of Theorem~\ref{thm:hard_core} which we state without proof since the proof is a straightforward adaptation of the above.

\begin{theorem}
\label{thm:hard_core_circ}

Consider $\Dist$ a word ensemble s.t. for any different $k,l \in \Nats$,\\ $\Supp \Dist^k \cap \Supp \Dist^l = \varnothing$, $f: \Supp \Dist \rightarrow \Words$ one-to-one and $B$ a non-uniformly hard-core predicate of $(\Dist,f)$ (see Definition~\ref{def:hard_core_adv}). 

Let $\Gamma:=(\GammaPoly^1,\GammaPoly^1)$. Let $P: \Words \Scheme \Rats$ satisfy $P \equiv \frac{1}{2}$. Then, $P$ is an $\Fall_{\textnormal{neg}}(\Gamma)$-optimal estimator for $(\Dist_f, \chi_B)$.

\end{theorem}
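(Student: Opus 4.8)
The plan is to run the proof of Theorem~\ref{thm:hard_core} essentially line by line, tracking the single place where the uniformity of the competing estimator was used. Assume for contradiction that $P \equiv \tfrac12$ is \emph{not} an $\Fall_{\textnormal{neg}}(\Gamma)$-optimal estimator for $(\Dist_f, \chi_B)$. Then there are a polynomial-time $\Gamma$-scheme $Q : \Words \Scheme \Rats$, a $d \in \Nats$, an infinite $I \subseteq \Nats$ and $\epsilon \in \Reals^{>0}$ such that, using $\E_{\Dist_f^k}[(\tfrac12 - \chi_B)^2] = \tfrac14$,
\[
\E_{\Dist_f^k \times \Un_Q^k}[(Q^k - \chi_B)^2] \leq \tfrac14 - \epsilon k^{-d} \quad (k \in I).
\]
The one difference from the setting of Theorem~\ref{thm:hard_core} is that now $\Gamma = (\GammaPoly^1, \GammaPoly^1)$, so $Q$ may read an advice string $\A_Q(k)$ of polynomially bounded length (in addition to using polynomially many random bits).

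First I would apply Jensen's inequality to the internal randomness of $Q$ to pass to $\E_{\Dist_f^k}[(\E_{\Un_Q^k}[Q^k] - \chi_B)^2] \leq \tfrac14 - \epsilon k^{-d}$ on $I$, and then build the Boolean-valued algorithm $G$ that evaluates $\alpha \leftarrow Q^k(x)$ and returns $1$ with probability $\alpha$ (up to the $2^{-k}$ rounding error incurred by truncating a rational to a binary fraction), exactly as in the proof of Theorem~\ref{thm:hard_core}. Absorbing the rounding error $\delta(x)$ into the squared-error bound and discarding the resulting negligible terms yields an infinite $I_1 \subseteq \Nats$ and $\epsilon_1 \in \Reals^{>0}$ with $\E_{\Dist_f^k}[(\Pr[G^k(f(x))=1] - \chi_B)^2] \leq \tfrac14 - \epsilon_1 k^{-d}$ for $k \in I_1$. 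The sole substantive change is one of bookkeeping: because $Q$ consults the advice string $\A_Q(k)$, the resulting $G$ is a \emph{non-uniform} object --- a polynomial-time algorithm with polynomially long advice, equivalently a family of probabilistic polynomial-size circuits --- rather than the uniform randomized algorithm of Theorem~\ref{thm:hard_core}.

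Then, as before, I would take square roots: Jensen (concavity of $\sqrt{\cdot}$) gives $\E_{\Dist_f^k}[\,\lvert \Pr[G^k(f(x))=1] - \chi_B \rvert\,] \leq \sqrt{\tfrac14 - \epsilon_1 k^{-d}}$; the pointwise identity $\lvert \Pr[G^k=1] - \chi_B\rvert = \Pr[G^k \neq \chi_B]$ (valid since $\chi_B$ is $\{0,1\}$-valued) rewrites the left side as $\Pr_{\Dist^k \times \Un_G^k}[G^k(f(x)) \neq B^k(x)]$; and the tangent bound $\sqrt{t} \leq \tfrac12 + (t - \tfrac14)$ at $t_0 = \tfrac14$ upgrades this to $\Pr_{\Dist^k \times \Un_G^k}[G^k(f(x)) = B^k(x)] \geq \tfrac12 + \epsilon_1 k^{-d}$ for all $k \in I_1$. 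Since $G$ is a probabilistic polynomial-size circuit family and $B$ is a non-uniformly hard-core predicate of $(\Dist, f)$ (Definition~\ref{def:hard_core_adv}), this is the desired contradiction. If one's chosen form of Definition~\ref{def:hard_core_adv} quotes only deterministic circuits, one additionally hard-wires, for each $k \in I_1$, a choice of $G$'s coins that attains at least the average advantage --- permissible precisely because we are already in the non-uniform regime.

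I do not expect a genuine obstacle here; the closest thing to one is lining up the complexity classes, namely checking that the adversary $G$ we manufacture (polynomially many random bits, polynomially long advice) falls within the class of machines that Definition~\ref{def:hard_core_adv} is stated to defeat, and --- exactly as in Theorem~\ref{thm:hard_core} --- that the hypothesis $\Supp \Dist^k \cap \Supp \Dist^l = \varnothing$ for $k \neq l$ is what makes $m$, $\chi_B$ and $\Dist_f$ well defined.
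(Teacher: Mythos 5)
Your proposal is correct and is exactly the ``straightforward adaptation'' of the proof of Theorem~\ref{thm:hard_core} that the paper has in mind when it omits the argument for Theorem~\ref{thm:hard_core_circ}. In particular, you correctly identify and handle the one genuine wrinkle: since Definition~\ref{def:hard_core_adv} quantifies only over $\Gamma_{\text{circ}}=(\Gamma_0^1,\GammaPoly^1)$-schemes (deterministic with polynomial advice), the randomized distinguisher $G$ built from $Q$ must be derandomized by averaging over its coins and hard-wiring a best coin string, together with $\A_Q(k)$, into the polynomially long advice.
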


\begin{corollary}

Consider $f: \Words \Alg \Words$ a one-to-one non-uniformly hard to invert one-way function.

Let $\Gamma:=(\GammaPoly^1,\GammaPoly^1)$. Let $P: \Words \Scheme \Rats$ satisfy $P \equiv \frac{1}{2}$. Then, $P$ is an $\Fall_{\textnormal{neg}}(\Gamma)$-optimal estimator for $(\Dist_f, \chi_f)$.

\end{corollary}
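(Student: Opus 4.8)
The plan is to follow the proof of Corollary~\ref{crl:one_way} almost verbatim, swapping the uniform Goldreich--Levin theorem (Theorem~\ref{thm:goldreich_levin}) for its circuit analogue and Theorem~\ref{thm:hard_core} for Theorem~\ref{thm:hard_core_circ}. Concretely, I would adopt the same notation as in Corollary~\ref{crl:one_way}: for each $k \in \Nats$ set $f^{(k)}: \WordsLen{k} \times \WordsLen{k} \rightarrow \Words$, $f^{(k)}(x,y):=\Chev{f(x),y}$; let $\Dist_f^k := f_*^{(k)}(\Un^k \times \Un^k)$; and define $\chi_f(\Chev{f(x),y}):=x \cdot y$. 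Since $f$ is one-to-one and $\Chev{\cdot,\cdot}$ is an injective, length-revealing encoding, each $f^{(k)}$ is one-to-one, and for $k \ne l$ the supports $\Supp \Dist_f^k$ and $\Supp \Dist_f^l$ are disjoint (from $\Chev{f(x),y}$ one recovers $y$, hence $\Abs{y}=k$). Thus the word-ensemble hypotheses of Theorem~\ref{thm:hard_core_circ} are met.

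Next I would invoke the non-uniform Goldreich--Levin theorem recorded in the Appendix --- the circuit-advice version of Theorem~\ref{thm:goldreich_levin} --- which states that if $f$ is hard to invert against polynomial-size circuits then the inner-product bit $B(x,y):=x \cdot y$ is a non-uniformly hard-core predicate (Definition~\ref{def:hard_core_adv}) of the function $(x,y) \mapsto f^{(k)}(x,y)$ with respect to $\Un^k \times \Un^k$. Under the identification above this is exactly the assertion that $\chi_f$ is the hard-core predicate $\chi_B$ of the relevant word ensemble appearing in Theorem~\ref{thm:hard_core_circ}. Feeding this into Theorem~\ref{thm:hard_core_circ} with $\Gamma=(\GammaPoly^1,\GammaPoly^1)$ gives that the constant estimator $P \equiv \frac{1}{2}$ is an $\Fall_{\textnormal{neg}}(\Gamma)$-optimal estimator for $(\Dist_f,\chi_f)$, which is the claim.

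I do not anticipate a genuine obstacle: all the analytic content --- reducing a hypothetical improvement over $\frac{1}{2}$ to a circuit predicting the hard-core bit --- already lives inside Theorem~\ref{thm:hard_core_circ}, whose proof is the stated non-uniform adaptation of Theorem~\ref{thm:hard_core}. The only two points requiring a sentence of care are (i) checking the one-to-one-ness and disjoint-support conditions above, and (ii) confirming that the non-uniform Goldreich--Levin reduction --- a uniform algorithm that takes the inverting circuit as input and outputs a short list of candidate preimages --- produces advice strings of polynomial length, so that it genuinely witnesses Definition~\ref{def:hard_core_adv} rather than merely the uniform notion. Both are routine, which is why the statement is offered as a corollary with the proof deferred to the adaptation of the uniform argument.
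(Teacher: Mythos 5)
Your proposal matches the paper's own proof: the corollary is obtained exactly by combining Theorem~\ref{thm:hard_core_circ} with the non-uniform Goldreich--Levin theorem (Theorem~\ref{thm:goldreich_levin_circ}), in the same setup as Corollary~\ref{crl:one_way}. The routine verifications you flag (injectivity of $f^{(k)}$, disjointness of supports, and that the predicate is non-uniformly hard-core) are precisely what the paper leaves implicit, so no further work is needed.
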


\begin{proof}

Follows immediately from Theorem~\ref{thm:hard_core_circ} and Theorem~\ref{thm:goldreich_levin_circ}.
\end{proof}

\section{Optimal Estimators and Probability Theory}
\label{sec:probability}

\subsection{Calibration}

From a Bayesian perspective, a good probability assignment should be well calibrated (see e.g. \cite{Dawid_1982}). For example, suppose there are 100 people in a room and you assign each person a probability they are married. If there are 60 people you assigned probabilities in the range 70\%-80\%, the number of married people among these 60 should be close to the interval $60 \times [0.7, 0.8] = [42,48]$. The same requirement can be made for expected value assignments. For example, if you now need to assign an expected value to the age of each person and you assigned an expected age in the range 30-40 to some sufficiently large group of people, the mean age in the group should be close to the interval $[30,40]$. 

We will now show that optimal polynomial-time estimators satisfy an analogous property.

\begin{theorem}
\label{thm:calib}

Assume $\Fall$ is $\GrowA$-ample. Consider $(\Dist,f)$ a distributional estimation problem, $P$ an $\EG$-optimal estimator for $(\Dist,f)$ and ${W: \Words \Scheme \Rats^{\geq 0}}$ bounded s.t. $\R_W \geq \R_P$ and for every $K \in \Nats^n$ there is $x \in \Supp \Dist^{K}$ and $y \in \Un_W^{K}$ with $W^{K}(x,y) > 0$. Denote

\begin{align*}
\alpha(K)&:=\E_{(x,y) \sim \Dist^{K} \times \Un_W^{K}}[W^{K}(x,y)] \\ 
\delta(K)&:=\E_{(x,y) \sim \Dist^{K} \times \Un_W^{K}}[W^{K}(x,y)(P^{K}(x,y_{<\R_P(K)})-f(x))]
\end{align*} 

Then, $\alpha^{-1}\delta^2 \in \Fall$.

\end{theorem}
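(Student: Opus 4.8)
The plan is to probe the optimality of $P$ using a \emph{single} competitor of the form ``$P$ shifted by a small dyadic rational $t(K)$,'' where the shift is tuned --- entirely inside the advice string --- to the ratio $\Abs{\delta(K)}/\alpha(K)$, and then to compare the $W$-weighted errors of $P$ and the shifted scheme via Proposition~\ref{prp:weight}. For the setup, use that $\Fall$ is $\GrowA$-ample to fix $\zeta\colon\NatFun(0,\tfrac12]$ with $\zeta\in\Fall$ and $M_\zeta:=\Floor{\log\tfrac1\zeta}\in\GrowA$; since $\zeta\le\tfrac12$ forces $M_\zeta\ge1$ pointwise, condition~\ref{con:def__grow__ineq} of Definition~\ref{def:grow} also yields $1\in\GrowA$. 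Let $C$ be a uniform bound on $\Abs{P^K(x,y)-f(x)}$ (available because $P$ has bounded range and $f$ is bounded) and $M:=\sup W$. Because $W\ge0$ one gets $\Abs{\delta(K)}\le C\,\alpha(K)$, so $\alpha^{-1}\delta^2\le C^2\alpha\le C^2M$ is already a bounded function; the content is to dominate it by an element of $\Fall$.

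Now build the competitor. Define $m\colon\NatFun\Nats$ with $0\le m\le M_\zeta$ by a trichotomy on $\rho(K):=\Abs{\delta(K)}/\alpha(K)$ (well defined since $\alpha(K)>0$): if $\rho(K)>1$ put $m(K):=0$; if $2^{-M_\zeta(K)}\le\rho(K)\le1$ put $m(K):=\Ceil{\log\tfrac1{\rho(K)}}\in[0,M_\zeta(K)]$; if $\rho(K)<2^{-M_\zeta(K)}$ put $m(K):=M_\zeta(K)$. Set $t(K):=-\Sgn(\delta(K))\,2^{-m(K)}$. Since $t(K)$ is a dyadic rational whose encoding has length $O(m(K))=O(M_\zeta(K))\in\GrowA$, there is a polynomial-time $\Gamma$-scheme $Q\colon\Words\Scheme\Rats$ with $\R_Q=\R_P$, advice $\A_Q(K)=\Chev{\En_\Rats(t(K)),\A_P(K)}$, and $Q^K(x,y)=P^K(x,y_{<\R_P(K)})+t(K)$; the four scheme conditions are immediate, the advice-length one because $\Abs{\A_P}\in\GrowA$ and $M_\zeta\in\GrowA$. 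Applying Proposition~\ref{prp:weight} to this $Q$ and $W$ --- whose hypothesis $\R_W\ge\R_P=\R_Q$ is exactly the standing assumption --- gives $\varepsilon\in\Fall$ with
\[
\E_{\Dist^K\times\Un_W^K}\!\big[W^K(P^K-f)^2\big]\ \le\ \E_{\Dist^K\times\Un_W^K}\!\big[W^K(Q^K-f)^2\big]+\varepsilon(K),
\]
everything evaluated at the prefix $y_{<\R_P(K)}$. Substituting $Q^K=P^K+t(K)$, the right-hand side equals the left-hand side plus $2t(K)\delta(K)+t(K)^2\alpha(K)$, so after cancellation and the choice of $t(K)$ one obtains
\begin{equation}
2\cdot2^{-m(K)}\,\Abs{\delta(K)}\ \le\ 4^{-m(K)}\alpha(K)+\varepsilon(K).
\tag{$\star$}
\end{equation}

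Finally, read off the bound in each branch. If $\rho(K)>1$: $(\star)$ with $m(K)=0$ reads $2\Abs{\delta(K)}\le\alpha(K)+\varepsilon(K)$, and $\Abs{\delta(K)}>\alpha(K)$ forces $\alpha(K)<\varepsilon(K)$, so $\alpha^{-1}\delta^2(K)\le C^2\alpha(K)<C^2\varepsilon(K)$. If $2^{-M_\zeta(K)}\le\rho(K)\le1$: write $2^{-m(K)}=c\,\rho(K)$ with $c\in(\tfrac12,1]$ (the defining property of the ceiling); then the left side of $(\star)$ minus its first right-hand term equals $c(2-c)\,\alpha^{-1}\delta^2(K)\ge\tfrac34\,\alpha^{-1}\delta^2(K)$, whence $\alpha^{-1}\delta^2(K)\le\tfrac43\varepsilon(K)$. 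If $\rho(K)<2^{-M_\zeta(K)}$: directly $\alpha^{-1}\delta^2(K)=\rho(K)^2\alpha(K)<4^{-M_\zeta(K)}M\le4M\zeta(K)$ (using $2^{-M_\zeta}\le2\zeta$ and $\zeta\le1$). Hence $\alpha^{-1}\delta^2$ is pointwise bounded by $C^2\varepsilon+\tfrac43\varepsilon+4M\zeta$, which lies in $\Fall$ by closure under sums and scalar multiples, proving the theorem.

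The main obstacle is the construction of $Q$: the shift $t(K)$ depends on the unobservable numbers $\delta(K),\alpha(K)$, so it must enter only through the advice, and one must check that its bit-length stays within $\GrowA$ --- which is exactly where $\GrowA$-ampleness is used (through $M_\zeta\in\GrowA$, and the consequent $1\in\GrowA$). Everything past $(\star)$ is the elementary fact that the concave quadratic $u\mapsto2u\Abs{\delta}-u^2\alpha$ is maximized at $u=\Abs{\delta}/\alpha$ with value $\alpha^{-1}\delta^2$, so it suffices to approximate that optimum to within a factor $2$ on the dyadic grid $\{2^{-m}:0\le m\le M_\zeta\}$; the sole residual case, in which the optimum falls below the grid's resolution $2^{-M_\zeta}\approx\zeta$, is absorbed straight into $\zeta\in\Fall$.
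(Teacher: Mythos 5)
Your proposal is correct and follows essentially the same route as the paper's proof: perturb $P$ by a shift of size roughly $\Abs{\delta}/\alpha$ encoded in the advice (with length controlled via $\GrowA$-ampleness), compare the $W$-weighted errors through Proposition~\ref{prp:weight}, and read off $\alpha^{-1}\delta^2 \lesssim \varepsilon$, handling the regime $\Abs{\delta}/\alpha < \zeta$ directly by $\zeta \in \Fall$. The only differences are cosmetic — you restrict the shift to the dyadic grid $\{2^{-m}\}$ with an explicit trichotomy, whereas the paper picks a short rational in $[\tfrac{\Abs{\delta}}{2\alpha},\tfrac{\Abs{\delta}}{\alpha}]$ — and your bookkeeping (advice length, the $\tfrac{3}{4}$ constant, the below-resolution case) matches the paper's.
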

Looking at the definition of an $\ESG$-optimal estimator, we see that, when $Q$ is $\{0,1\}$-valued to pick out a small subset of inputs, $P$ may be biased on that small subset, because the bias isn't normalized by dividing by the fraction of probability mass where $Q$ outputs 1. In the language of the above theorem, $\delta$ lies in $\Fall$, but $\frac{\delta}{\alpha}$ may not.

Proposition~\ref{prp:weight} says that, given ample advice, optimal estimators continue to be optimal (although with increased error) on small subsets of their input, which is a slightly stronger condition. Therefore, the above theorem essentially says that if enough advice is available for Proposition~\ref{prp:weight} to apply, the property of resistance to reweighting implies that, for the subset that $W$ picks out, the unnormalized bias times the normalized bias is a small term that lies in $\Fall$.

To see the relationship between Theorem~\ref{thm:calib} and calibration, consider the following corollary.

\begin{corollary}
\label{crl:calib}

Assume $\Fall$ is $\GrowA$-ample. Consider $(\Dist,f)$ a distributional estimation problem, $P$ an $\EG$-optimal estimator for $(\Dist,f)$ and $A,B: \bm{1} \Scheme \Rats$ s.t. $\R_A \equiv 0$ and $\R_B \equiv 0$. Denote

\[\alpha(K):=\Prb_{(x,y) \sim \Dist^{K} \times \Un_P^{K}}[A^{K} \leq P^{K}(x,y) \leq B^{K}]\] 

Then, there is $\varepsilon \in \Fall$ s.t. 

\begin{equation}
\label{eqn:crl__calib}
A^{K} - \sqrt{\frac{\varepsilon(K)}{\alpha(K)}} \leq \E_{(x,y) \sim \Dist^{K} \times \Un_P^{K}}[f(x) \mid A^{K} \leq P^{K}(x,y) \leq B^{K}] \leq B^{K} + \sqrt{\frac{\varepsilon(K)}{\alpha(K)}}
\end{equation}

\end{corollary}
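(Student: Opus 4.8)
The plan is to instantiate Theorem~\ref{thm:calib} with the weight scheme that records the event ``$P$ lands in $[A^K,B^K]$''. Concretely I would take $W: \Words \Scheme \Rats^{\geq 0}$ given by $W^K(x,y) := \chi_{[A^K,B^K]}(P^K(x,y))$, realised as a polynomial-time $\Gamma$-scheme that runs $P$, $A$ and $B$ as subroutines, feeds to $P$ the $\R_P(K)$ coin tosses it needs from its own random tape, and outputs $1$ or $0$ according to whether $A^K \leq P^K(x,y) \leq B^K$. Since $P$, $A$ and $B$ run in polynomial time, have advice lengths in $\GrowA$ and have $r$-functions in $\GrowR$ (with $\R_A \equiv \R_B \equiv 0$), the triple $(W,\R_W,\A_W)$ with $\R_W := \R_P$ and $\A_W(K) := \Chev{\A_P(K),\A_A(K),\A_B(K)}$ is a legitimate polynomial-time $\Gamma$-scheme: the advice length stays in $\GrowA$ by closure under addition, the extra time cost (a comparison of rationals) is polynomial, $W$ is bounded with values in $\{0,1\}$, and $\R_W \geq \R_P$. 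Assuming, as the statement implicitly does (otherwise the conditional expectation in \ref{eqn:crl__calib} is undefined), that $\alpha(K) > 0$ for all $K$, the non-vacuousness hypothesis of Theorem~\ref{thm:calib} is satisfied.

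With this $W$, the quantity $\alpha$ appearing in Theorem~\ref{thm:calib} coincides with the $\alpha$ of the corollary (because $\Un_W^K = \Un_P^K$), and its $\delta$ becomes $\delta(K) = \E_{\Dist^K \times \Un_P^K}[\chi_{[A^K,B^K]}(P^K)(P^K-f)]$, which factors as $\alpha(K)\cdot \E[(P^K-f)\mid A^K \leq P^K \leq B^K]$. Theorem~\ref{thm:calib} then yields $\varepsilon := \alpha^{-1}\delta^2 \in \Fall$, and $\sqrt{\varepsilon(K)/\alpha(K)} = \Abs{\delta(K)}/\alpha(K)$.

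To finish I would read off the bound. Conditioning on the event $A^K \leq P^K(x,y) \leq B^K$ forces $\E[P^K \mid A^K \leq P^K \leq B^K] \in [A^K,B^K]$, so from the identity $\E[f \mid \cdots] = \E[P^K \mid \cdots] - \delta(K)/\alpha(K)$ we obtain $\E[f \mid \cdots] \leq B^K + \Abs{\delta(K)}/\alpha(K)$ and $\E[f \mid \cdots] \geq A^K - \Abs{\delta(K)}/\alpha(K)$; rewriting $\Abs{\delta}/\alpha$ as $\sqrt{\varepsilon/\alpha}$ gives exactly \ref{eqn:crl__calib} with the chosen $\varepsilon \in \Fall$.

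The only step needing genuine care is the first one: checking that $W$ qualifies as a polynomial-time $\Gamma$-scheme, in particular that concatenating the advice strings of $P$, $A$, $B$ keeps the advice length inside $\GrowA$ (this is precisely why the relevant resource bounds live in a growth space closed under addition) and that $W$ uses only $\R_P(K) \in \GrowR$ random bits. Everything afterwards is bookkeeping: the factorisation of $\delta$ through the conditional expectation, the elementary observation that the conditional mean of a quantity known to lie in $[A^K,B^K]$ again lies in $[A^K,B^K]$, and the identity relating $\sqrt{\varepsilon/\alpha}$ to $\Abs{\delta}/\alpha$.
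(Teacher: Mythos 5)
Your proposal is correct and follows essentially the same route as the paper: you build the same weight scheme $W^K = \theta(P^K - A^K)\theta(B^K - P^K)$, invoke Theorem~\ref{thm:calib} to get $\varepsilon = \alpha^{-1}\delta^2 \in \Fall$, factor $\delta$ through the conditional expectation, and conclude from $\E[P^K \mid A^K \le P^K \le B^K] \in [A^K, B^K]$. Your explicit remarks on the scheme's resource bookkeeping and on the implicit assumption $\alpha(K) > 0$ are fine additions but do not change the argument.
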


The appearance of $\alpha$ in the denominator in \ref{eqn:crl__calib} is not surprising since we only expect calibration to hold for large sample size.

We now proceed with the proofs.

\begin{proof}[Proof of Corollary \ref{crl:calib}]

Construct $W: \Words \Scheme \Bool$ s.t. 

\begin{align*}
\R_W(K) &= \R_P(K) \\
W^{K}(x,y)&=\theta(P^{K}(x,y)-A^{K})\theta(B^{K}-P^{K}(x,y))
\end{align*}

Denote $\delta(K):=\E_{\Dist^{K} \times \Un_P^{K}}[W^{K}(P^{K}-f)]$ and $\varepsilon:=\frac{\delta^2}{\alpha}$. According to Theorem~\ref{thm:calib}, $\varepsilon \in \Fall$.
We get

$$\frac{\E_{\Dist^{K} \times \Un_P^{K}}[W^{K}(P^{K}-f)]^2}{\alpha(K)} = \varepsilon(K)$$

$$\frac{\E_{\Dist^{K} \times \Un_P^{K}}[\theta(P^{K}(x,y)-A^{K})\theta(B^{K}-P^{K}(x,y))(P^{K}-f)]^2}{\alpha(K)} = \varepsilon(K)$$

$$\frac{(\E_{\Dist^{K} \times \Un_P^{K}}[\theta(P^{K}(x,y)-A^{K})\theta(B^{K}-P^{K}(x,y))]\E[P^{K}-f \mid A^{K} \leq P^{K} \leq B^{K}])^2}{\alpha(K)} = \varepsilon(K)$$

$$\frac{(\alpha(K)\E[P^{K}-f \mid A^{K} \leq P^{K} \leq B^{K}])^2}{\alpha(K)} = \varepsilon(K)$$

$$\alpha(K)\E[P^{K}-f \mid A^{K} \leq P^{K} \leq B^{K}]^2 = \varepsilon(K)$$

\begin{equation}
\label{eqn:crl__calib__prf}
\Abs{\E[P^{K}-f \mid A^{K} \leq P^{K} \leq B^{K}]} = \sqrt{\frac{\varepsilon(K)}{\alpha(K)}}
\end{equation}

On the other hand

$$\E[f \mid A^{K} \leq P^{K} \leq B^{K}] = \E[P^{K}-P^{K}+f \mid A^{K} \leq P^{K} \leq B^{K}]$$

$$\E[f \mid A^{K} \leq P^{K} \leq B^{K}] = \E[P^{K} \mid A^{K} \leq P^{K} \leq B^{K}]-\E[P^{K}-f \mid A^{K} \leq P^{K} \leq B^{K}]$$

Applying \ref{eqn:crl__calib__prf}

$$\E[f \mid A^{K} \leq P^{K} \leq B^{K}] \leq \E[P^{K} \mid A^{K} \leq P^{K} \leq B^{K}]+\sqrt{\frac{\varepsilon(K)}{\alpha(K)}}$$

$$\E[f \mid A^{K} \leq P^{K} \leq B^{K}] \leq B^{K} + \sqrt{\frac{\varepsilon(K)}{\alpha(K)}}$$

In the same manner, we can show that

$$\E[f \mid A^{K} \leq P^{K} \leq B^{K}] \geq A^{K} - \sqrt{\frac{\varepsilon(K)}{\alpha(K)}}$$
\end{proof}

\begin{proof}[Proof of Theorem \ref{thm:calib}]

Consider $\zeta: \NatFun (0,\frac{1}{2}]$ s.t.  $\zeta \in \Fall$ and $\Floor{\log \frac{1}{\zeta}} \in \GrowA$. Define 

\begin{align*}
I&:=\{K \in \Nats^n \mid \frac{\Abs{\delta(K)}}{\alpha(K)} \geq \zeta(K)\} \\
E^{K}&:=\Rats \cap \left[\frac{\Abs{\delta(K)}}{2\alpha(K)},\frac{\Abs{\delta(K)}}{\alpha(K)}\right] \\
\epsilon(K) &\in (\Sgn \delta(K)) \cdot \Argmin{t \in E^{K}} \Abs{\En_\Rats(t)}
\end{align*}

It is easy to see that ${\Abs{\En_\Rats(\epsilon)} = O(\log \frac{\alpha}{\Abs{\delta}})}$, hence we can construct $Q: \Words \Scheme \Rats$ s.t. for any $K \in I$ and $x,y \in \Words$

\begin{align*}
\A_Q(K) &= \En_\Rats(\epsilon(K)) \\
\R_Q(K) &= \R_P(K) \\
Q^{K}(x,y) &= P^{K}(x,y)-\epsilon(K)
\end{align*}

This algorithm uses the advice string to check whether the normalized bias is too high, and if it is, it perturbs the estimated values accordingly.

Applying Proposition~\ref{prp:weight} to $P$, $Q$ and $W$, we conclude there is $\varepsilon \in \Fall$ s.t.

$$\E_{\Dist^{K} \times \Un_W^{K}}[W^{K}(P^{K} - f)^2] \leq \E_{\Dist^{K} \times \Un_W^{K}}[W^{K}(Q^{K}-f)^2] + \varepsilon(K)$$

$$\E_{\Dist^{K} \times \Un_W^{K}}[W^{K}(P^{K} - f)^2] \leq \E_{\Dist^{K} \times \Un_W^{K}}[W^{K}(P^{K}-f-\epsilon(K))^2] + \varepsilon(K)$$

$$\E_{\Dist^{K} \times \Un_W^{K}}[W^{K}((P^{K} - f)^2 - (P^{K}-f-\epsilon(K))^2] \leq \varepsilon(K)$$

$$ \epsilon(K) \E_{\Dist^{K} \times \Un_W^{K}}[W^{K}(2(P^{K} - f) - \epsilon(K))] \leq \varepsilon(K)$$

$$ \epsilon(K) (2 \E_{\Dist^{K} \times \Un_W^{K}}[W^{K}(P^{K} - f)]-\E_{\Dist^{K} \times \Un_W^{K}}[W^{K}]\epsilon(K)) \leq \varepsilon(K)$$

$$ \epsilon(K) (2 \delta(K) - \alpha(K)\epsilon(K)) \leq \varepsilon(K)$$

Dividing both sides by $ \alpha(K)$ we get

$$\epsilon(K) \left(\frac{2\delta(K)}{\alpha(K)} - \epsilon(K)\right) \leq \frac{\varepsilon(K)}{\alpha(K)}$$

$$\frac{\delta(K)^2}{\alpha(K)^2}-\left(\epsilon(K) - \frac{\delta(K)}{\alpha(K)}\right)^2 \leq \frac{\varepsilon(K)}{\alpha(K)}$$

$\epsilon$ is between $\frac{\delta}{2\alpha}$ and $\frac{\delta}{\alpha}$ therefore $(\epsilon-\frac{\delta}{\alpha})^2 \leq (\frac{\delta}{2\alpha} - \frac{\delta}{\alpha})^2$  which yields

$$\frac{\delta(K)^2}{\alpha(K)^2}-\left(\frac{\delta(K)}{2\alpha(K)} - \frac{\delta(K)}{\alpha(K)}\right)^2 \leq \frac{\varepsilon(K)}{\alpha(K)}$$

$$\frac{3}{4} \cdot \frac{\delta(K)^2}{\alpha(K)^2} \leq \frac{\varepsilon(K)}{\alpha(K)}$$

$$\frac{\delta(K)^2}{\alpha(K)} \leq \frac{4}{3}\varepsilon(K)$$
\end{proof}

\subsection{Algebraic Properties}
\label{subsec:alg}

In this subsection and subsection~\ref{subsec:indep_var}, we show that several algebraic identities satisfied by expected values have analogues for optimal polynomial-time estimators.

\subsubsection{Linearity}

Given $F_1,F_2$ random variables and $t_1,t_2 \in \Reals$, we have 

\begin{equation}
\E[t_1 F_1 + t_2 F_2] = t_1 \E[F_1] + t_2 \E[F_2]
\end{equation}

Optimal polynomial-time estimators have an analogous property:

\begin{proposition}
\label{prp:linearity}

Consider $\Dist$ a word ensemble, $f_1,f_2: \Supp \Dist \rightarrow \Reals$ bounded and $t_1,t_2 \in \Rats$. Denote $f: = t_1 f_1 + t_2 f_2$. Suppose $P_1$ is an $\ESG$-optimal estimator for $(\Dist,f_1)$ and $P_2$ is an $\ESG$-optimal estimator for $(\Dist,f_2)$. Construct $P: \Words \Scheme \Rats$ s.t. for any ${x \in \Supp \Dist^{K}}$, ${y_1 \in \WordsLen{\R_{P_1}(K)}}$ and $y_2 \in \WordsLen{\R_{P_1}(K)}$

\begin{align}
\A_P(K) &= \Chev{\A_{P_1}(K),\A_{P_2}(K)} \\
\R_P(K) &= \R_{P_1}(K) + \R_{P_2}(K) \\
P^{K}(x,y_1 y_2) &= t_1 P_1^{K}(x,y_1) + t_2 P_2^{K}(x, y_2)
\end{align}

Then, $P$ is an $\ESG$-optimal estimator for $(\Dist, f)$.

\end{proposition}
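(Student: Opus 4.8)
The plan is to verify the defining orthogonality condition \ref{eqn:op_sharp} for $P$ directly, reducing it to the corresponding conditions for $P_1$ and $P_2$. Fix a test scheme $S: \Words \times \Rats \Scheme \Rats$ with bounded range. Since
\[P^{K}(x,y_1 y_2) - f(x) = t_1\bigl(P_1^{K}(x,y_1) - f_1(x)\bigr) + t_2\bigl(P_2^{K}(x,y_2) - f_2(x)\bigr),\]
expanding by linearity of expectation gives
\[\E_{\Dist^{K} \times \Un_P^{K} \times \Un_S^{K}}[(P^{K} - f)S^{K}(x,P^{K}(x,y),z)] = t_1 \E[(P_1^{K} - f_1)S^{K}(x,P^{K}(x,y),z)] + t_2 \E[(P_2^{K} - f_2)S^{K}(x,P^{K}(x,y),z)],\]
so it suffices to show that each of the two terms on the right is $\equiv 0 \pmod \Fall$; the conclusion then follows because $\Fall$ is closed under addition and under multiplication by the nonnegative constants $\Abs{t_1},\Abs{t_2}$.

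To handle the first term I would exhibit a test scheme $S_1: \Words \times \Rats \Scheme \Rats$ witnessing it via the $\ESG$-optimality of $P_1$. Define $S_1$ to take a rational input $p_1$ and randomness $\Chev{y_2,z}$, run $P_2^{K}$ on $x$ with coins $y_2$ to obtain $p_2$, and output $S^{K}(x, t_1 p_1 + t_2 p_2, z)$; set $\A_{S_1}(K):=\Chev{\A_{P_2}(K),\A_S(K)}$ and $\R_{S_1}(K,\Chev{y_2,z}):=\R_{P_2}(K)+\R_S(K)$. This is a legitimate polynomial-time $\Gamma$-scheme: it runs in polynomial time (both $P_2$ and $S$ do, and the intermediate rationals have polynomially bounded bit-length since $P_1,P_2$ run in polynomial time), its random budget lies in $\GrowR$ and its advice length in $\GrowA$ by condition~\ref{con:def__grow__add} of Definition~\ref{def:grow}, and it has bounded range because $S$ does. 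Substituting $p_1 = P_1^{K}(x,y_1)$, the value $S_1^{K}(x,P_1^{K}(x,y_1),\Chev{y_2,z})$ equals $S^{K}\bigl(x,\, t_1 P_1^{K}(x,y_1) + t_2 P_2^{K}(x,y_2),\, z\bigr) = S^{K}(x,P^{K}(x,y_1 y_2),z)$, and the expectation over $(x,y_1,\Chev{y_2,z}) \sim \Dist^{K} \times \Un_{P_1}^{K} \times \Un_{S_1}^{K}$ in \ref{eqn:op_sharp} for $P_1$ with test $S_1$ is exactly $\E[(P_1^{K} - f_1)S^{K}(x,P^{K}(x,y),z)]$ (here $\Un_{S_1}^{K} = \Un^{\R_{P_2}(K)+\R_S(K)}$ decomposes into independent $y_2 \sim \Un_{P_2}^{K}$ and $z \sim \Un_S^{K}$, matching the decomposition of $\Un_P^{K}$ into $y_1,y_2$). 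Hence this term is $\equiv 0 \pmod \Fall$, and symmetrically for the second term, using the $\ESG$-optimality of $P_2$ against the analogous $S_2$ that runs $P_1$ internally.

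The argument is essentially bookkeeping; the only point requiring a little care is the construction of $S_1$ and $S_2$ as bona fide polynomial-time $\Gamma$-schemes — specifically, checking that prepending a call to $P_2$ (resp.\ $P_1$) before $S$ keeps the running time polynomial and the random and advice budgets inside $\GrowR$ and $\GrowA$. This is precisely the closure guaranteed by Definition~\ref{def:grow} together with the polynomial runtime and bounded range of all schemes involved, so no real obstacle arises. It is worth noting that this is where the $\sharp$-formulation does the work: the analogous identity for plain $\EG$-optimality does not follow so cleanly, which is why the overview notes that applying linearity to concrete examples may first require passing to an $\ESG$-optimal estimator via Theorem~\ref{thm:ort}.
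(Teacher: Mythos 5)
Your proposal is correct and follows essentially the same route as the paper's proof: decompose $P^{K}-f = t_1(P_1^{K}-f_1)+t_2(P_2^{K}-f_2)$, split the expectation by linearity, and invoke the defining condition \ref{eqn:op_sharp} for $P_1$ and $P_2$. The only difference is that you spell out explicitly the auxiliary test schemes $S_1,S_2$ (absorbing the other estimator's evaluation into the test) that the paper's terse ``using \ref{eqn:op_sharp} for $P_1$ and $P_2$'' leaves implicit, which is a legitimate and indeed clarifying elaboration (modulo the minor notational slip of writing random strings instead of advice strings in the second argument of $\R_{S_1}$).
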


\begin{proof}

Consider any bounded $S: \Words \times \Rats \Scheme \Rats$. We have

$$\E[(P^{K} - f)S^{K}] = \E[(t_1 P_1^{K} + t_2 P_2^{K} - (t_1 f_1 + t_2 f_2))S^{K}]$$

$$\E[(P^{K} - f)S^{K}] = t_1 \E[(P_1^{K} - f_1)S^{K}] + t_2 \E[(P_2^{K} - f_2)S^{K}]$$

$$\Abs{\E[(P^{K} - f)S^{K}]} \leq \Abs{t_1} \cdot \Abs{\E[(P_1^{K} - f_1)S^{K}]} + \Abs{t_2} \cdot \Abs{\E[(P_2^{K} - f_2)S^{K}]}$$

Using \ref{eqn:op_sharp} for $P_1$ and $P_2$ we see that the right hand side is in $\Fall$.
\end{proof}

\subsubsection{Conditional Expectation}

Consider a random variable $F$ and an event $A$. Denote $\chi_A$ the $\Bool$-valued random variable corresponding to the indicator function of $A$. We have

\begin{equation}
\label{eqn:cond_ev}
\E[F \mid A] = \frac{\E[\chi_A F]}{\Prb[A]}
\end{equation}

This identity is tautologous if interpreted as a definition of $\E[F \mid A]$. However, from the perspective of Bayesian probability it is more natural to think of $\E[F \mid A]$ as an atomic entity (the subjective expectation of $F$ after observing $A$). 

The language of optimal polynomial-time estimators provides a natural way to define an analogue of conditional expectation. Namely, consider a distributional estimation problem $(\Dist, f)$ and a decision problem ${L \subseteq \Words}$. Then, $P: \Words \Scheme \Rats$ represents the conditional expectation of $f$ given $L$ when it is an optimal polynomial-time estimator for $(\Dist \mid L, f)$. That is, the conditional expectation is the best estimate of $f(x)$ when the problem instance $x$ is sampled with the \emph{promise} $x \in L$.

The above perspective allows us stating and proving non-tautological theorems analogous to \ref{eqn:cond_ev}. We give two such theorems, corresponding to two different ways to group the variables in \ref{eqn:cond_ev}. Let $\chi_{L}$ be the indicator function for $L$. The first states that an optimal estimator for $\chi_{L}f$ can be made by multiplying together an optimal estimator for $\chi_{L}$ and a less accurate optimal estimator for $f|L$, and the second theorem states that a less accurate optimal estimator for $f|L$ can be made by dividing the output of an optimal estimator for $\chi_{L}f$ by the output of an optimal estimator for $L$. The amplification of error appears because $L$ might be a low-probability event, and conditional probabilities for low-probability events are less accurate than conditional probabilities for high-probability events.

\begin{samepage}
\begin{theorem}
\label{thm:con_cond}

Consider $(\Dist, f)$ a distributional estimation problem and ${L \subseteq \Words}$ s.t. for all $K \in \Nats^n$, $\Dist^K(L) > 0$. Define $\gamma_L: \NatFun \Reals$ by $\gamma_L(K):=\Dist^{K}(L)^{-1}$ and $\Fall_L:=\gamma_L \Fall$. Let $P_L$ be an $\ESG$-optimal estimator for $(\Dist, \chi_L)$ and $P_{f \mid L}$ be an $\Fall_L^\sharp(\Gamma)$-optimal estimator for ${(\Dist \mid L, f)}$. Construct ${P_{\chi f}: \Words \Scheme \Rats}$ s.t. $\R_{P_{\chi f}}=\R_{P_L} + \R_{P_{f \mid L}}$ and for any ${x \in \Words}$, $y \in \BoolR{P_L}$ and ${z \in \BoolR{P_{f \mid L}}}$

\begin{equation}
P_{\chi f}^K(x,yz)=P_L^K(x,y) P_{f \mid L}^K(x,z)
\end{equation}

Then, $P_{\chi f}$ is an $\ESG$-optimal estimator for $(\Dist, \chi_L f)$.

\end{theorem}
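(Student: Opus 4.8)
The plan is to verify the defining condition~\ref{eqn:op_sharp} of Definition~\ref{def:obe_sharp} for $P_{\chi f}$ directly (note $P_{\chi f}$ has bounded range, being a product of bounded-range estimators), reducing it to the $\ESG$-optimality of $P_L$ and the $\Fall_L^\sharp(\Gamma)$-optimality of $P_{f\mid L}$ by means of the elementary identity
\[
P_L^K(x,y)\,P_{f\mid L}^K(x,z) - \chi_L(x)f(x) = \bigl(P_L^K(x,y) - \chi_L(x)\bigr)\,P_{f\mid L}^K(x,z) + \chi_L(x)\bigl(P_{f\mid L}^K(x,z) - f(x)\bigr).
\]
Fix an arbitrary bounded $S: \Words \times \Rats \Scheme \Rats$. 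Substituting $P_{\chi f}^K(x,yz) = P_L^K(x,y)P_{f\mid L}^K(x,z)$ into the left-hand side of~\ref{eqn:op_sharp} — taken over the joint randomness $\Dist^K \times \Un_{P_L}^K \times \Un_{P_{f\mid L}}^K \times \Un_S^K$, where the bit-blocks $y$, $z$, $w$ feed $P_L$, $P_{f\mid L}$, $S$ respectively — and inserting the identity writes that expectation as $T_1 + T_2$, one summand per term on the right.

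For $T_1$ I would build a bounded polynomial-time $\Gamma$-scheme $S_1: \Words \times \Rats \Scheme \Rats$ that, on $\Rats$-valued input $p$ and using the random bits and advice of $P_{f\mid L}$ and of $S$, computes $q \leftarrow P_{f\mid L}^K(x,z)$ and outputs $q\cdot S^K(x,pq,w)$; then $\R_{S_1} = \R_{P_{f\mid L}} + \R_S \in \GrowR$, $\A_{S_1} = \Chev{\A_{P_{f\mid L}}, \A_S}$, and $S_1^K(x, P_L^K(x,y), zw) = P_{f\mid L}^K(x,z)\,S^K(x, P_{\chi f}^K(x,yz), w)$. Hence $T_1 = \E[(P_L^K - \chi_L)\,S_1^K(x, P_L^K, \cdot)] \equiv 0 \pmod\Fall$, since $P_L$ is an $\ESG$-optimal estimator for $(\Dist,\chi_L)$ and $S_1$ is an admissible test scheme for it.

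For $T_2$, since $\chi_L$ is the indicator of $L$ one has $\E_{x\sim\Dist^K}[\chi_L(x)\,h] = \Dist^K(L)\,\E_{x\sim(\Dist\mid L)^K}[h]$ for any integrand $h$, so $T_2 = \Dist^K(L)\cdot\E_{(\Dist\mid L)^K\times\Un_{P_L}^K\times\Un_{P_{f\mid L}}^K\times\Un_S^K}[(P_{f\mid L}^K(x,z) - f(x))\,S^K(x, P_{\chi f}^K(x,yz), w)]$. Now I would fold $P_L$ into the test scheme: let $S_2: \Words\times\Rats\Scheme\Rats$ be bounded, computing $p \leftarrow P_L^K(x,y)$ from the random bits and advice of $P_L$ and returning $S^K(x, pq, w)$, so that $S_2^K(x, P_{f\mid L}^K(x,z), yw) = S^K(x, P_{\chi f}^K(x,yz), w)$. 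Then the expectation over $(\Dist\mid L)^K$ equals $\E[(P_{f\mid L}^K - f)\,S_2^K(x, P_{f\mid L}^K, \cdot)] \equiv 0 \pmod{\Fall_L}$, since $P_{f\mid L}$ is an $\Fall_L^\sharp(\Gamma)$-optimal estimator for $(\Dist\mid L, f)$ ($\Fall_L = \gamma_L\Fall$ is a fall space by Proposition~\ref{prp:tbd}, as $\gamma_L \geq 1$). A bound $\eta = \gamma_L\varepsilon \in \Fall_L$ on this expectation, with $\gamma_L = \Dist^K(L)^{-1}$, yields $\Dist^K(L)\,\eta = \varepsilon \in \Fall$, so $T_2 \equiv 0 \pmod\Fall$; adding, $T_1 + T_2 \equiv 0 \pmod\Fall$, which is~\ref{eqn:op_sharp} for $P_{\chi f}$.

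The genuinely load-bearing step — and the reason the hypothesis on $P_{f\mid L}$ is stated for the enlarged fall space $\Fall_L$ rather than for $\Fall$ — is the treatment of $T_2$: one must change measure to the conditional ensemble $\Dist\mid L$ for which $P_{f\mid L}$ is actually optimal, and the factor $\Dist^K(L)$ produced by that change of measure exactly cancels the weight $\gamma_L$ built into $\Fall_L$, so the resulting error still lies in $\Fall$. Everything else is routine: checking that $S_1$ and $S_2$ are bona fide bounded polynomial-time $\Gamma$-schemes whose $\Rats$-valued argument receives the output of the relevant sub-estimator (so that the $\sharp$-optimality hypotheses apply verbatim), and keeping the $y$/$z$/$w$ random-bit bookkeeping consistent across the three expectations.
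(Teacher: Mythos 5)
Your proposal is correct and follows essentially the same route as the paper's own proof: the same algebraic decomposition into the two terms, the first disposed of via the $\ESG$-optimality of $P_L$ and the second via the change of measure to $\Dist \mid L$, where the factor $\Dist^K(L)$ cancels the weight $\gamma_L$ in $\Fall_L$. Your explicit constructions of the admissible test schemes $S_1$ and $S_2$ merely spell out what the paper leaves implicit when it invokes the two $\sharp$-optimality hypotheses.
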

\end{samepage}

\begin{proof}

Consider any $K \in \Nats^n$, $x \in \Supp \Dist^{K}$, $y \in \BoolR{P_L}$ and $z \in \BoolR{P_{f \mid L}}$.

\[P_{\chi f}^K(x,yz) - \chi_L(x) f(x) = P_L^K(x,y) P_{f \mid L}^K(x,z) - \chi_L(x) f(x)\]

\[P_{\chi f}^K(x,yz) - \chi_L(x) f(x) = P_L^K(x,y) P_{f \mid L}^K(x,z) - \chi_L(x) P_{f \mid L}^K(x,z) + \chi_L(x) P_{f \mid L}^K(x,z) - \chi_L(x) f(x)\]

\[P_{\chi f}^K(x,yz) - \chi_L(x) f(x) = (P_L^K(x,y) - \chi_L(x)) P_{f \mid L}^K(x,z) + \chi_L(x) (P_{f \mid L}^K(x,z) - f(x))\]

Consider any $S: \Words \times \Rats \Scheme \Rats$ bounded. We get

\[\E_{\Dist^{K} \times \Un_{P_{\chi f}}^K \times \Un_S^K}[(P_{\chi f}^K - \chi_L f)S^K] = \E_{\Dist^{K} \times \Un_{P_{\chi f}}^K \times \Un_S^K}[(P_L^K - \chi_L) P_{f \mid L}^KS^K)] + \E_{\Dist^{K} \times \Un_{P_{\chi f}}^K \times \Un_S^K}[\chi_L (P_{f \mid L}^K - f)S^K]\]

Using the fact that $P_L^K$ is $\ESG$-optimal for $(\Dist,\chi_L)$,

\[\E_{\Dist^{K} \times \Un_{P_{\chi f}}^K \times \Un_S^K}[(P_{\chi f}^K - \chi_L f)S^K] \equiv  \E_{\Dist^{K} \times \Un_{P_{\chi f}}^K \times \Un_S^K}[\chi_L (P_{f \mid L}^K - f)S^K] \pmod \Fall\]

\[\E_{\Dist^{K} \times \Un_{P_{\chi f}}^K \times \Un_S^K}[(P_{\chi f}^K - \chi_L f)S^K] \equiv \Dist^{K}(L)  \E_{(\Dist^{K} \mid L) \times \Un_{P_{\chi f}}^K \times \Un_S^K}[(P_{f \mid L}^K - f)S^K] \pmod \Fall\]
 
Using the fact that $P_{f \mid L}^K$ is $\Fall_L^\sharp(\Gamma)$-optimal for $(\Dist \mid L, f)$, we conclude
 
 \[\Abs{\E_{\Dist^{K} \times \Un_{P_{\chi f}}^K \times \Un_S^K}[(P_{\chi f}^K - \chi_L f)S^K]} \equiv 0 \pmod \Fall\]
\end{proof}
 
\begin{samepage}
\begin{theorem}
\label{thm:cond}

Consider $(\Dist, f)$ a distributional estimation problem and ${L \subseteq \Words}$ s.t. for all $K \in \Nats^n$, $\Dist^K(L) > 0$. Define $\gamma_L: \NatFun \Reals$ by $\gamma(K):=\Dist^{K}(L)^{-1}$ and $\Fall_L:=\gamma_L \Fall$. Let $P_L$ be an $\ESG$-optimal estimator for $(\Dist, \chi_L)$ and $P_{\chi f}$ be an $\ESG$-optimal estimator for $(\Dist, \chi_L f)$. Choose any $M \in \Rats$ s.t. ${M \geq \sup \Abs{f}}$ and construct $P_{f \mid L}: \Words \Scheme \Rats$ s.t. $\R_{P_{f \mid L}} = \R_{P_L} + \R_{P_{\chi f}}$ and for any ${x \in \Words}$, ${y \in \BoolR{P_L}}$ and $z \in \BoolR{P_{\chi f}}$ 

\begin{equation}
P_{f \mid L}^K(x,yz)=\begin{cases}P_L^K(x,y)^{-1} P_{\chi f}^K(x,z) \textnormal{ if this number is in } [-M,M] \\ M \textnormal{ if } P_L^K(x,y)=0 \textnormal{ or } P_L^K(x,y)^{-1} P_{\chi f}^K(x,z) > M\\ -M \textnormal{ if } P_L^K(x,y)^{-1} P_{\chi f}^K(x,z) < -M\end{cases}
\end{equation}

Then, $P_{f \mid L}$ is an $\Fall_L^\sharp(\Gamma)$-optimal estimator for $(\Dist \mid L, f)$.

\end{theorem}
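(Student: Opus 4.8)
The plan is to reduce the claimed $\Fall_L^\sharp(\Gamma)$-optimality to a plain orthogonality statement relative to $\Dist$, and then to control the resulting error one summand at a time. Since $\Fall_L=\gamma_L\Fall$ with $\gamma_L(K)=\Dist^K(L)^{-1}\geq 1$ and $(\Dist\mid L)^K=\Dist^K\mid L$, the assertion that $P_{f\mid L}$ is $\Fall_L^\sharp(\Gamma)$-optimal for $(\Dist\mid L,f)$ is equivalent to the statement that for every bounded $S:\Words\times\Rats\Scheme\Rats$ one has $\E_{\Dist^K\times\Un_{P_{f\mid L}}^K\times\Un_S^K}[\chi_L(x)(P_{f\mid L}^K(x,y)-f(x))S^K(x,P_{f\mid L}^K(x,y),z)]\equiv 0\pmod\Fall$ (the quantity is bounded, so multiplying by $\gamma_L$ is harmless). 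I would then expand the integrand using the pointwise identity $\chi_L(P_{f\mid L}-f)=(\chi_L-P_L)P_{f\mid L}+(P_LP_{f\mid L}-P_{\chi f})+(P_{\chi f}-\chi_L f)$, recording that a short case check of the definition of $P_{f\mid L}$ (including the $P_L=0$ branch) gives $P_L^K P_{f\mid L}^K=\operatorname{clip}_{[-M\Abs{P_L^K},\,M\Abs{P_L^K}]}(P_{\chi f}^K)$ identically.

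The first summand contributes a quantity in $\Fall$ by the $\ESG$-optimality of $P_L$ for $(\Dist,\chi_L)$, applied to the bounded test scheme which reads the value $p$ of $P_L^K$, uses fresh randomness to evaluate $P_{\chi f}^K$, reconstructs $P_{f\mid L}^K$ from $p$, that value, and the clipping rule, and outputs $-P_{f\mid L}^KS^K$; this scheme is bounded precisely because $P_{f\mid L}^K$ ranges in $[-M,M]$, and its random and advice budgets are those of $P_L$, $P_{\chi f}$ and $S$ together, hence admissible for $\Gamma$. Symmetrically, the third summand is in $\Fall$ by the $\ESG$-optimality of $P_{\chi f}$ for $(\Dist,\chi_L f)$, applied to the bounded test scheme that reads the value $q$ of $P_{\chi f}^K$, draws fresh randomness for $P_L^K$, reconstructs $P_{f\mid L}^K$, and outputs $S^K$.

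The remaining — and decisive — term is $\E[(P_LP_{f\mid L}-P_{\chi f})S]=-\E[DS]$, where $D:=P_{\chi f}-\operatorname{clip}_{[-M\Abs{P_L},\,M\Abs{P_L}]}(P_{\chi f})$ vanishes off the clipping event $\Abs{P_{\chi f}}>M\Abs{P_L}$, on which $\Abs{D}=\Abs{P_{\chi f}}-M\Abs{P_L}$ and $\Abs{P_{f\mid L}}=M$. As $\Abs{\E[DS]}\leq(\sup\Abs{S})\,\E[\Abs{D}]$, it suffices to prove $\E[\Abs{D}]\in\Fall$. Writing $\E[\Abs{D}]=\E[(\Abs{P_{\chi f}}-M\Abs{P_L})\,\theta(\Abs{P_{\chi f}}-M\Abs{P_L})]$, I would transform it modulo $\Fall$ in two steps: $\ESG$-optimality of $P_{\chi f}$ against the bounded test $\Sgn(q)\,\theta(\Abs{q}-M\Abs{P_L^{\mathrm{fresh}}})$ replaces $\Abs{P_{\chi f}}$ on the clipping event by $\chi_L f\cdot\Sgn(P_{\chi f})$; and $\ESG$-optimality of $P_L$ against the bounded test $\theta(\Abs{P_{\chi f}^{\mathrm{fresh}}}-M\Abs{p})$, together with $\E[\max(-P_L^K,0)]\in\Fall$ — itself an instance of $\ESG$-optimality of $P_L$ against $\theta(-P_L^K)$, since $\E[(P_L-\chi_L)\theta(-P_L)]=-\E[\max(-P_L,0)]-\E[\chi_L\theta(-P_L)]$ with both subtracted terms nonnegative — replaces $M\Abs{P_L}$ on the clipping event by $M\chi_L$. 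This leaves $\E[\Abs{D}]\equiv\E[\chi_L(f\Sgn(P_{\chi f})-M)\,\theta(\Abs{P_{\chi f}}-M\Abs{P_L})]\pmod\Fall$, whose right-hand side is $\leq 0$ because $f\Sgn(P_{\chi f})\leq\Abs{f}\leq\sup\Abs{f}\leq M$ while $\chi_L\,\theta(\cdots)\geq 0$; being nonnegative, bounded, and dominated by an element of $\Fall$, $\E[\Abs{D}]$ then lies in $\Fall$. This is the one place where the hypothesis $M\geq\sup\Abs{f}$ is essential.

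I expect the clipping term to be the main obstacle. Were $P_{f\mid L}$ literally $P_L^{-1}P_{\chi f}$, the product $P_LP_{f\mid L}$ would equal $P_{\chi f}$ and the argument would collapse to the two hypotheses; but then $P_{f\mid L}$ would be unbounded and the reconstruction test schemes above would be unbounded too, so the truncation is unavoidable — and one cannot dispose of the truncation defect by crude estimates such as $\E[(P_L-\chi_L)^2]$ or $\E[(P_{\chi f}-\chi_L f)^2]$, which stay bounded away from $0$ in the hard-core-predicate examples of Section~\ref{sec:fundamentals__one_way}. The device that makes it work is to feed $\Sgn(D)$ itself back as a legitimate polynomial-time test and to exploit both $\Abs{f}\leq M$ and the smallness of $\E[\max(-P_L^K,0)]$; everything else is the routine bookkeeping of checking that the auxiliary schemes stay within the time, randomness, and advice bounds of Definition~\ref{def:obe_sharp}.
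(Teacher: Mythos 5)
Your proof is correct and follows essentially the same route as the paper: after reducing to the weighted statement over $\Dist$, your middle term $P_LP_{f\mid L}-P_{\chi f}$ is exactly the clipping defect $\tilde{P}_{\chi f}-P_{\chi f}$ of the paper's Proposition~\ref{prp:thm__cond__lemma}, and your sign argument showing $\E[\Abs{D}]\in\Fall$ (playing the sharp-optimality of $P_{\chi f}$ and of $P_L$ against threshold tests and exploiting $M\geq\sup\Abs{f}$) is the same mechanism, merely run once on the event $\Abs{P_{\chi f}}>M\Abs{P_L}$ instead of on the upper and lower clipping events separately. The only real addition is your explicit observation that $\E[\max(-P_L^K,0)]\in\Fall$, which cleanly handles possibly negative values of $P_L$ that the paper's writeup glosses over.
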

\end{samepage}

In order to prove Theorem~\ref{thm:cond}, we will need the following.

Consider $s,t \in \Rats$, an $[s,t]$-valued random variable $F$ and an event $A$. Denote $\chi_A$ the $\Bool$-valued random variable corresponding to the indicator function of $A$. We have 

\begin{equation}
\Prb[A]s \leq \E[\chi_A F] \leq \Prb[A]t
\end{equation}

For optimal polynomial-time estimators the analogous inequalities don't have to hold strictly (they only hold within an asymptotically small error), but the following proposition shows they can always be enforced.

\begin{samepage}
\begin{proposition}
\label{prp:thm__cond__lemma}

Consider $(\Dist, f)$ a distributional estimation problem, ${L \subseteq \Words}$ and $s, t \in \Rats$ s.t. ${s \leq \inf f}$, $t \geq \sup f$. Let $P_L$ be an $\ESG$-optimal estimator for $(\Dist, \chi_L)$ and $P_{\chi f}$ be an $\ESG$-optimal estimator for $(\Dist, \chi_L f)$. Construct $\tilde{P}_{\chi f}: \Words \Scheme \Rats$ s.t. $\R_{\tilde{P}_{\chi f}} = \R_{P_L} + \R_{P_{\chi f}}$ and for any ${y \in \BoolR{P_L}}$ and $z \in \BoolR{P_{\chi f}}$, ${\tilde{P}_{\chi f}^K(x,yz)=\min(\max(P_{\chi f}^K(x,z),P_L^K(x,y) s),P_L^K(x,y) t)}$. Denote 

\[\Dist_P^K:=\Dist^{K} \times \Un_{P_L}^K \times \Un_{P_{\chi f}}^K\] 

Then, for any $S: \Words \times \Rats^2 \Scheme \Rats$ bounded

\begin{equation}
\E_{\Dist_P^K \times \Un_S^K}[(\tilde{P}_{\chi f}^K(x)-\chi_L(x)f(x))S^K(x,P_L^K(x),P_{\chi f}^K(x))]  \equiv 0 \pmod \Fall
\end{equation}

In particular, $\tilde{P}$ is also an $\ESG$-optimal estimator for $(\Dist, \chi_L f)$.

\end{proposition}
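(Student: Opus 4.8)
The plan is to write $\tilde{P}_{\chi f}^K = P_{\chi f}^K + D^K$ with $D^K := \tilde{P}_{\chi f}^K - P_{\chi f}^K$ the correction produced by the clipping, and to treat the two summands of $\E_{\Dist_P^K \times \Un_S^K}[(\tilde{P}_{\chi f}^K - \chi_L f)\,S^K]$ separately. For the $P_{\chi f}^K$-summand, observe that $P_{\chi f}^K(x,z) - \chi_L(x)f(x)$ does not depend on the coins of $P_L$; hence the factor $S^K(x,P_L^K(x),P_{\chi f}^K(x),\cdot)$ can be absorbed into a test function $\bar S \colon \Words \times \Rats \Scheme \Rats$ for the $\ESG$-optimality of $P_{\chi f}$ — on input $(x,p)$, $\bar S$ runs $P_L^K(x)$ on fresh coins to get $q$ and outputs $S^K(x,q,p,\cdot)$ — so that Definition~\ref{def:obe_sharp} for $P_{\chi f}$ yields $\E[(P_{\chi f}^K - \chi_L f)\,S^K] \equiv 0 \pmod{\Fall}$. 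Since $S$ has bounded range, say $\Abs{S^K} \le C$, the whole statement reduces to showing $\E_{\Dist_P^K}[\Abs{D^K}] \in \Fall$, because then $\Abs{\E[D^K S^K]} \le C\,\E[\Abs{D^K}] \in \Fall$.

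To bound $\E[\Abs{D^K}]$, write $\Abs{D^K} = D_+^K + D_-^K$ with $D_+^K := \max(P_{\chi f}^K - \tilde P_{\chi f}^K,0)$ and $D_-^K := \max(\tilde P_{\chi f}^K - P_{\chi f}^K,0)$. Wherever $D_+^K > 0$ the clipping has moved $P_{\chi f}^K$ down to $P_L^K t$, and wherever $D_-^K > 0$ it has moved $P_{\chi f}^K$ up to $P_L^K s$ or to $P_L^K t$; using $s \le f \le t$ and $\chi_L \in \{0,1\}$, a short case analysis on $\chi_L$ (distinguishing also the sign of $P_L^K$ in the second case) gives the pointwise bounds
\[
D_+^K \le \bigl[(P_{\chi f}^K - \chi_L f) + t(\chi_L - P_L^K)\bigr]\,\chi^{+}, \qquad D_-^K \le \bigl[s(P_L^K - \chi_L) - (P_{\chi f}^K - \chi_L f)\bigr]\,\chi^{-},
\]
where $\chi^{+} = \theta(P_{\chi f}^K - \tilde P_{\chi f}^K)$ and $\chi^{-} = \theta(\tilde P_{\chi f}^K - P_{\chi f}^K)$ are $\{0,1\}$-valued and polynomial-time computable from $x$, $P_L^K(x)$, $P_{\chi f}^K(x)$. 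Taking expectations and expanding, each of the four resulting terms is an expectation of $(P_{\chi f}^K - \chi_L f)$, resp.\ of $(P_L^K - \chi_L)$, against an admissible bounded test computable from $x$ and the two estimator values, and therefore has absolute value in $\Fall$ — by $\ESG$-optimality of $P_{\chi f}$, resp.\ of $P_L$, folding the other estimator's run into the test function as before. Since $\E[D_\pm^K] \ge 0$, condition~\ref{con:def__fall__ineq} of Definition~\ref{def:fall} gives $\E[D_\pm^K] \in \Fall$, hence $\E[\Abs{D^K}] \in \Fall$.

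Combining the two summands proves $\E[(\tilde{P}_{\chi f}^K - \chi_L f)\,S^K] \equiv 0 \pmod{\Fall}$. The final assertion follows since, for any bounded $S' \colon \Words \times \Rats \Scheme \Rats$, the value $\tilde{P}_{\chi f}^K(x,\cdot)$ is the fixed clipping function of $x$, $P_L^K(x)$, $P_{\chi f}^K(x)$, so $S'^K(x,\tilde{P}_{\chi f}^K(x),\cdot) = S^K(x,P_L^K(x),P_{\chi f}^K(x),\cdot)$ for the bounded scheme $S^K(x,q,p,w) := S'^K(x,\min(\max(p,qs),qt),w)$, and the identity just established is exactly the $\ESG$-optimality condition for $\tilde{P}_{\chi f}$. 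I expect the only real difficulty to be bookkeeping — keeping straight, in each of the reductions, which estimator's coins are being re-read as the test function's internal randomness, and checking the handful of sign cases in the pointwise bounds. The one substantive idea is exactly that splitting: the clipping error at a point is always dominated by the error of $P_{\chi f}$ there plus a fixed multiple of the error of $P_L$ there, and each of those is killed on average by the corresponding $\ESG$-optimality applied against an indicator test.
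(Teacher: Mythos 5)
Your overall strategy is sound and, in fact, coincides with the paper's: both arguments reduce the claim to showing $\E_{\Dist_P^K}[\Abs{\tilde{P}_{\chi f}^K - P_{\chi f}^K}] \in \Fall$, obtain this by testing the $\ESG$-optimality of $P_L$ and of $P_{\chi f}$ against indicator tests tied to the clipping events (folding the other estimator's run into the test's internal coins and advice), and then split off the main term $\E[(P_{\chi f}^K-\chi_L f)S^K]$ exactly as you do. However, the displayed pointwise bounds are false as written, and this is the step carrying the whole weight. Since the paper's $\theta$ is $\chi_{[0,\infty)}$, your indicators $\chi^{+}=\theta(P_{\chi f}^K-\tilde{P}_{\chi f}^K)$ and $\chi^{-}=\theta(\tilde{P}_{\chi f}^K-P_{\chi f}^K)$ both equal $1$ on the entire no-clipping region (where $\tilde{P}_{\chi f}^K=P_{\chi f}^K$ exactly), and there the brackets can be strictly negative while $D_\pm^K=0$. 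Concretely: take $\chi_L(x)=0$, $P_L^K=1$, $P_{\chi f}^K=0$, $s=0$, $t=1$; then no clipping occurs, $D_+^K=0$, $\chi^{+}=\theta(0)=1$, but $(P_{\chi f}^K-\chi_L f)+t(\chi_L-P_L^K)=-1$, so the claimed inequality $D_+^K\le[\cdots]\chi^{+}$ fails. Consequently the inference $\E[D_+^K]\le\E[\mathrm{bracket}\cdot\chi^{+}]$, which is what you need before invoking condition~\ref{con:def__fall__ineq} of Definition~\ref{def:fall}, does not follow.

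The repair is small and brings you exactly onto the paper's proof: replace $\chi^{\pm}$ by the indicators of the clipping \emph{conditions}, $\chi^{+}:=\theta(P_{\chi f}^K-P_L^K t)$ and $\chi^{-}:=\theta(P_L^K s-P_{\chi f}^K)$ (still bounded, polynomial-time computable from $x$, $P_L^K(x)$, $P_{\chi f}^K(x)$, hence admissible tests). Off the support of these indicators one checks $D_\pm^K=0$, and on their support your case analysis (including the $P_L^K<0$ case, where $\tilde{P}_{\chi f}^K=P_L^K t$) does give the stated domination, using $s\le f\le t$ and $\chi_L\in\Bool$. With that change the four expectations are killed by the two $\ESG$-optimality conditions as you describe — these are precisely the tests $\theta(P_{\chi f}^K-P_L^K t)$ and $\theta(P_L^K s-P_{\chi f}^K)$ used in the paper, which argues at the level of expectations with a sign argument rather than via pointwise domination — and the remainder of your proof, including the absorption of one estimator's coins into the test functions and the final "in particular" step, is correct.
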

\end{samepage}

\begin{proof}

$P_L$ is an $\ESG$-optimal estimator for $(\Dist, \chi_L)$, therefore

\begin{equation}
\label{eqn:thm__cond__lemma__prf1}
\E_{\Dist_P^K}[(P_L^K- \chi_L) \theta(P_{\chi f}^K- P_L^K t)] \equiv 0 \pmod \Fall
\end{equation}

$P_{\chi f}$ is an $\ESG$-optimal estimator for $(\Dist, \chi_L f)$, therefore

\begin{equation}
\label{eqn:thm__cond__lemma__prf2}
\E_{\Dist_P^K}[(P_{\chi f}^K - \chi_L f) \theta(P_{\chi f}^K - P_L^K t)] \equiv 0 \pmod \Fall
\end{equation}

Multiplying \ref{eqn:thm__cond__lemma__prf1} by $t$ and subtracting \ref{eqn:thm__cond__lemma__prf2} we get

\[\E_{\Dist_P^K}[(P_L^K t - P_{\chi f}^K - \chi_L \cdot (t - f)) \theta(P_{\chi f}^K- P_L^K t)] \equiv 0 \pmod \Fall\]

\[\E_{\Dist_P^K}[(P_L^K t - P_{\chi f}^K ) \theta(P_{\chi f}^K- P_L^K t)] \equiv \E_{\Dist_P^K}[\chi_L \cdot (t- f) \theta(P_{\chi f}^K- P_L^K t)] \pmod \Fall\]

The left-hand side is non-positive and the right-hand side is non-negative, therefore

\[\E_{\Dist_P^K}[(P_L^K t - P_{\chi f}^K ) \theta(P_{\chi f}^K- P_L^K t)] \equiv 0 \pmod \Fall\]

\begin{equation}
\label{eqn:thm__cond__lemma__prf3}
\E_{\Dist_P^K}[(\tilde{P}_{\chi f}^K - P_{\chi f}^K) \theta(P_{\chi f}^K - \tilde{P}_{\chi f}^K)] \equiv 0 \pmod \Fall
\end{equation}

In the same way we can show that

\[\E_{\Dist_P^K}[(P_L^K s - P_{\chi f}^K) \theta(P_L^K s-P_{\chi f}^K)] \equiv 0 \pmod \Fall\]

\begin{equation}
\label{eqn:thm__cond__lemma__prf4}
\E_{\Dist_P^K}[(\tilde{P}_{\chi f}^K - P_{\chi f}^K) \theta(\tilde{P}_{\chi f}^K-P_{\chi f}^K )] \equiv 0 \pmod \Fall
\end{equation}

Subtracting \ref{eqn:thm__cond__lemma__prf3} from \ref{eqn:thm__cond__lemma__prf4}, we get

\[\E_{\Dist_P^K}[(\tilde{P}_{\chi f}^K - P_{\chi f}^K) (\theta(\tilde{P}_{\chi f}^K-P_{\chi f}^K)- \theta(P_{\chi f}^K - \tilde{P}_{\chi f}^K))] \equiv 0 \pmod \Fall\]

\begin{equation}
\label{eqn:thm__cond__lemma__prf5}
\E_{\Dist_P^K}[\Abs{\tilde{P}_{\chi f}^K - P_{\chi f}^K}] \equiv 0 \pmod \Fall
\end{equation}

Consider any $S: \Words \times \Rats^2 \Scheme \Rats$ bounded.

\[\E_{\Dist_P^K \times \Un_S^K}[(\tilde{P}_{\chi f}^K - \chi_L f) S^K(x,P_L^K,P_{\chi f}^K)]=\E_{\Dist_P^K \times \Un_S^K}[(\tilde{P}_{\chi f}^K - P_{\chi f}^K + P_{\chi f}^K - \chi_L f) S^K(x,P_L^K,P_{\chi f}^K)]\]

\[\E_{\Dist_P^K \times \Un_S^K}[(\tilde{P}_{\chi f}^K - \chi_L f) S^K]=\E_{\Dist_P^K \times \Un_S^K}[(\tilde{P}_{\chi f}^K - P_{\chi f}^K) S^K]+\E_{\Dist_P^K \times \Un_S^K}[( P_{\chi f}^K - \chi_L f) S^K]\]

Using the fact that $P_{\chi f}$ is an $\ESG$-optimal estimator for $(\Dist, \chi_L f)$, we get

\[\E_{\Dist_P^K \times \Un_S^K}[(\tilde{P}_{\chi f}^K - \chi_L f) S^K] \equiv \E_{\Dist_P^K \times \Un_S^K}[(\tilde{P}_{\chi f}^K - P_{\chi f}^K) S^K] \pmod \Fall\]

\[\Abs{\E_{\Dist_P^K \times \Un_S^K}[(\tilde{P}_{\chi f}^K - \chi_L f) S^K]} \leq \E_{\Dist_P^K \times \Un_S^K}[\Abs{\tilde{P}_{\chi f}^K - P_{\chi f}^K}] \sup S \pmod \Fall\]

Applying \ref{eqn:thm__cond__lemma__prf5} we conclude that

\[\E_{\Dist_P^K \times \Un_S^K}[(\tilde{P}_{\chi f}^K - \chi_L f) S^K] \equiv 0 \pmod \Fall\]
\end{proof}

\begin{proof}[Proof of Theorem \ref{thm:cond}]

Construct $\tilde{P}_{\chi f}: \Words \Scheme \Rats$ s.t. $\R_{\tilde{P}_{\chi f}} = \R_{P_L} + \R_{P_{\chi f}}$ and for any ${x \in \Words}$, ${y \in \BoolR{P_L}}$ and $z \in \BoolR{P_{\chi f}}$

\[\tilde{P}_{\chi f}^K(x,yz)=\min(\max(P_{\chi f}^K(x,z),-P_L^K(x,y) M),P_L^K(x,y) M)\] 

For any ${x \in \Words}$, ${y \in \BoolR{P_L}}$ and $z \in \BoolR{P_{\chi f}}$, we have 

\[
\tilde{P}_{\chi f}^K(x,yz) = P_L^K(x,y) P_{f \mid L}^K(x,yz)\]

\[\tilde{P}_{\chi f}^K(x,yz) - \chi_L(x) f(x) = P_L^K(x,y) P_{f \mid L}^K(x,yz) - \chi_L(x) f(x)\]

\[\tilde{P}_{\chi f}^K(x,yz) - \chi_L(x) f(x) = P_L^K(x,y) P_{f \mid L}^K(x,z) - \chi_L(x) P_{f \mid L}^K(x,yz) + \chi_L(x) P_{f \mid L}^K(x,yz) - \chi_L(x) f(x)\]

\[\tilde{P}_{\chi f}^K(x,yz) - \chi_L(x) f(x) = (P_L^K(x,y) - \chi_L(x)) P_{f \mid L}^K(x,yz) + \chi_L(x) (P_{f \mid L}^K(x,yz) - f(x))\]

\[\chi_L(x) (P_{f \mid L}^K(x,yz) - f(x)) = \tilde{P}_{\chi f}^K(x,yz) - \chi_L(x) f(x) - (P_L^K(x,y) - \chi_L(x)) P_{f \mid L}^K(x,yz)\]

Consider any $S: \Words \times \Rats \Scheme \Rats$ bounded. Denote 

\[\Dist_{PS}^K:=\Dist^{K} \times \Un_{P_L}^K \times \Un_{P_{\chi f}}^K \times \Un_S^K\]

We have

\[\E_{\Dist_{PS}^K}[\chi_L (P_{f \mid L}^K - f)S^K(x,P_{f \mid L}^K)] = E_{\Dist_{PS}^K}[(\tilde{P}_{\chi f}^K - \chi_L f)S^K(x,P_{f \mid L}^K)] - E_{\Dist_{PS}^K}[(P_L^K - \chi_L) P_{f \mid L}^K S^K(x,P_{f \mid L}^K)]\]

Applying Proposition~\ref{prp:thm__cond__lemma} to the first term on the right-hand side and the fact $P_L^K$ is an $\ESG$-optimal estimator for $(\Dist,\chi_L)$ to the second term on the right-hand side,

\[\E_{\Dist_{PS}^K}[\chi_L (P_{f \mid L}^K - f)S^K(x,P_{f \mid L}^K)] \equiv 0 \pmod \Fall\]

\[\Dist^{K}(L) \E_{(\Dist^{K} \mid L)\times \Un_{P_L}^K \times \Un_{P_{\chi f}}^K \times \Un_S^K}[(P_{f \mid L}^K - f)S^K(x,P_{f \mid L}^K)] \equiv 0 \pmod \Fall\]

\[\E_{(\Dist^{K} \mid L)\times \Un_{P_L}^K \times \Un_{P_{\chi f}}^K \times \Un_S^K}[(P_{f \mid L}^K - f)S^K(x,P_{f \mid L}^K)] \equiv 0 \pmod {\Fall_L}\]
\end{proof}
\subsection{Polynomial-Time \texorpdfstring{$\MGrow$}{MΓ}-Schemes and Samplers}

The next subsection and subsequent sections will require several new concepts. Here, we introduce these concepts and discuss some of their properties.

\subsubsection{Congruent Measure Families}

The notation $f(K) \equiv g(K) \pmod \Fall$ can be conveniently generalized from real-valued functions to families of probability distributions.

\begin{samepage}
\begin{definition}

Consider a set $X$ and two families $\{\Dist^K \in \mathcal{P}(X)\}_{K \in \Nats^n}$ and $\{\mathcal{E}^K \in \mathcal{P}(X)\}_{K \in \Nats^n}$. We say that \emph{$\Dist$ is congruent to $\mathcal{E}$ modulo $\Fall$} when $\Dtv(\Dist^K,\mathcal{E}^K) \in \Fall$. In this case we write ${\Dist^K \equiv \mathcal{E}^K \pmod \Fall}$ or $\Dist \equiv \mathcal{E} \pmod \Fall$.

\end{definition}
\end{samepage}

Congruence of probability distributions modulo $\Fall$ has several convenient properties which follow from elementary properties of total variation distance.

\begin{samepage}
\begin{proposition}
\label{prp:prob_cong_eq}

Congruence of probability distributions modulo $\Fall$ is an equivalence relation.

\end{proposition}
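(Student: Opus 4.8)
The plan is to verify the three defining properties of an equivalence relation — reflexivity, symmetry, and transitivity — on the class of word-ensembles (or, more generally, families of distributions on a fixed set $X$), each of which will reduce to an elementary property of the total variation distance combined with one of the closure axioms of a fall space from Definition~\ref{def:fall}.

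For reflexivity I would observe that $\Dtv(\Dist^K,\Dist^K)=0$ for every $K \in \Nats^n$, so the witnessing function is the zero function, which lies in $\Fall$ by Proposition~\ref{prp:err_spc_zero}; hence $\Dist \equiv \Dist \pmod \Fall$. For symmetry I would use that the defining expression $\Dtv(\Dist^K,\mathcal{E}^K)=\frac{1}{2}\sum_{x \in X}\Abs{\Dist^K(x)-\mathcal{E}^K(x)}$ is literally symmetric in $\Dist$ and $\mathcal{E}$, so the same function $K \mapsto \Dtv(\Dist^K,\mathcal{E}^K)=\Dtv(\mathcal{E}^K,\Dist^K)$ witnesses both $\Dist \equiv \mathcal{E} \pmod \Fall$ and $\mathcal{E} \equiv \Dist \pmod \Fall$.

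For transitivity, assuming $\Dist \equiv \mathcal{E} \pmod \Fall$ and $\mathcal{E} \equiv \mathcal{F} \pmod \Fall$, I would invoke the triangle inequality for total variation distance, $\Dtv(\Dist^K,\mathcal{F}^K) \leq \Dtv(\Dist^K,\mathcal{E}^K)+\Dtv(\mathcal{E}^K,\mathcal{F}^K)$ for every $K$; by condition~\ref{con:def__fall__add} of Definition~\ref{def:fall} the right-hand side lies in $\Fall$, and by condition~\ref{con:def__fall__ineq} the dominated function $K \mapsto \Dtv(\Dist^K,\mathcal{F}^K)$ lies in $\Fall$ as well, giving $\Dist \equiv \mathcal{F} \pmod \Fall$. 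There is no genuine obstacle; the only points worth recording are that the triangle inequality is available since $\Dtv$ is (half) the $\ell^1$-distance between the probability vectors, and that the functions involved are bounded (indeed $\Dtv \leq 1$), so they are legitimate candidates for membership in a fall space, which by Definition~\ref{def:fall} consists of bounded nonnegative functions.
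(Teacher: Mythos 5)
Your proof is correct and follows the same route as the paper, which simply notes the claim is immediate from $\Dtv$ being a metric; you have just spelled out the three metric properties (zero self-distance, symmetry, triangle inequality) and matched each with the relevant fall-space closure condition ($0 \in \Fall$, condition~\ref{con:def__fall__add}, condition~\ref{con:def__fall__ineq}). The added remark that $\Dtv \leq 1$ ensures the witnessing functions are bounded is a fine, if implicit, detail.
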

\end{samepage}

\begin{proof}

Obvious since $\Dtv$ is a metric.
\end{proof}

\begin{samepage}
\begin{proposition}
\label{prp:prob_cong_ev}

Consider $X$ a set, $\{\Dist^K \in \mathcal{P}(X)\}_{K \in \Nats^n}$, $\{\mathcal{E}^K \in \mathcal{P}(X)\}_{K \in \Nats^n}$ and\\ $\{f^K: X \rightarrow \Reals\}_{K \in \Nats^n}$ a uniformly bounded family of functions. Assume ${\Dist \equiv \mathcal{E} \pmod \Fall}$. Then

\begin{equation}
%\label{eqn:tbd}
\E_{x \sim \Dist^K}[f^K(x)] \equiv \E_{x \sim \mathcal{E}^K}[f^K(x)] \pmod \Fall
\end{equation}

\end{proposition}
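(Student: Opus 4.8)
The plan is to unwind the definitions and reduce everything to the elementary inequality relating total variation distance to differences of expectations. First I would fix a uniform bound $M \in \Reals^{\geq 0}$ with $\Abs{f^K(x)} \leq M$ for all $K \in \Nats^n$ and $x \in X$, which exists by the uniform boundedness hypothesis; this also guarantees that both $\E_{x \sim \Dist^K}[f^K(x)]$ and $\E_{x \sim \mathcal{E}^K}[f^K(x)]$ are defined by absolutely convergent series and so are genuine real numbers.

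Next, for each fixed $K \in \Nats^n$ I would write the difference as a single sum, $\E_{x \sim \Dist^K}[f^K(x)] - \E_{x \sim \mathcal{E}^K}[f^K(x)] = \sum_{x \in X} (\Dist^K(x) - \mathcal{E}^K(x)) f^K(x)$, then apply the triangle inequality together with $\Abs{f^K(x)} \leq M$ to get $\Abs{\E_{x \sim \Dist^K}[f^K(x)] - \E_{x \sim \mathcal{E}^K}[f^K(x)]} \leq M \sum_{x \in X} \Abs{\Dist^K(x) - \mathcal{E}^K(x)} = 2M\,\Dtv(\Dist^K,\mathcal{E}^K)$, invoking the definition of $\Dtv$ from Section~\ref{subsec:notation__prob}.

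Finally I would use the hypothesis $\Dist \equiv \mathcal{E} \pmod \Fall$, which by definition says that $K \mapsto \Dtv(\Dist^K,\mathcal{E}^K)$ is a function in $\Fall$. Fall spaces are closed under multiplication by a nonnegative real constant (this follows by induction from condition~\ref{con:def__fall__add} of Definition~\ref{def:fall} together with condition~\ref{con:def__fall__ineq}), so $K \mapsto 2M\,\Dtv(\Dist^K,\mathcal{E}^K)$ lies in $\Fall$ as well. Since the nonnegative function $K \mapsto \Abs{\E_{x \sim \Dist^K}[f^K(x)] - \E_{x \sim \mathcal{E}^K}[f^K(x)]}$ is bounded (it is dominated by $2M$) and pointwise below this element of $\Fall$, condition~\ref{con:def__fall__ineq} of Definition~\ref{def:fall} yields that it too lies in $\Fall$, which is exactly the claim $\E_{x \sim \Dist^K}[f^K(x)] \equiv \E_{x \sim \mathcal{E}^K}[f^K(x)] \pmod \Fall$. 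There is no genuine obstacle here; the only points meriting a moment's attention are the absolute convergence of the defining series (handled by the uniform bound) and the check that the dominating function is itself bounded, as required for membership in a fall space.
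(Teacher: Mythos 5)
Your proof is correct and follows essentially the same route as the paper: bound the difference of expectations by a constant multiple of $\Dtv(\Dist^K,\mathcal{E}^K)$ and conclude via the fall-space closure properties (the paper uses the marginally sharper constant $\sup f - \inf f$ in place of your $2M$, which is immaterial). Nothing further is needed.
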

\end{samepage}

\begin{proof}

$\Abs{\E_{x \sim \Dist^K}[f^K(x)] - \E_{x \sim \mathcal{E}^K}[f^K(x)]} \leq  (\sup f - \inf f)\Dtv(\Dist^K, \mathcal{E}^K)$
\end{proof}

\begin{samepage}
\begin{proposition}
\label{prp:prob_cong_semidir}

Consider $X$, $Y$ sets, $\{\Dist^K \in \mathcal{P}(X)\}_{K \in \Nats^n}$, $\{\mathcal{E}^K \in \mathcal{P}(X)\}_{K \in \Nats^n}$ and\\ $\{f^K: X \Markov Y\}_{K \in \Nats^n}$ a family of Markov kernels. Then, $\Dist \equiv \mathcal{E} \pmod \Fall$ implies

\begin{equation}
%\label{eqn:tbd}
\Dist^K \ltimes f^K \equiv \mathcal{E}^K \ltimes f^K \pmod \Fall
\end{equation}

\end{proposition}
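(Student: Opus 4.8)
The plan is to reduce the statement to a one-line computation of total variation distance, using only the definition of the semidirect product from Section~\ref{subsec:notation__prob}. First I would write out $\Dtv(\Dist^K \ltimes f^K, \mathcal{E}^K \ltimes f^K)$ via the defining formula: the mass assigned to $(x,y) \in X \times Y$ is $\Dist^K(x)\,f^K_x(y)$ for $\Dist^K \ltimes f^K$ and $\mathcal{E}^K(x)\,f^K_x(y)$ for $\mathcal{E}^K \ltimes f^K$, so the summand becomes $\Abs{\Dist^K(x) - \mathcal{E}^K(x)}\,f^K_x(y)$ after pulling the non-negative kernel factor $f^K_x(y)$ out of the absolute value.

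The key step is then to perform the sum over $y$ before the sum over $x$ (legitimate even when $X$ and $Y$ are infinite, since all terms are non-negative, by Tonelli): because $f^K_x$ is a probability distribution on $Y$ we have $\sum_{y \in Y} f^K_x(y) = 1$, and what remains is $\frac{1}{2}\sum_{x \in X} \Abs{\Dist^K(x) - \mathcal{E}^K(x)} = \Dtv(\Dist^K, \mathcal{E}^K)$. Thus in fact $\Dtv(\Dist^K \ltimes f^K, \mathcal{E}^K \ltimes f^K) = \Dtv(\Dist^K, \mathcal{E}^K)$ for every $K \in \Nats^n$, i.e. the two distance functions coincide identically. (This is the familiar fact that applying a common Markov kernel preserves total variation when we couple on the retained coordinate; one could alternatively invoke the data-processing inequality for the kernel $x \mapsto \delta_x \otimes f^K_x$ to get just the inequality $\le$, which is already enough.)

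Finally I would invoke the hypothesis: $\Dist \equiv \mathcal{E} \pmod \Fall$ means precisely $\Dtv(\Dist^K, \mathcal{E}^K) \in \Fall$, so the equality (or merely the inequality $\le$ together with the downward-closure condition~\ref{con:def__fall__ineq} of Definition~\ref{def:fall}) yields $\Dtv(\Dist^K \ltimes f^K, \mathcal{E}^K \ltimes f^K) \in \Fall$, which is by definition $\Dist^K \ltimes f^K \equiv \mathcal{E}^K \ltimes f^K \pmod \Fall$. There is no genuine obstacle here; the only point requiring a modicum of care is the interchange of the order of summation in the infinite case, which is justified by non-negativity of all summands.
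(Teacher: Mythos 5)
Your proposal is correct and follows essentially the same route as the paper: the paper simply invokes the fact that total variation distance is contracted by semidirect product with a Markov kernel, while you verify that fact directly by the Tonelli computation and even obtain the exact equality $\Dtv(\Dist^K \ltimes f^K, \mathcal{E}^K \ltimes f^K) = \Dtv(\Dist^K, \mathcal{E}^K)$, which is a harmless strengthening; the conclusion then follows from the definition of congruence modulo $\Fall$ (using condition~\ref{con:def__fall__ineq} if one only keeps the inequality), exactly as in the paper.
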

\end{samepage}

\begin{proof}

Total variation distance is contracted by semi-direct product with a Markov kernel therefore $\Dtv(\Dist^K \ltimes f^K, \mathcal{E}^K \ltimes f^K) \leq \Dtv(\Dist^K, \mathcal{E}^K)$.
\end{proof}

\begin{samepage}
\begin{proposition}
\label{prp:prob_cong_push}

Consider $X$, $Y$ sets, $\{\Dist^K \in \mathcal{P}(X)\}_{K \in \Nats^n}$, $\{\mathcal{E}^K \in \mathcal{P}(X)\}_{K \in \Nats^n}$ and\\ $\{f^K: X \Markov Y\}_{K \in \Nats^n}$ a family of Markov kernels. Then, $\Dist \equiv \mathcal{E} \pmod \Fall$ implies

\begin{equation}
%\label{eqn:tbd}
f_*^K\Dist^K \equiv f_*^K\mathcal{E}^K \pmod \Fall
\end{equation}

\end{proposition}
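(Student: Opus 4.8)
The plan is to reduce this to the single fact that total variation distance is non-increasing under pushforward along a Markov kernel (a data-processing inequality for $\Dtv$), together with condition~\ref{con:def__fall__ineq} of Definition~\ref{def:fall}. Concretely, I would argue that for each $K$,
\[
\Dtv(f_*^K\Dist^K, f_*^K\mathcal{E}^K) \leq \Dtv(\Dist^K, \mathcal{E}^K),
\]
and since the right-hand side lies in $\Fall$ by hypothesis, the left-hand side does too (being pointwise dominated by a function in $\Fall$), which is exactly $f_*^K\Dist^K \equiv f_*^K\mathcal{E}^K \pmod \Fall$.

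To justify the displayed inequality without invoking an external result, I would factor the pushforward as it is defined in Section~\ref{subsec:notation__prob}: $f_*^K\Dist^K = \pi_*(\Dist^K \ltimes f^K)$, where $\pi: X \times Y \to Y$ is the projection. Then I would apply Proposition~\ref{prp:prob_cong_semidir} to get $\Dist^K \ltimes f^K \equiv \mathcal{E}^K \ltimes f^K \pmod \Fall$, i.e. $\Dtv(\Dist^K \ltimes f^K, \mathcal{E}^K \ltimes f^K) \in \Fall$, and finally observe that pushforward along the deterministic map $\pi$ contracts $\Dtv$, so $\Dtv(\pi_*(\Dist^K \ltimes f^K), \pi_*(\mathcal{E}^K \ltimes f^K)) \leq \Dtv(\Dist^K \ltimes f^K, \mathcal{E}^K \ltimes f^K)$. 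Alternatively one can note directly that pushforward by any Markov kernel (deterministic or not) contracts $\Dtv$, which is a one-line computation from the definition of $\Dtv$ as $\tfrac12 \ell^1$-distance combined with the triangle inequality, and cite it as an elementary property of total variation distance, mirroring the style of the proofs of the preceding propositions.

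There is essentially no obstacle here: the content is entirely the contraction property of $\Dtv$, which is standard and which the paper is already using implicitly in Proposition~\ref{prp:prob_cong_semidir}. The only thing to be careful about is bookkeeping — making sure the composition $f_*^K = \pi_* \circ (\cdot \ltimes f^K)$ is the one from the notation section, and that the closure properties of $\Fall$ invoked are precisely conditions~\ref{con:def__fall__ineq} — so I would keep the proof to two or three lines, in the same terse register as the proofs of Propositions~\ref{prp:prob_cong_ev}--\ref{prp:prob_cong_semidir}.
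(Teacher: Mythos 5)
Your proposal is correct and matches the paper's proof, which is the same one-line argument: total variation distance is contracted by pushforward, so $\Dtv(f_*^K\Dist^K, f_*^K\mathcal{E}^K) \leq \Dtv(\Dist^K, \mathcal{E}^K) \in \Fall$. The extra factorization through $\pi_*(\Dist^K \ltimes f^K)$ is a fine way to justify the contraction but adds nothing beyond what the paper already takes as an elementary property of $\Dtv$.
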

\end{samepage}

\begin{proof}

Total variation distance is contracted by pushforward therefore \[\Dtv(f_*^K\Dist^K, f_*^K\mathcal{E}^K) \leq \Dtv(\Dist^K, \mathcal{E}^K)\]
\end{proof}

\begin{samepage}
\begin{proposition}
\label{prp:prob_cong_dir}

Consider $X_1$, $X_2$ sets, $\{\Dist_1^K \in \mathcal{P}(X_1)\}_{K \in \Nats^n}$, $\{\mathcal{E}_1^K \in \mathcal{P}(X_1)\}_{K \in \Nats^n}$,\\ $\{\Dist_2^K \in \mathcal{P}(X_2)\}_{K \in \Nats^n}$ and $\{\mathcal{E}_2^K \in \mathcal{P}(X_2)\}_{K \in \Nats^n}$. Then, $\Dist_1 \equiv \mathcal{E}_1 \pmod \Fall$ and $\Dist_2 \equiv \mathcal{E}_2 \pmod \Fall$ imply

\begin{equation}
%\label{eqn:tbd}
\Dist_1^K \times \Dist_2^K \equiv \mathcal{E}_1^K \times \mathcal{E}_2^K \pmod \Fall
\end{equation}

\end{proposition}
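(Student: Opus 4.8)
The plan is to reduce the statement to the elementary fact that total variation distance is subadditive under products, namely $\Dtv(\mu_1 \times \mu_2,\, \nu_1 \times \nu_2) \le \Dtv(\mu_1,\nu_1) + \Dtv(\mu_2,\nu_2)$ for all $\mu_1,\nu_1 \in \mathcal{P}(X_1)$ and $\mu_2,\nu_2 \in \mathcal{P}(X_2)$, applied coordinatewise in $K$, and then to invoke the closure properties of fall spaces. This follows exactly the pattern of the preceding four propositions, each of which derives a congruence modulo $\Fall$ from a metric or contraction property of $\Dtv$.

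First I would establish the product bound, for which two routes are available. The slick one introduces the intermediate family $\mathcal{E}_1^K \times \Dist_2^K$ and applies the triangle inequality for $\Dtv$ (using that $\Dtv$ is a metric, as noted in Proposition~\ref{prp:prob_cong_eq}): $\Dtv(\Dist_1^K \times \Dist_2^K,\, \mathcal{E}_1^K \times \mathcal{E}_2^K) \le \Dtv(\Dist_1^K \times \Dist_2^K,\, \mathcal{E}_1^K \times \Dist_2^K) + \Dtv(\mathcal{E}_1^K \times \Dist_2^K,\, \mathcal{E}_1^K \times \mathcal{E}_2^K)$. The first term is the $\Dtv$ between the semidirect products of $\Dist_1^K$ and of $\mathcal{E}_1^K$ with the constant Markov kernel $X_1 \Markov X_2$ sending everything to $\Dist_2^K$, so it is $\le \Dtv(\Dist_1^K,\mathcal{E}_1^K)$ by the contraction estimate in the proof of Proposition~\ref{prp:prob_cong_semidir}; symmetrically, after the harmless coordinate swap $X_1 \times X_2 \to X_2 \times X_1$ (a bijective pushforward, which leaves $\Dtv$ unchanged by Proposition~\ref{prp:prob_cong_push}), the second term is $\le \Dtv(\Dist_2^K,\mathcal{E}_2^K)$. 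The more self-contained route is the direct computation $2\Dtv(\Dist_1^K \times \Dist_2^K,\, \mathcal{E}_1^K \times \mathcal{E}_2^K) = \sum_{x_1,x_2} \Abs{\Dist_1^K(x_1)\Dist_2^K(x_2) - \mathcal{E}_1^K(x_1)\mathcal{E}_2^K(x_2)}$, inserting $\pm\,\mathcal{E}_1^K(x_1)\Dist_2^K(x_2)$ inside the absolute value, splitting by the triangle inequality, and summing out the leftover coordinate in each piece, which yields $2\Dtv(\Dist_1^K,\mathcal{E}_1^K) + 2\Dtv(\Dist_2^K,\mathcal{E}_2^K)$.

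With the pointwise bound $\Dtv(\Dist_1^K \times \Dist_2^K,\, \mathcal{E}_1^K \times \mathcal{E}_2^K) \le \Dtv(\Dist_1^K,\mathcal{E}_1^K) + \Dtv(\Dist_2^K,\mathcal{E}_2^K)$ in hand, I would conclude as follows: the hypotheses say the function $K \mapsto \Dtv(\Dist_1^K,\mathcal{E}_1^K)$ lies in $\Fall$, and likewise for $\Dist_2,\mathcal{E}_2$, so their sum lies in $\Fall$ by condition~\ref{con:def__fall__add} of Definition~\ref{def:fall}; then the pointwise domination together with condition~\ref{con:def__fall__ineq} places $K \mapsto \Dtv(\Dist_1^K \times \Dist_2^K,\, \mathcal{E}_1^K \times \mathcal{E}_2^K)$ in $\Fall$, which is precisely $\Dist_1^K \times \Dist_2^K \equiv \mathcal{E}_1^K \times \mathcal{E}_2^K \pmod \Fall$. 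There is no genuine obstacle here; the only point requiring a little care is the coordinate bookkeeping for the intermediate distribution in the first route, which is why writing out the direct summation instead may be cleaner.
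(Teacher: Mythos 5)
Your proposal is correct and follows essentially the same route as the paper: the paper's proof simply cites subadditivity of $\Dtv$ under direct products, $\Dtv(\Dist_1^K \times \Dist_2^K, \mathcal{E}_1^K \times \mathcal{E}_2^K) \leq \Dtv(\Dist_1^K, \mathcal{E}_1^K) + \Dtv(\Dist_2^K, \mathcal{E}_2^K)$, and concludes via the closure properties of $\Fall$. The only difference is that you additionally sketch proofs of the subadditivity itself, which the paper takes as known.
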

\end{samepage}

\begin{proof}

Total variation distance is subadditive w.r.t. direct products therefore 

\[\Dtv(\Dist_1^K \times \Dist_2^K, \mathcal{E}_1^K \times \mathcal{E}_2^K) \leq \Dtv(\Dist_1^K, \mathcal{E}_1^K) + \Dtv(\Dist_2^K, \mathcal{E}_2^K)\]
\end{proof}

\subsubsection{Polynomial-Time \texorpdfstring{$\MGrow$}{MΓ}-Schemes}

The concept of a polynomial-time $\Gamma$-scheme can be generalized in a way which allows the advice to become random in itself.

\begin{samepage}
\begin{definition}

Given encoded sets $X$ and $Y$, a \emph{polynomial-time $\MGrow$-scheme of signature ${X \rightarrow Y}$} is a triple $(S,\R_S,\M_S)$ where $S: \Nats^n \times X \times \Words \times \Words \Alg Y$, ${\R_S: \Nats^n \times \Words \Alg \Nats}$ and\\ ${\{\M_S^K \in \mathcal{P}(\Words)\}_{K \in \Nats^n}}$ are s.t.

\begin{enumerate}[(i)]

\item $\max_{x \in X} \max_{y,z \in \Words} \T_S(K,x,y,z) \in \GammaPoly^n$

\item ${\max_{z \in \Words} \T_{\R_S}(K,z) \in \GammaPoly^n}$

\item There is $r \in \GrowR$ s.t. for any $K \in \Nats^n$ and $z \in \Supp \M_S^K$, $\R_S(K,z) \leq r(K)$.

\item There is $l \in \GrowA$ s.t. for any $K \in \Nats^n$, $\Supp \M_S^K \subseteq \WordsLen{l(K)}$.

\end{enumerate}

Abusing notation, we denote the polynomial-time $\MGrow$-scheme $(S,\R_S,\M_S)$ by $S$.

$\R_S^K(z)$ will denote $\R_S(K,z)$. $\UM_S^K \in \mathcal{P}(\Words \times \Words)$ is the joint probability distribution over advice bitstrings and randomness bitstrings, given by 

\[\UM_S^K(y,z):= \M_S^K(z) \delta_{\Abs{y},\R_S^K(z)} 2^{-\R_S^K(z)}\]

$S^K(x,y,z)$ will denote $S(K,x,y,z)$. Given $w=(y,z)$, $S^K(x,w)$ will denote $S(K,x,y,z)$. $S^K(x)$ will denote the $Y$-valued random variable which equals $S(K,x,y,z)$ for $(y,z)$ sampled from $\UM_S^K$. $S_x^K$ will denote the probability distribution of this random variable i.e. $S_x^K$ is the push-forward of $\UM_S^K$ by the mapping $(y,z) \mapsto S(K,x,y,z)$.

We think of $S$ as a randomized algorithm with advice which is random in itself. In particular any polynomial-time $\Gamma$-scheme $S$ can be regarded as a polynomial-time $\MGrow$-scheme with

\[\M_S^K(z):=\delta_{z\A_S^K}\]

We will use the notation $S: X \MScheme Y$ to signify $S$ is a polynomial-time $\MGrow$-scheme of signature $X \rightarrow Y$.

\end{definition}
\end{samepage}

We introduce composition of ${\MGrow}$-schemes as well.

\begin{samepage}
\begin{definition}

Consider encoded sets $X$, $Y$, $Z$ and $S: X \MScheme Y$, $T: Y \MScheme Z$. Choose\\ ${p \in \NatPoly}$ s.t. 

\begin{align*}
\Supp \M_S^K &\subseteq \Bool^{\leq p(K)} \\
\Supp \M_T^K &\subseteq \Bool^{\leq p(K)}
\end{align*}

We can then construct $U: X \Scheme Z$ s.t. for any $K \in \Nats^n$, $a,b \in \Bool^{\leq p(K)}$, ${v \in \Bool^{\R_S(K,a)}}$, ${w \in \Bool^{\R_T(K,b)}}$ and $x \in X$

\begin{align}
\M_U^K &= \En_*^2(\M_S^K \times \M_T^K) \\
\R_U(K, \Chev{a,b}) &= \R_T(K,a)+\R_S(K,b) \\
U^K(x,vw,\Chev{a,b}) &= T^K(S^K(x,w,b),v,a)
\end{align}

Such a $U$ is called the \emph{composition} of $T$ and $S$ and denoted $U = T \circ S$.

\end{definition}
\end{samepage}

\subsubsection{Samplers and Samplability}

The concept of a \emph{samplable} word ensemble is commonly used in average-case complexity theory. Here we introduce a relaxation of this concept which allows approximate sampling with an error compatible with the given fall space. We then proceed to introduce samplable distributional estimation problems.

Samplable word ensembles can be thought of as those ensembles which can be produced by a computationally bounded process. Samplable distributional estimation problems can be thought of as those questions that can be efficiently produced together with their answers, like an exam where the examinee cannot easily find the answer but the examinator knows it (even though the examinator is also computationally bounded).

\begin{samepage}
\begin{definition}

A word ensemble $\Dist$ is called \emph{polynomial-time $\EMG$-samplable} (resp. \emph{polynomial-time $\EG$-samplable}) when there is a polynomial-time $\MGrow$-scheme (resp. polynomial-time $\Gamma$-scheme) $\sigma$ of signature ${\bm{1} \rightarrow \Words}$  s.t. $\Dist^{K} \equiv \sigma_\bullet^K \pmod \Fall$.

In this case, $\sigma$ is called a \emph{polynomial-time $\EMG$-sampler (resp. polynomial-time $\EG$-sampler) of $\Dist$}.

\end{definition}
\end{samepage}

\begin{samepage}
\begin{definition}
\label{def:smp_prob}

A distributional estimation problem $(\Dist,f)$ is called \emph{polynomial-time $\EMG$-samplable (resp. polynomial-time $\EG$-samplable)} when there is a polynomial-time $\MGrow$-scheme (resp. polynomial-time $\Gamma$-scheme) $\sigma$ of signature $\bm{1} \rightarrow \Words \times \Rats$ s.t. 

\begin{enumerate}[(i)]

\item $\sigma_0$ is a polynomial-time $\EMG$-sampler (resp. polynomial-time $\EG$-sampler) of $\Dist$.

\item For any $K \in \Nats^n$, denote $X_{\sigma}^K:=\Supp \sigma_{0\bullet}^K$. For any $x \in \Words$, denote 

$$f_\sigma^K(x):=\begin{cases}\E_{z \sim\UM_\sigma^K}[\sigma^K(z)_1 \mid \sigma^K(z)_0 = x] \text{ if } x \in X_{\sigma}^K \\ 0 \text{ if } x \not\in X_{\sigma}^K \end{cases}$$

We require that the function $\varepsilon(K):=\E_{x \sim \Dist^{K}}[\Abs{f_\sigma^K(x)-f(x)}]$ is in $\Fall$.

\end{enumerate}

This represents the requirement of being able to efficiently generate question-answer pairs, such that the distribution of questions converges to the distribution $\Dist$, and the answers converge to the true output of the function $f$.

When $\sup{\Abs{\sigma_1}} < \infty$ (since $f$ is bounded, this can always be assumed without loss of generality), $\sigma$ is called a \emph{polynomial-time $\EMG$-sampler (resp. polynomial-time $\EG$-sampler) of $(\Dist,f)$}.

\end{definition}
\end{samepage}

For sufficiently large $\GrowA$ the requirements of $\EMG$-samplability become very weak, as seen in the following propositions, which essentially say that if the bitstrings on which $\Dist$ is supported are short enough relative to the length of the advice string, then the randomized advice can just duplicate the distribution. And if there is ample advice available, then the randomized advice can also output an approximation to the true value of $f(x)$ along with $x$.
\begin{samepage}
\begin{proposition}
\label{prp:adv_mgamma_smp}

Consider a word ensemble $\Dist$ s.t. for some $l \in \GrowA$

\begin{equation}
\label{eqn:prp__adv_mgamma_smp}
\Dist^{K}(\Bool^{\leq l(K)}) \equiv 1 \pmod \Fall
\end{equation}

Denote ${I:=\{K \in \Nats^n \mid \Dist^{K}(\Bool^{\leq l(K)}) > 0\}}$. Consider ${\sigma: \bm{1} \MScheme \Words}$ s.t. for any ${K \in I}$

\begin{align*}
\M_\sigma^K&:=\Dist^{K} \mid \Bool^{\leq l(K)} \\
\sigma^K(y,z)&=z
\end{align*}

Then, $\sigma$ is a polynomial-time $\EMG$-sampler of $\Dist$. In particular, since such an $\sigma$ can always be constructed, $\Dist$ is polynomial-time $\EMG$-samplable.

\end{proposition}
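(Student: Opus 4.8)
The plan is to take the scheme $\sigma$ described in the statement essentially at face value, verify it satisfies the four clauses in the definition of a polynomial-time $\MGrow$-scheme, and then compute the total variation distance $\Dtv(\sigma_\bullet^K,\Dist^K)$ directly, showing it is dominated by the function $K\mapsto 1-\Dist^K(\Bool^{\leq l(K)})$, which lies in $\Fall$ by hypothesis~\eqref{eqn:prp__adv_mgamma_smp}.

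First I would pin down the construction. Since $l\in\GrowA$, condition~\ref{con:def__grow__poly} of Definition~\ref{def:grow} supplies $p\in\NatPoly$ with $l\le p$. Let $\sigma$ be the machine that, on input $(K,\bullet,y,z)$, scans at most the first $p(K)+1$ cells of the advice tape: if $z$ terminates inside this window it outputs $z$, and otherwise it outputs $\Estr$ (this second branch is never reached on the support of the advice distribution, so its behaviour is immaterial). Set $\R_\sigma(K,z):=0$ for all $K,z$; for $K\in I$ set $\M_\sigma^K:=\Dist^K\mid\Bool^{\leq l(K)}$; and for $K\notin I$ let $\M_\sigma^K:=\delta_{\Estr}$ (any fixed choice works here). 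Then clause (i) of the definition holds because $\sigma$ halts in time $O(p(K))$ plus the (polynomial-in-$K$) cost of evaluating the fixed polynomial $p$; clauses (ii) and (iii) hold trivially with the witness $r\equiv 0\in\GrowR$ permitted by condition~\ref{con:def__grow__zero} of Definition~\ref{def:grow}; and clause (iv) holds with the given $l$, since $\Supp\M_\sigma^K\subseteq\Bool^{\leq l(K)}$ (if the literal form $\WordsLen{l(K)}$ of the clause is wanted, pad each advice string to length exactly $l(K)$ and have $\sigma$ strip the padding).

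Next I would identify $\sigma_\bullet^K$: because $\sigma^K(y,z)=z$ and $\R_\sigma^K\equiv 0$, the measure $\UM_\sigma^K$ is carried by pairs $(\Estr,z)$ with mass $\M_\sigma^K(z)$, so its pushforward along $(y,z)\mapsto z$ — that is, $\sigma_\bullet^K$ — equals $\M_\sigma^K$ (the truncation inside $\sigma$ is harmless, as it only ever affects strings outside $\Supp\M_\sigma^K$). For $K\in I$ the elementary identity $\Dtv(\mu\mid A,\mu)=1-\mu(A)$, valid for any distribution $\mu$ and event $A$ with $\mu(A)>0$, specializes to
\[\Dtv\bigl(\sigma_\bullet^K,\,\Dist^K\bigr)=1-\Dist^K(\Bool^{\leq l(K)}).\]
For $K\notin I$ we have $\Dist^K(\Bool^{\leq l(K)})=0$, so the $\Fall$-witness $\varepsilon_0(K):=1-\Dist^K(\Bool^{\leq l(K)})$ for~\eqref{eqn:prp__adv_mgamma_smp} equals $1$ there, while $\Dtv(\sigma_\bullet^K,\Dist^K)\le 1$ always; hence $\Dtv(\sigma_\bullet^K,\Dist^K)\le\varepsilon_0(K)$ for every $K\in\Nats^n$. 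Since $\varepsilon_0\in\Fall$ by hypothesis, condition~\ref{con:def__fall__ineq} of Definition~\ref{def:fall} gives $\Dtv(\sigma_\bullet^K,\Dist^K)\in\Fall$, i.e. $\Dist\equiv\sigma_\bullet\pmod\Fall$; thus $\sigma$ is a polynomial-time $\EMG$-sampler of $\Dist$, and since such a $\sigma$ always exists, $\Dist$ is polynomial-time $\EMG$-samplable.

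The only point that is not completely mechanical — and the one I would treat with care — is clause (i): a machine that merely copies its (unbounded-length) random advice to the output has no polynomial running-time bound over all possible advice strings, so one genuinely needs the polynomial majorant $p\ge l$ in order to truncate the advice read at $p(K)$ cells without changing the output on the support. The rest is routine: the $K\notin I$ cases cost nothing because total variation distance is automatically at most $1$ and the hypothesis already forces the $\Fall$-witness to be at least $1$ on those inputs, and the core identity $\Dtv(\mu\mid A,\mu)=1-\mu(A)$ is a two-line computation from the definition of $\Dtv$.
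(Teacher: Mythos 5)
Your proof is correct and follows essentially the same route as the paper's: both identify $\sigma_\bullet^K$ with $\Dist^K \mid \Bool^{\leq l(K)}$ for $K \in I$, show the total variation distance equals $1-\Dist^K(\Bool^{\leq l(K)})$ (your appeal to the identity $\Dtv(\mu\mid A,\mu)=1-\mu(A)$ is exactly the sum the paper computes by hand), and dispose of $K \notin I$ by noting that the $\Fall$-witness is already $1$ there. Your explicit verification of the four clauses of the $\MGrow$-scheme definition, in particular the polynomial truncation needed for the time bound and the padding remark for clause (iv), is a welcome addition that the paper leaves implicit, but it does not change the substance of the argument.
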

\end{samepage}

\begin{proof}

$\chi_I \geq \Dist^{K}(\Bool^{\leq l(K)})$, $1 - \chi_{I} \leq 1 - \Dist^{K}(\Bool^{\leq l(K)})$ and therefore $1 - \chi_I \in \Fall$.

Given $K \in I$, ${\sigma_\bullet^K = \Dist^{K} \mid \Bool^{\leq l(K)}}$ and we get

$$\Dtv(\Dist^{K}, \sigma_\bullet^K) = \Dtv(\Dist^{K}, \Dist^{K} \mid \Bool^{\leq l(K)})$$

$$\Dtv(\Dist^{K}, \sigma_\bullet^K) = \frac{1}{2} \sum_{x \in \Words} \Abs{\Dist^{K}(x)-(\Dist^{K} \mid \WordsLen{\leq l(K)})(x)}$$

Denote $\chi^K:=\chi_{\WordsLen{\leq l(K)}}$.

$$\Dtv(\Dist^{K}, \sigma_\bullet^K) = \frac{1}{2} \sum_{x \in \Words} \Abs{\Dist^{K}(x)-\frac{\chi^K(x)\Dist^{K}(x)}{\Dist^{K}(\WordsLen{\leq l(K)})}}$$

$$\Dtv(\Dist^{K}, \sigma_\bullet^K) = \frac{1}{2} \sum_{x \in \Words} \Dist^{K}(x) \Abs{1-\frac{\chi^K(x)}{\Dist^{K}(\Bool^{\leq l(K)})}}$$

$$\Dtv(\Dist^{K}, \sigma_\bullet^K) = \frac{1}{2} \left(\sum_{x \in \Bool^{\leq l(K)}} \Dist^{K}(x) \Abs{1-\frac{\chi^K(x)}{\Dist^{K}(\Bool^{\leq l(K)})}}+\sum_{x \in \Bool^{>l(K)}} \Dist^{K}(x) \Abs{1-\frac{\chi^K(x)}{\Dist^{K}(\Bool^{\leq l(K)})}}\right)$$

$$\Dtv(\Dist^{K}, \sigma_\bullet^K) = \frac{1}{2} \left(\sum_{x \in \Bool^{\leq l(K)}} \Dist^{K}(x)\left(\frac{1}{\Dist^{K}(\Bool^{\leq l(K)})}-1\right)+\sum_{x \in \Bool^{>l(K)}} \Dist^{K}(x)\right)$$

$$\Dtv(\Dist^{K}, \sigma_\bullet^K) = \frac{1}{2} \left(\Dist^{K}(\Bool^{\leq l(K)})\left(\frac{1}{\Dist^{K}(\Bool^{\leq l(K)})}-1\right)+1 - \Dist^{K}(\Bool^{\leq l(K)})\right)$$

$$\Dtv(\Dist^{K}, \sigma_\bullet^K) = 1-\Dist^{K}(\Bool^{\leq l(K)})$$

Given arbitrary $K \in \Nats^n$,

$$\Dtv(\Dist^{K}, \sigma_\bullet^K) \leq \max(1-\Dist^{K}(\Bool^{\leq l(K)}), 1-\chi_I)$$
\end{proof}

\begin{samepage}
\begin{proposition}
\label{prp:adv_mgamma_gen}

Assume $\Fall$ is $\GrowA$-ample. Consider a distributional estimation problem $(\Dist,f)$ s.t. for some $l \in \GrowA$, \ref{eqn:prp__adv_mgamma_smp} holds. Then, $(\Dist,f)$ is polynomial-time $\EMG$-samplable. 

\end{proposition}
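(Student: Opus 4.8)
The plan is to adapt the construction of Proposition~\ref{prp:adv_mgamma_smp} so that the (now randomized) advice carries, in addition to a sample of $\Dist$, a low-complexity rational approximation of the value of $f$ at that sample. Since $\Fall$ is $\GrowA$-ample, fix $\zeta: \NatFun (0,\tfrac12]$ with $\zeta \in \Fall$ and $\Floor{\log\frac1\zeta} \in \GrowA$, and for $t \in \Reals$ let $\rho_\zeta^K(t) \in \Argmin{s \in \Rats \cap [t-\zeta(K),\,t+\zeta(K)]}\Abs{\En_\Rats(s)}$; this argmin is over a nonempty set, so $\rho_\zeta^K(t)$ is well-defined, and since $f$ is bounded there is $g \in \GrowA$ with $\Abs{\En_\Rats(\rho_\zeta^K(f(x)))} \leq g(K)$ uniformly in $x$ (an interval of length $2\zeta(K)$ contains a rational of denominator $O(\zeta(K)^{-1})$, so $\Abs{\En_\Rats(\rho_\zeta^K(f(x)))} = O(\log\frac1{\zeta(K)})$, and $\Floor{\log\frac1\zeta}\geq 1$ lets additive constants be absorbed into $\GrowA$). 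As in Proposition~\ref{prp:adv_mgamma_smp}, let $l \in \GrowA$ be the function from \ref{eqn:prp__adv_mgamma_smp} and set $I := \{K \in \Nats^n \mid \Dist^K(\Bool^{\leq l(K)}) > 0\}$, so that $1 - \chi_I \in \Fall$.

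Next I would define a polynomial-time $\MGrow$-scheme $\sigma: \bm{1} \MScheme \Words \times \Rats$ with $\R_\sigma \equiv 0$: for $K \in I$ let $\M_\sigma^K$ be the pushforward of $\Dist^K \mid \Bool^{\leq l(K)}$ under the injection $x \mapsto \Chev{x,\rho_\zeta^K(f(x))}$, and for $K \notin I$ let $\M_\sigma^K := \delta_{\Estr}$; on advice $z$ the machine $\sigma^K$ decodes $z = \Chev{x,t}$ and returns $(x,t)$ (returning $(\Estr,0)$ on malformed $z$). Decoding runs in polynomial time and $\R_\sigma \equiv 0 \in \GrowR$, so the only point requiring care — and the only place the $\GrowA$-ampleness hypothesis enters — is the support-length bound: $\Supp \M_\sigma^K \subseteq \WordsLen{l'(K)}$ with $l'(K) := 2l(K) + 2g(K) + O(1)$, which lies in $\GrowA$ because $l, g \in \GrowA$, $\GrowA$ is closed under addition, and (as above) the additive constant is absorbed using $g \geq 1$. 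Beyond this the construction is a routine adaptation, so I do not anticipate a genuine obstacle.

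Finally I would verify the two requirements of Definition~\ref{def:smp_prob}. For the sampling condition, $\R_\sigma \equiv 0$ gives $\UM_\sigma^K = \M_\sigma^K$ (supported on pairs with empty coin string), and the first-component output distribution $\sigma_{0\bullet}^K$ is the pushforward of $\M_\sigma^K$ under $\Chev{x,t}\mapsto x$, which equals $\Dist^K \mid \Bool^{\leq l(K)}$ for $K \in I$; the total variation computation in the proof of Proposition~\ref{prp:adv_mgamma_smp} then yields $\Dtv(\Dist^K,\sigma_{0\bullet}^K) \leq \max(1-\Dist^K(\Bool^{\leq l(K)}),\,1-\chi_I(K)) \in \Fall$, so $\sigma_0$ is a polynomial-time $\EMG$-sampler of $\Dist$. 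For the unbiasedness condition, injectivity of $x \mapsto \Chev{x,\rho_\zeta^K(f(x))}$ shows that for $K \in I$ and $x \in X_\sigma^K = \Supp\Dist^K \cap \Bool^{\leq l(K)}$ the conditional expectation is $f_\sigma^K(x) = \rho_\zeta^K(f(x))$, so $\Abs{f_\sigma^K(x)-f(x)} \leq \zeta(K)$; for $x \notin X_\sigma^K$ one has $\Abs{f_\sigma^K(x)-f(x)} = \Abs{f(x)} \leq \sup\Abs f$. Splitting the expectation gives, in all cases, $\varepsilon(K) = \E_{x\sim\Dist^K}[\Abs{f_\sigma^K(x)-f(x)}] \leq \zeta(K) + (\sup\Abs f)(1-\Dist^K(\Bool^{\leq l(K)})) + (\sup\Abs f)(1-\chi_I(K))$, which lies in $\Fall$ by the closure properties of fall spaces. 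Hence $\sigma$ witnesses that $(\Dist,f)$ is polynomial-time $\EMG$-samplable; moreover $\sup\Abs{\sigma_1} \leq \sup\Abs f + 1 < \infty$, so $\sigma$ is in fact a polynomial-time $\EMG$-sampler of $(\Dist,f)$.
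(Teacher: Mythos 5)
Your construction is the same as the paper's: use the $\GrowA$-ampleness to attach a short rational approximation $\rho_\zeta^K(f(x))$ to each sample, let the random advice carry the pair $\Chev{x,\rho_\zeta^K(f(x))}$ pushed forward from $\Dist^K \mid \Bool^{\leq l(K)}$, invoke Proposition~\ref{prp:adv_mgamma_smp} for the sampler property of $\sigma_0$, and bound $\E[\Abs{f_\sigma^K - f}]$ by $\zeta(K)$ plus a $\sup\Abs{f}$ term controlled by \ref{eqn:prp__adv_mgamma_smp}. The proof is correct; your extra checks (advice-length bound, the $K \notin I$ case, boundedness of $\sigma_1$) only make explicit details the paper leaves implicit.
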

\end{samepage}

\begin{proof}

Consider $\zeta: \NatFun (0,\frac{1}{2}]$ s.t.  $\zeta \in \Fall$ and $\Floor{\log \frac{1}{\zeta}} \in \GrowA$. For any $K \in \Nats^n$ and ${t \in \Reals}$, let ${\rho^K(t) \in \Argmin{s \in \Rats \cap [t-\zeta(K),t+\zeta(K)]} \Abs{\En_\Rats(s)}}$. For any $K \in \Nats^n$, define ${\alpha^K: \Words \rightarrow \Words}$ by 

\[\alpha^K(x):=\Chev{x,c_\Rats(\rho^K(f(x)))}\]

Denote 

\[I:=\{K \in \Nats^n \mid \Dist^{K}(\Bool^{\leq l(K)}) > 0\}\]

Construct ${\sigma: \bm{1} \MScheme \Words \times \Rats}$ s.t. for any $K \in I$

\begin{align*}
\M_\sigma^K:=\alpha_*^K(\Dist^{K} \mid \Bool^{\leq l(K)}) \\
\sigma^K(y,\Chev{z,\En_\Rats(t)})=(z,t)
\end{align*}

By Proposition~\ref{prp:adv_mgamma_smp}, $\sigma_0$ is a polynomial-time $\EMG$-sampler of $\Dist$.

Let ${f_\sigma^K}$ be defined as in Definition~\ref{def:smp_prob}. Consider any $K \in \Nats^n$. It is easy to see that for any ${x \in \Supp \Dist^{K} \cap \Bool^{\leq l(K)}}$, ${f_\sigma^K(x)=\rho^K(f(x))}$ (for $K \not\in I$ this is vacuously true). Also, for any\\ ${x \in \Bool^{>l(K)}}$, $f_\sigma^K(x)=0$. Denote 

\[p^K:=\Dist^{K}(\Bool^{\leq l(K)})\]

We get

$$\E_{\Dist^{K}}[\Abs{f_\sigma^K(x)-f(x)}]=p^K \E_{\Dist^{K}}[\Abs{f_\sigma^K(x)-f(x)} \mid \Abs{x} \leq l(K)] + (1 - p^K)\E_{\Dist^{K}}[\Abs{f_\sigma^K(x)-f(x)} \mid \Abs{x} > l(K)]$$

$$\E_{\Dist^{K}}[\Abs{f_\sigma^K(x)-f(x)}]=p^K \E_{\Dist^{K}}[\Abs{\rho^K(f(x))-f(x)} \mid \Abs{x} \leq l(K)] + (1 - p^K)\E_{\Dist^{K}}[\Abs{f(x)} \mid \Abs{x} > l(K)]$$

$$\E_{\Dist^{K}}[\Abs{f_\sigma^K(x)-f(x)}] \leq p^K \zeta(K) + (1 - p^K)\sup \Abs{f}$$

The right hand side is obviously in $\Fall$.
\end{proof}

We now introduce the notions of samplability over a given \enquote{base space} $Y$.

\begin{definition}

Consider a word ensemble $\Dist$, an encoded set $Y$ and a family of Markov kernels ${\{\pi^K: \Supp \Dist^{K} \Markov Y\}_{K \in \Nats^n}}$. $\Dist$ is called \emph{polynomial-time $\EMG$-samplable (resp. polynomial-time $\EG$-samplable) relative to $\pi$} when there is a polynomial-time $\MGrow$-scheme (resp. polynomial-time $\Gamma$-scheme) $\sigma$ of signature $Y \rightarrow \Words$ s.t. ${\E_{y \sim \pi_*^K\Dist^{K}}[\Dtv(\Dist^K \mid (\pi^K)^{-1}(y),\sigma_y^K)] \in \Fall}$.

In this case, $\sigma$ is called a \emph{polynomial-time $\EMG$-sampler (resp. polynomial-time $\EG$-sampler) of $\Dist$ relative to $\pi$}. That is, even though the underlying distribution may not be samplable, if some evidence ($y$) is given, that permits sampling from the distribution conditional on $y$.

\end{definition}

\begin{samepage}
\begin{definition}
\label{def:smp_prob_rel}

Consider a distributional estimation problem $(\Dist,f)$, an encoded set $Y$ and a family of Markov kernels $\{\pi^K: \Supp \Dist^{K} \Markov Y\}_{K \in \Nats^n}$. $(\Dist,f)$ is called \emph{polynomial-time $\EMG$-samplable (resp. polynomial-time $\EG$-samplable) relative to $\pi$} when there is a polynomial-time $\MGrow$-scheme (resp. polynomial-time $\Gamma$-scheme) $\sigma$ of signature $Y \rightarrow \Words \times \Rats$ s.t.

\begin{enumerate}[(i)]

\item $\sigma_0$ is a polynomial-time $\EMG$-sampler (resp. polynomial-time $\EG$-sampler) of $\Dist$ relative to $\pi$.

\item For any $K \in \Nats^n$, $y \in Y$, Denote $X_{\sigma,y}^K:=\Supp \sigma_{0y}^K$. For any ${x \in \Words}$, denote 

$$f_\sigma^K(x,y):=\begin{cases}\E_{z \sim\UM_\sigma^K}[\sigma^K(y,z)_1 \mid \sigma^K(y,z)_0 = x] \text{ if } x \in X_{\sigma,y}^K \\ 0 \text{ if } x \not\in X_{\sigma,y}^K \end{cases}$$

We require that the function ${\varepsilon(K):=\E_{(x,y) \sim \Dist^{K} \ltimes \pi^K}[\Abs{f_\sigma^K(x,y)-f(x)}]}$ is in $\Fall$.

\end{enumerate}

When $\sup{\Abs{\sigma_1}} < \infty$, $\sigma$ is called a \emph{polynomial-time $\EMG$-sampler (resp. polynomial-time $\EG$-sampler) of $(\Dist,f)$ relative to $\pi$}.

\end{definition}
\end{samepage}

Note that relative samplability reduces to absolute (ordinary) samplability when $Y=\bm{1}$.

The following propositions are basic properties of samplable ensembles and problems which often come in handy. Proposition~\ref{prp:smp} states that the expectation of a function $h(x)$ remains approximately unchanged when $x$ is replaced with a sampler of $\Dist$, and Proposition~\ref{prp:gen} states that the expectation of the product of $h(x)$ and $f(x)$ remains approximately unchanged when $x$ and $f(x)$ are replaced by question/answer pairs produced by a sampler for $(\Dist,f)$.

\begin{samepage}
\begin{proposition}
\label{prp:smp}

Consider a word ensemble $\Dist$, an encoded set $Y$, a family\\ $\{\pi^K: \Supp \Dist^{K} \Markov Y\}_{K \in \Nats^n}$, a set ${I}$ and a uniformly bounded family\\ $\{h_\alpha^K: (\Supp \Dist) \times Y \rightarrow \Reals\}_{\alpha \in I, K \in \Nats^n}$. Suppose $\sigma$ is a polynomial-time $\EMG$-sampler of $\Dist$ relative to $\pi$. Then

\begin{equation}
\label{eqn:prp__smp}
\E_{(x,y) \sim \Dist^{K} \ltimes \pi^K}[h_\alpha^K(x,y)] \overset{\alpha}{\equiv} \E_{(y,z) \sim \pi_*^K\Dist^{K} \times \UM_\sigma^K}[h_\alpha^K(\sigma^K(y,z),y)] \pmod \Fall
\end{equation}

\end{proposition}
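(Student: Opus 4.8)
The plan is to unfold the relevant distributions and reduce the claim to the definition of a relative $\EMG$-sampler, namely that $\varepsilon_0(K):=\E_{y \sim \pi_*^K\Dist^{K}}[\Dtv(\Dist^K \mid (\pi^K)^{-1}(y),\sigma_y^K)] \in \Fall$. First I would observe that both sides of \ref{eqn:prp__smp} can be written as expectations over $y \sim \pi_*^K\Dist^K$ of an inner expectation over a $\Words$-valued variable: the left side is $\E_{y \sim \pi_*^K\Dist^K}\bigl[\E_{x \sim \Dist^K \mid (\pi^K)^{-1}(y)}[h_\alpha^K(x,y)]\bigr]$ (this is just the disintegration of $\Dist^K \ltimes \pi^K$ along its second coordinate, using the standard fact that $\Dist^K \mid (\pi^K)^{-1}(y)$ is the conditional of the first coordinate given the second under $\Dist^K \ltimes \pi^K$), while the right side is $\E_{y \sim \pi_*^K\Dist^K}\bigl[\E_{x \sim \sigma_y^K}[h_\alpha^K(x,y)]\bigr]$, since $\sigma_y^K$ is by definition the pushforward of $\UM_\sigma^K$ under $z \mapsto \sigma^K(y,z)$.

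Next I would bound the difference pointwise in $y$. For fixed $K$ and $y$, the two inner expectations are expectations of the same bounded function $h_\alpha^K(\cdot,y)$ against the two distributions $\Dist^K \mid (\pi^K)^{-1}(y)$ and $\sigma_y^K$ on $\Words$, so — exactly as in the proof of Proposition~\ref{prp:prob_cong_ev} — their difference is at most $(\sup_{\alpha,K} h_\alpha^K - \inf_{\alpha,K} h_\alpha^K)\,\Dtv(\Dist^K \mid (\pi^K)^{-1}(y),\sigma_y^K)$; call the uniform oscillation bound $C$ (finite by the uniform boundedness hypothesis on the family $\{h_\alpha^K\}$), and note it does not depend on $\alpha$. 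Taking the outer expectation over $y \sim \pi_*^K\Dist^K$ and using the triangle inequality for $\Abs{\cdot}$ inside the expectation gives, for every $\alpha$,
\[\Abs{\E_{(x,y) \sim \Dist^{K} \ltimes \pi^K}[h_\alpha^K(x,y)] - \E_{(y,z) \sim \pi_*^K\Dist^{K} \times \UM_\sigma^K}[h_\alpha^K(\sigma^K(y,z),y)]} \leq C \cdot \varepsilon_0(K).\]
Since $\varepsilon_0 \in \Fall$ and $\Fall$ is closed under multiplication by nonnegative constants, $C\varepsilon_0 \in \Fall$, and because the bound is uniform in $\alpha$ this is precisely the statement $\overset{\alpha}{\equiv}$ modulo $\Fall$.

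The only mild subtlety — and the step I would be most careful about — is the measure-theoretic bookkeeping in the first paragraph: verifying that the left side of \ref{eqn:prp__smp} really disintegrates as claimed, i.e. that $\E_{(x,y) \sim \Dist^K \ltimes \pi^K}[g(x,y)] = \E_{y \sim \pi_*^K\Dist^K}[\E_{x \sim \Dist^K \mid (\pi^K)^{-1}(y)}[g(x,y)]]$ for the specific conditioning convention fixed in Subsection~\ref{subsec:notation__prob} (where $\Dist \mid f^{-1}(A)$ is defined via $\Dist \ltimes f \mid X \times A$), and that the same description applies to $\sigma_y^K$ via the definition of $\UM_\sigma^K$. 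Both are routine unwindings of the definitions of semidirect product, pushforward, and conditioning, with no convergence issues since everything is discrete; once they are in place, the rest is just the bounded-difference estimate above. One should also note the edge case where $\pi_*^K\Dist^K$ assigns an atom to some $y$ with $\Dist^K((\pi^K)^{-1}(y))=0$ cannot occur, so the conditionals are well-defined on the relevant support.
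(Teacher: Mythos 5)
Your proof is correct and follows essentially the same route as the paper's: disintegrate $\Dist^K \ltimes \pi^K$ along $y$, bound the inner difference pointwise by the oscillation of $h_\alpha^K$ times $\Dtv(\Dist^K \mid (\pi^K)^{-1}(y),\sigma_y^K)$, and integrate using the defining property of the relative sampler, with the uniformity in $\alpha$ coming from the uniform bound on the family. The measure-theoretic bookkeeping you flag is exactly the (routine) first step of the paper's argument, so nothing is missing.
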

\end{samepage}

\begin{proof}

If we sample $(x,y)$ from $\Dist^{K} \ltimes \pi^K$ and then sample $x'$ from ${\Dist^{K} \mid (\pi^K)^{-1}(y)}$, $(x',y)$ will obey the distribution $\Dist^{K} \ltimes \pi^K$. Denote $\Dist_y^K:=\Dist^{K} \mid (\pi^K)^{-1}(y)$. We get

$$\E_{(x,y) \sim \Dist^{K} \ltimes \pi^K}[h_\alpha^K(x,y)] = \E_{(x,y) \sim \Dist^{K} \ltimes \pi^K}[\E_{x' \sim \Dist_y^K}[h_\alpha^K(x',y)]]$$

\begin{align*}
\E_{\Dist^{K} \ltimes \pi^K}[h_\alpha^K(x,y)] &- \E_{\pi_*^K\Dist^{K} \times \UM_\sigma^K}[h_\alpha^K(\sigma^K(y,z),y)] =\\ 
\E_{\Dist^{K} \ltimes \pi^K}[\E_{\Dist_y^K}[h_\alpha^K(x',y)]] &- \E_{\pi_*^K\Dist^{K} \times \UM_\sigma^K}[h_\alpha^K(\sigma^K(y,z),y)]
\end{align*}

$$\E_{\Dist^{K} \ltimes \pi^K}[h_\alpha^K(x,y)] - \E_{\pi_*^K\Dist^{K} \times \UM_\sigma^K}[h_\alpha^K(\sigma^K(y,z),y)] = \E_{\Dist^{K} \ltimes \pi^K}[\E_{\Dist_y^K}[h_\alpha^K(x',y)]-\E_{\UM_\sigma^K}[h_\alpha^K(\sigma^K(y,z),y)]]$$

$$\E_{\Dist^{K} \ltimes \pi^K}[h_\alpha^K(x,y)] - \E_{\pi_*^K\Dist^{K} \times \UM_\sigma^K}[h_\alpha^K(\sigma^K(y,z),y)] = \E_{\Dist^{K} \ltimes \pi^K}[\E_{\Dist_y^K}[h_\alpha^K(x',y)]-\E_{\sigma_y^K}[h_\alpha^K(x',y)]]$$

$$\Abs{\E_{\Dist^{K} \ltimes \pi^K}[h_\alpha^K(x,y)] - \E_{\pi_*^K\Dist^{K} \times \UM_\sigma^K}[h_\alpha^K(\sigma^K(y,z),y)]} \leq \E_{\Dist^{K} \ltimes \pi^K}[\Abs{\E_{\Dist_y^K}[h_\alpha^K(x',y)]-\E_{\sigma_y^K}[h_\alpha^K(x',y)]}]$$

$$\Abs{\E_{\Dist^{K} \ltimes \pi^K}[h_\alpha^K(x,y)] - \E_{\pi_*^K\Dist^{K} \times \UM_\sigma^K}[h_\alpha^K(\sigma^K(y,z),y)]} \leq (\sup h - \inf h) \E_{\Dist^{K} \ltimes \pi^K}[\Dtv(\Dist_y^K,\sigma_y^K)]$$

Using the defining property of $\sigma$, we get the desired result.
\end{proof}

\begin{proposition}
\label{prp:gen}

Consider a distributional estimation problem $(\Dist,f)$, an encoded set $Y$, a family ${\{\pi^K: \Supp \Dist^{K} \Markov Y\}_{K \in \Nats^n}}$, a set ${I}$ and a uniformly bounded family

\[\{h_\alpha^K: (\Supp \Dist) \times Y \rightarrow \Reals\}_{\alpha \in I, K \in \Nats^n}\]

Denote $\Dist_\pi^K:=\Dist^{K} \ltimes \pi^K$. Suppose $\sigma$ is a polynomial-time $\EMG$-sampler of $(\Dist,f)$ relative to $\pi$. Then

\begin{equation}
\E_{\Dist_\pi^K}[h_\alpha^K(x,y)f(x)] \overset{\alpha}{\equiv} \E_{\pi_*^K\Dist^{K} \times \UM_\sigma^K}[h_\alpha^K(\sigma^K(y,z)_0,y)\sigma^K(y,z)_1] \pmod \Fall
\end{equation}

\end{proposition}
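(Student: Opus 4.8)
The plan is to deduce this from Proposition~\ref{prp:smp} by folding the answer coordinate of the sampler into the family of test functions. Recall from Definition~\ref{def:smp_prob_rel} that a polynomial-time $\EMG$-sampler $\sigma$ of $(\Dist,f)$ relative to $\pi$ comes with (a) its first component $\sigma_0$ being a polynomial-time $\EMG$-sampler of $\Dist$ relative to $\pi$, carrying the same advice distribution and number of random bits (so $\UM_{\sigma_0}^K=\UM_\sigma^K$ and $\sigma_0^K(y,z)=\sigma^K(y,z)_0$), and (b) the function $\varepsilon_\sigma(K):=\E_{(x,y) \sim \Dist^K \ltimes \pi^K}[\Abs{f_\sigma^K(x,y)-f(x)}]$ lying in $\Fall$, where $f_\sigma^K(x,y)$ is the conditional expectation of $\sigma^K(y,z)_1$ given $\sigma^K(y,z)_0=x$ (and $0$ off the support). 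Since $\sigma$ is a sampler of a distributional estimation problem we may assume $\sup\Abs{\sigma_1}<\infty$, so $f_\sigma^K$ is uniformly bounded.

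First I would apply Proposition~\ref{prp:smp} to the sampler $\sigma_0$ and to the family $g_\alpha^K(x,y):=h_\alpha^K(x,y)\,f_\sigma^K(x,y)$, which is uniformly bounded because $\{h_\alpha^K\}$ is and $f_\sigma^K$ is; this gives
\[\E_{\Dist^K \ltimes \pi^K}[h_\alpha^K(x,y)f_\sigma^K(x,y)] \overset{\alpha}{\equiv} \E_{\pi_*^K\Dist^K \times \UM_\sigma^K}[h_\alpha^K(\sigma^K(y,z)_0,y)\,f_\sigma^K(\sigma^K(y,z)_0,y)] \pmod \Fall.\]
Next I would rewrite the right-hand side: fixing $y$ and conditioning the $z$-expectation on the value of $\sigma^K(y,z)_0$, which always lies in $\Supp\sigma_{0y}^K$, the tower property gives $\E_z[h_\alpha^K(\sigma^K(y,z)_0,y)\sigma^K(y,z)_1]=\E_z[h_\alpha^K(\sigma^K(y,z)_0,y)f_\sigma^K(\sigma^K(y,z)_0,y)]$, so the right-hand side above is exactly the right-hand side claimed in the proposition.

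Finally I would compare $\E_{\Dist_\pi^K}[h_\alpha^K f]$ with $\E_{\Dist_\pi^K}[h_\alpha^K f_\sigma^K]$: with $H$ a uniform bound on $\{h_\alpha^K\}$, the triangle inequality gives $\Abs{\E_{\Dist_\pi^K}[h_\alpha^K f]-\E_{\Dist_\pi^K}[h_\alpha^K f_\sigma^K]} \le H\,\E_{\Dist_\pi^K}[\Abs{f-f_\sigma^K}]=H\,\varepsilon_\sigma(K)$, a bound uniform in $\alpha$; since $\varepsilon_\sigma \in \Fall$ and fall spaces are closed under multiplication by nonnegative constants, $\E_{\Dist_\pi^K}[h_\alpha^K f] \overset{\alpha}{\equiv} \E_{\Dist_\pi^K}[h_\alpha^K f_\sigma^K] \pmod \Fall$, and chaining the three congruences yields the claim. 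I expect no genuine obstacle here: this is the $f$-weighted companion of Proposition~\ref{prp:smp}, and the only points needing a little care are verifying that $g_\alpha^K$ is honestly uniformly bounded and that passing from $\sigma$ to $\sigma_0$ preserves the randomness data, both of which are immediate.
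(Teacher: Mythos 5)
Your proposal is correct and follows essentially the same route as the paper's proof: replace $f$ by $f_\sigma^K$ at cost $(\sup\Abs{h})\varepsilon_\sigma(K)\in\Fall$, apply Proposition~\ref{prp:smp} to the uniformly bounded family $h_\alpha^K f_\sigma^K$ via property (i) of Definition~\ref{def:smp_prob_rel}, and use the conditional-expectation (tower) identity to convert $f_\sigma^K(\sigma^K(y,z)_0,y)$ into $\sigma^K(y,z)_1$. The only difference is the order in which these three congruences are chained, which is immaterial.
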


\begin{proof}

Let ${f_\sigma^K}$ be defined as in Definition~\ref{def:smp_prob_rel}.

$$\E_{\Dist_\pi^K}[h_\alpha^K(x,y)f(x)]-\E_{\Dist_\pi^K}[h_\alpha^K(x,y)f_\sigma^K(x,y)]=\E_{\Dist_\pi^K}[h_\alpha^K(x,y)(f(x)-f_\sigma^K(x,y))]$$

$$\Abs{\E_{\Dist_\pi^K}[h_\alpha^K(x,y)f(x)]-\E_{\Dist_\pi^K}[h_\alpha^K(x,y)f_\sigma^K(x,y)]} \leq \E_{\Dist_\pi^K}[\Abs{h_\alpha^K(x,y)} \cdot \Abs{f(x)-f_\sigma^K(x,y)}]$$

$$\Abs{\E_{\Dist_\pi^K}[h_\alpha^K(x,y)f(x)]-\E_{\Dist_\pi^K}[h_\alpha^K(x,y)f_\sigma^K(x,y)]} \leq (\sup \Abs{h}) \E_{\Dist_\pi^K}[\Abs{f(x)-f_\sigma^K(x,y)}]$$

By property (ii) of Definition~\ref{def:smp_prob_rel}

$$\E_{\Dist_\pi^K}[h_\alpha^K(x,y)f(x)] \overset{\alpha}{\equiv} \E_{\Dist_\pi^K}[h_\alpha^K(x,y)f_\sigma^K(x,y)] \pmod \Fall$$

Using property (i) of Definition~\ref{def:smp_prob_rel} we can apply Proposition~\ref{prp:smp} to the right hand side and get

$$\E_{\Dist_\pi^K}[h_\alpha^K(x,y)f(x)] \overset{\alpha}{\equiv} \E_{\pi_*^K\Dist^{K} \times \UM_\sigma^K}[h_\alpha^K(\sigma^K(y,z)_0,y) f_\sigma^K(\sigma^K(y,z)_0,y)] \pmod \Fall$$

\begin{align*}
&\E_{\Dist_\pi^K}[h_\alpha^K(x,y)f(x)] \overset{\alpha}{\equiv}\\ &\E_{\pi_*^K\Dist^{K} \times \UM_\sigma^K}[h_\alpha^K(\sigma^K(y,z)_0,y) \E_{z' \sim\UM_\sigma^K}[\sigma^K(y,z')_1 \mid \sigma^K(y,z')_0 = \sigma^K(y,z)_0]] \pmod \Fall
\end{align*}

$$\E_{\Dist_\pi^K}[h_\alpha^K(x,y)f(x)] \overset{\alpha}{\equiv} \E_{\pi_*^K\Dist^{K} \times \UM_\sigma^K}[h_\alpha^K(\sigma^K(y,z)_0,y) \sigma^K(y,z)_1] \pmod \Fall$$
\end{proof}

\subsection{Independent Variables}
\label{subsec:indep_var}

Independent random variables $F_1, F_2$ satisfy 

\begin{equation}
\label{eqn:ev_mult}
\E[F_1 F_2] = \E[F_1] \E[F_2]
\end{equation}

To formulate an analogous property for optimal polynomial-time estimators, we need a notion of independence for distributional decision problems which doesn't make the identity tautologous. Consider distributional decision problems $(\Dist, f_1)$, $(\Dist, f_2)$. Informally, $f_1$ is \enquote{independent} of $f_2$ when learning the value of $f_2(x)$ provides no efficiently accessible information about $f_1(x)$. In the present work, we won't try to formalise this in full generality. Instead, we will construct a specific scenario in which the independence assumption is justifiable.

We start with an informal description. Suppose that $f_1(x)$ depends only on part $\pi(x)$ of the information in $x$ i.e. $f_1(x) = g(\pi(x))$. Suppose further that given $y=\pi(x)$ it is possible to efficiently produce samples $x'$ of $\Dist \mid \pi^{-1}(y)$ for which $f_2(x')$ is known. Then, the knowledge of $f_2(x)$ doesn't provide new information about $g(\pi(x))$ since equivalent information can be efficiently produced without this knowledge, by observing $y$.
 Moreover, if we can only efficiently produce samples $x'$ of $\Dist \mid \pi^{-1}(y)$ together with $\tilde{f}_2(x')$ an \emph{unbiased estimate} of $f_2(x')$, we still expect the analogue of \ref{eqn:ev_mult} to hold since the expected value of $\tilde{f}_2(x') - f_2(x')$ vanishes for any given $x'$ so it is uncorrelated with $f_1(x)$.
 %\textcolor{red}{It is instructive to consider the special case where $\pi$ is the identity mapping, and $g=f_{2}$. If the optimal estimators for $f_{1}$ and $f_{2}$ have correlated overestimates and underestimates, then multiplying together the optimal estimators may fail to be an optimal estimator. The requirement that $(\Dist,f)$ is sampleable prevents this, and ensures that $f_{1}$ and $f_{2}$ are "independent enough" for multiplying together the optimal estimators to be an optimal estimator for $f_{1}f_{2}$.}
 
The following theorem formalises this setting.

\begin{samepage}
\begin{theorem}
\label{thm:mult}

Consider $\Dist$ a word ensemble, $f_1, f_2: \Supp \Dist \rightarrow \Reals$ bounded, $(\mathcal{E},g)$ a distributional estimation problem and $\pi: \Words \Scheme \Words$. Assume the following conditions:

\begin{enumerate}[(i)]

\item\label{con:thm__mult__dist} $\pi_*^{K}(\Dist^{K}) \equiv \mathcal{E}^{K} \pmod \Fall$

\item\label{con:thm__mult__fun} Denote ${\bar{g}: \Words \rightarrow \Reals}$ the extension of $g$ by $0$.  We require

\[\E_{(x,z) \sim \Dist^{K} \times \Un_\pi^{K}}[\Abs{f_1(x)-\bar{g}(\pi^{K}(x,z))}] \in \Fall\]

\item\label{con:thm__mult__smp} $(\Dist, f_2)$ is polynomial-time $\EMG$-samplable relative to $\pi$.

\end{enumerate}

Suppose $P_1$ is an $\ESG$-optimal estimator for $(g,\mathcal{E})$ and $P_2$ is an $\ESG$-optimal estimator for $(\Dist,f_2)$. Denote $P_\pi := P_1 \circ \pi$. Construct ${P: \Words \Scheme \Rats}$ s.t. $\R_P=\R_{P_\pi}+\R_{P_2}$ and for any ${x \in \Words}$, ${z_1 \in \WordsLen{\R_{P_\pi}(K)}}$ and $z_2 \in \WordsLen{\R_{P_2}(K)}$

\begin{equation}
P^{K}(x,z_1 z_2)=P_\pi^{K}(x,z_1) P_2^{K}(x,z_2)
\end{equation}

Then, $P$ is an $\ESG$-optimal estimator for $(\Dist,f_1 f_2)$.

\end{theorem}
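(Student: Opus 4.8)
The plan is to verify Definition~\ref{def:obe_sharp} directly. Fix a bounded ${S: \Words \times \Rats \Scheme \Rats}$; we must show that ${\E_{(x,y,z) \sim \Dist^{K} \times \Un_P^{K} \times \Un_S^{K}}[(P^{K}(x,y) - f_1(x)f_2(x))\,S^{K}(x,P^{K}(x,y),z)] \equiv 0 \pmod \Fall}$ (note ${P = P_\pi P_2}$ has bounded range, being a product of bounded-range schemes). The randomness ${y}$ of ${P}$ splits as ${y = z_1 z_2}$, with ${z_1}$ feeding ${P_\pi = P_1 \circ \pi}$ and ${z_2}$ feeding ${P_2}$; by the composition convention ${z_1}$ splits further as ${z_1 = vw}$, where ${P_\pi^{K}(x,vw) = P_1^{K}(\pi^{K}(x,w),v)}$. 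The starting point is the algebraic identity
\[P^{K} - f_1 f_2 = P_\pi^{K}(P_2^{K} - f_2) + f_2(P_\pi^{K} - f_1),\]
which I multiply by ${S^{K}(x,P^{K},z)}$ and treat the two resulting terms separately.

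For the term ${\E[P_\pi^{K}(P_2^{K} - f_2)S^{K}(x,P^{K},z)]}$, observe that ${(x,p) \mapsto P_\pi^{K}(x,z_1)\,S^{K}(x,P_\pi^{K}(x,z_1)\,p,z)}$ is a bounded polynomial-time ${\Gamma}$-scheme with fresh randomness ${(z_1,z)}$: it runs ${\pi}$ and then ${P_1}$ to obtain ${P_\pi^{K}(x,z_1)}$, multiplies by the supplied value ${p}$, and calls ${S}$. Substituting ${p = P_2^{K}(x,z_2)}$ turns this into a valid test for the ${\ESG}$-optimality of ${P_2}$ for ${(\Dist,f_2)}$ — the ${\Rats}$-argument in Definition~\ref{def:obe_sharp} being exactly ${P_2^{K}(x,z_2)}$, which is all of ${z_2}$ that the test inspects — so this term is ${\equiv 0 \pmod \Fall}$.

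The term ${\E[f_2(P_\pi^{K} - f_1)S^{K}(x,P^{K},z)]}$ is the heart of the argument. First I replace ${f_1(x)}$ by ${\bar{g}(\pi^{K}(x,w))}$ using condition~\ref{con:thm__mult__fun}, at a cost of ${\sup\Abs{f_2}\cdot\sup\Abs{S}}$ times a function in ${\Fall}$. Writing ${y := \pi^{K}(x,w)}$, the pair ${(x,y)}$ is distributed as ${\Dist^{K} \ltimes \pi^{K}}$; regarding the ${(v,z_2,z)}$-dependence as an index ${\alpha}$, the coefficient of ${f_2(x)}$ in the remaining integrand is a uniformly bounded family ${h_\alpha^{K}(x,y) = (P_1^{K}(y,v) - \bar{g}(y))\,S^{K}(x,P_1^{K}(y,v)P_2^{K}(x,z_2),z)}$. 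Condition~\ref{con:thm__mult__smp} and Proposition~\ref{prp:gen} then rewrite ${\E_{\Dist^{K} \ltimes \pi^{K}}[h_\alpha^{K}(x,y)f_2(x)]}$ as ${\E_{\pi_*^{K}\Dist^{K} \times \UM_\sigma^{K}}[h_\alpha^{K}(\sigma^{K}(y,z')_0,y)\,\sigma^{K}(y,z')_1]}$ uniformly in ${\alpha}$, so integrating back over ${\alpha}$ keeps the ${\pmod \Fall}$ estimate. Next I replace ${\pi_*^{K}\Dist^{K}}$ by ${\mathcal{E}^{K}}$ using condition~\ref{con:thm__mult__dist} together with Propositions~\ref{prp:prob_cong_ev} and \ref{prp:prob_cong_dir} (the integrand is bounded). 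Since ${\bar{g}}$ agrees with ${g}$ on ${\Supp \mathcal{E}^{K}}$, what remains has the form ${\E_{y \sim \mathcal{E}^{K}}[(P_1^{K}(y,v) - g(y))\,\tilde{S}^{K}(y,P_1^{K}(y,v);\,\cdot)]}$, where the test ${\tilde{S}}$, on input ${(y,p)}$, runs ${\sigma}$ on ${y}$ to obtain a fresh sample ${\hat{x} = \sigma^{K}(y,z')_0}$ and its estimate ${\sigma^{K}(y,z')_1}$, runs ${P_2}$ on ${\hat{x}}$, multiplies, and calls ${S}$; by the ${\ESG}$-optimality of ${P_1}$ for ${(\mathcal{E},g)}$ this ought to be ${\equiv 0 \pmod \Fall}$.

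The main obstacle is that ${\sigma}$ is a polynomial-time ${\MGrow}$-scheme, not a ${\Gamma}$-scheme, so the test ${\tilde{S}}$ above is not literally admissible for ${P_1}$: a ${\Gamma}$-scheme cannot sample the random advice ${\M_\sigma^{K}}$. I resolve this by conditioning on that advice: for each fixed ${a \in \Supp \M_\sigma^{K}}$ the corresponding ${\tilde{S}_a}$ is a genuine polynomial-time ${\Gamma}$-scheme whose advice string is ${a}$ together with the advice of ${S}$ and ${P_2}$ (hence of length in ${\GrowA}$, as ${\sigma}$ is an ${\MGrow}$-scheme), and ${\{\tilde{S}_a\}_{a}}$ is a uniform family; the uniform-family refinement of ${\ESG}$-optimality — proved just as Proposition~\ref{prp:unif}, using the diagonalisation of Proposition~\ref{prp:fam_diag} — yields a single ${\varepsilon \in \Fall}$ with ${\Abs{\E[(P_1^{K}-g)\tilde{S}_a^{K}]} \leq \varepsilon(K)}$ uniformly in ${a}$, and averaging over ${a \sim \M_\sigma^{K}}$ completes the proof. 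The rest — tracking the randomness and advice splittings, and checking the uniform-family and resource bounds at each use of Propositions~\ref{prp:unif}, \ref{prp:gen} and the congruence lemmas — is routine bookkeeping.
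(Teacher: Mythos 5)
Your proof is correct and follows essentially the same route as the paper's: the same decomposition ${P^{K}-f_1f_2=P_\pi^{K}(P_2^{K}-f_2)+f_2(P_\pi^{K}-f_1)}$, the same use of $P_2$'s ${\ESG}$-optimality for the first term (with the test depending on $P_2$ only through its value), and for the second term the same chain of condition~\ref{con:thm__mult__fun}, the change of variables to ${\Dist^{K}\ltimes\pi^{K}}$, Proposition~\ref{prp:gen}, condition~\ref{con:thm__mult__dist}, and finally orthogonality of $P_1$ against a test built from the sampler. Your in-line resolution of the ${\MGrow}$-scheme obstacle (conditioning on the random advice and extracting a single ${\varepsilon\in\Fall}$) is precisely the paper's Proposition~\ref{prp:mixed_ort}; note that since all your ${\tilde{S}_a}$ share one machine you can simply select the worst-case advice for each $K$, so the appeal to Proposition~\ref{prp:fam_diag} (whose hypothesis ${1\in\GrowA}$ the theorem does not assume) is unnecessary.
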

\end{samepage}

In order to prove Theorem~\ref{thm:mult} we will need the following proposition, which takes the defining inexploitability property of an $\ESG$-optimal estimator, and extends it to the adversary $S$ having access to randomized advice, which can be done because deterministic advice can copy the "luckiest possible advice string" drawn from the distribution over advice strings.

\begin{samepage}
\begin{proposition}
\label{prp:mixed_ort}

Consider $(\Dist,f)$ a distributional estimation problem, $P$ an $\ESG$-optimal estimator for $(\Dist,f)$ and $S: \Words \times \Rats \MScheme \Rats$ bounded. Then

\begin{equation}
\label{eqn:prp__mixed_ort}
\E_{\Dist^{K} \times \Un_P^{K} \times \UM_S^{K}}[(P^{K}(x,y) - f(x))S^{K}(x,P^{K}(x,y),z,w)] \equiv 0 \pmod \Fall
\end{equation}

\end{proposition}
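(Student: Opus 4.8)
The plan is to derandomize the random advice $w$ of $S$ by passing to a single polynomial-time $\Gamma$-scheme that, on parameter $K$, hard-wires the \emph{worst} realization of $w$, and then to invoke the hypothesis that $P$ is $\ESG$-optimal (Definition~\ref{def:obe_sharp}) for that single scheme. Concretely, write $g_K(a):=\E_{\Dist^{K}\times\Un_P^{K}\times\Un^{\R_S(K,a)}}[(P^{K}(x,y)-f(x))\,S^{K}(x,P^{K}(x,y),z,a)]$ for $a\in\Supp\M_S^{K}$, so that the quantity in \ref{eqn:prp__mixed_ort} is exactly $\E_{w\sim\M_S^{K}}[g_K(w)]$. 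Since $\MGrow$-schemes have advice supported on words of length at most $l(K)$ for some $l\in\GrowA$, each set $\Supp\M_S^{K}$ is finite, so one may pick $a_K\in\Argmax{a\in\Supp\M_S^{K}}\Abs{g_K(a)}$.

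Next I would assemble these into one scheme: let $\tilde S\colon\Words\times\Rats\Scheme\Rats$ be the machine of $S$ equipped with the advice $\A_{\tilde S}(K):=a_K$ and with $\R_{\tilde S}:=\R_S$ as a machine, so that $\tilde S^{K}(x,p,z)=S^{K}(x,p,z,a_K)$ and $\Un_{\tilde S}^{K}=\Un^{\R_S(K,a_K)}$. One then checks that $\tilde S$ is a legitimate polynomial-time $\Gamma$-scheme: the runtime of $S$ and of $\R_S$ are already polynomial; $\Abs{\A_{\tilde S}(K)}=\Abs{a_K}\le l(K)$, hence lies in $\GrowA$ by the downward-closure axiom for growth spaces; and the random count $K\mapsto\R_S(K,a_K)$ is bounded by the witness $r\in\GrowR$ for $S$, hence lies in $\GrowR$ by the same axiom. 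Note $\tilde S$ is permitted to depend on $\Dist$, $P$, $f$, since advice strings need not be computable. Its range is bounded because that of $S$ is.

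Finally, applying the $\ESG$-optimality of $P$ to $\tilde S$ yields some $\varepsilon\in\Fall$ with $\Abs{g_K(a_K)}\le\varepsilon(K)$ for all $K$; by the choice of $a_K$ this gives $\Abs{g_K(a)}\le\varepsilon(K)$ for every $a\in\Supp\M_S^{K}$, and therefore $\Abs{\E_{w\sim\M_S^{K}}[g_K(w)]}\le\E_{w\sim\M_S^{K}}\Abs{g_K(w)}\le\varepsilon(K)$. Since the left-hand side is a bounded function of $K$ dominated by $\varepsilon$, it lies in $\Fall$ by condition~\ref{con:def__fall__ineq} of Definition~\ref{def:fall}, which is exactly \ref{eqn:prp__mixed_ort}. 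The only genuine subtlety is that the error term supplied by $\ESG$-optimality must be uniform over all advice realizations (and, across $K$, there are infinitely many); the argmax/single-scheme packaging is precisely what provides this uniformity without invoking any uniform-family argument, so the remaining work is just the routine verification that $\tilde S$ meets the resource bounds of a $\Gamma$-scheme.
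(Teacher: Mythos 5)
Your proposal is correct and follows essentially the same route as the paper's proof: fix the worst-case advice string $w^K$ via an argmax over the (finite) support of $\M_S^K$, package it as the advice of a single polynomial-time $\Gamma$-scheme, apply the $\ESG$-optimality of $P$ to that scheme, and bound the original expectation over $\UM_S^K$ by the argmax value. Your added verification that the derandomized scheme satisfies the resource bounds of a $\Gamma$-scheme is a routine detail the paper leaves implicit.
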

\end{samepage}

\begin{proof}

For any $K \in \Nats^n$, choose 

$$w^{K} \in \Argmax{w \in \Supp \M_S^{K}} \Abs{\E_{\Dist^{K} \times \Un_P^{K} \times \Un^{\R_S^{K}(w)}}[(P^{K}(x,y) - f(x))S^{K}(x,P^{K}(x,y),z,w)]}$$

Construct $\bar{S}: \Words \times \Rats \Scheme \Rats$ s.t. 

\begin{align*}
\R_{\bar{S}}(K)&=\R_S^{K}(w^{K}) \\
\bar{S}^{K}(x,t,z)&=S^{K}(x,t,z,w^{K})
\end{align*}

$P$ is an $\ESG$-optimal estimator for $(\Dist,f)$, therefore

$$\E_{\Dist^{K} \times \Un_P^{K} \times \Un_{\bar{S}}^{K}}[(P^{K}(x,y) - f(x))\bar{S}^{K}(x,P^{K}(x,y),z)] \equiv 0 \pmod \Fall$$

$$\E_{\Dist^{K} \times \Un_P^{K} \times \Un^{\R_S^{K}(w)}}[(P^{K}(x,y) - f(x))S^{K}(x,P^{K}(x,y),z,w^{K})] \equiv 0 \pmod \Fall$$

By construction of $w^{K}$, the absolute value of the left hand side is no less than the absolute value of the left hand side of \ref{eqn:prp__mixed_ort}.
\end{proof}

\begin{proof}[Proof of Theorem \ref{thm:mult}]

Consider $K \in \Nats^n$, $x \in \Supp \Dist^{K}$, $z_1 \in \WordsLen{\R_{P_1}(K)}$, ${z_2 \in \WordsLen{\R_{P_2}(K)}}$ and\\ ${z_3 \in \WordsLen{\R_\pi(K)}}$.

\[P^{K}(x,z_1 z_3 z_2)-f_1(x)f_2(x)=P_\pi^{K}(x, z_1 z_3) P_2^{K}(x,z_2) - f_1(x) f_2(x)\]

Adding and subtracting $P_\pi^{K}(x, z_1 z_3) f_2(x)$ from the right hand side and grouping variables, we get

\[P^{K}(x,z_1 z_3 z_2)-f_1(x)f_2(x)=P_\pi^{K}(x, z_1 z_3)(P_2^{K}(x,z_2)-f_2(x))+(P_\pi^{K}(x, z_1 z_3)-f_1(x))f_2(x)\]

For any bounded $S: \Words \times \Rats \Scheme \Rats$ we get

$$\Abs{\E[(P^{K}-f_1 f_2)S^{K}]} \leq \Abs{\E[(P_2^{K}-f_2) P_\pi^{K} S^{K}]} + \Abs{\E[(P_\pi^{K}-f_1)f_2 S^{K}]}$$

$P_2$ is an $\ESG$-optimal estimator for $(\Dist,f_2)$ therefore the first term on the right hand side is in $\Fall$.

$$\Abs{\E[(P^{K}-f_1 f_2)S^{K}]} \leq \Abs{\E[(P_\pi^{K}-f_1)f_2 S^{K}]} \pmod \Fall$$

$$\Abs{\E[(P^{K}-f_1 f_2)S^{K}]} \leq \Abs{\E[(P_\pi^{K}-f_1)f_2 S^{K}] - \E[(P_\pi^{K}-\bar{g} \circ \pi^{K})f_2 S^{K}] + \E[(P_\pi^{K}-\bar{g} \circ \pi^{K})f_2 S^{K}]} \pmod \Fall$$

$$\Abs{\E[(P^{K}-f_1 f_2)S^{K}]} \leq \Abs{\E[(P_\pi^{K}-f_1)f_2 S^{K}] - \E[(P_\pi^{K}-\bar{g} \circ \pi^{K})f_2 S^{K}]} + \Abs{\E[(P_\pi^{K}-\bar{g} \circ \pi^{K})f_2 S^{K}]} \pmod \Fall$$

$$\Abs{\E[(P^{K}-f_1 f_2)S^{K}]} \leq \Abs{\E[(\bar{g} \circ \pi^{K}-f_1)f_2 S^{K}]} + \Abs{\E[(P_\pi^{K}-\bar{g} \circ \pi^{K})f_2 S^{K}]} \pmod \Fall$$

$$\Abs{\E[(P^{K}-f_1 f_2)S^{K}]} \leq (\sup \Abs{f_2}) (\sup \Abs{S}) \E[\Abs{\bar{g} \circ \pi^{K} - f_1}] + \Abs{\E[(P_\pi^{K}-\bar{g} \circ \pi^{K})f_2 S^{K}]} \pmod \Fall$$

Condition~\ref{con:thm__mult__fun} implies the first term on the right hand side is in $\Fall$.

$$\Abs{\E[(P^{K}-f_1 f_2)S^{K}]} \leq \Abs{\E[(P_\pi^{K}-\bar{g} \circ \pi^{K})f_2 S^{K}]} \pmod \Fall$$

Denote $\Un_{\text{tot}}^{K}:= \Un_{P_1}^{K} \times \Un_{P_2}^{K} \times \Un_S^{K}$. We change variables inside the expected value on the right hand side by $y:=\pi^{K}(x,z_3)$. Observing that $(x,y)$ obeys the distribution $\Dist^{K} \ltimes \pi^{K}$ we get

$$\Abs{\E[(P^{K}-f_1 f_2)S^{K}]} \leq \Abs{\E_{\Dist^{K} \ltimes \pi^{K} \times \Un_{\text{tot}}^{K}}[(P_1^{K}(y,z_1)-\bar{g}(y))f_2(x) S^{K}(x,P_1^{K}(y,z_1)P_2^{K}(x,z_2), z_4)]} \pmod \Fall$$

$$\Abs{\E[(P^{K}-f_1 f_2)S^{K}]} \leq \Abs{\E_{\Dist^{K} \ltimes \pi^{K}}[\E_{\Un_{\text{tot}}^{K}}[(P_1^{K}(y,z_1)-\bar{g}(y))S^{K}(x,P_1^{K}(y,z_1)P_2^{K}(x,z_2), z_4)]f_2(x)]} \pmod \Fall$$

Let $\sigma$ be a polynomial-time $\EMG$-sampler of $(\Dist,f_2)$ relative to $\pi$. Applying Proposition~\ref{prp:gen} to the right hand side we get

$$\Abs{\E[(P^{K}-f_1 f_2)S^{K}]} \leq \Abs{\E_{\pi_*^K\Dist^{K} \times \UM_\sigma^{K}}[\E[(P_1^{K}(y)-\bar{g}(y))S^{K}(\sigma^{K}(y)_0,P_1^{K}(y)P_2^{K}(\sigma^{K}(y)_0))]\sigma^{K}(y)_1]} \pmod \Fall$$

Using condition~\ref{con:thm__mult__dist} we conclude that

$$\Abs{\E[(P^{K}-f_1 f_2)S^{K}]} \leq \Abs{\E_{\mathcal{E}^k \times \UM_\sigma^{K}}[\E[(P_1^{K}(y)-g(y))S^{K}(\sigma^{K}(y)_0,P_1^{K}(y)P_2^{K}(\sigma^{K}(y)_0))]\sigma^{K}(y)_1]} \pmod \Fall$$

$$\Abs{\E[(P^{K}-f_1 f_2)S^{K}]} \leq \Abs{\E_{\mathcal{E}^k \times \Un_{\text{tot}}^{K} \times \UM_\sigma^{K}}[(P_1^{K}(y)-g(y))S^{K}(\sigma^{K}(y)_0,P_1^{K}(y)P_2^{K}(\sigma^{K}(y)_0))\sigma^{K}(y)_1]} \pmod \Fall$$

By Proposition~\ref{prp:mixed_ort}, this implies

$$\Abs{\E[(P^{K}-f_1 f_2)S^{K}]} \equiv 0 \pmod \Fall$$
\end{proof}

The following corollary demonstrates one natural scenario in which the conditions of Theorem~\ref{thm:mult} hold. The scenario is one where the distribution is $\Dist_{1}\times\Dist_{2}$, and the task is to estimate $f_{1}(x_{1})f_{2}(x_{2})$. By Theorem~\ref{thm:mult}, this can be done if there is a sampler for $(\Dist_{2},f_{2})$, and a sampler for $\Dist_{1}$.%, because the two samplers can be put together to approximately sample from $\Dist_{1}\times\Dist_{2}$, and this is a sufficient condition for the product of the optimal estimators to be an optimal estimator.}

\begin{samepage}
\begin{corollary}
\label{crl:dir_prod}

Consider $(\Dist_1,f_1)$, $(\Dist_2,f_2)$ distributional estimation problems. Suppose $P_1$ is an $\ESG$-optimal estimator for $(\Dist_1,f_1)$, $P_2$ is an $\ESG$-optimal estimator for $(\Dist_2,f_2)$, $\sigma_1$ is a polynomial-time $\EMG$-sampler for $\Dist_1$ and $\sigma_2$ is a polynomial-time $\EMG$-sampler for $(\Dist_2,f_2)$. Define ${\Dist^{K}:=\En_*^2(\Dist_1^k \times \Dist_2^k)}$. Define ${f: \Supp \Dist \rightarrow \Reals}$ by ${f(\Chev{x_1,x_2}):=f_1(x_1)f_2(x_2)}$. Then, there is $P$, an $\ESG$-optimal estimator for $(\Dist,f)$, s.t. $\R_P=\R_{P_1}+\R_{P_2}$ and for any $K \in \Nats^n$, $x_1 \in \Supp \sigma_{1\bullet}^{K}$, $x_2 \in \Words$, $z_1 \in \WordsLen{\R_{P_1}(K)}$ and $z_2 \in \WordsLen{\R_{P_2}(K)}$

\begin{equation}
P^{K}(\Chev{x_1,x_2}, z_1 z_2)=P_1^{K}(x_1,z_1) P_2^{K}(x_2,z_2)
\end{equation}

\end{corollary}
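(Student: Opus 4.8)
The plan is to derive the corollary from Theorem~\ref{thm:mult}. Put $\tilde{f}_1(\Chev{x_1,x_2}):=f_1(x_1)$ and $\tilde{f}_2(\Chev{x_1,x_2}):=f_2(x_2)$, so that $f=\tilde{f}_1\tilde{f}_2$ on $\Supp\Dist$. I would take $(\mathcal{E},g):=(\Dist_1,f_1)$, let $\pi\colon\Words\Scheme\Words$ be the deterministic scheme that decodes $\Chev{x_1,x_2}$ and outputs $x_1$ (no coin tosses, empty advice, polynomial running time, hence a bona fide polynomial-time $\Gamma$-scheme), and let $\rho\colon\Words\Scheme\Words$ be the analogous scheme outputting $x_2$. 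To apply Theorem~\ref{thm:mult} with these data, taking its ``$P_1$'' to be our $P_1$, two things are needed: (a) an $\ESG$-optimal estimator for $(\Dist,\tilde{f}_2)$ to play the role of its ``$P_2$'', and (b) verification of conditions~\ref{con:thm__mult__dist}, \ref{con:thm__mult__fun}, \ref{con:thm__mult__smp}.

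For (a) I claim $\tilde{P}_2:=P_2\circ\rho$ is an $\ESG$-optimal estimator for $(\Dist,\tilde{f}_2)$. Given a bounded $S\colon\Words\times\Rats\Scheme\Rats$, unwinding $\rho$ turns the left-hand side of \ref{eqn:op_sharp} for $\tilde{P}_2$ into $\E[(P_2^K(x_2,y)-f_2(x_2))\,S^K(\Chev{x_1,x_2},P_2^K(x_2,y),z)]$, where $\Chev{x_1,x_2}\sim\Dist^K$, $y\sim\Un_{P_2}^K$, $z\sim\Un_S^K$. Since $x_1$ enters only through the first argument of $S$ and $x_1,x_2$ are independent under $\Dist^K$, I would average $x_1$ out over $\Dist_1^k$, then use Proposition~\ref{prp:smp} (over the base space $\bm{1}$, with the sampler $\sigma_1$, uniformly in the remaining arguments) to replace this inner average by one over $\sigma_1$'s output, at the cost of a function in $\Fall$. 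The resulting quantity has the form $\E[(P_2^K(x_2,y)-f_2(x_2))\,\bar{S}^K(x_2,P_2^K(x_2,y),\cdot)]$ for a suitable bounded $\bar{S}\colon\Words\times\Rats\MScheme\Rats$ that calls $\sigma_1$ internally to supply the missing $x_1$, so Proposition~\ref{prp:mixed_ort} applied to $P_2$ shows it lies in $\Fall$. Reading resources off the compositions gives $\R_{\tilde{P}_2}=\R_{P_2}$ and $\tilde{P}_2^K(\Chev{x_1,x_2},y)=P_2^K(x_2,y)$. I expect this step to be the main obstacle: one must package the internal use of $\sigma_1$ as a legitimate polynomial-time $\MGrow$-scheme and keep the estimate uniform in the auxiliary variables; the rest is routine.

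For (b): condition~\ref{con:thm__mult__dist} holds with zero error, as $\pi$ is deterministic and $\pi_*^K\Dist^K=\Dist_1^k=\mathcal{E}^K$; condition~\ref{con:thm__mult__fun} also has zero error, since $\bar{g}(\pi^K(\Chev{x_1,x_2}))=f_1(x_1)=\tilde{f}_1(\Chev{x_1,x_2})$ for every $\Chev{x_1,x_2}\in\Supp\Dist^K$. For condition~\ref{con:thm__mult__smp} I would build a polynomial-time $\EMG$-sampler of $(\Dist,\tilde{f}_2)$ relative to $\pi$ from $\sigma_2$: on base point $y=x_1$, run $\sigma_2$ to obtain a pair $(x_2,t)$ and output $(\Chev{x_1,x_2},t)$, reusing $\sigma_2$'s random advice. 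Because $x_2\mapsto\Chev{x_1,x_2}$ is injective, total variation distances are unaffected, so $\Dtv(\Dist^K\mid(\pi^K)^{-1}(x_1),\sigma_{0,x_1}^K)$ equals the total variation distance between $\Dist_2^k$ and the $\Words$-marginal of $\sigma_2$'s output distribution, which lies in $\Fall$ because $\sigma_2$ samples $\Dist_2$; moreover $f_\sigma^K(\Chev{x_1,x_2},x_1)=f_{\sigma_2}^K(x_2)$, so the required bias term equals $\E_{x_2\sim\Dist_2^k}[\Abs{f_{\sigma_2}^K(x_2)-f_2(x_2)}]$, which is in $\Fall$ by Definition~\ref{def:smp_prob}.

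Finally, Theorem~\ref{thm:mult}, applied with $f_1\mapsto\tilde{f}_1$, $f_2\mapsto\tilde{f}_2$ and $P_2\mapsto\tilde{P}_2$, yields that $P:=P_\pi\cdot\tilde{P}_2$ with $P_\pi:=P_1\circ\pi$ is an $\ESG$-optimal estimator for $(\Dist,\tilde{f}_1\tilde{f}_2)=(\Dist,f)$. Since $\pi$ and $\rho$ use no coin tosses, $\R_{P_\pi}=\R_{P_1}$ and $\R_{\tilde{P}_2}=\R_{P_2}$, hence $\R_P=\R_{P_1}+\R_{P_2}$; and $P^K(\Chev{x_1,x_2},z_1z_2)=P_\pi^K(\Chev{x_1,x_2},z_1)\,\tilde{P}_2^K(\Chev{x_1,x_2},z_2)=P_1^K(x_1,z_1)\,P_2^K(x_2,z_2)$, which is exactly the asserted form.
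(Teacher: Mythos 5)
Your overall route coincides with the paper's: reduce to Theorem~\ref{thm:mult} with ${(\mathcal{E},g)=(\Dist_1,f_1)}$, take ${\pi}$ to be the projection onto the first component, manufacture an ${\ESG}$-optimal estimator for ${(\Dist,\tilde f_2)}$ out of ${P_2}$ (your step (a) is essentially the paper's Proposition~\ref{prp:thm__mult__op}, and invoking Proposition~\ref{prp:mixed_ort} there is the right way to absorb the random advice of the sampler), and verify the three hypotheses via the samplers. There is, however, a genuine gap in how you treat ${\pi}$ and ${\rho}$. A polynomial-time ${\Gamma}$-scheme must satisfy ${\max_{x}\max_{y,z}\T_S(K,x,y,z)\in\GammaPoly^n}$, i.e.\ its running time is bounded by a polynomial in ${K}$ \emph{uniformly over all inputs}, while ${\Supp\Dist_1^k}$ and ${\Supp\Dist_2^k}$ may contain strings whose length exceeds every polynomial in ${K}$. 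Hence the scheme you describe, which \enquote{decodes ${\Chev{x_1,x_2}}$ and outputs ${x_1}$} on every input, does not exist in general: ${\pi}$ can only be guaranteed to output ${x_1}$ when ${\Abs{x_1}}$ is polynomially bounded, e.g.\ when ${x_1\in\Supp\sigma_{1\bullet}^{K}}$ (the sampler being polynomial-time forces its outputs to be short). Consequently your claims that condition~\ref{con:thm__mult__dist} holds with zero error (${\pi_*^K\Dist^K=\Dist_1^k}$ exactly) and that condition~\ref{con:thm__mult__fun} has zero error are false in general; the same issue afflicts ${\rho}$, hence ${\tilde P_2=P_2\circ\rho}$, and it is exactly why the corollary asserts the product form of ${P}$ only for ${x_1\in\Supp\sigma_{1\bullet}^{K}}$.

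The gap is repairable with tools you already use: restrict the defining equations of ${\pi}$ and ${\rho}$ to ${x_1\in\Supp\sigma_{1\bullet}^{K}}$, arrange that all other inputs are mapped to strings outside ${\Supp\sigma_{1\bullet}^{K}}$ (you also need the converse preimage property — every ${x}$ with ${\pi^K(x)=x_1}$ is of the form ${\Chev{x_1,x_2'}}$ — which your computation of ${\Dtv(\Dist^K\mid(\pi^K)^{-1}(x_1),\sigma_{x_1}^K)}$ in step (iii) silently assumes), and then use ${\Dist_1^K\equiv\sigma_{1\bullet}^{K}\pmod\Fall}$ together with the congruence propositions to conclude that conditions~\ref{con:thm__mult__dist} and~\ref{con:thm__mult__fun} hold modulo ${\Fall}$ rather than exactly, which is all Theorem~\ref{thm:mult} requires. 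This is precisely how the paper proceeds, via Propositions~\ref{prp:thm__mult__cond1}, \ref{prp:thm__mult__cond2}, \ref{prp:smp_base_change} and~\ref{prp:thm__mult__cond3}; with that correction your argument matches the paper's proof.
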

\end{samepage}

In order to prove Corollary~\ref{crl:dir_prod}, we'll need to prove several minor propositions first.

\begin{samepage}
\begin{proposition}
\label{prp:thm__mult__cond1}

Consider $\Dist_1$, $\Dist_2$ word ensembles and $\sigma_1$, $\sigma_2$ which are polynomial-time $\EMG$-samplers for $\Dist_1$ and $\Dist_2$ respectively. Define ${\Dist^k:=\En_*^2(\Dist_1^k \times \Dist_2^k)}$. Suppose $\pi: \Words \Scheme \Words$ is s.t. for any $K \in \Nats^n$, $x_1 \in \Supp \sigma_{1\bullet}^{K}$, ${x_2 \in \Supp \sigma_{2\bullet}^{K}}$ and $z \in \Bool^{\R_\pi(K)}$, $\pi^{K}(\Chev{x_1,x_2},z)=x_1$. Then $\pi_*^K\Dist^{K} \equiv \Dist_1^{K} \pmod \Fall$

\end{proposition}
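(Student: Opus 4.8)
The plan is to reduce the claim to congruence facts about the two samplers together with one pushforward that can be computed exactly. Write $\mathcal{E}_1^K := \sigma_{1\bullet}^K$ and $\mathcal{E}_2^K := \sigma_{2\bullet}^K$ for the ensembles actually produced by the samplers. By the definition of a polynomial-time $\EMG$-sampler we have $\Dist_1^K \equiv \mathcal{E}_1^K \pmod \Fall$ and $\Dist_2^K \equiv \mathcal{E}_2^K \pmod \Fall$, and we will ultimately chain the desired congruence through the distribution built from $\mathcal{E}_1$ and $\mathcal{E}_2$ instead of from $\Dist_1$ and $\Dist_2$.

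First I would pass from $\Dist^K$ to its sampler analogue. By Proposition~\ref{prp:prob_cong_dir}, $\Dist_1^K \times \Dist_2^K \equiv \mathcal{E}_1^K \times \mathcal{E}_2^K \pmod \Fall$, and then by Proposition~\ref{prp:prob_cong_push} applied to the (deterministic) pushforward along $\En^2$, $\Dist^K = \En_*^2(\Dist_1^K \times \Dist_2^K) \equiv \En_*^2(\mathcal{E}_1^K \times \mathcal{E}_2^K) =: \mathcal{E}^K \pmod \Fall$. Applying Proposition~\ref{prp:prob_cong_push} once more, now to the pushforward by the Markov kernel $\pi^K$, gives $\pi_*^K \Dist^K \equiv \pi_*^K \mathcal{E}^K \pmod \Fall$.

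Next I would evaluate $\pi_*^K \mathcal{E}^K$ exactly. Because the pairing $\En^2$ is injective (the encoding $\Chev{\cdot,\cdot}$ is self-delimiting), we have $\mathcal{E}^K(\Chev{x_1,x_2}) = \mathcal{E}_1^K(x_1)\mathcal{E}_2^K(x_2)$ and $\mathcal{E}^K$ vanishes off $\{\Chev{x_1,x_2} : x_1 \in \Supp \sigma_{1\bullet}^K,\ x_2 \in \Supp \sigma_{2\bullet}^K\}$. On exactly this set the hypothesis on $\pi$ applies: for every such $\Chev{x_1,x_2}$ and every $z \in \Bool^{\R_\pi(K)}$ one has $\pi^K(\Chev{x_1,x_2},z) = x_1$, so the kernel $\pi^K_{\Chev{x_1,x_2}}$ is the Dirac measure $\delta_{x_1}$. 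Summing $\mathcal{E}^K$ against this kernel collapses the $x_2$-coordinate and yields $\pi_*^K \mathcal{E}^K = \mathcal{E}_1^K = \sigma_{1\bullet}^K$. Combining this with $\sigma_{1\bullet}^K \equiv \Dist_1^K \pmod \Fall$ (again the defining property of $\sigma_1$) and the transitivity of congruence modulo $\Fall$ (Proposition~\ref{prp:prob_cong_eq}) gives $\pi_*^K \Dist^K \equiv \pi_*^K \mathcal{E}^K = \sigma_{1\bullet}^K \equiv \Dist_1^K \pmod \Fall$, as required.

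There is no real obstacle here; the one point to watch, and the reason the proof is not a one-liner, is that the hypothesis on $\pi$ is stated over $\Supp \sigma_{1\bullet}^K \times \Supp \sigma_{2\bullet}^K$ rather than over $\Supp \Dist_1^K \times \Supp \Dist_2^K$, so the pushforward by $\pi^K$ must be computed on $\mathcal{E}^K$ and only then transported back to $\Dist^K$ via the congruence propositions; on $\Supp \Dist^K$ itself $\pi^K$ need not act as the first projection. The injectivity of $\En^2$, used to identify $\Supp \mathcal{E}^K$ and to factor $\mathcal{E}^K$ as a product, is the only other routine ingredient.
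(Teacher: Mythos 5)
Your proposal is correct and follows essentially the same route as the paper's own proof: transfer $\Dist^K$ to the sampler-built ensemble $\En_*^2(\sigma_{1\bullet}^K \times \sigma_{2\bullet}^K)$ via Propositions~\ref{prp:prob_cong_dir} and~\ref{prp:prob_cong_push}, observe that the pushforward of that ensemble by $\pi^K$ is exactly $\sigma_{1\bullet}^K$ (which is where the hypothesis on $\pi$, stated over the sampler supports, is used), and close by transitivity with $\sigma_{1\bullet}^K \equiv \Dist_1^K \pmod \Fall$. The extra detail you give on why $\pi_*^K\mathcal{E}^K = \sigma_{1\bullet}^K$ is just an expansion of the paper's \enquote{obviously} step.
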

\end{samepage}

\begin{proof}

$\sigma_{1\bullet}^{K} \equiv \Dist_1^{K} \pmod \Fall$ and $\sigma_{2\bullet}^{K} \equiv \Dist_2^{K} \pmod \Fall$.  By Proposition~\ref{prp:prob_cong_dir},

\[\sigma_{1\bullet}^{K} \times \sigma_{2\bullet}^{K} \equiv \Dist_1^{K} \times \Dist_2^{K} \pmod \Fall\]

Denote $\Dist_\sigma^{K}:=\En_*^2(\sigma_{1\bullet}^{K} \times \sigma_{2\bullet}^{K})$. We get ${\Dist_\sigma^{K} \equiv \Dist^{K} \pmod \Fall}$ and therefore ${\pi_*^K\Dist_\sigma^{K} \equiv \pi_*^K\Dist^{K} \pmod \Fall}$ (by Proposition~\ref{prp:prob_cong_push}). Obviously $\pi_*^K\Dist_\sigma^{K}=\sigma_{1\bullet}^{K}$. We conclude that ${\pi_*^K\Dist^{K} \equiv \sigma_{1\bullet}^{K} \pmod \Fall}$ and therefore ${\pi_*^K\Dist^{K} \equiv \Dist_1 \pmod \Fall}$ (by Proposition~\ref{prp:prob_cong_eq}).
\end{proof}

\begin{samepage}
\begin{proposition}
\label{prp:thm__mult__cond2}

Consider $\Dist_1$, $\Dist_2$ word ensembles and $\sigma_1$, $\sigma_2$ which are polynomial-time $\EMG$-samplers for $\Dist_1$ and $\Dist_2$ respectively. Suppose $\pi: \Words \Scheme \Words$ is s.t. for any $K \in \Nats^n$,\\ $x_1 \in \Supp \sigma_{1\bullet}^{K}$, ${x_2 \in \Supp \sigma_{2\bullet}^{K}}$ and $z \in \Bool^{\R_\pi(K)}$, $\pi^{K}(\Chev{x_1,x_2},z)=x_1$. Then, for any\\ $g: \Supp \Dist_1 \rightarrow \Reals$ bounded and ${\bar{g}: \Words \rightarrow \Reals}$ its extension by ${0}$, we have 

$$\E_{(x_1,x_2,z) \sim\Dist_1^{K} \times \Dist_2^{K} \times \Un_\pi^{K}}[\Abs{g(x_1)-\bar{g}(\pi^{K}(\Chev{x_1,x_2},z))}] \in \Fall$$

\end{proposition}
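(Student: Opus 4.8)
The plan is to isolate the event on which the defining property of $\pi$ fails to force $\pi^K(\Chev{x_1,x_2},z)=x_1$, show that this event is rare, and observe that the integrand vanishes off it.

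First I would set $A_1^K:=\Words\setminus\Supp\sigma_{1\bullet}^K$ and $A_2^K:=\Words\setminus\Supp\sigma_{2\bullet}^K$. Since $\sigma_1$ is a polynomial-time $\EMG$-sampler of $\Dist_1$ we have $\Dist_1^K\equiv\sigma_{1\bullet}^K\pmod\Fall$, i.e. $\Dtv(\Dist_1^K,\sigma_{1\bullet}^K)\in\Fall$; because $\sigma_{1\bullet}^K(A_1^K)=0$, the elementary inequality $\Dist_1^K(A_1^K)-\sigma_{1\bullet}^K(A_1^K)\le\Dtv(\Dist_1^K,\sigma_{1\bullet}^K)$ yields $\Dist_1^K(A_1^K)\le\Dtv(\Dist_1^K,\sigma_{1\bullet}^K)\in\Fall$, and likewise $\Dist_2^K(A_2^K)\in\Fall$.

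Next, for any $K$, $x_1\in\Supp\Dist_1^K\setminus A_1^K$, $x_2\in\Supp\Dist_2^K\setminus A_2^K$ and $z\in\Bool^{\R_\pi(K)}$, the hypothesis on $\pi$ gives $\pi^K(\Chev{x_1,x_2},z)=x_1$; since $x_1\in\Supp\Dist_1$ we have $\bar g(x_1)=g(x_1)$, so the integrand $\Abs{g(x_1)-\bar g(\pi^K(\Chev{x_1,x_2},z))}$ vanishes there (for every $z$). On the complementary set — where $x_1\in A_1^K$ or $x_2\in A_2^K$ — the integrand is at most $2\sup\Abs{g}$, since both $g$ and $\bar g$ are bounded by $\sup\Abs{g}$. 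Hence
$$\E_{\Dist_1^{K}\times\Dist_2^{K}\times\Un_\pi^{K}}[\Abs{g(x_1)-\bar g(\pi^K(\Chev{x_1,x_2},z))}]\le 2\sup\Abs{g}\cdot\bigl(\Dist_1^K(A_1^K)+\Dist_2^K(A_2^K)\bigr),$$
and the right-hand side lies in $\Fall$ by closure of fall spaces under addition, nonnegative scaling, and domination (conditions \ref{con:def__fall__add} and \ref{con:def__fall__ineq} of Definition~\ref{def:fall}).

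The only subtlety — and essentially the sole obstacle — is that the distributions $\Dist_i^K$ from which $x_1,x_2$ are actually sampled need not be supported inside $\Supp\sigma_{i\bullet}^K$, so the pointwise relation $\pi^K(\Chev{x_1,x_2},z)=x_1$ is not available everywhere; the estimate above is exactly what converts this gap into the total-variation errors of the two samplers, which the $\EMG$-samplability hypotheses place in $\Fall$. All remaining steps are routine bounds.
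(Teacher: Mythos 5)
Your proof is correct and follows essentially the same route as the paper's: the integrand vanishes whenever $(x_1,x_2)$ lands in $\Supp \sigma_{1\bullet}^{K} \times \Supp \sigma_{2\bullet}^{K}$, is bounded by a constant otherwise, and the bad event has probability in $\Fall$ because the samplers are congruent to the ensembles. The only cosmetic difference is that you bound the bad event by a union bound on the two marginals via $\Dist_i^K(A_i^K) \leq \Dtv(\Dist_i^K,\sigma_{i\bullet}^K)$, whereas the paper passes to the product distribution $\sigma_{1\bullet}^{K} \times \sigma_{2\bullet}^{K}$ (under which the bad event has probability zero) using the congruence of products.
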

\end{samepage}

\begin{proof}

Denote $M:= \sup g - \inf g$.

$$\E[\Abs{g(x_1)-\bar{g}(\pi^{K}(\Chev{x_1,x_2}))}] \leq M \Prb_{ \Dist_1^{K} \times \Dist_2^{K}}[(x_1,x_2) \not\in \Supp \sigma_{1\bullet}^{K} \times \Supp \sigma_{2\bullet}^{K}]$$

$$\E[\Abs{g(x_1)-\bar{g}(\pi^{K}(\Chev{x_1,x_2}))}] \leq M\Prb_{ \sigma_{1\bullet}^{K} \times \sigma_{2\bullet}^{K}}[(x_1,x_2) \not\in \Supp \sigma_{1\bullet}^{K} \times \Supp \sigma_{2\bullet}^{K}] \pmod \Fall$$

$$\E[\Abs{g(x_1)-\bar{g}(\pi^{K}(\Chev{x_1,x_2}))}] \equiv 0 \pmod \Fall$$
\end{proof}
\begin{samepage}
\begin{proposition}
\label{prp:smp_base_change}

Consider word ensembles $\Dist_1$ and $\Dist_2$ with polynomial-time $\EMG$-samplers $\sigma_1$ and $\sigma_2$ respectively. Define ${\Dist^k:=\En_*^2(\Dist_1^k \times \Dist_2^k)}$. Suppose ${\pi: \Words \Scheme \Words}$ is s.t. for any $K \in \Nats^n$, ${x_1 \in \Supp \sigma_{1\bullet}^{K}}$, ${x_2 \in \Words}$ and $z \in \Bool^{\R_\pi(K)}$, ${\pi^{K}(\Chev{x_1,x_2},z)=x_1}$ and, conversely, if ${x \in \Words}$ is s.t. ${\pi^K(x,z)=x_1}$ then ${x}$ is of the form ${\Chev{x_1,x_2'}}$ for some ${x_2' \in \Words}$. Consider ${\sigma: \Words \MScheme \Words}$ s.t. $\UM_\sigma^{K}=\UM_{\sigma_2}^{K}$ and for any $x \in \Supp \sigma_{1\bullet}^{K}$, ${\sigma^{K}(x,z,w)=\Chev{x,\sigma_2^{K}(z,w)}}$. Then, $\sigma$ is a polynomial-time $\EMG$-sampler of $\Dist$ relative to $\pi$. In particular, since such an $\sigma$ can always be constructed, $\Dist$ is polynomial-time $\EMG$-samplable relative to $\pi$.

\end{proposition}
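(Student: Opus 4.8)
The plan is to rewrite the defining quantity $\E_{y \sim \pi_*^K\Dist^{K}}[\Dtv(\Dist^K \mid (\pi^K)^{-1}(y),\sigma_y^K)]$ as a single total variation distance between two joint distributions on $\Words \times \Words$, and then show both joints are congruent modulo $\Fall$ to one ``ideal'' joint. The tool for the first step is the elementary disintegration identity: if $\mu,\nu \in \mathcal{P}(\Words \times \Words)$ have the same marginal $\rho$ on the second coordinate, with disintegrations $\{\mu_y\}$, $\{\nu_y\}$ over it, then $\Dtv(\mu,\nu)=\E_{y\sim\rho}[\Dtv(\mu_y,\nu_y)]$ (immediate from $\mu(x,y)=\rho(y)\mu_y(x)$ and the same for $\nu$). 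I would apply it with $\mu := \Dist^{K}\ltimes\pi^{K}$, whose second marginal is $\pi_*^K\Dist^{K}$ and whose disintegration over $y$ is $\Dist^{K}\mid(\pi^{K})^{-1}(y)$, and with $\nu := \nu^{K}$ the law of $(x',y)$ obtained by sampling $y\sim\pi_*^K\Dist^{K}$ and then $x'\sim\sigma_y^{K}$, whose second marginal is again $\pi_*^K\Dist^{K}$ and whose disintegration over $y$ is $\sigma_y^{K}$. So it suffices to prove $\mu \equiv \nu \pmod \Fall$.

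For this I would introduce $\Dist_\sigma^{K}:=\En_*^2(\sigma_{1\bullet}^{K}\times\sigma_{2\bullet}^{K})$ and the ``ideal'' joint $\mu_*^{K}$, the law of $(\Chev{x_1,x_2},x_1)$ for $(x_1,x_2)\sim\sigma_{1\bullet}^{K}\times\sigma_{2\bullet}^{K}$, and show $\mu\equiv\mu_*^{K}$ and $\nu\equiv\mu_*^{K}$ modulo $\Fall$, concluding by transitivity (Proposition~\ref{prp:prob_cong_eq}). For $\mu$: from $\Dist_i^{K}\equiv\sigma_{i\bullet}^{K}\pmod\Fall$, Propositions~\ref{prp:prob_cong_dir} and \ref{prp:prob_cong_push} give $\Dist^{K}\equiv\Dist_\sigma^{K}\pmod\Fall$; since $\Dist_\sigma^{K}$ is supported on encodings $\Chev{x_1,x_2}$ with $x_1\in\Supp\sigma_{1\bullet}^{K}$, on which $\pi^{K}$ is deterministically equal to $x_1$ by hypothesis, we have $\Dist_\sigma^{K}\ltimes\pi^{K}=\mu_*^{K}$ exactly, and Proposition~\ref{prp:prob_cong_semidir} then gives $\mu=\Dist^{K}\ltimes\pi^{K}\equiv\mu_*^{K}\pmod\Fall$. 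For $\nu$: Proposition~\ref{prp:prob_cong_push} gives $\pi_*^K\Dist^{K}\equiv\pi_*^K\Dist_\sigma^{K}=\sigma_{1\bullet}^{K}\pmod\Fall$ (the equality again by the preceding observation; alternatively one may invoke Proposition~\ref{prp:thm__mult__cond1}), and then Proposition~\ref{prp:prob_cong_semidir} together with a coordinate-swap pushforward (Proposition~\ref{prp:prob_cong_push}) and the defining relations $\sigma^{K}(x,z,w)=\Chev{x,\sigma_2^{K}(z,w)}$, $\UM_\sigma^{K}=\UM_{\sigma_2}^{K}$ yields $\sigma_{1\bullet}^{K}\ltimes\sigma^{K}$ (swapped) $=\mu_*^{K}$, hence $\nu\equiv\mu_*^{K}\pmod\Fall$.

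The ``in particular'' clause is routine: $\sigma$ as described is a legitimate polynomial-time $\MGrow$-scheme --- its running time is that of $\sigma_2$ plus a concatenation, and its random-bit and advice budgets are inherited from $\sigma_2$ because $\UM_\sigma^{K}=\UM_{\sigma_2}^{K}$ --- and its behaviour outside $\Supp\sigma_{1\bullet}^{K}$ is unconstrained by the hypotheses, so such a $\sigma$ can always be built.

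The main obstacle I expect is getting the disintegration identity set up precisely: one must keep track of which coordinate is the conditioning variable, remember that $\pi^{K}$ is genuinely randomized so that $\Dist^{K}\ltimes\pi^{K}$ carries $\pi$'s internal coins, and --- crucially --- check that $\mu$ and $\nu$ share the second-coordinate marginal $\pi_*^K\Dist^{K}$, since it is exactly this that turns the identity into an exact expectation rather than a one-sided bound. Once that is in place everything else is a short chain of the congruence lemmas, and the mismatch between $\Supp\Dist^{K}$ and $\Supp\Dist_\sigma^{K}$ (and whatever $\pi^{K}$ does on the extra part) is absorbed by the very first step $\Dist^{K}\equiv\Dist_\sigma^{K}\pmod\Fall$, so the ``converse'' clause in the hypothesis on $\pi$ is not actually needed for this argument.
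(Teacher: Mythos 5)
Your proof is correct, and it takes a genuinely different (and slightly stronger) route than the paper's. The paper argues fiberwise: after replacing $\E_{x \sim \pi_*^K\Dist^K}$ by $\E_{x \sim \sigma_{1\bullet}^K}$ (via $\Dist^K \equiv \En_*^2(\sigma_{1\bullet}^K \times \sigma_{2\bullet}^K) \pmod\Fall$ and the congruence lemmas), it computes both conditional distributions exactly on each fiber, $\Dist^K \mid (\pi^K)^{-1}(x) = \En_*^2(\delta_x \times \Dist_2^K)$ and $\sigma_x^K = \En_*^2(\delta_x \times \sigma_{2\bullet}^K)$, so that each fiberwise distance collapses to $\Dtv(\Dist_2^K, \sigma_{2\bullet}^K) \in \Fall$; the identification of $\Dist^K \mid (\pi^K)^{-1}(x)$ is exactly where the \enquote{converse} clause on $\pi$ is used, since otherwise mass outside $\Supp \En_*^2(\sigma_{1\bullet}^K \times \sigma_{2\bullet}^K)$ could land in the fiber and contaminate the conditional. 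You instead lift the expected fiberwise distance to a single total variation distance between the joints $\Dist^K \ltimes \pi^K$ and $(\pi_*^K\Dist^K) \ltimes \sigma^K$ (suitably swapped) via the disintegration identity, and compare both to the ideal joint built from $\sigma_{1\bullet}^K \times \sigma_{2\bullet}^K$, on whose support $\pi^K$ is exactly the projection; this uses only Propositions~\ref{prp:prob_cong_eq}--\ref{prp:prob_cong_dir} and never requires identifying the fibers of $\Dist^K$ itself, which is why the converse clause becomes dispensable \textemdash{} a genuine (if small) strengthening, since any fiber that absorbs stray mass is automatically down-weighted in the joint, a point the fiberwise argument would need extra care to make. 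The one place where you are a bit glib is the \enquote{in particular} clause: the time bound in the definition of a polynomial-time $\MGrow$-scheme is uniform over all inputs $x \in \Words$, so $\sigma$ cannot literally copy an arbitrarily long $x$; one should note (as the paper does in the proof of Corollary~\ref{crl:dir_prod}) that every $x \in \Supp \sigma_{1\bullet}^K$ has length bounded by the polynomial runtime of $\sigma_1$, so $\sigma$ need only copy a polynomial-length prefix and may behave arbitrarily on longer inputs. This is a presentational caveat, not a gap in the congruence argument.
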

\end{samepage}

\begin{proof}

\[\Dist^{K} \equiv \En_*^2(\sigma_{1\bullet}^K \times \sigma_{2\bullet}^K)\pmod \Fall\]

\[\pi_*^K\Dist^{K} \equiv \pi_*^K\En_*^2(\sigma_{1\bullet}^K \times \sigma_{2\bullet}^K) \pmod \Fall\] 

\[\pi_*^K\Dist^{K} \equiv  \sigma_{1\bullet}^K \pmod \Fall\]

Denote $\Dist_x^K:=\Dist \mid (\pi^K)^{-1}(x)$.

\[\E_{x \sim \pi_*^K\Dist^{K}}[\Dtv(\Dist_x^K,\sigma_x^K)] \equiv \E_{x \sim \sigma_{1\bullet}^K}[\Dtv(\Dist_x^K,\sigma_x^K)] \pmod \Fall\]

For any $x \in \Supp \sigma_{1\bullet}^{K}$, $\Dist_x^K = \En_*^2(\delta_x \times \Dist_2^{K})$ and $\sigma_x^K=\En_*^2(\delta_x \times \sigma_{2\bullet}^K)$.

\[\E_{x \sim \pi_*^K\Dist^{K}}[\Dtv(\Dist_x^K,\sigma_x^K)] \equiv \E_{x \sim \sigma_{1\bullet}^K}[\Dtv(\En_*^2(\delta_x \times \Dist_2^{K}),\En_*^2(\delta_x \times \sigma_{2\bullet}^K))] \pmod \Fall\]

\[\E_{x \sim \pi_*^K\Dist^{K}}[\Dtv(\Dist_x^K,\sigma_x^K)] \equiv \E_{x \sim \sigma_{1\bullet}^K}[\Dtv(\Dist_2^{K},\sigma_{2\bullet}^K)] \pmod \Fall\]

\[\E_{x \sim \pi_*^K\Dist^{K}}[\Dtv(\Dist_x^K,\sigma_x^K)] \equiv \Dtv(\Dist_2^{K},\sigma_{2\bullet}^K) \pmod \Fall\]

\[\E_{x \sim \pi_*^K\Dist^{K}}[\Dtv(\Dist_x^K,\sigma_x^K)] \equiv 0 \pmod \Fall\]
\end{proof}

\begin{samepage}
\begin{proposition}
\label{prp:thm__mult__cond3}

Consider $\Dist_1$ a word ensemble with polynomial-time $\EMG$-sampler $\sigma$ and $(\Dist_2, f)$ a distributional estimation problem with polynomial-time $\EMG$-sampler $\tau$. Define the distributional estimation problem $(\Dist,\bar{f})$ by 

\begin{align*}
\Dist^k:=\En_*^2(\Dist_1^k \times \Dist_2^k)\\
\bar{f}(\Chev{x_1,x_2})=f(x_2)
\end{align*}

Suppose $\pi: \Words \Scheme \Words$ is s.t. for any $K \in \Nats^n$, ${x_1 \in \Supp \sigma_{\bullet}^{K}}$, ${x_2 \in \Words}$ and\\ $z \in \Bool^{\R_\pi(K)}$, $\pi^{K}(\Chev{x_1,x_2},z)=x_1$ and, conversely, if ${x \in \Words}$ is s.t. ${\pi^K(x,z)=x_1}$ then ${x}$ is of the form ${\Chev{x_1,x_2'}}$ for some ${x_2' \in \Words}$. Then, $(\Dist,\bar{f})$ is polynomial-time $\EMG$-samplable relative to $\pi$.

\end{proposition}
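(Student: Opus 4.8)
The plan is to build the sampler for $(\Dist,\bar f)$ by piping the output of $\tau$, which already produces $x_2$ together with an (average-)unbiased estimate of $f(x_2)=\bar f(\Chev{x_1,x_2})$, through $\pi$, which hands us $x_1$. Concretely: let $\tau: \bm{1} \MScheme \Words \times \Rats$ be a polynomial-time $\EMG$-sampler of $(\Dist_2,f)$, so that $\tau_0$ is a polynomial-time $\EMG$-sampler of $\Dist_2$. I would define $\rho: \Words \MScheme \Words \times \Rats$ by $\UM_\rho^K := \UM_\tau^K$ and, for $x_1 \in \Supp \sigma_\bullet^K$,
\[\rho^K(x_1,z,w) := \bigl(\Chev{x_1,\tau^K(z,w)_0},\ \tau^K(z,w)_1\bigr)\]
(the value for $x_1 \notin \Supp \sigma_\bullet^K$ being irrelevant). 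Prepending $x_1$ and copying out the rational component are cheap, and $\rho$ inherits $\R_\tau$ and $\M_\tau$ as its randomness and advice data, so $\rho$ is a legitimate polynomial-time $\MGrow$-scheme.

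Next I would check the two clauses of Definition~\ref{def:smp_prob_rel}. Clause (i) comes for free: the $\Words$-marginal $\rho_0$ is exactly the relative sampler of $\Dist$ produced by Proposition~\ref{prp:smp_base_change} applied with $\sigma_1 := \sigma$ and $\sigma_2 := \tau_0$, since $\UM_{\rho_0}^K = \UM_\tau^K = \UM_{\tau_0}^K$ and $\rho_0^K(x_1,z,w) = \Chev{x_1,\tau_0^K(z,w)}$; hence $\rho_0$ is a polynomial-time $\EMG$-sampler of $\Dist$ relative to $\pi$. For clause (ii), the key observation is that for $x_1 \in \Supp \sigma_\bullet^K$ and $x_2 \in \Supp \tau_{0\bullet}^K$ the event $\rho^K(x_1,z)_0 = \Chev{x_1,x_2}$ is literally the event $\tau^K(z)_0 = x_2$, and on it $\rho^K(x_1,z)_1 = \tau^K(z)_1$; so, with $f_\rho$ as in Definition~\ref{def:smp_prob_rel} and $f_\tau$ the function of Definition~\ref{def:smp_prob} attached to $\tau$,
\[f_\rho^K(\Chev{x_1,x_2},x_1) = \E_{z \sim \UM_\tau^K}[\tau^K(z)_1 \mid \tau^K(z)_0 = x_2] = f_\tau^K(x_2).\]

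To finish clause (ii), write $x = \Chev{x_1,x_2}$ with $(x_1,x_2) \sim \Dist_1^K \times \Dist_2^K$ (as $\Dist^K = \En_*^2(\Dist_1^K \times \Dist_2^K)$) and $y = \pi^K(x)$, pick $M$ bounding $\Abs{f_\rho^K}+\Abs{\bar f}$, and split $\E_{(x,y) \sim \Dist^K \ltimes \pi^K}[\Abs{f_\rho^K(x,y)-\bar f(x)}]$ over the event $\{x_1 \in \Supp \sigma_\bullet^K,\ x_2 \in \Supp \tau_{0\bullet}^K\}$ — where $y=x_1$ and the integrand equals $\Abs{f_\tau^K(x_2)-f(x_2)}$ — and its complement, where the integrand is at most $M$. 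This yields
\[\E_{(x,y) \sim \Dist^K \ltimes \pi^K}\bigl[\Abs{f_\rho^K(x,y)-\bar f(x)}\bigr] \leq \E_{x_2 \sim \Dist_2^K}\bigl[\Abs{f_\tau^K(x_2)-f(x_2)}\bigr] + M\bigl(\Dtv(\Dist_1^K,\sigma_\bullet^K)+\Dtv(\Dist_2^K,\tau_{0\bullet}^K)\bigr).\]
The first term lies in $\Fall$ by clause (ii) of Definition~\ref{def:smp_prob} for $\tau$; the two total-variation distances lie in $\Fall$ because $\sigma$ and $\tau_0$ sample $\Dist_1$ and $\Dist_2$; so the right-hand side is in $\Fall$ by the fall-space axioms, and $\rho$ is a polynomial-time $\EMG$-sampler of $(\Dist,\bar f)$ relative to $\pi$.

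There is no real obstacle here — the content is entirely bookkeeping. The only two points requiring care are matching the constructed $\rho_0$ exactly to the sampler of Proposition~\ref{prp:smp_base_change} so that clause (i) costs nothing, and the reduction of "conditioning on the first output of $\rho$" to "conditioning on the first output of $\tau$", which is what transfers the unbiasedness of $(\Dist_2,f)$ to $(\Dist,\bar f)$.
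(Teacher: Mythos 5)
Your proposal is correct and follows essentially the same route as the paper: you construct the same sampler (the paper's $\bar{\tau}$ is your $\rho$), obtain clause (i) from Proposition~\ref{prp:smp_base_change} in the same way, and reduce clause (ii) to the identity $f_\rho^K(\Chev{x_1,x_2},x_1)=f_\tau^K(x_2)$ exactly as the paper does. The only difference is cosmetic: where the paper invokes the congruence propositions to replace $\Dist^K \ltimes \pi^K$ by the sampled product distribution modulo $\Fall$, you carry out the same substitution by hand via an explicit split over the support event and a total-variation bound.
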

\end{samepage}

\begin{proof}

Construct $\bar{\tau}: \Words \MScheme \Words \times \Rats$ s.t. $\UM_{\bar{\tau}}^K=\UM_{\tau}^K$ and for any ${x \in \Supp \sigma_{\bullet}^K}$

\[\bar{\tau}^K(x,y,z)=(\Chev{x,\tau^K(y,z,w)_0},\tau^K(y,z,w)_1)\] 

By Proposition~\ref{prp:smp_base_change}, $\bar{\tau}_0$ is a polynomial-time $\EMG$-sampler of $\Dist$ relative to $\pi$.

\[\Dist^{K} \equiv \En_*^2(\sigma_\bullet^K \times \tau_{0\bullet}^K)\pmod \Fall\]

\[\Dist^{K} \ltimes \pi^K \equiv \En_*^2(\sigma_\bullet^K \times \tau_{0\bullet}^K) \ltimes \pi^K \pmod \Fall\]

Let ${f_\tau^K}$ and ${f_{\bar{\tau}}^K}$ be defined as in Definition~\ref{def:smp_prob_rel}.

\[\E_{(x,y) \sim \Dist^{K} \ltimes \pi^K}[\Abs{f_{\bar{\tau}}^K(x,y)-\bar{f}(x)}] \equiv \E_{(x,y) \sim \En_*^2(\sigma_\bullet^K \times \tau_{0\bullet}^K) \ltimes \pi^K}[\Abs{f_{\bar{\tau}}^K(x,y)-\bar{f}(x)}] \pmod \Fall\]

\[\E_{ \Dist^{K} \ltimes \pi^K}[\Abs{f_{\bar{\tau}}^K-\bar{f}}] \equiv \E_{(x_1,x_2) \sim \sigma_\bullet^K \times \tau_{0\bullet}^K}[\Abs{f_{\bar{\tau}}^K(\Chev{x_1,x_2},x_1)-\bar{f}(\Chev{x_1,x_2})}] \pmod \Fall\]

\[\E_{ \Dist^{K} \ltimes \pi^K}[\Abs{f_{\bar{\tau}}^K-\bar{f}}] \equiv \E_{(x_1,x_2) \sim \sigma_\bullet^K \times \tau_{0\bullet}^K}[\Abs{\E_{\UM_{\bar{\tau}}^K}[\bar{\tau}_1^K(x_1) \mid \bar{\tau}^K(x_1)_0 = \Chev{x_1,x_2}]-f(x_2)}] \pmod \Fall\]

\[\E_{\Dist^{K} \ltimes \pi^K}[\Abs{f_{\bar{\tau}}^K-\bar{f}}] \equiv \E_{(x_1,x_2) \sim \sigma_\bullet^K \times \tau_{0\bullet}^K}[\Abs{\E_{\UM_\tau^K}[\tau_1^K \mid \Chev{x_1,\tau_0^K} = \Chev{x_1,x_2}]-f(x_2)}] \pmod \Fall\]

\[\E_{\Dist^{K} \ltimes \pi^K}[\Abs{f_{\bar{\tau}}^K-\bar{f}}] \equiv \E_{(x_1,x_2) \sim \sigma_\bullet^K \times \tau_{0\bullet}^K}[\Abs{\E_{\UM_\tau^K}[\tau_1^K \mid \tau_0^K = x_2]-f(x_2)}] \pmod \Fall\]

\[\E_{\Dist^{K} \ltimes \pi^K}[\Abs{f_{\bar{\tau}}^K-\bar{f}}] \equiv \E_{x_2 \sim \tau_{0\bullet}^K}[\Abs{f_\tau^K(x_2)-f(x_2)}] \pmod \Fall\]

\[\E_{\Dist^{K} \ltimes \pi^K}[\Abs{f_{\bar{\tau}}^K-\bar{f}}] \equiv \E_{x_2 \sim \Dist_2^{K}}[\Abs{f_\tau^K(x_2)-f(x_2)}] \pmod \Fall\]

\[\E_{\Dist^{K} \ltimes \pi^K}[\Abs{f_{\bar{\tau}}^K-\bar{f}}] \equiv 0 \pmod \Fall\]
\end{proof}

\begin{samepage}
\begin{proposition}
\label{prp:thm__mult__op}

Consider word ensemble $\Dist_1$  with polynomial-time $\EMG$-sampler $\sigma$ and $(\Dist_2,f)$ a distributional estimation problem. Define the distributional estimation problem $(\Dist,\bar{f})$ by 

\begin{align*}
\Dist^k:=\En_*^2(\Dist_1^k \times \Dist_2^k)\\
\bar{f}(\Chev{x_1,x_2})=f(x_2)
\end{align*}

Suppose $P$ is an $\ESG$-optimal estimator for $(\Dist_2,f)$. Let ${\bar{P}: \Words \Scheme \Rats}$ be s.t. $\R_{\bar{P}}=\R_P$ and for any $K \in \Nats^n$, $x_1 \in \Supp \sigma_\bullet^K$, $x_2 \in \Supp \Dist_2^K$ and $z \in \Bool^{\R_P(K)}$, $\bar{P}^K(\Chev{x_1,x_2},z)=P^K(x_2,z)$. Then, $\bar{P}$ is an $\ESG$-optimal estimator for $(\Dist,\bar{f})$.

\end{proposition}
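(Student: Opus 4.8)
The plan is to verify directly the defining condition of Definition~\ref{def:obe_sharp}: for every bounded $S: \Words \times \Rats \Scheme \Rats$ one must show
\[
\E_{\Dist^{K} \times \Un_{\bar{P}}^{K} \times \Un_S^{K}}[(\bar{P}^{K}(x,y)-\bar{f}(x))S^{K}(x,\bar{P}^{K}(x,y),z)] \equiv 0 \pmod \Fall .
\]
The idea is to transport this quantity to the analogous orthogonality expression for $P$ and $(\Dist_2,f)$, which then vanishes modulo $\Fall$. The one non-routine ingredient is that the sampler $\sigma$ of $\Dist_1$ has \emph{random} advice, so the test scheme produced below is a polynomial-time $\MGrow$-scheme and we appeal to Proposition~\ref{prp:mixed_ort} rather than to Definition~\ref{def:obe_sharp} directly.

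First I would replace $\Dist^{K}$ by $\Dist_\sigma^{K}:=\En_*^2(\sigma_\bullet^{K} \times \Dist_2^{K})$. Since $\sigma$ is a polynomial-time $\EMG$-sampler of $\Dist_1$ we have $\sigma_\bullet^{K} \equiv \Dist_1^{K} \pmod \Fall$, so Propositions~\ref{prp:prob_cong_dir} and~\ref{prp:prob_cong_push} give $\Dist^{K} \equiv \Dist_\sigma^{K} \pmod \Fall$; taking the product with the (unchanged) independent factors $\Un_{\bar{P}}^{K} \times \Un_S^{K}$ and using that the integrand is uniformly bounded, Proposition~\ref{prp:prob_cong_ev} lets us swap $\Dist^{K}$ for $\Dist_\sigma^{K}$ inside the expectation at the cost of an error in $\Fall$. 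A bookkeeping point: $\bar{f}$ is a priori only defined on $\Supp \Dist$, so before applying the congruence I would rewrite the integrand using the uniformly bounded function on $\Words$ sending $\Chev{x_1,x_2}$ to $f(x_2)$ when $x_2 \in \Supp \Dist_2^{K}$ and to $0$ otherwise; this agrees with $\bar{f}$ on $\Supp \Dist^{K}$ and equals $f(x_2)$ on all of $\Supp \Dist_\sigma^{K}$.

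Next, on $\Supp \Dist_\sigma^{K}$ we have, by hypothesis, $\bar{P}^{K}(\Chev{x_1,x_2},y)=P^{K}(x_2,y)$ (and $\R_{\bar{P}}=\R_P$, so $\Un_{\bar{P}}^{K}=\Un_P^{K}$), whence the expectation becomes
\[
\E_{(x_1,x_2) \sim \sigma_\bullet^{K} \times \Dist_2^{K}}\ \E_{y \sim \Un_P^{K},\, z \sim \Un_S^{K}}[(P^{K}(x_2,y)-f(x_2))\,S^{K}(\Chev{x_1,x_2},P^{K}(x_2,y),z)] .
\]
I would then construct a bounded polynomial-time $\MGrow$-scheme $T: \Words \times \Rats \MScheme \Rats$ which on input $(x_2,t)$ first runs $\sigma$ (on the empty input) to obtain a sample $x_1$ and then outputs $S^{K}(\Chev{x_1,x_2},t,\cdot)$: concretely $\M_T^{K}$ packages $S$'s fixed advice together with $\sigma$'s random advice $\M_\sigma^{K}$, the random tape of $T$ is the concatenation of $\sigma$'s and $S$'s random bits, and $\R_T$, the running time and the advice length are bounded using that $\sigma$ is an $\EMG$-sampler (its random-bit count lies in $\GrowR$ and $\Supp \M_\sigma^{K} \subseteq \WordsLen{\leq l(K)}$ for some $l \in \GrowA$) and that $S$ is a $\Gamma$-scheme. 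By the defining pushforward description of $\sigma_\bullet^{K}$ together with Fubini,
\[
\E_{\Dist_2^{K} \times \Un_P^{K} \times \UM_T^{K}}[(P^{K}(x_2,y)-f(x_2))\,T^{K}(x_2,P^{K}(x_2,y),z,w)]
\]
equals the displayed quantity above. Since $P$ is an $\ESG$-optimal estimator for $(\Dist_2,f)$ and $T$ is a bounded polynomial-time $\MGrow$-scheme, Proposition~\ref{prp:mixed_ort} shows this is $\equiv 0 \pmod \Fall$; chaining the three steps gives the claim.

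The main obstacle is the construction of $T$ in the third step: it is precisely here that one must absorb a randomized-advice sampler into a test scheme, which is impossible within the class of plain polynomial-time $\Gamma$-schemes and is the reason Proposition~\ref{prp:mixed_ort} (not the bare Definition~\ref{def:obe_sharp}) is needed. Everything else — tracking which coin tosses and advice bits feed $\sigma$ versus $S$, the Fubini rearrangement, and the harmless extension of $\bar{f}$ to a bounded function on $\Words$ — is routine.
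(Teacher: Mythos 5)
Your proof is correct and follows essentially the same route as the paper's: replace the $\Dist_1$-marginal by the sampler $\sigma$ (you do this via Propositions~\ref{prp:prob_cong_dir}, \ref{prp:prob_cong_push} and \ref{prp:prob_cong_ev}, whereas the paper invokes Proposition~\ref{prp:smp} with $Y=\bm{1}$, which is the same content), substitute $\bar{P}=P$ and $\bar{f}=f$ on the relevant support, and kill the resulting expression using the $\ESG$-optimality of $P$. Your only deviation is that you make explicit the step the paper leaves implicit, namely absorbing the randomized-advice sampler into an $\MGrow$-test scheme $T$ and appealing to Proposition~\ref{prp:mixed_ort}, which is indeed the correct justification for that final step.
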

\end{samepage}

\begin{proof}

Consider any $S: \Words \times \Rats \Scheme \Rats$ bounded. Denote $\Un_{PS}^K:=\Un_P^K \times \Un_S^K$, ${\Dist_{PS}^K:=\Dist^{K} \times \Un_{PS}^K}$.

\begin{align*}
&\E_{\Dist_{PS}^K}[(\bar{P}^{K}(x) - \bar{f}(x))S^{K}(x,\bar{P}^{K}(x))]=\\&\E_{\Dist_1^{K} \times \Dist_2^{K} \times \Un_{PS}^K}[(\bar{P}^K(\Chev{x_1,x_2}) - \bar{f}(\Chev{x_1,x_2}))S^K(\Chev{x_1,x_2},\bar{P}^K(\Chev{x_1,x_2}))]
\end{align*}

\[\E_{\Dist_{PS}^K}[(\bar{P}^{K}(x) - \bar{f}(x))S^{K}(x,\bar{P}^{K}(x))]=\E_{\Dist_1^{K} \times \Dist_2^{K} \times \Un_{PS}^K}[(\bar{P}^K(\Chev{x_1,x_2}) - f(x_2))S^K(\Chev{x_1,x_2},\bar{P}^K(\Chev{x_1,x_2}))]\]

\[\E_{\Dist_{PS}^K}[(\bar{P}^{K}(x) - \bar{f}(x))S^{K}(x,\bar{P}^{K}(x))]=\E_{\Dist_1^{K}}[\E_{\Dist_2^{K} \times \Un_{PS}^K}[(\bar{P}^K(\Chev{x_1,x_2}) - f(x_2))S^K(\Chev{x_1,x_2},\bar{P}^K(\Chev{x_1,x_2}))]]\]

Applying Proposition~\ref{prp:smp} (with $Y = \bm{1}$) to the right hand side, we get

\[\E_{\Dist_{PS}^K}[(\bar{P}^{K} - \bar{f})S^{K}] \equiv \E_{\UM_\sigma^K}[\E_{\Dist_2^{K} \times \Un_{PS}^K}[(\bar{P}^K(\Chev{\sigma^K,x_2}) - f(x_2))S^K(\Chev{\sigma^K,x_2},\bar{P}^K(\Chev{\sigma^K,x_2}))]] \pmod \Fall\]

\[\E_{\Dist_{PS}^K}[(\bar{P}^{K} - \bar{f})S^{K}] \equiv \E_{\UM_\sigma^K}[\E_{\Dist_2^{K} \times \Un_{PS}^K}[(P^K(x_2) - f(x_2))S^K(\Chev{\sigma^K,x_2},P^K(x_2))]] \pmod \Fall\]

\[\E_{\Dist_{PS}^K}[(\bar{P}^{K} - \bar{f})S^{K}] \equiv \E_{\Dist_2^{K} \times \Un_{PS}^K \times \UM_\sigma^K}[(P^K(x_2) - f(x_2))S^K(\Chev{\sigma^K,x_2},P^K(x_2))] \pmod \Fall\]

Using the fact that $P$ is an $\ESG$-optimal estimator for $(\Dist_2,f)$, we conclude

\[\E_{\Dist_{PS}^K}[(\bar{P}^{K} - \bar{f})S^{K}] \equiv 0 \pmod \Fall\]
\end{proof}

\begin{proof}[Proof of Corollary \ref{crl:dir_prod}]

Define $\bar{f}_1, \bar{f}_2: \Supp \Dist \rightarrow \Reals$ by $\bar{f}_1(\Chev{x_1,x_2})=f_1(x_1)$, $\bar{f}_2(\Chev{x_1,x_2})=f_2(x_2)$. 

Construct $\pi: \Words \Scheme \Words$ s.t. $\R_\pi \equiv 0$, for any $K \in \Nats^n$, ${x_1 \in \Supp \sigma_{1\bullet}^{K}}$ and ${x_2 \in \Words}$, ${\pi^{K}(\Chev{x_1,x_2})=x_1}$ and, conversely, if ${x \in \Words}$ is s.t. ${\pi^K(x)=x_1}$ then ${x}$ is of the form ${\Chev{x_1,x_2'}}$ for some ${x_2' \in \Words}$. This is possible because the runtime of $\sigma_1^K$ is bounded by a polynomial in $K$ so the length of $\sigma_{1}^K$'s output is also bounded by a polynomial in $K$, implying $\pi^K$ only has to read a polynomial size prefix of its input in order to output $x_1$. On the other hand, if the input is not of the form ${\Chev{x_1,x_2}}$ for ${x_1}$ sufficiently short to be in ${\Supp \sigma_{1\bullet}^{K}}$, ${\pi}$ may output a string too long to be in ${\Supp \sigma_{1\bullet}^{K}}$.

Construct $\bar{P}: \Words \Scheme \Rats$ s.t. $\R_{\bar{P}}=\R_{P_2}$ and for any $x_1 \in \Supp \sigma_{1\bullet}^K$, $x_2 \in \Words$ and\\ $z \in \Bool^{\R_{P_2}(K)}$, $\bar{P}^K(\Chev{x_1,x_2},z)=P_2^K(x_2,z)$. This is possible for the same reason as above: $\bar{P}$ skips the polynomial size prefix corresponding to $x_1$ and then executes a simulation of running $P_2$ on $x_2$, even if $x_2$ is too long to read in full. By Proposition~\ref{prp:thm__mult__op}, $\bar{P}$ is an $\ESG$-optimal estimator for $(\Dist,\bar{f}_2)$. 

We apply Theorem~\ref{thm:mult} where $\bar{f}_1$, $\bar{f}_2$ play the roles of $f_1$, $f_2$ and $(\Dist_1, f_1)$ plays the role of $(\mathcal{E},g)$: condition~\ref{con:thm__mult__dist} holds due to Proposition~\ref{prp:thm__mult__cond1}, condition~\ref{con:thm__mult__fun} holds due to Proposition~\ref{prp:thm__mult__cond2} and condition~\ref{con:thm__mult__smp} holds due to Proposition~\ref{prp:thm__mult__cond3}. This gives us $P$, an optimal polynomial-time estimator for $(\Dist, f)$ s.t. ${\R_P=\R_{P_1}+\R_{P_2}}$ and for any ${z_1 \in \Bool^{\R_{P_1}(K)}}$ and $z_2 \in \Bool^{\R_{P_1}(K)}$ 

\[P^K(x, z_1 z_2) = P_1^K(\pi^K(x),z_1) \bar{P}^K(x,z_2)\]

In particular, for any ${x_1 \in \Supp \sigma_{1\bullet}^K}$ and $x_2 \in \Words$

\[P^K(\Chev{x_1,x_2}, z_1 z_2)=P_1^K(x_1,z_2)P_2^K(x_2,z_2)\]
\end{proof}

\section{Reductions and Completeness}
\label{sec:reductions}

In this section we study notions of Karp reduction between distributional estimation problems such that the pull-back of an optimal polynomial-time estimator is an optimal polynomial-time estimator. It is also interesting to study Cook reductions but we avoid it in the present work.

First, we demonstrate that the notion of Karp reduction used in average-case complexity theory is insufficiently strong for our purpose.%, \textcolor{red}{This is done by considering a one-way function $\pi$ such that $f(x)=g(\pi(x))$. However, because it is much easier to compute $f(x)$ given $x$ than $g(\pi(x))$ given $\pi(x)$, an optimal estimator for $g$ will be substantially less accurate than an optimal estimator for $f$.} 

Consider the setting of Corollary~\ref{crl:one_way}. Denote $\Dist^k:=\Un^{2k}$ and define ${\chi: \Supp \Dist \rightarrow \Bool}$ s.t. for any $x,y \in \Bool^k$, $\chi(xy)$ = $x \cdot y$. Construct $\pi_f: \Words \Scheme \Words$ s.t. for any\\ $x,y \in \Bool^k$, ${\pi_f^k(xy) = \Chev{f(x),y}}$. $\pi_f$ can be regarded as a Karp reduction of $(\Dist, \chi)$ to $(\Dist_{(f)},\chi_f)$ since for any ${z \in \Supp \Dist^k}$ we have $\chi_f(\pi_f^k(z))=\chi(z)$ and $(\pi_f)_*\Dist=\Dist_{(f)}$\footnote{This is a much stronger condition than what is needed for a reduction to preserve average-case complexity. See \cite{Bogdanov_2006} for details.}. However, the pullback of $P$ is \emph{not} an $\Fall_{\text{neg}}(\Gamma)$-optimal estimator for $(\Dist,\chi)$ since its error is $\E_{z \sim \Dist^k}[(\frac{1}{2}-\chi(z))^2]=\frac{1}{4}$ whereas we can construct $Q: \Words \Scheme \Rats$ s.t. for any $z \in \Supp \Dist^k$, $Q^k(z)=\chi(z)$ and therefore $\E_{z \sim \Dist^k}[(Q^k(z)-\chi(z))^2]=0$.

We will describe several types of reductions that preserve optimal polynomial-time estimators. After that, we will characterize reductions that can be constructed by composing those types and prove a completeness theorem.

\subsection{Strict Pseudo-Invertible Reductions}

\begin{samepage}
\begin{definition}
\label{def:psp_reduce}

Consider $(\Dist,f)$, $(\mathcal{E},g)$ distributional estimation problems and ${\pi: \Words \Scheme \Words}$. $\pi$ is called a \emph{precise strict pseudo-invertible $\EG$-reduction of $(\Dist,f)$ to $(\mathcal{E},g)$} when

\begin{enumerate}[(i)]

\item\label{con:def__psp_reduce__dist} $\pi_*^K\Dist^{K} \equiv \mathcal{E}^{K} \pmod \Fall$

\item\label{con:def__psp_reduce__fun} Denote ${\bar{g}: \Words \rightarrow \Reals}$ the extension of $g$ by 0. We require

\[\E_{(x,z) \sim \Dist^{K} \times \Un_\pi^{K}}[\Abs{f(x)-\bar{g}(\pi^{K}(x,z))}] \equiv 0 \pmod \Fall\]

\item\label{con:def__psp_reduce__smp} $\Dist$ is polynomial-time $\EMG$-samplable relative to $\pi$.

\end{enumerate}

\end{definition}
\end{samepage}

Note that condition~\ref{con:def__psp_reduce__smp} is violated in the one-way function example above, and in particular it ensures that the problem doesn't become significantly more difficult after applying $\pi$.

Also, notice the similarity of condition ii to a randomized Karp reduction (page 189 in \cite{Goldreich_2008}). Reexpressing that definition in our terminology, it is $\forall x:\E_{z\sim\Un_\pi^{K}}[\Abs{f(x)-g(\pi^{K}(x,z))}]\le\mu(K)$, where $\mu$ is a negligible function.

Precise strict pseudo-invertible $\EG$-reductions preserve $\ESG$-optimal estimators as a simple corollary of Theorem~\ref{thm:mult}:

\begin{samepage}
\begin{corollary}
\label{crl:psp_reduce_sharp}

Consider $(\Dist,f)$, $(\mathcal{E},g)$ distributional estimation problems and $\pi$ a precise strict pseudo-invertible $\EG$-reduction of $(\Dist, f)$ to $(\mathcal{E}, g)$. Suppose $P$ is an $\ESG$-optimal estimator for $(\mathcal{E}, g)$. Then, $P \circ \pi$ is an $\ESG$-optimal estimator for $(\Dist, f)$.

\end{corollary}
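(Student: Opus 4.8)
The plan is to obtain this as the special case of Theorem~\ref{thm:mult} in which the second factor is the constant function. Take $f_1 := f$, let $f_2 : \Supp \Dist \rightarrow \Reals$ be the constant function $f_2 \equiv 1$, and keep $(\mathcal{E},g)$ and $\pi$ as given; then $f_1 f_2 = f$. For the two estimators fed into Theorem~\ref{thm:mult} I would use $P_1 := P$ (which is $\ESG$-optimal for $(\mathcal{E},g)$ by hypothesis) and $P_2$ the constant polynomial-time $\Gamma$-scheme $P_2 \equiv 1$, which is an $\ESG$-optimal estimator for $(\Dist,1)$ for the trivial reason that $(P_2^K - 1)S^K \equiv 0$ for every $S$, so the left-hand side of \ref{eqn:op_sharp} vanishes identically (and $0 \in \Fall$ by Proposition~\ref{prp:err_spc_zero}).

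The core of the proof is the verification of the three hypotheses of Theorem~\ref{thm:mult}. Conditions~\ref{con:thm__mult__dist} and \ref{con:thm__mult__fun} coincide verbatim with conditions~\ref{con:def__psp_reduce__dist} and \ref{con:def__psp_reduce__fun} of Definition~\ref{def:psp_reduce} once $f_1$ is identified with $f$, so there is nothing to do there. The only hypothesis requiring work — and the step I expect to be the main (if modest) obstacle — is condition~\ref{con:thm__mult__smp}: that $(\Dist,1)$ is polynomial-time $\EMG$-samplable relative to $\pi$. Here I would start from a polynomial-time $\MGrow$-scheme $\sigma_0 : \Words \MScheme \Words$ sampling $\Dist$ relative to $\pi$, which exists by condition~\ref{con:def__psp_reduce__smp} of Definition~\ref{def:psp_reduce}, and let $\sigma : \Words \MScheme \Words \times \Rats$ be the scheme that runs $\sigma_0$ and appends the constant $1$ to its output, so that the first component of $\sigma$ is again $\sigma_0$. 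Property~(i) of Definition~\ref{def:smp_prob_rel} holds by construction, and for property~(ii) the induced function satisfies $f_\sigma^K(x,y) = 1$ when $x \in X_{\sigma,y}^K = \Supp \sigma_{0y}^K$ and $f_\sigma^K(x,y) = 0$ otherwise, so $\E_{(x,y)\sim \Dist^K \ltimes \pi^K}[\,\lvert f_\sigma^K(x,y) - 1 \rvert\,] = \Prb_{(x,y)\sim \Dist^K \ltimes \pi^K}[\,x \notin \Supp \sigma_{0y}^K\,]$. Disintegrating over $y$ and using $\Dtv(\mu,\nu) \geq \mu(\Words \setminus \Supp \nu)$ with $\mu = \Dist^K \mid (\pi^K)^{-1}(y)$ and $\nu = \sigma_{0y}^K$, this probability is at most $\E_{y \sim \pi_*^K\Dist^K}[\Dtv(\Dist^K \mid (\pi^K)^{-1}(y), \sigma_{0y}^K)]$, which lies in $\Fall$ by the defining property of $\sigma_0$ as a sampler of $\Dist$ relative to $\pi$; closure of fall spaces under domination then places the error function $\varepsilon(K)$ of Definition~\ref{def:smp_prob_rel} in $\Fall$. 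Since $\sup \lvert \sigma_1 \rvert = 1 < \infty$ as well, $\sigma$ is a polynomial-time $\EMG$-sampler of $(\Dist,1)$ relative to $\pi$.

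With all hypotheses in place, Theorem~\ref{thm:mult} produces a scheme $P'$ with $P'^K(x, z_1 z_2) = P_\pi^K(x, z_1)\, P_2^K(x, z_2)$, where $P_\pi = P_1 \circ \pi = P \circ \pi$, and asserts that $P'$ is an $\ESG$-optimal estimator for $(\Dist, f_1 f_2) = (\Dist, f)$. Because $P_2 \equiv 1$ we have $P'^K(x, z_1 z_2) = P_\pi^K(x, z_1)$, so for every $K$ and $x$ the random variable $P'^K(x)$ is distributed exactly as $(P \circ \pi)^K(x)$; since the $\ESG$-optimality condition of Definition~\ref{def:obe_sharp} depends on an estimator only through the Markov kernel $x \mapsto P^K(x)$, it follows that $P \circ \pi$ is itself an $\ESG$-optimal estimator for $(\Dist, f)$, which is what was claimed.
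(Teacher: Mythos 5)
Your proposal is correct and follows exactly the paper's own route: the paper proves the corollary by applying Theorem~\ref{thm:mult} with $f_1 = f$, $f_2 \equiv 1$, $P_2 \equiv 1$, relying on the observation that $(\Dist,1)$ is samplable relative to $\pi$ iff $\Dist$ is. Your explicit verification of condition~\ref{con:thm__mult__smp} (appending the constant $1$ to the sampler and bounding the error by the expected total variation distance) just spells out what the paper treats as a trivial observation, and your final identification of the product scheme with $P \circ \pi$ via its induced kernel is sound.
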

\end{samepage}

\begin{proof}

Follows directly from Theorem~\ref{thm:mult} for $f_1 = f$, ${f_2 \equiv 1}$, $P_2 \equiv 1$. This relies on the trivial observation that $(\Dist, 1)$ is samplable relative to $\pi$ iff $\Dist$ is samplable relative to $\pi$.
\end{proof}

$\EG$-optimal estimators are also preserved.

\begin{samepage}
\begin{theorem}
\label{thm:psp_reduce}

Consider $(\Dist,f)$, $(\mathcal{E},g)$ distributional estimation problems and $\pi$ a precise strict pseudo-invertible $\EG$-reduction of $(\Dist, f)$ to $(\mathcal{E}, g)$. Suppose $P$ is an $\EG$-optimal estimator for $(\mathcal{E}, g)$. Then, $P \circ \pi$ is an $\EG$-optimal estimator for $(\Dist, f)$.

\end{theorem}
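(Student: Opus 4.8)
The plan is to adapt the proof of Theorem~\ref{thm:mult}, working directly with squared errors rather than with the bilinear form. Fix $Q: \Words \Scheme \Rats$; since clipping the range of $Q$ to $[-M,M]$ for a rational $M \geq \sup \Abs{f}$ does not increase $\E_{\Dist^{K} \times \Un_Q^{K}}[(Q^{K} - f)^2]$, we may assume $Q$ has bounded range. Write $\bar g$ for the extension of $g$ by $0$ and let $\sigma: \Words \MScheme \Words$ be a polynomial-time $\EMG$-sampler of $\Dist$ relative to $\pi$, which exists by condition~\ref{con:def__psp_reduce__smp}. The preliminary observation I will use repeatedly is that inside a squared error of the shape $\E[(R - f(x))^2]$ (with $R$ a bounded scheme) one may replace $f(x)$ by $\bar g(\pi^{K}(x))$ at a cost lying in $\Fall$: the difference of the two squared errors equals $\E[(\bar g(\pi^{K}(x)) - f(x))(2R - f(x) - \bar g(\pi^{K}(x)))]$, whose absolute value is at most a constant times $\E_{\Dist^{K} \times \Un_\pi^{K}}[\Abs{f(x) - \bar g(\pi^{K}(x))}]$, and the latter is in $\Fall$ by condition~\ref{con:def__psp_reduce__fun}.

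Using this replacement together with condition~\ref{con:def__psp_reduce__dist} and Proposition~\ref{prp:prob_cong_ev} (to pass from the law $\pi_*^{K}\Dist^{K}$ of $\pi^{K}(x)$ to $\mathcal{E}^{K}$), I would first show
\[\E_{\Dist^{K} \times \Un_{P \circ \pi}^{K}}[((P\circ\pi)^{K} - f)^2] \equiv \E_{\mathcal{E}^{K} \times \Un_P^{K}}[(P^{K} - g)^2] \pmod \Fall\]
Setting $\tilde{Q} := Q \circ \sigma : \Words \MScheme \Rats$ and running the same replacement, then applying Proposition~\ref{prp:smp} to the bounded family $h^{K}(x,y) := \E_{\Un_Q^{K}}[(Q^{K}(x) - \bar g(y))^2]$ (which converts $\E_{\pi_*^{K}\Dist^{K} \times \UM_\sigma^{K}}[h^{K}(\sigma^{K}(y,z),y)]$ into $\E_{\Dist^{K} \ltimes \pi^{K}}[h^{K}(x,y)]$), and finally using conditions~\ref{con:def__psp_reduce__dist} and~\ref{con:def__psp_reduce__fun} once more, I would show
\[\E_{\mathcal{E}^{K} \times \UM_{\tilde{Q}}^{K}}[(\tilde{Q}^{K} - g)^2] \equiv \E_{\Dist^{K} \times \Un_Q^{K}}[(Q^{K} - f)^2] \pmod \Fall\]

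The remaining difficulty is that $\tilde{Q}$ is only a polynomial-time $\MGrow$-scheme, while $P$ is known to be optimal only against polynomial-time $\Gamma$-schemes, so I cannot substitute $\tilde{Q}$ into the optimality of $P$ directly. To bridge this I would derandomize the sampler's random advice: for each $K$ choose $w_K \in \Supp \M_\sigma^{K}$ minimizing the error of $\tilde{Q}$ for $(\mathcal{E},g)$ with its random advice frozen to $w_K$ — the minimum is attained because $\Supp \M_\sigma^{K} \subseteq \WordsLen{l(K)}$ is finite — and let $\hat{Q}: \Words \Scheme \Rats$ be $\tilde{Q}$ with advice hard-wired to $w_K$. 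Then $\hat{Q}$ is a genuine polynomial-time $\Gamma$-scheme: $\Abs{\A_{\hat{Q}}(K)} \in \GrowA$ because $l \in \GrowA$ and $\GrowA$ is closed under addition, and $\R_{\hat{Q}}(K) \in \GrowR$ because the number of random bits of $\sigma$ is bounded by an element of $\GrowR$ uniformly over $\Supp \M_\sigma^{K}$. Averaging over $\M_\sigma^{K}$ shows that the error of $\hat{Q}$ for $(\mathcal{E},g)$ is, pointwise in $K$, at most that of $\tilde{Q}$. Hence $\EG$-optimality of $P$ applied to $\hat{Q}$ gives $\E_{\mathcal{E}^{K} \times \Un_P^{K}}[(P^{K} - g)^2] \leq \E_{\mathcal{E}^{K} \times \UM_{\tilde{Q}}^{K}}[(\tilde{Q}^{K} - g)^2] \pmod \Fall$, and chaining this with the two congruences above yields $\E_{\Dist^{K} \times \Un_{P\circ\pi}^{K}}[((P\circ\pi)^{K} - f)^2] \leq \E_{\Dist^{K} \times \Un_Q^{K}}[(Q^{K} - f)^2] \pmod \Fall$, which is the claim.

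I expect the passage from $\MGrow$-schemes to $\Gamma$-schemes to be the main obstacle — not because it is conceptually deep, but because one must verify that freezing the sampler's random advice to its optimal value stays within the resource budget $(\GrowR,\GrowA)$ and does not increase the error; this is the non-$\sharp$ counterpart of the device used in Proposition~\ref{prp:mixed_ort}, and it is precisely what lets the argument bypass Theorem~\ref{thm:ort} and thereby give a slightly stronger bound. The two congruences are essentially bookkeeping once Propositions~\ref{prp:prob_cong_ev} and~\ref{prp:smp} are in hand; the only genuine care needed there is to keep track of which sources of randomness — the coins of $\pi$, those of $P$ or $Q$, and those of $\sigma$ — each expectation ranges over.
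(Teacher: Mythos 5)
Your proposal is correct and takes essentially the same route as the paper: the bridging device you construct by freezing the sampler's random advice at its error-minimizing value is precisely the paper's Proposition~\ref{prp:mixed_opt}, applied to the competitor $Q \circ \sigma$, and the rest of your argument reproduces the paper's chain of applications of Proposition~\ref{prp:prob_cong_ev} (via condition~\ref{con:def__psp_reduce__dist}), Proposition~\ref{prp:smp}, and Proposition~\ref{prp:sq_diff_cong} (via condition~\ref{con:def__psp_reduce__fun}). The only difference is packaging — you organize the bookkeeping as two congruences plus one inequality, while the paper chains one-sided inequalities — which is immaterial.
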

\end{samepage}

\begin{samepage}
\begin{proposition}
\label{prp:mixed_opt}

Consider ${(\Dist,f)}$ a distributional estimation problem and ${P}$ an ${\EG}$-optimal estimator for ${(\Dist,f)}$. Then, for any $Q: \Words \MScheme \Rats$ bounded

\begin{equation}
\label{eqn:prp__mixed_ort}
\E_{(x,y) \sim \Dist^{K} \times \Un_P^K}[(P^K(x,y) - f(x))^2] \leq \E_{(x,y) \sim \Dist^{K} \times \UM_Q^K}[(Q^K(x,y)-f(x))^2] \pmod \Fall
\end{equation}

\end{proposition}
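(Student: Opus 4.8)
The plan is to derive the inequality from the defining optimality condition \ref{eqn:op} of $P$, applied to an ordinary polynomial-time $\Gamma$-scheme obtained from $Q$ by freezing, for each $K$, the best advice string in the support of its random advice. First I would unfold the right-hand side using the definition of $\UM_Q^K$: a draw $(x,w) \sim \Dist^{K} \times \UM_Q^K$ with $w = (u,z)$ amounts to drawing $z \sim \M_Q^K$ and then $u$ uniformly from $\WordsLen{\R_Q^K(z)}$, so
\[
\E_{\Dist^{K} \times \UM_Q^K}[(Q^K - f)^2] = \E_{z \sim \M_Q^K}\big[\E_{(x,u) \sim \Dist^{K} \times \Un^{\R_Q^K(z)}}[(Q^K(x,u,z) - f(x))^2]\big].
\]
Since $Q$ is a polynomial-time $\MGrow$-scheme, $\Supp \M_Q^K$ is finite (it is contained in $\WordsLen{l(K)}$ for some $l \in \GrowA$), so I can pick $z^\ast(K) \in \Supp \M_Q^K$ minimizing the inner conditional error; then the conditional error at $z^\ast(K)$ is at most the average over $z$, hence at most the right-hand side of the claim.

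Next I would package $z^\ast$ into a polynomial-time $\Gamma$-scheme $\bar Q : \Words \Scheme \Rats$ by reusing the machine and the $\R$-machine of $Q$, setting $\A_{\bar Q}(K) := z^\ast(K)$, so that $\R_{\bar Q}(K) = \R_Q^K(z^\ast(K))$ and $\bar Q^K(x,u) = Q^K(x,u,z^\ast(K))$. The validity conditions of a $\Gamma$-scheme are inherited directly from the $\MGrow$-scheme structure of $Q$: the $\GammaPoly^n$ runtime bounds on the two machines carry over verbatim; condition (iii) of the $\MGrow$-scheme supplies $r \in \GrowR$ with $\R_Q^K(z) \leq r(K)$ for all $z \in \Supp \M_Q^K$, whence $\R_{\bar Q} \leq r$ and so $\R_{\bar Q} \in \GrowR$; and condition (iv) supplies $l \in \GrowA$ with $\Abs{\A_{\bar Q}(K)} = \Abs{z^\ast(K)} \leq l(K)$, whence $\Abs{\A_{\bar Q}} \in \GrowA$. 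Note that the advice function of a $\Gamma$-scheme is allowed to be an arbitrary function of $K$, so the possible uncomputability of $z^\ast$ is irrelevant.

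Finally I would chain the two inequalities. Since $\bar Q$ is a single fixed polynomial-time $\Gamma$-scheme, the $\EG$-optimality of $P$ (applied via \ref{eqn:op}) yields one $\varepsilon \in \Fall$, uniform in $K$, with $\E_{\Dist^{K} \times \Un_P^K}[(P^K - f)^2] \leq \E_{\Dist^{K} \times \Un_{\bar Q}^K}[(\bar Q^K - f)^2] + \varepsilon(K)$; and $\E_{\Dist^{K} \times \Un_{\bar Q}^K}[(\bar Q^K - f)^2]$ is exactly the conditional error at $z^\ast(K)$, which was bounded above by the right-hand side of the claim. This gives the result. I do not expect a real obstacle; the only points needing a moment of care are that the $\GrowR$- and $\GrowA$-bounds for $\bar Q$ come for free from the $\MGrow$-scheme data of $Q$ rather than from any new argument, and that the $K$-dependence of $z^\ast$ does not spoil the uniformity of $\varepsilon$ — which it does not, precisely because $\bar Q$ is one scheme, not a family.
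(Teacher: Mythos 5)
Your proposal is correct and takes essentially the same route as the paper: freeze a single advice string from $\Supp \M_Q^K$, package it as the advice of an ordinary polynomial-time $\Gamma$-scheme $\bar{Q}$ (whose resource bounds are inherited from the $\MGrow$-scheme data, exactly as you note), and apply \ref{eqn:op} for $P$ against $\bar{Q}$, with the single $\varepsilon \in \Fall$ unaffected by the $K$-dependence of the chosen advice. The one divergence is that the paper's text selects the advice via $\Argmax{}$ of the conditional error, which does not yield the stated inequality (the maximum dominates the average); your $\Argmin{}$ choice is the one that makes the chain of inequalities close, so your version is the correct reading of the intended argument.
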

\end{samepage}

\begin{proof}

For any ${K \in \Nats^n}$, choose 

\[w^K \in \Argmax{w \in \Supp \M_Q^K} \E_{(x,z) \sim \Dist^{K} \times \Un^{\R_Q^K(w)}}[(Q^K(x,z,w)-f(x))^2]\]

Construct ${\bar{Q}: \Words \Scheme \Rats}$ s.t.

\begin{align*}
\R_{\bar{Q}}(K) &= \R_Q^K(w^K) \\
\bar{Q}^K(x,z) &= \bar{Q}^K(x,z,w)
\end{align*}

Equation \ref{eqn:op} for ${\bar{Q}}$ implies \ref{eqn:prp__mixed_ort}.
\end{proof}

\begin{samepage}
\begin{proposition}
\label{prp:sq_diff_cong}

Consider $\{F^K\}_{K \in \Nats^n}$, $\{G_1^K\}_{K \in \Nats^n}$, $\{G_2^K\}_{K \in \Nats^n}$ uniformly bounded families of random variables and suppose ${\E[\Abs{G_1^K - G_2^K}] \in \Fall}$. Then

\begin{equation}
%\label{eqn:tbd}
\E[(F^K + G_1^K)^2] \equiv \E[(F^K + G_2^K)^2] \pmod \Fall
\end{equation}

\end{proposition}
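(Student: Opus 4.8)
The plan is to expand the difference of the two expectations algebraically and then bound it using the hypothesis together with uniform boundedness. First I would write, using $(a+b)^2-(a+c)^2 = 2a(b-c)+(b-c)(b+c)$ pointwise,
\[\E[(F^K+G_1^K)^2] - \E[(F^K+G_2^K)^2] = \E[(G_1^K - G_2^K)(G_1^K+G_2^K)] + 2\E[F^K(G_1^K-G_2^K)].\]
Taking absolute values and applying the triangle inequality for expectations gives
\[\Abs{\E[(F^K+G_1^K)^2] - \E[(F^K+G_2^K)^2]} \leq \E[\Abs{G_1^K-G_2^K}\cdot\Abs{G_1^K+G_2^K}] + 2\E[\Abs{F^K}\cdot\Abs{G_1^K-G_2^K}].\]

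Next I would invoke uniform boundedness: let $M \in \Reals^{\geq 0}$ be a common bound for $\sup\Abs{F^K}$, $\sup\Abs{G_1^K}$ and $\sup\Abs{G_2^K}$, uniformly over all $K \in \Nats^n$. Then $\Abs{G_1^K+G_2^K}\leq 2M$ and $\Abs{F^K}\leq M$, so the right-hand side above is bounded by $4M\,\E[\Abs{G_1^K-G_2^K}]$. By hypothesis the function $K \mapsto \E[\Abs{G_1^K-G_2^K}]$ lies in $\Fall$, and since fall spaces are closed under multiplication by a non-negative real constant (shown earlier in the excerpt), $4M\,\E[\Abs{G_1^K-G_2^K}] \in \Fall$. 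Applying condition~\ref{con:def__fall__ineq} of Definition~\ref{def:fall}, the function $K \mapsto \Abs{\E[(F^K+G_1^K)^2] - \E[(F^K+G_2^K)^2]}$ lies in $\Fall$, which is precisely the assertion $\E[(F^K + G_1^K)^2] \equiv \E[(F^K + G_2^K)^2] \pmod \Fall$.

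There is essentially no obstacle here; this is a routine manipulation. The only point that requires a moment of care is that the boundedness hypothesis is \emph{uniform} in $K$: this is exactly what guarantees the existence of the single constant $M$, so that the dominating function $4M\,\E[\Abs{G_1^K-G_2^K}]$ is a genuine element of the fall space rather than merely pointwise small for each fixed $K$.
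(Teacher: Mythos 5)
Your proof is correct and follows essentially the same route as the paper: both expand the difference of squares into a product of $(G_1^K-G_2^K)$ with a uniformly bounded factor, then bound it by a constant multiple of $\E[\Abs{G_1^K-G_2^K}]$ and use closure of $\Fall$ under scalar multiples and domination. The only cosmetic difference is that the paper groups the identity as $\E[(2F^K+G_1^K+G_2^K)(G_1^K-G_2^K)]$ rather than splitting it into two terms.
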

\end{samepage}

\begin{proof}

\[\E[(F^K + G_1^K)^2] - \E[(F^K + G_2^K)^2] = \E[(2 F^K + G_1^K + G_2^K)(G_1^K - G_2^K)]\]

\[\Abs{\E[(F^K + G_1^K)^2] - \E[(F^K + G_2^K)^2]} \leq (2 \sup F + \sup G_1 + \sup G_2) \E[\Abs{G_1^K - G_2^K}]\]
\end{proof}

\begin{proof}[Proof of Theorem \ref{thm:psp_reduce}]

Let ${\sigma}$ be an ${\EMG}$-sampler of ${\Dist}$ relative to ${\pi}$. Consider any ${Q: \Words \Scheme \Rats}$ bounded. Applying Proposition~\ref{prp:mixed_opt} for ${P}$ and ${Q \circ \sigma}$, we get

\[\E_{\mathcal{E}^{K} \times \Un_P^K}[(P^K-g)^2] \leq \E_{\mathcal{E}^{K} \times \Un_Q^K \times \UM_\sigma^K}[((Q \circ \sigma)^K - g)^2] \pmod \Fall\]

Using condition~\ref{con:def__psp_reduce__dist} of Definition~\ref{def:psp_reduce}

\[\E_{\pi_*^K\Dist^{K} \times \Un_P^K}[(P^K-\bar{g})^2] \leq \E_{\pi_*^K\Dist^{K} \times \Un_Q^K \times \UM_\sigma^K}[((Q \circ \sigma)^K - \bar{g})^2] \pmod \Fall\]

\[\E_{\pi_*^K\Dist^{K} \times \Un_P^K}[(P^K-\bar{g})^2] \leq \E_{\pi_*^K\Dist^{K} \times \UM_\sigma^K}[\E_{\Un_Q^K}[((Q \circ \sigma)^K - \bar{g})^2]] \pmod \Fall\]

The right hand side has the form of the right hand side in \ref{eqn:prp__smp} enabling us to apply Proposition~\ref{prp:smp} and get

\[\E_{\pi_*^K\Dist^{K} \times \Un_P^K}[(P^K-\bar{g})^2] \leq \E_{\Dist^{K} \times \Un_\pi^K}[\E_{\Un_Q^K}[(Q^K - \bar{g} \circ \pi^K)^2]] \pmod \Fall\]

\[\E_{\Dist^{K} \times \Un_\pi^K \times \Un_P^K}[((P \circ \pi)^K-\bar{g} \circ \pi^K)^2] \leq \E_{\Dist^{K} \times \Un_\pi^K \times \Un_Q^K}[(Q^K - \bar{g} \circ \pi^K)^2] \pmod \Fall\]

By Proposition~\ref{prp:sq_diff_cong} and condition~\ref{con:def__psp_reduce__fun} of Definition~\ref{def:psp_reduce}

\[\E_{\Dist^{K} \times \Un_\pi^K \times \Un_P^K}[((P \circ \pi)^K-f)^2] \leq \E_{\Dist^{K} \times \Un_Q^K}[(Q^K - f)^2] \pmod \Fall\]
\end{proof}

We now consider a more general type of reduction which only preserves the function on average (the only difference is in condition~\ref{con:def__sp_reduce__fun}):

\begin{samepage}
\begin{definition}
\label{def:sp_reduce}

Consider $(\Dist,f)$, $(\mathcal{E},g)$ distributional estimation problems and ${\pi: \Words \Scheme \Words}$. $\pi$ is called a \emph{strict pseudo-invertible $\EG$-reduction of $(\Dist,f)$ to $(\mathcal{E},g)$} when

\begin{enumerate}[(i)]

\item\label{con:def__sp_reduce__dist} $\pi_*^K\Dist^{K} \equiv \mathcal{E}^{K} \pmod \Fall$

\item\label{con:def__sp_reduce__fun} Denote ${\bar{g}: \Words \rightarrow \Reals}$ the extension of $g$ by 0. We require

\[\E_{(x,z) \sim \Dist^{K}}[\Abs{f(x)-\E_{\Un_\pi^{K}}[g(\pi^{K}(x,z))]}] \equiv 0 \pmod \Fall\]

\item\label{con:def__sp_reduce__smp} $\Dist$ is polynomial-time $\EMG$-samplable relative to $\pi$.

\end{enumerate}

\end{definition}
\end{samepage}

\begin{samepage}
\begin{theorem}
\label{thm:sp_reduce_sharp}

Suppose $\gamma \in \GammaPoly^n$ is s.t. $\gamma^{-\frac{1}{2}} \in \Fall$. Consider $(\Dist,f)$, $(\mathcal{E},g)$ distributional estimation problems, $\pi$ a strict pseudo-invertible $\EG$-reduction of $(\Dist, f)$ to $(\mathcal{E}, g)$ and $P_g$ an $\ESG$-optimal estimator for $(\mathcal{E}, g)$. Assume $\gamma (\R_P + \R_\pi) \in \GrowR$. Construct ${P_f}$ s.t. for any ${\{z_i \in \BoolR{\pi}\}_{i \in [\gamma(K)]}}$ and ${\{w_i \in \BoolR{P_g}\}_{i \in [\gamma(K)]}}$

\begin{align}
\label{eqn:thm__sp_reduce__rpf}\R_{P_f}(K) &= \gamma(K) (\R_{P_g}(K) + \R_\pi(K)) \\
\label{eqn:thm__sp_reduce__pf}P_f^K\left(x, \prod_{i \in [\gamma(K)]} w_i z_i\right) &= \frac{1}{\gamma(K)}\sum_{i \in [\gamma(K)]} P_g^K(\pi^K(x,z_i),w_i)
\end{align}

Then, $P_f$ is an $\ESG$-optimal estimator for $(\Dist, f)$.

\end{theorem}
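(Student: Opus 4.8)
The plan is to check the orthogonality condition of Definition~\ref{def:obe_sharp} for $P_f$ directly. As a preliminary, $P_f$ built from \ref{eqn:thm__sp_reduce__rpf}--\ref{eqn:thm__sp_reduce__pf} is a legitimate polynomial-time $\Gamma$-scheme: its runtime is $\gamma(K)$ evaluations of $\pi^K$ and $P_g^K$ plus bounded arithmetic, hence in $\GammaPoly^n$ since $\gamma \in \GammaPoly^n$; its random-bit count $\gamma(K)(\R_{P_g}(K)+\R_\pi(K))$ lies in $\GrowR$ by hypothesis; its advice $\Chev{\A_{P_g}(K),\A_\pi(K)}$ has length in $\GrowA$; and its range is bounded because $P_g$'s is.

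Fix a bounded $S: \Words \times \Rats \Scheme \Rats$ and write $g_i := P_g^K(\pi^K(x,z_i),w_i)$ for $i \in [\gamma(K)]$, so that $P_f^K$ equals $\hat g := \gamma(K)^{-1}\sum_i g_i$; conditioned on $x$, the $g_i$ are i.i.d.\ and bounded by $M := \sup\Abs{P_g}$. Put $\mu^K(x) := \E[g_1 \mid x]$. Taking all expectations over $\Dist^K$ and the coins of $P_f$ and $S$, I would split
\[
\E\big[(P_f^K - f(x))\,S^K(x,P_f^K,z)\big] = \underbrace{\E\big[(\hat g - \mu^K(x))\,S^K(x,\hat g,z)\big]}_{A} + \underbrace{\E\big[(\mu^K(x) - f(x))\,S^K(x,\hat g,z)\big]}_{B}
\]
and show $\Abs{A},\Abs{B} \in \Fall$. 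For $A$, conditioning on $x$ and using that the $g_i$ are i.i.d.\ given $x$ gives $\E[\Abs{\hat g - \mu^K(x)} \mid x] \le \sqrt{\Var(\hat g \mid x)} = \sqrt{\Var(g_1 \mid x)/\gamma(K)} \le M\gamma(K)^{-1/2}$, whence $\Abs{A} \le M(\sup\Abs{S})\gamma(K)^{-\frac12} \in \Fall$ because $\gamma^{-\frac12}\in\Fall$. This is exactly where averaging $\gamma(K)$ independent copies, and the hypothesis $\gamma^{-\frac12}\in\Fall$, enter.

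For $B$, since $\mu^K(x)$ is a deterministic function of $x$ I would introduce fresh independent coins $(z',w')$ and write $\mu^K(x) - f(x) = \E_{(z',w')}[P_g^K(\pi^K(x,z'),w') - f(x)]$, so that $B = \E[(P_g^K(\pi^K(x,z'),w') - f(x))\,S^K(x,\hat g,z)]$ with $S^K(x,\hat g,z)$ independent of $(z',w')$. Letting $\bar g$ be the extension of $g$ by $0$ and splitting $P_g^K(\pi^K(x,z'),w') - f(x) = \big(P_g^K(\pi^K(x,z'),w') - \bar g(\pi^K(x,z'))\big) + \big(\bar g(\pi^K(x,z')) - f(x)\big)$: the second summand, after integrating over $z'$ first, contributes at most $(\sup\Abs{S})\,\E_{x\sim\Dist^K}[\Abs{f(x)-\E_{\Un_\pi^K}[\bar g(\pi^K(x,z'))]}]$, which is in $\Fall$ by condition~\ref{con:def__sp_reduce__fun} of Definition~\ref{def:sp_reduce}. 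For the first summand I would regard $y := \pi^K(x,z')$ as the instance of $(\mathcal{E},g)$, use the polynomial-time $\EMG$-sampler $\sigma$ of $\Dist$ relative to $\pi$ (condition~\ref{con:def__sp_reduce__smp}) to fold the remaining operations — drawing $x$ from $\sigma_y^K$, drawing the $(z_i,w_i)$ and $z$, forming $\hat g$, outputting $S^K(x,\hat g,z)$ — into a bounded polynomial-time $\MGrow$-scheme $\tilde S:\Words\times\Rats\MScheme\Rats$ (admissible since $\gamma(\R_{P_g}+\R_\pi)\in\GrowR$ and the remaining resources already lie in the right growth spaces), and invoke Proposition~\ref{prp:mixed_ort} for $P_g$ and $\tilde S$ to get orthogonality over $\mathcal{E}^K$. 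Condition~\ref{con:def__sp_reduce__dist} with Propositions~\ref{prp:prob_cong_ev} and \ref{prp:prob_cong_dir} then replaces $\mathcal{E}^K$ by $\pi_*^K\Dist^K$ at a cost in $\Fall$, and Proposition~\ref{prp:smp} identifies the resulting expectation over $\pi_*^K\Dist^K$ and $\sigma$ with the original one over $\Dist^K\ltimes\pi^K$; hence the first summand is in $\Fall$, so $\Abs{B}\in\Fall$ and $P_f$ is $\ESG$-optimal for $(\Dist,f)$.

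The admissibility check and term $A$ are routine — for $A$ the crude variance estimate, rather than any cancellation against $S$, already does the job. The main obstacle is term $B$: shuttling correctly between $\Dist^K$, the conditional laws $\Dist^K\mid(\pi^K)^{-1}(y)$ emulated by $\sigma$, and $\mathcal{E}^K$, and in particular arranging $\tilde S$ so that it sees the instance $y$ only through the channel Definition~\ref{def:obe_sharp} allows (namely, it simply ignores its $\Rats$-input) while still carrying all the internal randomness needed to rebuild $\hat g$.
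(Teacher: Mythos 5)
Your proposal is correct, but it organizes the argument around a different decomposition than the paper's proof, so a comparison is worth recording. The paper first uses condition~\ref{con:def__sp_reduce__fun} to replace $f(x)$ by $\E_{\Un_\pi^K}[\bar g(\pi^K(x,z))]$, then invokes Proposition~\ref{prp:ev_equiv_mean} (this is where $\gamma^{-\frac{1}{2}}\in\Fall$ enters) to replace that mean by the empirical average of $\bar g\circ\pi$ over the \emph{same} coins $z_i$ used in $P_f$; the resulting $\gamma(K)$ difference terms are identically distributed, so everything collapses to a single copy $P_g^K(\pi^K(x,z_0),w_0)-\bar g(\pi^K(x,z_0))$, tested against $S(x,P_f^K)$ --- a test that still contains the tested value inside $P_f^K$, which is precisely why the $\Rats$-argument of Definition~\ref{def:obe_sharp} is needed there; the proof then finishes with Proposition~\ref{prp:smp}, condition~\ref{con:def__sp_reduce__dist}, and the $\ESG$-optimality of $P_g$ against a compound scheme carrying $\sigma$'s random advice. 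You instead center $P_f$ at the conditional mean $\mu^K(x)$ of the \emph{estimator} $P_g\circ\pi$, pay a Chebyshev-type term $A$ of order $\gamma^{-\frac{1}{2}}$ (so the hypothesis on $\gamma$ is spent on the fluctuation of $\hat g$ rather than of the empirical mean of $\bar g\circ\pi$), and then decouple the tested copy with fresh coins $(z',w')$, so your final test scheme $\tilde S$ ignores its $\Rats$-input entirely. What this buys is that you never need the symmetry-among-copies step nor the feedback of the tested value into the test; what it costs is the extra variance estimate and the explicit packaging of the conditional expectation as a bounded family $h^K(x,y)$ before applying Proposition~\ref{prp:smp} (which you do, implicitly but correctly). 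Both routes use the same three reduction conditions in the same roles, and both ultimately need orthogonality of $P_g$ against a scheme with random advice --- you cite Proposition~\ref{prp:mixed_ort} explicitly, which the paper's proof uses only implicitly. Your resource accounting for $P_f$ and for $\tilde S$ (time polynomial via $\gamma\in\GammaPoly^n$, randomness $\gamma(\R_{P_g}+\R_\pi)$ plus $\R_\sigma$ and $\R_S$ in $\GrowR$, advice/random advice in $\GrowA$) is the right check and suffices.
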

\end{samepage}

\begin{samepage}
\begin{proposition}
\label{prp:ev_equiv_mean}

Consider $\gamma \in \GammaPoly^n$, $\Dist$ a word ensemble and $\bar{g}: \Words \rightarrow \Reals$ bounded. Then,

\begin{equation}
%\label{eqn:tbd}
\E_{(x,z) \sim \Dist^{K} \times \prod_{i \in [\gamma(K)]} \Un_\pi^K}[\Abs{\E_{z \sim \Un_\pi^{K}}[\bar{g}(\pi^{K}(x,z))]-\frac{1}{\gamma(K)} \sum_{i \in [\gamma(K)]} \bar{g}(\pi^K(x,z_i))}] \leq \frac{\sup \Abs{\bar{g}}}{\gamma(K)^{\frac{1}{2}}}
\end{equation}

\end{proposition}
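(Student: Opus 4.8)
The plan is to reduce the claim to a standard second-moment estimate for an average of i.i.d.\ bounded random variables, applied pointwise in $x$, and then to average over $x \sim \Dist^K$.

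First I would fix $K \in \Nats^n$ and $x \in \Words$ and set $\mu_x := \E_{z \sim \Un_\pi^K}[\bar g(\pi^K(x,z))]$. For this fixed $x$, when $z_0, \dots, z_{\gamma(K)-1}$ are sampled independently from $\Un_\pi^K$, the values $Y_i := \bar g(\pi^K(x,z_i))$ are i.i.d.\ random variables, each bounded in absolute value by $\sup\Abs{\bar g}$, with common mean $\mu_x$. By Jensen's inequality (concavity of $\sqrt{\cdot}$) followed by expansion of the variance of a sum of independent variables,
\[\E\left[\Abs{\mu_x - \frac{1}{\gamma(K)}\sum_{i \in [\gamma(K)]} Y_i}\right] \le \left(\E\left[\left(\mu_x - \frac{1}{\gamma(K)}\sum_{i \in [\gamma(K)]} Y_i\right)^{\!2}\right]\right)^{\!1/2} = \left(\frac{\Var(Y_0)}{\gamma(K)}\right)^{\!1/2}.\]
Since $\Var(Y_0) \le \E[Y_0^2] \le (\sup\Abs{\bar g})^2$, the right-hand side is at most $\sup\Abs{\bar g}/\gamma(K)^{1/2}$.

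The key point is that this bound is uniform in $x$, so it survives taking $\E_{x \sim \Dist^K}$ of both sides; rewriting the inner expectation over the $z_i$ as an expectation over $\prod_{i \in [\gamma(K)]} \Un_\pi^K$ and combining with the outer $\Dist^K$ then yields precisely the asserted inequality.

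I do not expect any real obstacle here: it is an elementary moment computation once one is careful to organize the expectations as ``outer average over $x$, inner average over the $\gamma(K)$ independent draws of $\pi$'s internal randomness.'' The only thing worth a sentence is the degenerate case $\gamma(K) = 0$, where the left side involves the empty product of distributions and the right side is $\sup\Abs{\bar g}/0 = +\infty$, so the inequality is vacuous.
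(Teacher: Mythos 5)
Your proof is correct and follows essentially the same route as the paper: Jensen's inequality pointwise in $x$ to pass from the first moment to the second, then the variance of an average of $\gamma(K)$ i.i.d.\ draws bounded by $(\sup\Abs{\bar g})^2/\gamma(K)$, and finally averaging over $x \sim \Dist^K$. The remark about the degenerate case $\gamma(K)=0$ is a reasonable extra observation but not needed.
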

\end{samepage}

\begin{proof}

Denote $\Un_\gamma^K:=\prod_{i \in [\gamma(K)]} \Un_\pi^K$. Using $\Abs{X}=\sqrt{X^2}$, applying Jensen's inequality to move the square root outside the second expectation, and partially pulling the $\frac{1}{\gamma(K)}$ out,

\begin{align*}
&\E[\Abs{\E[\bar{g}(\pi^{K}(x,z))]-\frac{1}{\gamma(K)} \sum_{i \in [\gamma(K)]} \bar{g}(\pi^K(x,z_i))}] \leq\\ &\frac{1}{\sqrt{\gamma(K)}}\E_{\Dist^{K}}\left[\sqrt{\frac{1}{\gamma(K)}\E_{\Un_\gamma^K}\left[\left(\sum_{i\in[\gamma(K)]}\E_{\Un_\pi^K}[\bar{g}(\pi^{K}(x,z))]- \bar{g}(\pi^K(x,z_i))\right)^2\right]}\right]
\end{align*}

Because the $z_{i}$ are i.i.d, the sum of the variances is the variance of the sum, so

\[\E_{\Un_\gamma^K}\left[\left(\sum_{i\in[\gamma(K)]}\E_{\Un_\pi^K}[\bar{g}(\pi^{K}(x,z))]- \bar{g}(\pi^K(x,z_i))\right)^2\right]=\gamma(K)\Var_{\Un_\pi^K}[\bar{g}(\pi^{K}(x,z))]\]

Substituting this into the previous equation, canceling $\gamma(K)$, and using the fact that $\sqrt{\Var(X)}\le\sup\Abs{X}$, we get

\[\E[\Abs{\E[\bar{g}(\pi^{K}(x,z))]-\frac{1}{\gamma(K)} \sum_{i \in [\gamma(K)]} \bar{g}(\pi^K(x,z_i))}] \leq \frac{\sup\Abs{\overline{g}}}{\gamma(K)^{\frac{1}{2}}}\]
\end{proof}

\begin{proof}[Proof of Theorem \ref{thm:sp_reduce_sharp}]

Consider any $S: \Words \times \Rats \Scheme \Rats$ bounded. Denote ${\Un_{PS}^K:=\Un_{P_f}^K \times \Un_S^K}$. Using condition~\ref{con:def__sp_reduce__fun} of Definition~\ref{def:sp_reduce}

\[\E_{\Dist^{K} \times \Un_{PS}^K}[(P_f^K(x) - f(x))S(x,P_f^K(x))] \equiv \E_{\Dist^{K} \times \Un_{PS}^K}[(P_f^K(x) - \E_{\Un_\pi^{K}}[g(\pi^{K}(x))])S(x,P_f^K(x))] \pmod \Fall\]

Using the construction of $P_f$, the assumption on $\gamma$ and Proposition~\ref{prp:ev_equiv_mean}, we get

\begin{align*}
&\E[(P_f^K - f)S] \equiv\\ 
&\E_{\Dist^{K} \times \Un_{PS}^K}\left[\left(\frac{1}{\gamma(K)}\sum_{i \in [\gamma(K)]} P_g^K(\pi^K(x,z_i),w_i) - \frac{1}{\gamma(K)} \sum_{i \in [\gamma(K)]} \bar{g}(\pi^K(x,z_i))\right)S(x,P_f^K(x))\right] \pmod \Fall
\end{align*}

\[\E[(P_f^K - f)S] \equiv \frac{1}{\gamma(K)} \sum_{i \in [\gamma(K)]} \E_{\Dist^{K} \times \Un_{PS}^K}[(P_g^K(\pi^K(x,z_i),w_i) - \bar{g}(\pi^K(x,z_i)))S(x,P_f^K(x))] \pmod \Fall\]

All the terms in the sum are equal, therefore

\[\E[(P_f^K - f)S] \equiv \E_{\Dist^{K} \times \Un_{PS}^K}[(P_g^K(\pi^K(x,z_0),w_0) - \bar{g}(\pi^K(x,z_0)))S(x,P_f^K(x))] \pmod \Fall\]

Let $\sigma$ be a polynomial-time $\EMG$-sampler of $\Dist$ relative to ${\pi}$. Denote 

\begin{align*}
\Dist_\pi^K &:= \pi_*^K\Dist^{K} \\
\Un_0^K&:=\left(\prod_{i \in [\gamma(K)]} \Un_{P_g}^K\right) \times \left(\prod_{i \in [\gamma(K)] \setminus 0} \Un_{\pi}^K\right) \times \Un_S^K \times \UM_\sigma^K 
\end{align*}

Applying Proposition~\ref{prp:smp} we get
%Denoting ${y_i:=\pi^K(x,z_i)}$

\[\E[(P_f^K - f)S] \equiv\E_{\Dist_\pi^K \times \Un_0^K}\left[(P_g^K - \bar{g})S\left(\sigma^K,\frac{1}{\gamma(K)}(P_g^K+\sum_{i \in [\gamma(K)] \setminus 0} P_g^K(\pi^K(\sigma^K,z_i)))\right)\right] \pmod \Fall\]

Using condition~\ref{con:def__sp_reduce__dist} of Definition~\ref{def:sp_reduce}, we get

\[\E[(P_f^K - f)S] \equiv \E_{\mathcal{E}^{K} \times \Un_0^K}\left[(P_g^K - g)S\left(\sigma^K,\frac{1}{\gamma(K)}(P_g^K+\sum_{i \in [\gamma(K)] \setminus 0} P_g^K(\pi^K(\sigma^K,z_j)))\right)\right] \pmod \Fall\]

$P_g$ is a $\ESG$-optimal estimator for $(\mathcal{E},g)$, therefore

\[\E[(P_f^K - f)S] \equiv 0 \pmod \Fall\]
\end{proof}

\begin{samepage}
Above we showed that strict pseudo-invertible reductions preserve $\ESG$-optimal estimators. We will now see that they preserve $\EG$-optimal estimators as well, as Theorem~\ref{thm:sp_reduce} states.
\begin{theorem}
\label{thm:sp_reduce}

Suppose $\gamma \in \GammaPoly^n$ is s.t. $\gamma^{-\frac{1}{2}} \in \Fall$. Consider $(\Dist,f)$, $(\mathcal{E},g)$ distributional estimation problems, $\pi$ a strict pseudo-invertible $\EG$-reduction of $(\Dist, f)$ to $(\mathcal{E}, g)$ and $P_g$ an $\EG$-optimal estimator for $(\mathcal{E}, g)$. Assume $\R_P + \gamma \R_\pi \in \GrowR$. Construct ${P_f}$  s.t. for any ${\{z_i \in \BoolR{\pi}\}_{i \in [\gamma(K)]}}$ and ${w \in \BoolR{P_g}}$

\begin{align}
\label{eqn:thm__sp_reduce__rpf}\R_{P_f}(K) &= \R_{P_g}(K) + \gamma(K) \R_\pi(K) \\
\label{eqn:thm__sp_reduce__pf}P_f^K\left(x, w \prod_{i \in [\gamma(K)]} z_i\right) &= \frac{1}{\gamma(K)}\sum_{i \in [\gamma(K)]} P_g^K(\pi^K(x,z_i),w)
\end{align}

Then, $P_f$ is an $\EG$-optimal estimator for ${(\Dist,g)}$.

\end{theorem}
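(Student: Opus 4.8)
The plan is to follow the architecture of the proof of Theorem~\ref{thm:psp_reduce}, working throughout with squared errors (so that only Proposition~\ref{prp:mixed_opt}, the $\MGrow$-form of $\EG$-optimality of $P_g$, is invoked), and to absorb the new feature — the $\gamma(K)$-fold Monte-Carlo average inside $P_f$ — by means of Proposition~\ref{prp:ev_equiv_mean}. Fix any $Q:\Words\Scheme\Rats$; we may assume $Q$ bounded, since clipping $Q^K$ into $[\inf f,\sup f]$ only decreases its error, exactly as in the proof of Theorem~\ref{thm:con_ort}. Note first that $P_f$ is a legitimate polynomial-time $\Gamma$-scheme: its running time is polynomial, its advice can be taken to be $\Chev{\A_{P_g}(K),\A_\pi(K)}$, and $\R_{P_f}=\R_{P_g}+\gamma\R_\pi\in\GrowR$ by hypothesis. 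Writing $\bar g$ for the extension of $g$ by $0$ and $h(x):=\E_{z\sim\Un_\pi^K}[\bar g(\pi^K(x,z))]$, I would first rewrite the error of $P_f$: since
\[P_f^K(x,w\textstyle\prod_i z_i)-f(x)=\tfrac{1}{\gamma(K)}\sum_{i}\bigl(P_g^K(\pi^K(x,z_i),w)-\bar g(\pi^K(x,z_i))\bigr)+\Bigl(\tfrac{1}{\gamma(K)}\sum_i\bar g(\pi^K(x,z_i))-f(x)\Bigr),\]
and the second summand has $\Dist^K\times\prod_i\Un_\pi^K$-average absolute value at most $\sup\Abs{\bar g}\,\gamma(K)^{-1/2}$ (Proposition~\ref{prp:ev_equiv_mean}, using $\gamma^{-1/2}\in\Fall$) plus $\E_{\Dist^K}[\Abs{h-f}]$ (condition~\ref{con:def__sp_reduce__fun}), hence lies in $\Fall$, Proposition~\ref{prp:sq_diff_cong} gives $\E_{\Dist^K\times\Un_{P_f}^K}[(P_f^K-f)^2]\equiv\E[(\tfrac{1}{\gamma(K)}\sum_i(P_g^K(\pi^K(x,z_i),w)-\bar g(\pi^K(x,z_i))))^2]\pmod\Fall$. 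Expanding the square, the $\gamma(K)$ diagonal terms contribute $O(\gamma(K)^{-1})\in\Fall$ and the off-diagonal ones collapse because the $z_i$ are i.i.d., so this equals $\E_{x\sim\Dist^K,\,w\sim\Un_{P_g}^K}[(\E_{z}[P_g^K(\pi^K(x,z),w)-\bar g(\pi^K(x,z))])^2]$ modulo $\Fall$.

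To bound this quantity I would apply Proposition~\ref{prp:mixed_opt} to $P_g$ and the $\MGrow$-scheme $Q\circ\sigma$, where $\sigma$ is a polynomial-time $\EMG$-sampler of $\Dist$ relative to $\pi$ (which exists by condition~\ref{con:def__sp_reduce__smp}), and then transport both sides between $(\mathcal E,g)$ and $(\Dist,f)$ exactly as in the proof of Theorem~\ref{thm:psp_reduce}: condition~\ref{con:def__sp_reduce__dist} together with Proposition~\ref{prp:prob_cong_ev} replaces $\mathcal E^K$ by $\pi_*^K\Dist^K$ (and $g$ by $\bar g$), and Proposition~\ref{prp:smp}, applied to the uniformly bounded integrand built from $Q$'s squared error, undoes the resampling back to a genuine sample of $\Dist^K\ltimes\pi^K$. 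The output of this step is an inequality comparing the per-sample squared error of $P_g\circ\pi$ to the squared error of $Q$ against the noisy target $\bar g\circ\pi^K$, modulo $\Fall$.

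The main obstacle is that these two sides do not line up with what is needed on the nose: the left-hand side exceeds the quantity displayed at the end of the first paragraph by an average $z$-variance, while the right-hand side exceeds $\E[(Q^K-f)^2]$ by $\E_{x\sim\Dist^K}[\Var_{z}[\bar g(\pi^K(x,z))]]$, and these two defects need not cancel. Also, unlike in Theorem~\ref{thm:sp_reduce_sharp}, the $\gamma(K)$ copies of $P_g$ inside $P_f$ share the single advice-randomness block $w$ (which is precisely why the hypothesis here is the weaker $\R_{P_g}+\gamma\R_\pi\in\GrowR$ rather than $\gamma(\R_{P_g}+\R_\pi)\in\GrowR$), so one cannot reconstruct $P_f$ from a single run of $P_g$ and the $\ESG$-style device underlying the proof of Theorem~\ref{thm:sp_reduce_sharp} is unavailable. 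I expect the resolution to rest on an auxiliary lemma: an $\EG$-optimal estimator has average output-variance over its own randomness in $\Fall$, proved by comparing $P_g$ with the average of two independent runs of $P_g$, which is a legitimate polynomial-time $\Gamma$-scheme since $\GrowR$ is closed under addition. Via a convexity estimate this shows that the shared-$w$ average above agrees, modulo $\Fall$, with the fully derandomized error $\E_{x\sim\Dist^K}[(\bar P(x)-h(x))^2]$ for $\bar P(x):=\E_{w,z}[P_g^K(\pi^K(x,z),w)]$, and the same lemma lets one compare $\bar P$, rather than $P_g\circ\pi$, against $Q$, which is what removes the stray $\Var_z[\bar g\circ\pi^K]$ term. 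The delicate points will be to check that all of these substitutions keep the error inside $\Fall$ and not merely $\Fall^{1/2}$ — which is what makes the direct argument worthwhile — and to verify that every auxiliary object (the sampler composition, the two-run average, the variance estimates) respects the resource bounds encoded by $\GrowR$ and $\GrowA$.
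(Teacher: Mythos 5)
Your first paragraph is sound: decomposing $P_f^K-f$ and using Proposition~\ref{prp:ev_equiv_mean}, condition~\ref{con:def__sp_reduce__fun} and Proposition~\ref{prp:sq_diff_cong} (together with $\gamma^{-1}\in\Fall$) does reduce the claim to bounding $\E_{x,w}\bigl[\bigl(\E_z[P_g^K(\pi^K(x,z),w)-\bar g(\pi^K(x,z))]\bigr)^2\bigr]$ by $\E[(Q^K-f)^2]$ modulo $\Fall$, and your auxiliary lemma (an $\EG$-optimal estimator satisfies $\E_{\mathcal{E}^K}[\Var_{\Un_{P_g}^K}(P_g^K)]\in\Fall$, proved by competing against the average of two independent runs) is true. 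But the resolution you sketch for your own \enquote{main obstacle} does not work, and that obstacle is exactly where the content of the theorem lies. Your lemma controls the variance of $P_g$ over its \emph{own} coins $w$ at a fixed input, whereas the mismatch you identified is a variance over the reduction's randomness $z$, i.e.\ over genuinely different instances $\pi^K(x,z)$: the defect on the right is $\E_x[\Var_z(\bar g\circ\pi^K)]$ and the defect on the left is $\E_{x,w}[\Var_z((P_g\circ\pi)^K-\bar g\circ\pi^K)]$, and when $P_g$ is (nearly) perfect while the reduction genuinely randomizes the answer (which condition~\ref{con:def__sp_reduce__fun} allows, e.g.\ $f(x)=\tfrac12$ with $g(\pi^K(x,z))\in\Bool$) their difference is $\Theta(1)$, so the inequality obtained from the naive competitor $Q\circ\sigma$ is useless. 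Moreover, \enquote{comparing $\bar P$ rather than $P_g\circ\pi$ against $Q$} is precisely the statement to be proved; nothing in your outline licenses it, since the only optimality available is that of $P_g$ on the $(\mathcal{E},g)$ side.

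The missing idea (and the paper's) is to feed $P_g$'s optimality a \emph{corrected} competitor rather than $Q\circ\sigma$: the $\MGrow$-scheme $Q_g:=Q\circ\sigma-P_f\circ\sigma+P_g$, handled by Proposition~\ref{prp:mixed_opt}. After transporting via condition~\ref{con:def__sp_reduce__dist} and Proposition~\ref{prp:smp}, both sides contain the \emph{same} $\Un_\pi^K$-dependent quantity $F:=(P_g\circ\pi)^K-\bar g\circ\pi^K$, shifted by terms not depending on that copy of $\Un_\pi^K$ ($0$ on the left, $P_f^K-Q^K$ on the right); Proposition~\ref{prp:ev_diff_sq}, applied with respect to that single shared copy of $\Un_\pi^K$, then replaces $F$ by $\E_{\Un_\pi^K}[F]$ on both sides simultaneously, so the troublesome $z$-variances cancel identically instead of having to be estimated. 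Condition~\ref{con:def__sp_reduce__fun} turns $\E_z[\bar g\circ\pi^K]$ into $f$, Proposition~\ref{prp:ev_equiv_mean} turns $\E_z[(P_g\circ\pi)^K]$ into the empirical mean over the $\gamma(K)$ copies of $\pi$-randomness inside $P_f$ (the shared block $w$ is harmless because the same $w$ appears in $P_f$), and the $P_f$ term introduced by $Q_g$ then cancels, leaving $\E[(P_f^K-f)^2]\leq\E[(Q^K-f)^2]\pmod\Fall$. Without this corrected-competitor and quadratic-identity device (or an equivalent mechanism), your outline does not close.
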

\end{samepage}

\begin{samepage}
\begin{proposition}
\label{prp:ev_diff_sq}

Consider ${F}$ a bounded random variable and ${s,t \in \Reals}$. Then

\begin{equation}
%\label{eqn:tbd}
\E[(F - s)^2 - (F - t)^2] = (\E[F] - s)^2 - (\E[F] - t)^2
\end{equation}

\end{proposition}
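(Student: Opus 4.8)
The plan is to exploit the fact that the integrand $(F-s)^2-(F-t)^2$ is, despite appearances, an \emph{affine} function of $F$: the quadratic terms cancel, so the expectation can be computed by linearity alone.

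First I would apply the difference-of-squares factorization to the integrand, keeping $F$ symbolic:
\[(F-s)^2 - (F-t)^2 = \bigl((F-s)-(F-t)\bigr)\bigl((F-s)+(F-t)\bigr) = (t-s)(2F - s - t).\]
This makes it manifest that no $F^2$ term survives. Next I would take expectations and use linearity of $\E$ together with $\E[1]=1$, so that $\E[aF+b]=a\E[F]+b$ for real constants $a,b$:
\[\E[(F-s)^2 - (F-t)^2] = \E\bigl[(t-s)(2F - s - t)\bigr] = (t-s)(2\E[F] - s - t).\]
Finally I would run the same factorization backwards with $\E[F]$ substituted for $F$:
\[(t-s)(2\E[F] - s - t) = \bigl((\E[F]-s)-(\E[F]-t)\bigr)\bigl((\E[F]-s)+(\E[F]-t)\bigr) = (\E[F]-s)^2 - (\E[F]-t)^2,\]
which is exactly the asserted identity.

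There is no genuine obstacle: the statement is a one-line algebraic fact, and the only point requiring any care is to perform the cancellation of the $F^2$ contributions \emph{before} invoking linearity (equivalently, to note that expectation commutes with affine functions but not with squaring). The difference-of-squares factorization is precisely what surfaces this, which is why I would organize the argument around it rather than expanding $(F-s)^2$ and $(F-t)^2$ separately and tracking $\E[F^2]$.
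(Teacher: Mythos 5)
Your proposal is correct and follows essentially the same route as the paper: factor the difference of squares into $(t-s)(2F-s-t)$, apply linearity of expectation, and refactor with $\E[F]$ in place of $F$. Nothing to add.
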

\end{samepage}

\begin{proof}

\[\E[(F - s)^2 - (F - t)^2] = \E[(2F - s - t)(t-s)]\]

\[\E[(F - s)^2 - (F - t)^2] = (2\E[F] - s - t)(t-s)\]

\[\E[(F - s)^2 - (F - t)^2] = (\E[F] - s)^2 - (\E[F] - t)^2\]
\end{proof}

\begin{proof}[Proof of Theorem \ref{thm:sp_reduce}]

Let ${\sigma}$ be an ${\EMG}$-sampler of ${\Dist}$ relative to ${\pi}$. Consider any\\ $Q_f: \Words \Scheme \Rats$ bounded. Construct ${Q_g: \Words \MScheme \Rats}$ s.t. for any ${z_\sigma \in \UM_\sigma^K}$, ${z_Q \in \BoolR{Q_f}}$, ${z_\pi \in \Bool^{\gamma(K) \R_\pi(K)}}$ and ${z_g \in \BoolR{P_g}}$

\begin{align*}
\M_{Q_g}^K &= c_*^4(\M_\sigma^K \times \M_{Q_f}^K \times \M_{\pi}^K \times \M_{P_g}^K) \\
\R_{Q_g}^K(\Chev{z_{\sigma1}, \A_{Q_f}(K),\A_{\pi}(K),\A_{P_g}(K)}) &= \R_\sigma^K(z_{\sigma1}) + \R_{Q_f}(K) + \gamma(K)\R_{\pi}(K) + \R_{P_g}(K) \\
Q_g^K(x,z_{\sigma0} z_{Q} z_{\pi} z_{g}, \Chev{z_{\sigma1}, \A_{Q_f}(K),\A_{\pi}(K),\A_{P_g}(K)}) &= Q_f^K(\sigma^K(x,z_\sigma),z_{Q})-P_f^K(\sigma^K(x,z_\sigma),z_g z_\pi)+P_g^K(x,z_g)
\end{align*}

Applying Proposition~\ref{prp:mixed_opt} for ${P_g}$ and ${Q_g}$, we get

\[\E_{\mathcal{E}^{K} \times \Un_{P_g}^K}[(P_g^K - g)^2] \leq \E_{\mathcal{E}^{K} \times \UM_{Q_g}^K}[(Q_g^K - g)^2] \pmod \Fall\]

Using condition~\ref{con:def__sp_reduce__dist} of Definition~\ref{def:sp_reduce}

\[\E_{\pi_*^K\Dist^{K} \times \Un_{P_g}^K}[(P_g^K-\bar{g})^2] \leq \E_{\pi_*^K\Dist^{K} \times \UM_{Q_g}^K}[(Q_g^K - \bar{g})^2] \pmod \Fall\]

\[\E_{\pi_*^K\Dist^{K} \times \Un_{P_g}^K}[(P_g^K-\bar{g})^2] \leq \E_{\pi_*^K\Dist^{K} \times \UM_{Q_g}^K}[((Q_f \circ \sigma)^K - (P_f \circ \sigma)^K + P_g^K - \bar{g})^2] \pmod \Fall\]

The right hand side has the form of the right hand side in \ref{eqn:prp__smp} enabling us to apply Proposition~\ref{prp:smp} and get

\[\E_{\Dist^{K} \times \Un_\pi^K \times \Un_{P_g}^K}[((P_g \circ \pi)^K-\bar{g} \circ \pi^K)^2] \leq \E_{\Dist^{K} \times \Un_\pi^K \times \Un_{Q_f}^K \times \Un_{P_f}^K}[(Q_f ^K - P_f^K+(P_g \circ \pi)^K - \bar{g} \circ \pi^K)^2] \pmod \Fall\]

We can consider the expressions within the expected values on both sides as random variables w.r.t. $\Un_\pi^K$ while fixing the other components of the distribution. This allows us applying Proposition~\ref{prp:ev_diff_sq} to the difference between the right hand side and the left hand side (with the terms that don't depend on $\Un_\pi^K$ playing the role of the constants), which results in moving the expected value over $\Un_\pi^K$ inside the squares. Let $\Un^K_{PQ}:=\Un^K_{Q_f}\times \Un^K_{P_f}$.

\[\E_{\Dist^{K} \times \Un_{P_g}^K}[\E_{\Un_\pi^K}[(P_g \circ \pi)^K-\bar{g} \circ \pi^K]^2] \leq \E_{\Dist^{K} \times \Un_{PQ}^K} [(Q_f ^K - P_f^K+\E_{\Un_\pi^K}[(P_g \circ \pi)^K - \bar{g} \circ \pi^K])^2] \pmod \Fall\]

\begin{align*}
&\E_{\Dist^{K} \times \Un_{P_g}^K}[(\E_{\Un_\pi^K}[(P_g \circ \pi)^K]-\E_{\Un_\pi^K}[\bar{g} \circ \pi^K])^2] \leq\\ 
&\E_{\Dist^{K} \times \Un_{PQ}^K} [(Q_f ^K - P_f^K+\E_{\Un_\pi^K}[(P_g \circ \pi)^K] - \E_{\Un_\pi^K}[\bar{g} \circ \pi^K])^2] \pmod \Fall 
\end{align*}

We now apply Proposition~\ref{prp:sq_diff_cong} via condition \ref{con:def__sp_reduce__fun} of Definition~\ref{def:sp_reduce}

\[\E_{\Dist^{K} \times \Un_{P_g}^K}[(\E_{\Un_\pi^K}[(P_g \circ \pi)^K]-f)^2] \leq \E_{\Dist^{K} \times \Un_{PQ}^K}[(Q_f ^K - P_f^K+\E_{\Un_\pi^K}[(P_g \circ \pi)^K] - f)^2] \pmod \Fall\]

Denote $y_i:=\pi^K(x,z_i)$ where the ${z_i}$ are sampled independently from ${\Un_\pi^K}$. Applying Proposition~\ref{prp:sq_diff_cong} via Proposition~\ref{prp:ev_equiv_mean} and the assumption on $\gamma$, we get

\begin{align*}
&\E_{\Dist^{K} \times \Un_{P_f}^K}\left[\left(\frac{1}{\gamma(K)}\sum_{i \in [\gamma(K)]}P_g^K(y_i)-f\right)^2\right] \leq \\ 
&\E_{\Dist^{K} \times \Un_{PQ}^K}\left[\left(Q_f ^K - P_f^K+\frac{1}{\gamma(K)}\sum_{i \in [\gamma(K)]}P_g^K(y_i) - f\right)^2\right] \pmod \Fall 
\end{align*}

\[\E_{\Dist^{K} \times \Un_{P_f}^K}[(P_f^K-f)^2] \leq \E_{\Dist^{K} \times \Un_{PQ}^K}[(Q_f ^K - P_f^K+P_f^K - f)^2] \pmod \Fall\]

\[\E_{\Dist^{K} \times \Un_{P_f}^K}[(P_f^K-f)^2] \leq \E_{\Dist^{K} \times \Un_{Q_f}^K}[(Q_f ^K - f)^2] \pmod \Fall\]
\end{proof}

\subsection{Dominance}

Next, we consider a scenario in which the identity mapping can be regarded as a valid reduction between distributional estimation problems that have the same function but different word ensembles.

\begin{samepage}
\begin{definition}
%\label{def:tbd}

Consider ${\Dist}$, ${\mathcal{E}}$ word ensembles. ${\Dist}$ is said to be \emph{${\EG}$-dominated by ${\mathcal{E}}$} when there is ${W: \Words \Scheme \Rats^{\geq 0}}$ bounded s.t.

\begin{equation}
%\label{eqn:tbd}
\sum_{x \in \Words} \Abs{\mathcal{E}^{K}(x)\E_{\Un_W^K}[W^K(x)]-\Dist^{K}(x)} \in \Fall
\end{equation}

In this case, ${W}$ is called a \emph{Radon-Nikodym ${\EG}$-derivative of ${\Dist}$ w.r.t. ${\mathcal{E}}$}.

%\textcolor{red}{${\Dist}$ being dominated by ${\mathcal{E}}$ means that there is a bounded reweighting function that, on average, turns ${\mathcal{E}}$ into $\Dist$. This is a more restrictive condition than the notion of domination from Definition 17 of \cite{Bogdanov_2006}, which only requires that there is a polynomial $p(K)$ s.t. $\forall x\in \Supp\Dist^K:p(K)\mathcal{E}^K(x)\ge\Dist^K(x)$. Intuitively, this is because the notion of domination in Definition 17 of \cite{Bogdanov_2006} may produce up to a polynomial increase in the error of an optimal estimator due to channeling probability mass into  inputs where the optimal estimator fails, and not all fall spaces are invariant under multiplication by a polynomial.}

%\textcolor{red}{This is also a more restrictive reweighting condition than the one in Corollary 2.1, because a sequence of distributions that, in the limit, are supported on a vanishing fraction of the probability mass of $\mathcal{E}$, would force $W$ to be unboundedly large. The resistance of optimal estimators to reweighting carries over to the following two propositions (and the requirement of ample advice for $\EG$-optimal estimators to be optimal under reweightings still applies.) }
\end{definition}
\end{samepage}

\begin{samepage}
\begin{proposition}
\label{prp:dom_reduce_sharp}

Consider ${\Dist}$, ${\mathcal{E}}$ word ensembles, ${f: \Supp \Dist \cup \Supp \mathcal{E} \rightarrow \Reals}$ bounded and ${P}$ an ${\ESG}$-optimal estimator for ${(\mathcal{E},f)}$. Suppose ${\Dist}$ is ${\EG}$-dominated by ${\mathcal{E}}$. Then, ${P}$ is an ${\ESG}$-optimal estimator for ${(\Dist,f)}$.

\end{proposition}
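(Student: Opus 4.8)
The plan is to feed the $\ESG$-optimality of $P$ over $\mathcal{E}$ a \emph{reweighted} test scheme built from $W$. Let $W: \Words \Scheme \Rats^{\geq 0}$ be a Radon-Nikodym $\EG$-derivative of $\Dist$ w.r.t. $\mathcal{E}$, so that $\sum_{x \in \Words}\Abs{\mathcal{E}^K(x)\E_{\Un_W^K}[W^K(x)]-\Dist^K(x)} \in \Fall$, and fix any bounded $S: \Words \times \Rats \Scheme \Rats$. First I would construct an auxiliary bounded scheme $\tilde{S}: \Words \times \Rats \Scheme \Rats$ that runs $W$ and $S$ on disjoint blocks of its randomness and multiplies the outputs: for $x \in \Words$, $t \in \Rats$, $w \in \BoolR{W}$ and $z \in \BoolR{S}$,
\[
\A_{\tilde{S}}(K)=\Chev{\A_W(K),\A_S(K)}, \qquad \R_{\tilde{S}}(K)=\R_W(K)+\R_S(K), \qquad \tilde{S}^K(x,t,wz)=W^K(x,w)\,S^K(x,t,z).
\]
This is a bona fide polynomial-time $\Gamma$-scheme: the runtimes stay polynomial, $\R_W+\R_S \in \GrowR$ by closure under addition, and the advice length lies in $\GrowA$ for the same reason — the verification is identical to the one implicit in the definition of composition of $\Gamma$-schemes.

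Next I would apply Definition~\ref{def:obe_sharp} for $P$ and $(\mathcal{E},f)$ with the test scheme $\tilde{S}$, which gives
\[
\E_{(x,y,w,z) \sim \mathcal{E}^K \times \Un_P^K \times \Un_W^K \times \Un_S^K}[(P^K(x,y)-f(x))\,W^K(x,w)\,S^K(x,P^K(x,y),z)] \equiv 0 \pmod \Fall.
\]
Since $W^K(x,w)$ is the only factor depending on $w$, the expectation over $w$ can be carried out first; writing $g^K(x):=\E_{\Un_W^K}[W^K(x)]$ this collapses to
\[
\E_{(x,y,z) \sim \mathcal{E}^K \times \Un_P^K \times \Un_S^K}[g^K(x)\,(P^K(x,y)-f(x))\,S^K(x,P^K(x,y),z)] \equiv 0 \pmod \Fall.
\]

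Finally I would pass from $\mathcal{E}$ to $\Dist$ using the domination hypothesis. Set $h^K(x,y,z):=(P^K(x,y)-f(x))\,S^K(x,P^K(x,y),z)$, which is uniformly bounded because $P$, $f$ and $S$ are bounded. Then
\[
\Abs{\E_{\mathcal{E}^K \times \Un_P^K \times \Un_S^K}[g^K h^K] - \E_{\Dist^K \times \Un_P^K \times \Un_S^K}[h^K]} \leq (\sup \Abs{h}) \sum_{x \in \Words}\Abs{\mathcal{E}^K(x)g^K(x)-\Dist^K(x)},
\]
and the right-hand side lies in $\Fall$. Combining with the previous display yields $\E_{\Dist^K \times \Un_P^K \times \Un_S^K}[h^K] \equiv 0 \pmod \Fall$, which is exactly the defining property of an $\ESG$-optimal estimator for $(\Dist,f)$. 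There is no genuine analytic obstacle here; the only point requiring a little care is checking that $\tilde{S}$ meets the growth-space resource bounds and keeping straight which block of randomness drives $W$ versus $S$ when expanding the optimality condition.
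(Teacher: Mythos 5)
Your proposal is correct and follows essentially the same route as the paper's proof: the paper's opening display is exactly the orthogonality condition for $P$ over $(\mathcal{E},f)$ applied to the product test scheme $W\cdot S$ (which you spell out explicitly as $\tilde{S}$), followed by averaging out $W$'s randomness and transferring from $\mathcal{E}^K(x)\E_{\Un_W^K}[W^K(x)]$ to $\Dist^K(x)$ via the domination bound. No gaps; your extra care in verifying $\tilde{S}$ is a legitimate bounded $\Gamma$-scheme is the only detail the paper leaves implicit.
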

\end{samepage}

\begin{proof}

Let ${W}$ be a Radon-Nikodym ${\EG}$-derivative of ${\Dist}$ w.r.t. ${\mathcal{E}}$. Consider any ${S: \Words \times \Rats \Scheme \Rats}$ bounded.

\[\E_{\mathcal{E}^{K} \times \Un_P^K \times \Un_W^K \times \Un_S^K}[(P^K(x)-f(x))W^K(x)S^K(x,P^K(x))] \equiv 0 \pmod \Fall\]

\[\sum_{x \in \Words} \mathcal{E}^{K}(x) \E_{\Un_W^K}[W^K(x)] \E_{\Un_P^K \times \Un_S^K}[(P^K(x)-f(x))S^K(x,P^K(x))] \equiv 0 \pmod \Fall\]

\[\sum_{x \in \Words} (\mathcal{E}^{K}(x) \E_{\Un_W^K}[W^K(x)] - \Dist^{K}(x) + \Dist^{K}(x)) \E_{\Un_P^K \times \Un_S^K}[(P^K(x)-f(x))S^K(x,P^K(x))] \equiv 0 \pmod \Fall\]

\begin{align*}
\sum_{x \in \Words} (\mathcal{E}^{K}(x) \E_{\Un_W^K}[W^K(x)] - \Dist^{K}(x)) \E_{\Un_P^K \times \Un_S^K}[(P^K-f)S^K] +\\
\sum_{x \in \Words} \ \Dist^{K}(x) \E_{\Un_P^K \times \Un_S^K}[(P^K-f)S] \equiv 0 \pmod \Fall 
\end{align*}

\[\E_{\Dist^{K} \times \Un_P^K \times \Un_S^K}[(P^K-f)S] \equiv -\sum_{x \in \Words} (\mathcal{E}^{K}(x) \E_{\Un_W^K}[W^K(x)] - \Dist^{K}(x)) \E_{\Un_P^K \times \Un_S^K}[(P^K-f)S^K] \pmod \Fall\]

\[\Abs{\E_{\Dist^{K} \times \Un_P^K \times \Un_S^K}[(P^K-f)S]} \leq (\sup \Abs{P} + \sup \Abs{f}) \sup \Abs{S} \sum_{x \in \Words} \Abs{\mathcal{E}^{K}(x) \E_{\Un_W^K}[W^K(x)] - \Dist^{K}(x)} \pmod \Fall\]

\[\E_{\Dist^{K} \times \Un_P^K \times \Un_S^K}[(P^K-f)S] \equiv 0 \pmod \Fall\]
\end{proof}

The corresponding statement for ${\EG}$-optimal estimators may be regarded as a generalization of Corollary~\ref{crl:weight}.

\begin{samepage}
\begin{proposition}
\label{prp:dom_reduce}

Assume ${\Fall}$ is ${\GrowA}$-ample. Consider ${\Dist}$, ${\mathcal{E}}$ word ensembles, ${f: \Supp \Dist \cup \Supp \mathcal{E} \rightarrow \Reals}$ bounded and ${P}$ an ${\EG}$-optimal estimator for ${(\mathcal{E},f)}$. Suppose ${\Dist}$ is ${\EG}$-dominated by ${\mathcal{E}}$. Then, ${P}$ is an ${\EG}$-optimal estimator for ${(\Dist,f)}$.

\end{proposition}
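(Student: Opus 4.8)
The plan is to mirror the proof of Corollary~\ref{prp:dom_reduce_sharp}'s sibling, Corollary~\ref{crl:weight}, of which this proposition is essentially a two-sided variant, with Proposition~\ref{prp:weight} as the main engine. Let $W: \Words \Scheme \Rats^{\geq 0}$ be a Radon-Nikodym $\EG$-derivative of $\Dist$ w.r.t.\ $\mathcal{E}$, and set $\tilde{W}^K(x) := \E_{\Un_W^K}[W^K(x)]$, a deterministic bounded function, so that by hypothesis $\zeta(K) := \sum_{x \in \Words} \Abs{\mathcal{E}^K(x)\tilde{W}^K(x) - \Dist^K(x)} \in \Fall$. Fix an arbitrary $Q: \Words \Scheme \Rats$; exactly as in the proof of Theorem~\ref{thm:con_ort} we may assume $Q$ bounded without loss of generality, since clipping its range does not increase $\E_{\Dist^K \times \Un_Q^K}[(Q^K - f)^2]$. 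Then $(P^K - f)^2$ and $(Q^K - f)^2$ are uniformly bounded by some constant $C$, because $P$ has bounded range and $f$ is bounded.

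First I would invoke Proposition~\ref{prp:weight} for the problem $(\mathcal{E}, f)$, the estimator $P$, the competitor $Q$, and a padded version $W'$ of $W$ as the weight. To make the resulting weighted inequality factor cleanly I would let $W'$ read a random tape long enough to carry the coins of $P$ and of $Q$ as a disjoint prefix and compute $W^K$ from the remaining bits: this only enlarges $\R_{W'}$ to a function still lying in $\GrowR$ (a growth space, hence closed under sums by Definition~\ref{def:grow}), guarantees $\R_{W'} \geq \max(\R_P, \R_Q)$ as Proposition~\ref{prp:weight} requires, and keeps $W'$'s randomness independent of $P$'s and of $Q$'s. Since $\Fall$ is $\GrowA$-ample --- precisely the hypothesis of the proposition, and exactly what Proposition~\ref{prp:weight} needs --- we obtain, after pulling the expectation over $W'$'s own coins inside each term and replacing $W'^K$ by $\tilde{W}^K$,
\[\E_{\mathcal{E}^K \times \Un_P^K}[\tilde{W}^K(x)(P^K(x) - f(x))^2] \leq \E_{\mathcal{E}^K \times \Un_Q^K}[\tilde{W}^K(x)(Q^K(x) - f(x))^2] \pmod \Fall.\]

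Next I would expand both sides as sums over $x$, writing $\E_{\mathcal{E}^K \times \Un_P^K}[\tilde{W}^K(P^K - f)^2] = \sum_{x} \mathcal{E}^K(x)\tilde{W}^K(x)\,\E_{\Un_P^K}[(P^K(x) - f(x))^2]$ and similarly for the $Q$-side, and compare with $\E_{\Dist^K \times \Un_P^K}[(P^K - f)^2] = \sum_{x} \Dist^K(x)\,\E_{\Un_P^K}[(P^K(x) - f(x))^2]$. Using the uniform bound $C$ on the squared errors, each of the two $\Dist$-expectations differs from its $\mathcal{E}$-weighted counterpart by at most $C\,\zeta(K) \in \Fall$. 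Chaining these two estimates with the displayed inequality yields $\E_{\Dist^K \times \Un_P^K}[(P^K - f)^2] \leq \E_{\Dist^K \times \Un_Q^K}[(Q^K - f)^2] \pmod \Fall$ for every $Q$, which is exactly the assertion that $P$ is an $\EG$-optimal estimator for $(\Dist, f)$.

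The step I expect to be the only non-routine one is the randomness bookkeeping in the middle paragraph: Proposition~\ref{prp:weight} is stated with $W$, $P$, $Q$ drawing from a single shared random tape (with $W$ seeing the longest prefix), whereas the factorization $\E[W^K (P^K - f)^2] = \E[\tilde{W}^K (P^K - f)^2]$ requires $W$'s coins to be independent of $P$'s. Reconciling this via the disjoint-block padding of $W'$, and checking that this construction respects the growth-space constraints on $\R_{W'}$ and $\Abs{\A_{W'}}$, is the crux; everything else is the same algebraic manipulation already used in Corollary~\ref{crl:weight} and Proposition~\ref{prp:dom_reduce_sharp}.
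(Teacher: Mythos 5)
Your proposal is correct and follows essentially the same route as the paper: apply Proposition~\ref{prp:weight} with the Radon--Nikodym derivative as the weight, factor both sides into sums over $x$ of the form $\sum_x \mathcal{E}^K(x)\,\E_{\Un_W^K}[W^K(x)]\,\E[(\cdot - f)^2]$, and use the dominance bound together with the uniform bound on the squared errors to replace $\mathcal{E}^K(x)\E_{\Un_W^K}[W^K(x)]$ by $\Dist^K(x)$ up to an error in $\Fall$. The randomness-padding and $Q$-clipping bookkeeping you single out is treated implicitly (indeed glossed over) in the paper's proof, so your version is, if anything, slightly more careful on those points.
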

\end{samepage}

\begin{proof}

Let ${W}$ be a Radon-Nikodym ${\EG}$-derivative of ${\Dist}$ w.r.t. ${\mathcal{E}}$. Consider any ${Q: \Words \Scheme \Rats}$ bounded. According to Proposition~\ref{prp:weight}

\[\E_{\mathcal{E}^{K} \times \Un_W^K \times \Un_P^K}[W^K(x)(P^K(x)-f(x))^2] \leq \E_{\mathcal{E}^{K} \times \Un_W^K \times \Un_Q^K}[W^K(x)(Q^K(x)-f(x))^2] \pmod \Fall\]

\begin{align*}
&\sum_{x \in \Words} \mathcal{E}^{K}(x) \E_{\Un_W^K}[W^K(x)] \E_{\Un_P^K}[(P^K(x)-f(x))^2] \leq\\
&\sum_{x \in \Words} \mathcal{E}^{K}(x) \E_{\Un_W^K}[W^K(x)] \E_{\Un_Q^K}[(Q^K(x)-f(x))^2] \pmod \Fall
\end{align*}

Using the assumption on ${W}$

\[\sum_{x \in \Words} \Dist^{K}(x) \E_{\Un_P^K}[(P^K(x)-f(x))^2] \leq \sum_{x \in \Words} \Dist^{K}(x) \E_{\Un_Q^K}[(Q^K(x)-f(x))^2] \pmod \Fall\]

\[\E_{\Dist^{K} \times \Un_P^K}[(P^K(x)-f(x))^2] \leq \E_{\Dist^{K} \times \Un_Q^K}[(Q^K(x)-f(x))^2] \pmod \Fall\]
\end{proof}

\subsection{Ensemble Pullbacks}

Finally, we consider another scenario in which the identity mapping is a valid reduction. This scenario is a simple re-indexing of the word ensemble (redefinition of the security parameters). For the remainder of section~\ref{sec:reductions}, we fix some ${m \in \Nats}$. Note that is important that the growth spaces for the resources and fall space for the error, after reindexing, lie in the growth spaces and fall space of the new problem.

\begin{samepage}
\begin{definition}
%\label{def:tbd}

We denote ${\GammaPoly^{mn}:=\{\gamma: \Nats^m \rightarrow \Nats^n \mid \forall i \in [n]: \gamma_i \in \GammaPoly^m \}}$.

\end{definition}
\end{samepage}

\begin{samepage}
\begin{definition}
%\label{def:tbd}

Consider ${\Gamma_*}$ a growth space of rank ${n}$ and ${\alpha \in \GammaPoly^{mn}}$. We introduce the notation

\begin{equation}
\label{eqn:tbd}
\Gamma_*\alpha:=\{\gamma_\alpha: \Nats^m \rightarrow \Reals^{\geq 0} \mid \exists \gamma \in \Gamma_*: \gamma_\alpha \leq \gamma \circ \alpha\}
\end{equation}

Obviously ${\Gamma_*\alpha}$ is a growth space of rank ${m}$.

We also denote ${\Gamma \alpha := (\GrowR \alpha, \GrowA \alpha)}$.

\end{definition}
\end{samepage}

\begin{samepage}
\begin{definition}
%\label{def:tbd}

Consider ${\alpha \in \GammaPoly^{mn}}$. We introduce the notation

\begin{equation}
\label{eqn:tbd}
\Fall \alpha:=\{\varepsilon_\alpha: \Nats^m \rightarrow \Reals^{\geq 0} \textnormal{ bounded} \mid \exists \varepsilon \in \Fall: \varepsilon_\alpha \leq \varepsilon \circ \alpha\}
\end{equation}

\end{definition}
\end{samepage}

\begin{samepage}
\begin{proposition}
\label{prp:tbd}

For any ${\alpha \in \GammaPoly^{mn}}$, ${\Fall \alpha}$ is a fall space.

\end{proposition}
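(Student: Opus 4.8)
The plan is to verify the three defining conditions of a fall space (Definition~\ref{def:fall}) directly for $\Fall\alpha$, using the corresponding conditions for $\Fall$ together with the fact that $\alpha \in \GammaPoly^{mn}$ means each component $\alpha_i$ is polynomially bounded. Throughout I will use the defining property: $\varepsilon_\alpha \in \Fall\alpha$ iff $\varepsilon_\alpha$ is bounded and there is $\varepsilon \in \Fall$ with $\varepsilon_\alpha \leq \varepsilon \circ \alpha$.

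First, for condition~\ref{con:def__fall__add} (closure under addition): given $\varepsilon_\alpha, \varepsilon'_\alpha \in \Fall\alpha$ with witnesses $\varepsilon, \varepsilon' \in \Fall$, the function $\varepsilon + \varepsilon'$ lies in $\Fall$ by condition~\ref{con:def__fall__add} for $\Fall$, and $\varepsilon_\alpha + \varepsilon'_\alpha \leq (\varepsilon + \varepsilon') \circ \alpha$; moreover $\varepsilon_\alpha + \varepsilon'_\alpha$ is bounded since both summands are, so $\varepsilon_\alpha + \varepsilon'_\alpha \in \Fall\alpha$. Second, for condition~\ref{con:def__fall__ineq} (downward closure): if $\varepsilon_\alpha \in \Fall\alpha$ with witness $\varepsilon$, and $\eta: \Nats^m \rightarrow \Reals^{\geq 0}$ is bounded with $\eta \leq \varepsilon_\alpha$, then $\eta \leq \varepsilon_\alpha \leq \varepsilon \circ \alpha$, so $\varepsilon$ witnesses $\eta \in \Fall\alpha$ directly. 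These two steps are essentially immediate bookkeeping.

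The one step requiring a small argument is condition~\ref{con:def__fall__pol}: we must exhibit $h \in \Nats[K_0,\dots,K_{m-1}]$ with $2^{-h} \in \Fall\alpha$. By condition~\ref{con:def__fall__pol} for $\Fall$, there is $g \in \Nats[K_0,\dots,K_{n-1}]$ with $2^{-g} \in \Fall$. Since $\alpha \in \GammaPoly^{mn}$, each $\alpha_i$ is bounded above by some polynomial $p_i \in \Nats[K_0,\dots,K_{m-1}]$; because $g$ has natural (hence nonnegative) coefficients it is monotone nondecreasing in each argument, so $g \circ \alpha \leq g(p_0,\dots,p_{n-1}) =: h$, and $h$ is a polynomial with natural coefficients in the $m$ variables. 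Then $2^{-h} \leq 2^{-g \circ \alpha} = (2^{-g}) \circ \alpha$, and $2^{-h}$ is bounded, so $2^{-h} \in \Fall\alpha$ with witness $2^{-g}$.

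I do not anticipate any real obstacle here; the only point to be careful about is the monotonicity of polynomials with natural coefficients used to pass from the componentwise bound $\alpha_i \leq p_i$ to the bound $g \circ \alpha \leq g(p_0,\dots,p_{n-1})$, and the bookkeeping that every function produced along the way is genuinely bounded (which is automatic since it is dominated by a bounded function in each case, or because $2^{-h} \leq 1$). Assembling the three verified conditions gives the claim.
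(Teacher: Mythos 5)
Your proof is correct and follows essentially the same route as the paper: conditions (i) and (ii) are immediate bookkeeping, and condition (iii) is handled exactly as the paper does, by bounding each $\alpha_i$ by a polynomial $p_i$, using monotonicity of a polynomial with natural coefficients to get $g \circ \alpha \leq g(p_0,\dots,p_{n-1})$, and taking $2^{-g(p_0,\dots,p_{n-1})}$ as the witness. No gaps.
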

\end{samepage}

\begin{proof}

Conditions \ref{con:def__fall__add} and \ref{con:def__fall__ineq} are obvious. To verify condition \ref{con:def__fall__pol}, consider ${h \in \NatPoly}$ s.t. ${2^{-h} \in \Fall}$. Note that since the coefficients of ${h}$ are non-negative it is non-decreasing in all arguments. Consider ${p: \Nats^m \rightarrow \Nats^n}$ a polynomial map s.t. for any ${i \in [n]}$, ${\alpha_i \leq p_i}$. We have ${2^{-h \circ p} \leq 2^{-h \circ \alpha}}$ and therefore ${2^{-h \circ p} \in \Fall \alpha}$.
\end{proof}

\begin{samepage}
\begin{definition}
%\label{def:tbd}

Consider ${\Dist}$ a word ensemble of rank ${n}$ and ${\alpha: \Nats^m \rightarrow \Nats^n}$. The \emph{pullback of ${\Dist}$ by ${\alpha}$}, denoted ${\Dist^\alpha}$, is the word ensemble of rank ${m}$ given by ${(\Dist^\alpha)^k:=\Dist^{\alpha(k)}}$.

\end{definition}
\end{samepage}

\begin{samepage}
\begin{definition}
%\label{def:tbd}

Consider ${X}$, ${Y}$ encoded sets, ${S: X \Scheme Y}$ and ${\alpha: \Nats^m \Alg \Nats^n}$ s.t. ${\alpha \in \GammaPoly^{mn}}$ as a function and ${\T_\alpha \in \GammaPoly^m}$. We define ${S^\alpha: X \xrightarrow{\Gamma \alpha} Y}$ by requiring that for any ${L \in \Nats^m}$,\\ $\R_{S^\alpha}(L)=\R_S(\alpha(L))$ and ${(S^\alpha)^L(x,y)=S^{\alpha(L)}(x,y)}$.

\end{definition}
\end{samepage}

\begin{samepage}
\begin{proposition}
\label{prp:rev_sch_idx}

Consider  ${X}$, ${Y}$ encoded sets, ${\alpha: \Nats^m \Alg \Nats^n}$ and ${\beta \in \GammaPoly^{nm}}$. Assume that\\ $\T_\alpha \in \GammaPoly^m$ and ${\forall L \in \Nats^m: \beta(\alpha(L))=L}$. Then, for any  ${S: X \xrightarrow{\Gamma \alpha} Y}$ there is ${\tilde{S}: X \Scheme Y}$ s.t. for all ${K \in \Nats^n}$ that satisfy ${\alpha(\beta(K))=K}$, ${x \in X}$ and ${y,z \in \Words}$

\begin{align}
\A_{\tilde{S}}(K)&=\A_S(\beta(K)) \\
\R_{\tilde{S}}^K(z)&=\R_S^{\beta(K)}(z) \\
\tilde{S}^K(x,y,z)&=S^{\beta(K)}(x,y,z)
\end{align}

\end{proposition}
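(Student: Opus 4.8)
The plan is to build $\tilde{S}$ so that, given the parameter $K$, it first \emph{recovers} $L = \beta(K)$ and then simply simulates $S$ at parameter $L$. The only real wrinkle is that $\beta$ is assumed to be merely a polynomially bounded function, not a computable one, so it cannot be evaluated on the fly or stored as short advice (the statement forces $\A_{\tilde{S}}(K)$ to equal $\A_S(\beta(K))$ verbatim, and $\GrowA$ need not even contain $\GammaLog^n$). I would get around this by exploiting that $\beta$ is a left inverse of the \emph{computable} map $\alpha$: from $\beta(\alpha(L)) = L$ one gets that $\alpha$ is injective, that any $L$ with $\alpha(L) = K$ is necessarily $\beta(K)$ (apply $\beta$), and hence that $\alpha(\beta(K)) = K$ holds iff $K \in \Img \alpha$, in which case $\beta(K)$ is the unique preimage. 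Thus recovering $\beta(K)$ reduces to inverting $\alpha$, which is done by bounded exhaustive search.

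Concretely: since $\beta \in \GammaPoly^{nm}$, I would fix polynomials $p_0, \ldots, p_{m-1} \in \NatPoly$ with $\beta_j \leq p_j$, and define $\tilde{S}^K(x,y,z)$ to (i) compute $N_j := p_j(K)$, (ii) run $\alpha$ on every $L$ in the box $\prod_{j \in [m]} \{0, \ldots, N_j\}$ (it halts, $\alpha$ being total) and test $\alpha(L) = K$, and (iii) let $L^\ast$ be the unique match if one exists and $L^\ast := (0,\ldots,0)$ otherwise, then output $S^{L^\ast}(x,y,z)$. Define $\R_{\tilde{S}}(K,z)$ by running the same $K$-only search and then outputting $\R_S(L^\ast, z)$, and set $\A_{\tilde{S}}(K) := \A_S(\beta(K))$ when $\alpha(\beta(K)) = K$ and $\A_{\tilde{S}}(K) := \Estr$ otherwise. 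When $\alpha(\beta(K)) = K$ we have $\beta_j(K) \leq p_j(K) = N_j$, so $\beta(K)$ lies in the box and, being the unique $\alpha$-preimage of $K$, is exactly $L^\ast$; the three displayed identities follow immediately.

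The remaining work is bookkeeping on the resource bounds, to check $\tilde{S} : X \Scheme Y$. For time: $\prod_j (N_j+1)$ is a polynomial in $K$; each call $\alpha(L)$ costs $\T_\alpha(L) \leq P_\alpha(p_0(K), \ldots, p_{m-1}(K))$ steps, where $P_\alpha$ bounds $\T_\alpha \in \GammaPoly^m$ and monotonicity of polynomials with natural coefficients is used, and the comparison with $\En_{\Nats^n}(K)$ costs no more; the final simulation of $S$, resp. $\R_S$, costs at most a polynomial in $L^\ast_0, \ldots, L^\ast_{m-1} \leq p_\bullet(K)$ by the time bounds built into $S$ being a $\Gamma\alpha$-scheme. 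Hence $\max_{x,y,z} \T_{\tilde{S}}$ and $\max_z \T_{\R_{\tilde{S}}}$ lie in $\GammaPoly^n$. For advice and randomness: writing $\Abs{\A_S} \leq h \circ \alpha$ with $h \in \GrowA$ and $L \mapsto \R_S(L, \A_S(L)) \leq g \circ \alpha$ with $g \in \GrowR$ (which is precisely what $S : X \xrightarrow{\Gamma \alpha} Y$ asserts), on $\{\alpha(\beta(K)) = K\}$ we get $\Abs{\A_{\tilde{S}}(K)} \leq h(\alpha(\beta(K))) = h(K)$ and $\R_{\tilde{S}}(K, \A_{\tilde{S}}(K)) = \R_S(\beta(K), \A_S(\beta(K))) \leq g(\alpha(\beta(K))) = g(K)$, while off that set both are $0$; so $\Abs{\A_{\tilde{S}}} \leq h \in \GrowA$ and $r_{\tilde{S}} \leq g \in \GrowR$, and membership follows from condition~\ref{con:def__grow__ineq} of Definition~\ref{def:grow}.

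I expect the main obstacle to be exactly the one flagged above: making the recovery of $\beta(K)$ both effective and cheap without altering the mandated form of the advice string. The bounded-search device resolves it, but one must be careful that the search range — a box whose side lengths are the polynomials $p_j$ dominating $\beta$ — keeps the total running time polynomial in $K$ \emph{as a tuple of integers}, which it does because $\T_\alpha$, like all the scheme time bounds in this paper, is polynomial in the integer parameters rather than in their bit lengths. Behaviour at parameters $K \notin \Img \alpha$ is left unconstrained by the statement, and defaulting the missing preimage to $(0, \ldots, 0)$ keeps $\tilde{S}$ a well-formed scheme there.
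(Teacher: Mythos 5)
Your proof is correct and takes essentially the same route as the paper: recover $\beta(K)$ by exhaustive search over a polynomially bounded box (using $\beta \in \GammaPoly^{nm}$ and $\T_\alpha \in \GammaPoly^m$, with uniqueness of the preimage from $\beta \circ \alpha = \mathrm{id}$), then pull the advice and randomness bounds of $S$ back through the identity $\alpha(\beta(K))=K$. The only cosmetic slip is that with your default $L^\ast=(0,\dots,0)$ the off-image value $\R_{\tilde S}(K,\Estr)$ equals $\R_S((0,\dots,0),\Estr)$ rather than $0$, so you should simply have $\R_{\tilde S}$ output $0$ when the search finds no match, which is clearly what your \enquote{off that set both are $0$} claim intends and is harmless since the statement imposes no constraints there.
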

\end{samepage}

\begin{proof}

To see there is no obstruction of time complexity, note that ${\beta}$ can be computed by some\\ ${\beta^*: \Nats^n \Alg \Nats^m}$ s.t. ${\T_{\beta^*} \in \GammaPoly^n}$. Given input ${K}$, ${\beta^*}$ works by iterating over all ${L}$ within some polynomial size range (thanks to the assumption ${\beta \in \GammaPoly^{nm}}$) and checking the condition ${\alpha(L)=K}$.

To see there are no obstructions of random or advice complexity, note there is ${\gamma_{\mathfrak{R}} \in \GrowR}$ s.t.\\ $\R_S(L) \leq \gamma_{\mathfrak{R}}(\alpha(L))$ and ${\gamma_{\mathfrak{A}} \in \GrowA}$ s.t. ${\Abs{\A_S(L)} \leq \gamma_{\mathfrak{A}}(\alpha(L))}$. In particular, if ${K \in \Nats^n}$ is s.t.\\ $\alpha(\beta(K))=K$ then ${\R_S(\beta(K)) \leq \gamma_{\mathfrak{R}}(K)}$ and ${\Abs{\A_S(\beta(K))} \leq \gamma_{\mathfrak{A}}(K)}$.
\end{proof}

\begin{samepage}
\begin{definition}
%\label{def:tbd}

${\alpha: \Nats^m \Alg \Nats^n}$ is called an \emph{efficient injection} when ${\alpha \in \GammaPoly^{mn}}$ as a function,\\ $\T_\alpha \in \GammaPoly^m$ and there is ${\beta \in \GammaPoly^{nm}}$ s.t. ${\forall L \in \Nats^m: \beta(\alpha(L))=L}$.

\end{definition}
\end{samepage}

\begin{samepage}
\begin{proposition}
\label{prp:idx_reduce_sharp}

Consider $(\Dist,f)$ a distributional estimation problem of rank ${n}$, ${P}$ an ${\ESG}$-optimal estimator for ${(\Dist,f)}$ and ${\alpha: \Nats^m \Alg \Nats^n}$ an efficient injection. Then, ${P^\alpha}$ is an ${\Fall \alpha^\sharp(\Gamma \alpha)}$-optimal estimator for ${(\Dist^\alpha,f)}$.

\end{proposition}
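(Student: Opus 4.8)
The plan is to reduce the claim to the $\ESG$-optimality of $P$ itself by pulling back test schemes along $\alpha$ and then specializing the resulting inequality to the indices in the image of $\alpha$. Let $\beta \in \GammaPoly^{nm}$ be the right inverse of $\alpha$ provided by the definition of an efficient injection, so $\beta(\alpha(L)) = L$ for all $L \in \Nats^m$. Fix any bounded $S: \Words \times \Rats \xrightarrow{\Gamma \alpha} \Rats$; we must show that $\E_{(\Dist^\alpha)^L \times \Un_{P^\alpha}^L \times \Un_S^L}[((P^\alpha)^L(x,y) - f(x))S^L(x,(P^\alpha)^L(x,y),z)] \equiv 0 \pmod{\Fall \alpha}$.

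First I would apply Proposition~\ref{prp:rev_sch_idx} (with source $\Words \times \Rats$ and target $\Rats$) to lift $S$ to a rank-$n$ scheme $\tilde{S}: \Words \times \Rats \Scheme \Rats$ such that for every $K \in \Nats^n$ with $\alpha(\beta(K)) = K$ one has $\R_{\tilde S}^K = \R_S^{\beta(K)}$ and $\tilde S^K = S^{\beta(K)}$; note $\tilde S$ inherits boundedness of range from $S$. Since $P$ is an $\ESG$-optimal estimator for $(\Dist,f)$, Definition~\ref{def:obe_sharp} gives a single $\varepsilon \in \Fall$ with
$$\Abs{\E_{\Dist^K \times \Un_P^K \times \Un_{\tilde S}^K}[(P^K(x,y) - f(x))\tilde S^K(x,P^K(x,y),z)]} \le \varepsilon(K) \quad \text{for all } K \in \Nats^n .$$

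The key step is to read off this inequality at the indices $K = \alpha(L)$, $L \in \Nats^m$, which lie in the set where Proposition~\ref{prp:rev_sch_idx} applies because $\alpha(\beta(\alpha(L))) = \alpha(L)$. For such $K$ one has $\Dist^K = \Dist^{\alpha(L)} = (\Dist^\alpha)^L$, while the definitions of $P^\alpha$ and $S^\alpha$ give $(P^\alpha)^L = P^{\alpha(L)}$, $\R_{P^\alpha}(L) = \R_P(\alpha(L))$, $S^L = \tilde S^{\alpha(L)}$ and $\R_S^L = \R_{\tilde S}^{\alpha(L)}$; hence the distribution $\Dist^K \times \Un_P^K \times \Un_{\tilde S}^K$ and the integrand are exactly those in the $\ESG$-condition for $P^\alpha$ applied to $S$. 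Therefore for every $L \in \Nats^m$ the quantity to be bounded has absolute value at most $\varepsilon(\alpha(L)) = (\varepsilon \circ \alpha)(L)$, and $\varepsilon \circ \alpha$ is bounded and lies in $\Fall \alpha$ directly from the definition of $\Fall \alpha$, so the required congruence holds.

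I do not expect a serious obstacle here: the only genuine input is the scheme-lifting supplied by Proposition~\ref{prp:rev_sch_idx}, and the rest is the bookkeeping that the pulled-back data $(\Dist^\alpha, P^\alpha, S)$ is literally the specialization at $K \in \alpha(\Nats^m)$ of the rank-$n$ data $(\Dist, P, \tilde S)$. The one point worth stating carefully is that $\ESG$-optimality of $P$ furnishes a \emph{single} $\varepsilon$ valid uniformly over all $K \in \Nats^n$, so precomposing with $\alpha$ loses nothing; this is precisely why $\Fall \alpha$ is the right fall space on the $m$-indexed side.
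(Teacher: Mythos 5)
Your proposal is correct and follows essentially the same route as the paper's own proof: lift the test scheme $S$ to rank $n$ via Proposition~\ref{prp:rev_sch_idx}, invoke the $\ESG$-optimality of $P$ against the lift to get a single $\varepsilon \in \Fall$, and specialize at $K=\alpha(L)$ using $\beta(\alpha(L))=L$ to conclude the bound $\varepsilon\circ\alpha \in \Fall\alpha$. No gaps; your remark that the lift retains a bounded range (so the $\ESG$-condition applies) is a point the paper leaves implicit.
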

\end{samepage}

\begin{proof}

Consider any ${S: \Words \times \Rats \xrightarrow{\Gamma \alpha} \Rats}$ bounded. Construct ${\tilde{S}: \Words \times \Rats \Scheme \Rats}$ by applying Proposition~\ref{prp:rev_sch_idx} to ${S}$. There is ${\varepsilon \in \Fall}$ s.t. for any ${K \in \Nats^n}$

\[\Abs{\E_{\Dist^{K} \times \Un_P^K \times \Un_{\tilde{S}}^K}[(P^K(x,y) - f(x))\tilde{S}^K(x,P^K(x,y),z)]}=\varepsilon(K)\]

Substituting ${\alpha(L)}$ for ${K}$, we get

\[\Abs{\E_{\Dist^{\alpha(L)} \times \Un_P^{\alpha(L)} \times \Un_{\tilde{S}}^{\alpha(L)}}[(P^{\alpha(L)}(x,y) - f(x))\tilde{S}^{\alpha(L)}(x,P^{\alpha(L)}(x,y),z)]}=\varepsilon(\alpha(L))\]

\[\Abs{\E_{(\Dist^\alpha)^{L} \times \Un_{P^\alpha}^{L} \times \Un_{\tilde{S}}^{\alpha(L)}}[((P^\alpha)^{L}(x,y) - f(x))\tilde{S}^{\alpha(L)}(x,(P^\alpha)^{L}(x,y),z)]}=\varepsilon(\alpha(L))\]

We have ${\alpha(\beta(\alpha(L))=\alpha(L)}$, therefore

\[\Abs{\E_{(\Dist^\alpha)^{L} \times \Un_{P^\alpha}^{L} \times \Un_{S}^{\beta(\alpha(L))}}[((P^\alpha)^{L}(x,y) - f(x))S^{\beta(\alpha(L))}(x,(P^\alpha)^{L}(x,y),z)]}=\varepsilon(\alpha(L))\]

\[\Abs{\E_{(\Dist^\alpha)^{L} \times \Un_{P^\alpha}^{L} \times \Un_{S}^{L}}[((P^\alpha)^{L}(x,y) - f(x))S^{L}(x,(P^\alpha)^{L}(x,y),z)]}=\varepsilon(\alpha(L))\]
\end{proof}

\begin{samepage}
\begin{proposition}
\label{prp:idx_reduce}

Consider $(\Dist,f)$ a distributional estimation problem of rank ${n}$, ${P}$ an ${\EG}$-optimal estimator for ${(\Dist,f)}$ and ${\alpha: \Nats^m \Alg \Nats^n}$ an efficient injection. Then, ${P^\alpha}$ is an ${\Fall \alpha(\Gamma \alpha)}$-optimal estimator for ${(\Dist^\alpha,f)}$.

\end{proposition}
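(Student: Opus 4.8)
The plan is to mimic the proof of Proposition~\ref{prp:idx_reduce_sharp}, replacing the orthogonality inequality by the minimization inequality~\ref{eqn:op}. First I would take an arbitrary $Q: \Words \xrightarrow{\Gamma \alpha} \Rats$ and, exactly as in Proposition~\ref{prp:idx_reduce_sharp}, apply Proposition~\ref{prp:rev_sch_idx} to $Q$ to obtain $\tilde{Q}: \Words \Scheme \Rats$ which agrees with $Q$ on the range of $\alpha$ in the sense made precise there, i.e.\ for every $K \in \Nats^n$ with $\alpha(\beta(K)) = K$ we have $\A_{\tilde{Q}}(K) = \A_Q(\beta(K))$, $\R_{\tilde{Q}}^K = \R_Q^{\beta(K)}$ and $\tilde{Q}^K(x,y,z) = Q^{\beta(K)}(x,y,z)$. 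Here $\beta \in \GammaPoly^{nm}$ is the left inverse of $\alpha$ guaranteed by the definition of efficient injection, and the side condition $\alpha(\beta(\alpha(L))) = \alpha(L)$ holds automatically for every $L \in \Nats^m$ since $\beta(\alpha(L)) = L$.

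Next I would invoke the $\EG$-optimality of $P$ for $(\Dist,f)$ against the competitor $\tilde{Q}$: there is $\varepsilon \in \Fall$ such that for all $K \in \Nats^n$
\[
\E_{\Dist^{K} \times \Un_P^{K}}[(P^{K} - f)^2] \leq \E_{\Dist^{K} \times \Un_{\tilde{Q}}^{K}}[(\tilde{Q}^{K} - f)^2] + \varepsilon(K).
\]
Now substitute $K = \alpha(L)$ for $L \in \Nats^m$. On the left the term becomes $\E_{(\Dist^\alpha)^{L} \times \Un_{P^\alpha}^{L}}[((P^\alpha)^{L} - f)^2]$ by the definitions of $\Dist^\alpha$ and $P^\alpha$; on the right, since $\alpha(\beta(\alpha(L))) = \alpha(L)$, the defining relations of $\tilde{Q}$ apply at $K = \alpha(L)$, so $\tilde{Q}^{\alpha(L)}$ has the same distribution as $Q^{\beta(\alpha(L))} = Q^{L}$, giving $\E_{(\Dist^\alpha)^{L} \times \Un_{Q}^{L}}[(Q^{L} - f)^2]$. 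The error term is $\varepsilon(\alpha(L))$, and by definition of $\Fall\alpha$ (applied to the bound $\varepsilon \circ \alpha \leq \varepsilon \circ \alpha$) we have $\varepsilon \circ \alpha \in \Fall\alpha$. This yields
\[
\E_{(\Dist^\alpha)^{L} \times \Un_{P^\alpha}^{L}}[((P^\alpha)^{L} - f)^2] \leq \E_{(\Dist^\alpha)^{L} \times \Un_{Q}^{L}}[(Q^{L} - f)^2] \pmod{\Fall\alpha},
\]
which, since $Q$ was arbitrary and $P^\alpha$ is a polynomial-time $\Gamma\alpha$-scheme (Proposition~\ref{prp:tbd} ensures $\Fall\alpha$ is a fall space and $\Gamma\alpha$ is a pair of growth spaces of rank $m$), is exactly the statement that $P^\alpha$ is an $\Fall\alpha(\Gamma\alpha)$-optimal estimator for $(\Dist^\alpha, f)$.

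The only genuinely delicate point is the same one handled in Proposition~\ref{prp:rev_sch_idx}: verifying that $\tilde{Q}$ is a legitimate polynomial-time $\Gamma$-scheme of rank $n$, i.e.\ that pulling the advice and random-bit functions of $Q$ back through $\beta$ does not blow up the resource budgets and that $\beta$ itself is efficiently computable. But this is precisely what Proposition~\ref{prp:rev_sch_idx} already establishes (efficient computability of $\beta$ via bounded search using $\beta \in \GammaPoly^{nm}$, and the bounds $\R_Q(\beta(K)) \leq \gamma_{\mathfrak{R}}(K)$, $\Abs{\A_Q(\beta(K))} \leq \gamma_{\mathfrak{A}}(K)$ coming from $\R_Q \in \GrowR\alpha$ and $\Abs{\A_Q} \in \GrowA\alpha$). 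So I expect no obstacle beyond carefully tracking the index substitution $K = \alpha(L)$ and the definitions; the proof is essentially a verbatim adaptation of Proposition~\ref{prp:idx_reduce_sharp} with the inner product replaced by the squared error.
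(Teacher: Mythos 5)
Your proposal is correct and follows essentially the same route as the paper's proof: apply Proposition~\ref{prp:rev_sch_idx} to the competitor $Q$ to obtain $\tilde{Q}$, invoke the $\EG$-optimality of $P$ against $\tilde{Q}$, and substitute $K=\alpha(L)$, using $\alpha(\beta(\alpha(L)))=\alpha(L)$ to identify $\tilde{Q}^{\alpha(L)}$ with $Q^L$ and noting $\varepsilon\circ\alpha\in\Fall\alpha$. No gaps.
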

\end{samepage}

\begin{proof}

Consider any ${Q: \Words \xrightarrow{\Gamma \alpha} \Rats}$ bounded. Construct ${\tilde{Q}: \Words \Scheme \Rats}$ by applying Proposition~\ref{prp:rev_sch_idx} to ${Q}$. There is ${\varepsilon \in \Fall}$ s.t.

\[\E_{\Dist^{K} \times \Un_P^K}[(P^K(x,y)-f(x))^2] \leq \E_{\Dist^{K} \times \Un_{\tilde{Q}}^K}[(\tilde{Q}^K(x,y)-f(x))^2] + \varepsilon(K)\]

Substituting ${\alpha(L)}$ for ${K}$, we get

\[\E_{\Dist^{\alpha(L)} \times \Un_P^{\alpha(L)}}[(P^{\alpha(L)}(x,y)-f(x))^2] \leq \E_{\Dist^{\alpha(L)} \times \Un_{\tilde{Q}}^{\alpha(L)}}[(\tilde{Q}^{\alpha(L)}(x,y)-f(x))^2] + \varepsilon({\alpha(L)})\]

\[\E_{(\Dist^\alpha)^{L} \times \Un_{P^\alpha}^L}[((P^\alpha)^L(x,y)-f(x))^2] \leq \E_{(\Dist^\alpha)^{L} \times \Un_{\tilde{Q}}^{\alpha(L)}}[(\tilde{Q}^{\alpha(L)}(x,y)-f(x))^2] + \varepsilon(\alpha(L))\]

We have ${\alpha(\beta(\alpha(L))=\alpha(L)}$, therefore

\[\E_{(\Dist^\alpha)^{L} \times \Un_{P^\alpha}^L}[((P^\alpha)^L(x,y)-f(x))^2] \leq \E_{(\Dist^\alpha)^{L} \times \Un_Q^{\beta(\alpha(L))}}[(Q^{\beta(\alpha(L))}(x,y)-f(x))^2] + \varepsilon(\alpha(L))\]

\[\E_{(\Dist^\alpha)^{L} \times \Un_{P^\alpha}^L}[((P^\alpha)^L(x,y)-f(x))^2] \leq \E_{(\Dist^\alpha)^{L} \times \Un_Q^L}[(Q^L(x,y)-f(x))^2] + \varepsilon(\alpha(L))\]
\end{proof}

\subsection{Lax Pseudo-Invertible Reductions}

We now consider compositions of reductions of different types. For the remainder of the section, we fix ${\mathcal{G}}$, a fall space of rank ${m}$. 

\begin{samepage}
\begin{definition}
\label{def:pp_reduce}

Consider $(\Dist,f)$ a distributional estimation problem of rank ${m}$, $(\mathcal{E},g)$ a distributional estimation problem of rank ${n}$, ${\alpha: \Nats^m \Alg \Nats^n}$ an efficient injection and ${\pi: \Words \xrightarrow{\Gamma \alpha} \Words}$. $\pi$ is called a \emph{precise pseudo-invertible $\mathcal{G}(\Gamma)$-reduction of $(\Dist,f)$ to $(\mathcal{E},g)$ over ${\alpha}$} when

\begin{enumerate}[(i)]

\item\label{con:def__pp_reduce__dist} ${\pi_*\Dist}$ is ${\mathcal{G}(\Gamma \alpha)}$-dominated by ${\mathcal{E}^\alpha}$.

\item\label{con:def__pp_reduce__fun} Denote ${\bar{g}: \Words \rightarrow \Reals}$ the extension of $g$ by 0. We require

\[\E_{(x,z) \sim \Dist^{K} \times \Un_\pi^{K}}[\Abs{f(x)-\bar{g}(\pi^{K}(x,z))}] \equiv 0 \pmod {\mathcal{G}}\]

\item\label{con:def__pp_reduce__smp} $\Dist$ is $\mathcal{G}(\MGrow \alpha)$-samplable relative to $\pi$.

\end{enumerate}

\end{definition}
\end{samepage}

\begin{samepage}

%\textcolor{red}{In short, an optimal estimator for $(\mathcal{E},g)$ is an optimal estimator for the pullback $(\mathcal{E}^{\alpha},g)$. This pullback can be reweighted to $\pi_{*}\Dist$ (however, the resources to do so must lie in $\Gamma$ after reindexing), and this preserves optimal estimators. Finally, $\Dist$ is samplable relative to $\pi$ and $g(\pi(x,z))$ is approximately $f(x)$, which is a reduction which preserves optimal estimators. The mapping finds $\pi(x,z)$, reindexes $K$ to $\alpha(K)$, and invokes the optimal estimator for $(\mathcal{E},g)$ with $\alpha(K)$ as the parameter which controls available resources.}

\begin{corollary}
\label{crl:pp_reduce_sharp}

Consider $(\Dist,f)$ a distributional estimation problem of rank ${m}$, $(\mathcal{E},g)$ distributional estimation problem of rank ${n}$, ${\alpha: \Nats^m \Alg \Nats^n}$ an efficient injection and $\pi$ a precise pseudo-invertible $\mathcal{G}(\Gamma)$-reduction of $(\Dist, f)$ to $(\mathcal{E}, g)$ over ${\alpha}$. Assume ${\Fall\alpha \subseteq \mathcal{G}}$. Suppose $P$ is an $\ESG$-optimal estimator for $(\mathcal{E}, g)$. Then, $P^\alpha \circ \pi$ is a $\mathcal{G}^\sharp (\Gamma \alpha)$-optimal estimator for $(\Dist, f)$.

\end{corollary}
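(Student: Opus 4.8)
The plan is to reduce Corollary~\ref{crl:pp_reduce_sharp} to the machinery already developed by factoring the reduction $\pi$ (over $\alpha$) through three intermediate problems, chaining together Proposition~\ref{prp:idx_reduce_sharp} (ensemble pullback), Corollary~\ref{crl:psp_reduce_sharp} (strict pseudo-invertibility) and Proposition~\ref{prp:dom_reduce_sharp} (dominance). First I would observe that, since $\alpha$ is an efficient injection and $P$ is an $\ESG$-optimal estimator for $(\mathcal{E},g)$, Proposition~\ref{prp:idx_reduce_sharp} gives that $P^\alpha$ is an $\Fall\alpha^\sharp(\Gamma\alpha)$-optimal estimator for $(\mathcal{E}^\alpha,g)$. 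Because $\Fall\alpha \subseteq \mathcal{G}$, and $\mathcal{G}^\sharp$-optimality is only a weakening when enlarging the fall space (monotonicity of the $\ESG$ condition in $\Fall$ is immediate from Definition~\ref{def:obe_sharp}), $P^\alpha$ is in particular a $\mathcal{G}^\sharp(\Gamma\alpha)$-optimal estimator for $(\mathcal{E}^\alpha,g)$.

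Next I would set up the rank-$m$ world: $(\mathcal{E}^\alpha,g)$, the intermediate problem $(\pi_*\Dist, g)$ obtained by viewing $\pi$ over $\alpha$ as a reduction of a problem with word ensemble $\pi_*\Dist$, and the target $(\Dist,f)$, all now of rank $m$ with growth pair $\Gamma\alpha$ and fall space $\mathcal{G}$. The key point is that conditions \ref{con:def__pp_reduce__dist}–\ref{con:def__pp_reduce__smp} of Definition~\ref{def:pp_reduce} split cleanly: condition~\ref{con:def__pp_reduce__dist} says $\pi_*\Dist$ is $\mathcal{G}(\Gamma\alpha)$-dominated by $\mathcal{E}^\alpha$, which is exactly the hypothesis needed for Proposition~\ref{prp:dom_reduce_sharp}; conditions~\ref{con:def__pp_reduce__fun} and \ref{con:def__pp_reduce__smp} are precisely conditions~\ref{con:def__psp_reduce__fun} and \ref{con:def__psp_reduce__smp} of Definition~\ref{def:psp_reduce} (precise strict pseudo-invertibility) for the map $\pi$ viewed as a reduction of $(\Dist,f)$ to $(\pi_*\Dist,g)$, with the distributional condition~\ref{con:def__psp_reduce__dist} holding trivially since $\pi_*^K\Dist^K$ equals $(\pi_*\Dist)^K$ on the nose (so the total variation distance is $0$, which lies in $\mathcal{G}$ by Proposition~\ref{prp:err_spc_zero}). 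So the composition is: apply Proposition~\ref{prp:dom_reduce_sharp} to transfer $\mathcal{G}^\sharp(\Gamma\alpha)$-optimality of $P^\alpha$ from $(\mathcal{E}^\alpha,g)$ to $(\pi_*\Dist,g)$ (noting $g$ extended by $0$ is defined on $\Supp \pi_*\Dist \cup \Supp\mathcal{E}^\alpha$, matching the hypothesis of that proposition on the shared $f$); then apply Corollary~\ref{crl:psp_reduce_sharp} to $\pi$ as a precise strict pseudo-invertible $\mathcal{G}(\Gamma\alpha)$-reduction of $(\Dist,f)$ to $(\pi_*\Dist,g)$, concluding that $(P^\alpha \circ \pi)$ — which is the pullback of $P^\alpha$, an estimator for $(\pi_*\Dist,g)$ — is a $\mathcal{G}^\sharp(\Gamma\alpha)$-optimal estimator for $(\Dist,f)$.

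The step I expect to be the main obstacle is bookkeeping rather than anything deep: I must check that all resource budgets line up correctly under composition, i.e. that $P^\alpha \circ \pi$ is genuinely a polynomial-time $\Gamma\alpha$-scheme (the composition $\circ$ is defined for schemes of the same rank and growth pair, which here is $\Gamma\alpha$ since $\pi: \Words \xrightarrow{\Gamma\alpha}\Words$ by hypothesis and $P^\alpha: \Words \xrightarrow{\Gamma\alpha}\Rats$ by Proposition~\ref{prp:idx_reduce_sharp}), and that the samplability hypothesis in condition~\ref{con:def__pp_reduce__smp} — phrased with $\MGrow\alpha$ — provides exactly the $\EMG$-sampler relative to $\pi$ over the rank-$m$ growth pair that Corollary~\ref{crl:psp_reduce_sharp} needs through its invocation of Theorem~\ref{thm:mult}. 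I would also need to be careful that the ``precise'' version (Corollary~\ref{crl:psp_reduce_sharp}) is what applies here, using condition~\ref{con:def__pp_reduce__fun}'s pointwise (not merely on-average) control of $f$ versus $\bar g \circ \pi$. Once the rank/growth-space identifications are made explicit, the proof is just: $\text{Prop}~\ref{prp:idx_reduce_sharp} \Rightarrow \text{(enlarge } \Fall\alpha \text{ to } \mathcal{G}) \Rightarrow \text{Prop}~\ref{prp:dom_reduce_sharp} \Rightarrow \text{Cor}~\ref{crl:psp_reduce_sharp}$.
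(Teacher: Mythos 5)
Your proposal is correct and follows essentially the same route as the paper's proof: Proposition~\ref{prp:idx_reduce_sharp} to get $\mathcal{G}^\sharp(\Gamma\alpha)$-optimality of $P^\alpha$ for $(\mathcal{E}^\alpha,g)$, then Proposition~\ref{prp:dom_reduce_sharp} with condition~\ref{con:def__pp_reduce__dist} to transfer it to $(\pi_*\Dist,g)$, then Corollary~\ref{crl:psp_reduce_sharp} with conditions~\ref{con:def__pp_reduce__fun} and~\ref{con:def__pp_reduce__smp} to conclude for $P^\alpha\circ\pi$. Your extra bookkeeping (monotonicity in the fall space, the trivial distributional condition for the intermediate problem) only makes explicit what the paper leaves implicit.
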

\end{samepage}

\begin{proof}

By Proposition~\ref{prp:idx_reduce_sharp}, ${P^\alpha}$ is an ${\Fall \alpha^\sharp(\Gamma \alpha)}$-optimal estimator (and in particular a ${\mathcal{G}^\sharp (\Gamma \alpha)}$-optimal estimator) for ${(\mathcal{E}^\alpha, g)}$. By Proposition~\ref{prp:dom_reduce_sharp} and condition \ref{con:def__pp_reduce__dist} of Definition~\ref{def:pp_reduce}, ${P^\alpha}$ is also a ${\mathcal{G}^\sharp (\Gamma \alpha)}$-optimal estimator for ${(\pi_* \Dist, g)}$. By Corollary~\ref{crl:psp_reduce_sharp} and conditions \ref{con:def__pp_reduce__fun} and \ref{con:def__pp_reduce__smp} of Definition~\ref{def:pp_reduce}, ${P^\alpha \circ \pi}$ is a ${\mathcal{G}^\sharp (\Gamma \alpha)}$-optimal estimator for ${(\Dist, f)}$.
\end{proof}

\begin{samepage}
\begin{corollary}
%\label{crl:tbd}

Consider $(\Dist,f)$ a distributional estimation problem of rank ${m}$, $(\mathcal{E},g)$ distributional estimation problem of rank ${n}$, ${\alpha: \Nats^m \Alg \Nats^n}$ an efficient injection and $\pi$ a precise pseudo-invertible $\mathcal{G}(\Gamma)$-reduction of $(\Dist, f)$ to $(\mathcal{E}, g)$ over ${\alpha}$. Assume ${\Fall\alpha \subseteq \mathcal{G}}$ and ${\mathcal{G}}$ is ${\GrowA \alpha}$-ample. Suppose $P$ is an $\EG$-optimal estimator for $(\mathcal{E}, g)$. Then, $P^\alpha \circ \pi$ is a $\mathcal{G} (\Gamma \alpha)$-optimal estimator for $(\Dist, f)$.

\end{corollary}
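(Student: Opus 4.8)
The plan is to mirror the proof of Corollary~\ref{crl:pp_reduce_sharp} step for step, replacing each ingredient about $\sharp$-optimal estimators by its counterpart about ordinary optimal estimators: Proposition~\ref{prp:idx_reduce} in place of Proposition~\ref{prp:idx_reduce_sharp}, Proposition~\ref{prp:dom_reduce} in place of Proposition~\ref{prp:dom_reduce_sharp}, and Theorem~\ref{thm:psp_reduce} in place of Corollary~\ref{crl:psp_reduce_sharp}. All three objects $(\Dist,f)$, $\pi_*\Dist$ together with $g$, and $(\mathcal{E}^\alpha,g)$ are of rank $m$, with $\Gamma\alpha$ a pair of growth spaces of rank $m$ and $\mathcal{G}$ a fall space of rank $m$, so the rank bookkeeping is consistent throughout.

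First I would apply Proposition~\ref{prp:idx_reduce} to the $\EG$-optimal estimator $P$ and the efficient injection $\alpha$, obtaining that $P^\alpha$ is an $\Fall\alpha(\Gamma\alpha)$-optimal estimator for $(\mathcal{E}^\alpha,g)$; since optimality modulo a smaller fall space is the stronger requirement and $\Fall\alpha\subseteq\mathcal{G}$ by hypothesis, $P^\alpha$ is a fortiori a $\mathcal{G}(\Gamma\alpha)$-optimal estimator for $(\mathcal{E}^\alpha,g)$. Next, condition~\ref{con:def__pp_reduce__dist} of Definition~\ref{def:pp_reduce} states that $\pi_*\Dist$ is $\mathcal{G}(\Gamma\alpha)$-dominated by $\mathcal{E}^\alpha$, so I would invoke Proposition~\ref{prp:dom_reduce} to transport optimality from $(\mathcal{E}^\alpha,g)$ to $(\pi_*\Dist,g)$, concluding that $P^\alpha$ is a $\mathcal{G}(\Gamma\alpha)$-optimal estimator for $(\pi_*\Dist,g)$. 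This is the one step whose hypotheses differ from the sharp case: Proposition~\ref{prp:dom_reduce} requires the fall space to be $\GrowA\alpha$-ample, which is exactly why the corollary carries the assumption that $\mathcal{G}$ is $\GrowA\alpha$-ample, whereas the sharp Proposition~\ref{prp:dom_reduce_sharp} needed no such hypothesis.

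Finally I would observe that $\pi$, regarded as a polynomial-time $\Gamma\alpha$-scheme, is a precise strict pseudo-invertible $\mathcal{G}(\Gamma\alpha)$-reduction of $(\Dist,f)$ to $(\pi_*\Dist,g)$ in the sense of Definition~\ref{def:psp_reduce}: condition~\ref{con:def__psp_reduce__dist} degenerates to $\pi_*^K\Dist^{K}\equiv\pi_*^K\Dist^{K}\pmod{\mathcal{G}}$, which holds trivially; condition~\ref{con:def__psp_reduce__fun} is literally condition~\ref{con:def__pp_reduce__fun} of Definition~\ref{def:pp_reduce}; and condition~\ref{con:def__psp_reduce__smp} is literally condition~\ref{con:def__pp_reduce__smp}. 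Applying Theorem~\ref{thm:psp_reduce} then yields that $P^\alpha\circ\pi$ is a $\mathcal{G}(\Gamma\alpha)$-optimal estimator for $(\Dist,f)$, which is the assertion.

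I do not expect a genuine obstacle here: the argument is a mechanical chaining of three earlier results. The only points requiring attention are (a) routing the $\GrowA\alpha$-ampleness hypothesis into the middle step and nowhere else, and (b) the minor technicality, already present in Corollary~\ref{crl:pp_reduce_sharp}, that $g$ must be read (after extension by $0$ where undefined) as a bounded function on $\Supp\mathcal{E}^\alpha\cup\Supp(\pi_*\Dist)$ so that Proposition~\ref{prp:dom_reduce} and Theorem~\ref{thm:psp_reduce} apply verbatim; neither point introduces any real difficulty.
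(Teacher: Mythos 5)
Your proposal is correct and is exactly what the paper intends: its proof of this corollary is stated as "completely analogous" to Corollary~\ref{crl:pp_reduce_sharp}, and your chain (Proposition~\ref{prp:idx_reduce}, then Proposition~\ref{prp:dom_reduce} via condition~\ref{con:def__pp_reduce__dist}, then Theorem~\ref{thm:psp_reduce} via conditions~\ref{con:def__pp_reduce__fun} and~\ref{con:def__pp_reduce__smp}) is precisely that analogue, with the $\GrowA\alpha$-ampleness hypothesis correctly routed only into the dominance step.
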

\end{samepage}

\begin{proof}

Completely analogous to proof of Corollary~\ref{crl:pp_reduce_sharp}.
\end{proof}

\begin{samepage}
\begin{definition}
\label{def:p_reduce}

Consider $(\Dist,f)$ a distributional estimation problem of rank ${m}$, $(\mathcal{E},g)$ a distributional estimation problem of rank ${n}$, ${\alpha: \Nats^m \Alg \Nats^n}$ an efficient injection and ${\pi: \Words \xrightarrow{\Gamma \alpha} \Words}$. $\pi$ is called a \emph{pseudo-invertible $\mathcal{G}(\Gamma)$-reduction of $(\Dist,f)$ to $(\mathcal{E},g)$ over ${\alpha}$} when

\begin{enumerate}[(i)]

\item\label{con:def__p_reduce__dist} ${\pi_*\Dist}$ is ${\mathcal{G}(\Gamma \alpha)}$-dominated by ${\mathcal{E}^\alpha}$.

\item\label{con:def__p_reduce__fun} Denote ${\bar{g}: \Words \rightarrow \Reals}$ the extension of $g$ by 0. We require \[\E_{(x,z) \sim \Dist^{K}}[\Abs{f(x)-\E_{\Un_\pi^{K}}[g(\pi^{K}(x,z))]}] \equiv 0 \pmod{\mathcal{G}}\]

\item\label{con:def__p_reduce__smp} $\Dist$ is $\mathcal{G}(\MGrow \alpha)$-samplable relative to $\pi$.

\end{enumerate}

\end{definition}
\end{samepage}

The following corollaries are completely analogous to Corollary~\ref{crl:pp_reduce_sharp} and therefore given without proof. We also drop the explicit constructions of the optimal polynomial-time estimators which are obviously modeled on Theorem~\ref{thm:sp_reduce_sharp} and Theorem~\ref{thm:sp_reduce}.

\begin{samepage}
\begin{corollary}
\label{crl:p_reduce_sharp}

Consider $(\Dist,f)$ a distributional estimation problem of rank ${m}$, $(\mathcal{E},g)$ distributional estimation problem of rank ${n}$, ${\alpha: \Nats^m \Alg \Nats^n}$ an efficient injection and $\pi$ a pseudo-invertible $\mathcal{G}(\Gamma)$-reduction of $(\Dist, f)$ to $(\mathcal{E}, g)$ over ${\alpha}$. Assume ${\Fall\alpha \subseteq \mathcal{G}}$. Suppose there exist ${P}$ an $\ESG$-optimal estimator for $(\mathcal{E}, g)$ and ${\gamma \in \GammaPoly^m}$ s.t. ${\gamma^{-\frac{1}{2}} \in \mathcal{G}}$ and ${\gamma(\R_P \circ \alpha + \R_\pi) \in \GrowR \alpha}$. Then, there exists a $\mathcal{G}^\sharp (\Gamma \alpha)$-optimal estimator for $(\Dist, f)$.

% Conditions on \gamma

\end{corollary}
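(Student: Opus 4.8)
The plan is to mimic the three-step composition used in the proof of Corollary~\ref{crl:pp_reduce_sharp}, but to replace the precise reduction step (there: Corollary~\ref{crl:psp_reduce_sharp}) by its lax analogue, Theorem~\ref{thm:sp_reduce_sharp}; the extra hypothesis on $\gamma$ is present precisely so that this substitution goes through. Throughout, everything takes place in rank $m$ with fall space $\mathcal{G}$ and growth-space pair $\Gamma\alpha$.

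First I would apply Proposition~\ref{prp:idx_reduce_sharp} to $P$ and the efficient injection $\alpha$, obtaining $P^\alpha$, an $\Fall\alpha^\sharp(\Gamma\alpha)$-optimal estimator for the rank-$m$ problem $(\mathcal{E}^\alpha, g)$. Because $\Fall\alpha \subseteq \mathcal{G}$, the estimator $P^\alpha$ is a fortiori $\mathcal{G}^\sharp(\Gamma\alpha)$-optimal for $(\mathcal{E}^\alpha, g)$; this is the one place the inclusion hypothesis is used, and it ensures that the three optimality statements that follow all live in the common fall space $\mathcal{G}$. Next I would invoke Proposition~\ref{prp:dom_reduce_sharp}, taking its ambient fall space to be $\mathcal{G}$ and its growth spaces to be $\Gamma\alpha$: by condition~\ref{con:def__p_reduce__dist} of Definition~\ref{def:p_reduce}, $\pi_*\Dist$ is $\mathcal{G}(\Gamma\alpha)$-dominated by $\mathcal{E}^\alpha$, so $P^\alpha$ is also a $\mathcal{G}^\sharp(\Gamma\alpha)$-optimal estimator for $(\pi_*\Dist, \bar g)$, where $\bar g$ is the extension of $g$ by $0$.

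Finally — the step that differs from Corollary~\ref{crl:pp_reduce_sharp} — I would observe that conditions~\ref{con:def__p_reduce__fun} and~\ref{con:def__p_reduce__smp} of Definition~\ref{def:p_reduce} say exactly that $\pi$ is a strict pseudo-invertible $\mathcal{G}(\Gamma\alpha)$-reduction of $(\Dist, f)$ to the intermediate problem $(\pi_*\Dist, \bar g)$ in the sense of Definition~\ref{def:sp_reduce}, its condition~\ref{con:def__sp_reduce__dist} being trivial for the tautological ensemble $\pi_*\Dist$. Since $\R_{P^\alpha} = \R_P \circ \alpha$, the hypotheses $\gamma \in \GammaPoly^m$, $\gamma^{-\frac{1}{2}} \in \mathcal{G}$ and $\gamma(\R_P \circ \alpha + \R_\pi) \in \GrowR\alpha$ are precisely those required to apply Theorem~\ref{thm:sp_reduce_sharp} in rank $m$ with $P^\alpha$ playing the role of $P_g$. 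That theorem then produces the $\gamma$-fold averaging estimator $P_f$ described in its statement, which is a $\mathcal{G}^\sharp(\Gamma\alpha)$-optimal estimator for $(\Dist, f)$ — the claimed object.

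The argument is almost entirely bookkeeping, so I do not anticipate a genuine obstacle; the only point deserving a moment's care is verifying that $\pi$, a priori only a reduction ``over $\alpha$'' into $(\mathcal{E}, g)$, legitimately functions as a strict pseudo-invertible reduction into the re-indexed intermediate problem $(\pi_*\Dist, \bar g)$, so that Theorem~\ref{thm:sp_reduce_sharp} is applicable. This is immediate upon unwinding Definitions~\ref{def:p_reduce} and~\ref{def:sp_reduce}, exactly as in the precise case treated in Corollary~\ref{crl:pp_reduce_sharp}. The explicit construction of $P_f$ — which is why the statement merely asserts existence — can simply be transcribed from Theorem~\ref{thm:sp_reduce_sharp}.
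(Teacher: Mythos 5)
Your proposal is correct and is essentially the paper's own argument: the paper omits the proof, stating it is \enquote{completely analogous} to Corollary~\ref{crl:pp_reduce_sharp} with the construction modeled on Theorem~\ref{thm:sp_reduce_sharp}, which is exactly your chain Proposition~\ref{prp:idx_reduce_sharp} $\to$ Proposition~\ref{prp:dom_reduce_sharp} $\to$ Theorem~\ref{thm:sp_reduce_sharp} applied to the intermediate problem $(\pi_*\Dist,\bar g)$ in rank $m$ with fall space $\mathcal{G}$ and growth spaces $\Gamma\alpha$. Your accounting of the $\gamma$ hypothesis (matching $\R_{P^\alpha}=\R_P\circ\alpha$ and $\GrowR\alpha$) and of why condition~(i) of Definition~\ref{def:sp_reduce} is tautological for $\pi_*\Dist$ is exactly the intended bookkeeping.
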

\end{samepage}

\begin{samepage}
\begin{corollary}
\label{crl:p_reduce}

Consider $(\Dist,f)$ a distributional estimation problem of rank ${m}$, $(\mathcal{E},g)$ distributional estimation problem of rank ${n}$, ${\alpha: \Nats^m \Alg \Nats^n}$ an efficient injection and $\pi$ a pseudo-invertible $\mathcal{G}(\Gamma)$-reduction of $(\Dist, f)$ to $(\mathcal{E}, g)$ over ${\alpha}$. Assume ${\Fall\alpha \subseteq \mathcal{G}}$ and ${\mathcal{G}}$ is ${\GrowA \alpha}$-ample. Suppose there exist ${P}$ an $\EG$-optimal estimator for $(\mathcal{E}, g)$ and ${\gamma \in \GammaPoly^m}$ s.t. ${\gamma^{-\frac{1}{2}} \in \mathcal{G}}$ and ${\R_P \circ \alpha + \gamma\R_\pi \in \GrowR \alpha}$. Then, there exists a $\mathcal{G} (\Gamma \alpha)$-optimal estimator for $(\Dist, f)$.

% Conditions on \gamma

\end{corollary}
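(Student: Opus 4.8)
The plan is to mimic the proof of Corollary~\ref{crl:pp_reduce_sharp} line for line, replacing the $\sharp$-ingredients (Proposition~\ref{prp:idx_reduce_sharp}, Proposition~\ref{prp:dom_reduce_sharp}, Corollary~\ref{crl:psp_reduce_sharp}) by their non-$\sharp$ counterparts (Proposition~\ref{prp:idx_reduce}, Proposition~\ref{prp:dom_reduce}, Theorem~\ref{thm:sp_reduce}) and tracking the additional amenability and resource hypotheses these demand. Concretely, one routes the reduction through the intermediate rank-$m$ word ensemble $\pi_*\Dist$, with $(\pi_*\Dist)^L := \pi_*^L\Dist^L$, and builds the estimator in three stages, ending at exactly the $P_f$ constructed in Theorem~\ref{thm:sp_reduce}.

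First, by Proposition~\ref{prp:idx_reduce}, $P^\alpha$ is an $\Fall\alpha(\Gamma\alpha)$-optimal estimator for $(\mathcal{E}^\alpha,g)$; since $\Fall\alpha\subseteq\mathcal{G}$ by hypothesis, $P^\alpha$ is in particular a $\mathcal{G}(\Gamma\alpha)$-optimal estimator for $(\mathcal{E}^\alpha,g)$, and $\R_{P^\alpha}=\R_P\circ\alpha$. Next, by condition~\ref{con:def__p_reduce__dist} of Definition~\ref{def:p_reduce}, $\pi_*\Dist$ is $\mathcal{G}(\Gamma\alpha)$-dominated by $\mathcal{E}^\alpha$; since $\mathcal{G}$ is $\GrowA\alpha$-ample by hypothesis, Proposition~\ref{prp:dom_reduce} applies (with $\mathcal{G}$ and $\Gamma\alpha$ in the roles of $\Fall$ and $\Gamma$, and the extension $\bar g$ of $g$ by $0$ where needed), giving that $P^\alpha$ is also a $\mathcal{G}(\Gamma\alpha)$-optimal estimator for $(\pi_*\Dist,g)$.

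Finally, one checks that $\pi$, viewed as a polynomial-time $\Gamma\alpha$-scheme of rank $m$, is a strict pseudo-invertible $\mathcal{G}(\Gamma\alpha)$-reduction of $(\Dist,f)$ to $(\pi_*\Dist,g)$ in the sense of Definition~\ref{def:sp_reduce}: condition~\ref{con:def__sp_reduce__dist} holds with equality (hence modulo $\mathcal{G}$, as $0\in\mathcal{G}$); condition~\ref{con:def__sp_reduce__fun} is exactly condition~\ref{con:def__p_reduce__fun} of Definition~\ref{def:p_reduce}; and condition~\ref{con:def__sp_reduce__smp} is condition~\ref{con:def__p_reduce__smp}, since $\mathcal{G}(\MGrow\alpha)$-samplability relative to $\pi$ is precisely $\mathcal{G}(\MGrow)$-samplability relative to $\pi$ once the ambient schemes are of rank $m$ with growth spaces $\Gamma\alpha$. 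Then Theorem~\ref{thm:sp_reduce}, applied with $\mathcal{G}$, $\Gamma\alpha$, the pair $(\pi_*\Dist,g)$, the estimator $P^\alpha$, and the given $\gamma$, yields a $\mathcal{G}(\Gamma\alpha)$-optimal estimator $P_f$ for $(\Dist,f)$; its two resource preconditions, $\gamma^{-\frac{1}{2}}\in\mathcal{G}$ and $\R_{P^\alpha}+\gamma\R_\pi=\R_P\circ\alpha+\gamma\R_\pi\in\GrowR\alpha$, are exactly the assumptions of the corollary.

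I expect no genuine analytic difficulty: there is no new estimate to establish. The only care needed is the bookkeeping at the three junctions — confirming that $\GrowA\alpha$-ampleness of $\mathcal{G}$ is the right hypothesis to feed into Proposition~\ref{prp:dom_reduce} at the rank-$m$ level, that the identity $\R_{P^\alpha}=\R_P\circ\alpha$ makes the displayed resource condition coincide with that of Theorem~\ref{thm:sp_reduce}, and that the chain of $\mathcal{G}$-congruences and dominations composes (each closed under the relevant operations by the basic fall-space and dominance lemmas). The passage from $\Fall\alpha$-optimality to $\mathcal{G}$-optimality is immediate from $\Fall\alpha\subseteq\mathcal{G}$ and the definition of optimality.
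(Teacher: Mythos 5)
Your proposal is correct and follows essentially the route the paper intends: the paper gives this corollary without proof, declaring it "completely analogous" to Corollary~\ref{crl:pp_reduce_sharp} with the estimator modeled on Theorem~\ref{thm:sp_reduce}, and your three-step chain (Proposition~\ref{prp:idx_reduce}, then Proposition~\ref{prp:dom_reduce} using $\GrowA\alpha$-ampleness, then viewing $\pi$ as a strict pseudo-invertible $\mathcal{G}(\Gamma\alpha)$-reduction to $(\pi_*\Dist,g)$ and invoking Theorem~\ref{thm:sp_reduce}) is exactly that argument, with the resource bookkeeping $\R_{P^\alpha}=\R_P\circ\alpha$ correctly matching the corollary's hypotheses.
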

\end{samepage}

Note that the last results involved passing from fall space ${\Fall}$ and growth spaces ${\Gamma}$ to fall space ${\mathcal{G}}$ and growth spaces ${\Gamma \alpha}$, however in many natural examples ${m = n}$, ${\mathcal{G} = \Fall}$ and ${\Gamma \alpha = \Gamma}$. In particular, the following propositions are often applicable.

\begin{samepage}
\begin{proposition}
\label{prp:stable_growth_space}

Assume ${\Gamma_*}$ is a growth space of rank ${n}$ s.t. for any ${\gamma \in \Gamma_*}$ and ${\alpha \in \GammaPoly^{nn}}$, ${\gamma \circ \alpha \in \Gamma_*}$. Let ${\alpha^*, \beta^* \in \GammaPoly^{nn}}$ be s.t. ${\beta^*(\alpha^*(K))=K}$. Then, ${\Gamma_* \alpha^* = \Gamma_*}$.

\end{proposition}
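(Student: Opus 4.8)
The plan is to establish the two inclusions $\Gamma_* \alpha^* \subseteq \Gamma_*$ and $\Gamma_* \subseteq \Gamma_* \alpha^*$ separately; both follow quickly from the standing closure hypothesis on $\Gamma_*$, and only the second uses the retraction identity $\beta^* \circ \alpha^* = \mathrm{id}_{\Nats^n}$ in an essential way.

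For the inclusion $\Gamma_* \alpha^* \subseteq \Gamma_*$, I would take an arbitrary $\gamma_\alpha \in \Gamma_* \alpha^*$ and unpack the definition to obtain $\gamma \in \Gamma_*$ with $\gamma_\alpha \leq \gamma \circ \alpha^*$ pointwise. Since $\alpha^* \in \GammaPoly^{nn}$, the hypothesis gives $\gamma \circ \alpha^* \in \Gamma_*$, and then condition~\ref{con:def__grow__ineq} of Definition~\ref{def:grow} (downward closure of a growth space) yields $\gamma_\alpha \in \Gamma_*$. Note this half in fact holds for any $\alpha \in \GammaPoly^{nn}$ whatsoever, so it carries no specific information about $\alpha^*$.

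For the reverse inclusion $\Gamma_* \subseteq \Gamma_* \alpha^*$, the key move is to pre-compose with the left inverse: given $\gamma \in \Gamma_*$, put $\delta := \gamma \circ \beta^*$. Because $\beta^* \in \GammaPoly^{nn}$, the hypothesis gives $\delta \in \Gamma_*$, and the assumption $\beta^*(\alpha^*(K)) = K$ gives
\[\delta \circ \alpha^* = \gamma \circ (\beta^* \circ \alpha^*) = \gamma.\]
Hence $\gamma \leq \delta \circ \alpha^*$ with $\delta \in \Gamma_*$, which is exactly the defining condition for $\gamma \in \Gamma_* \alpha^*$. Combining the two inclusions gives $\Gamma_* \alpha^* = \Gamma_*$.

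There is no real obstacle here; the only care needed is the routine bookkeeping that the composites $\gamma \circ \alpha^*$ and $\gamma \circ \beta^*$ are genuine single-valued functions $\Nats^n \to \Nats$ (not $\Nats^n$-tuples), so that both the closure hypothesis and the growth-space axioms apply verbatim, and that the ``$\leq$'' in the definition of $\Gamma_* \alpha^*$ is pointwise, matching condition~\ref{con:def__grow__ineq}.
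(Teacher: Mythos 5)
Your proof is correct and follows essentially the same route as the paper's: the inclusion $\Gamma_*\alpha^* \subseteq \Gamma_*$ via the closure hypothesis applied to $\gamma\circ\alpha^*$ plus downward closure, and the reverse inclusion by writing $\gamma = (\gamma\circ\beta^*)\circ\alpha^*$ with $\gamma\circ\beta^* \in \Gamma_*$. The only difference is that you spell out the downward-closure step and the remark that the first half needs no retraction, which the paper leaves implicit.
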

\end{samepage}

\begin{proof}

For any ${\gamma_\alpha \in \Gamma_* \alpha^*}$ there is ${\gamma \in \Gamma_*}$ s.t. ${\gamma_\alpha \leq \gamma \circ \alpha \in \Gamma_*}$. Conversely, for any ${\gamma \in \Gamma_*}$ we have ${\gamma = \gamma \circ \beta \circ \alpha \in \Gamma_* \alpha^*}$.
\end{proof}

\begin{samepage}
\begin{proposition}
\label{prp:gamma_r_alpha_p}

Consider ${r: \NatFun \Nats}$ steadily growing and ${p \in \NatPoly}$ increasing in the last argument. Define ${\alpha_p: \NatFun \Nats}$ by ${\forall J \in \Nats^{n-1}, k \in \Nats: \alpha_p(J,k)=(J,p(J,k))}$. Then, ${\Gamma_r \alpha_p = \Gamma_r}$.

\end{proposition}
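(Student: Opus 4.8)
The plan is to establish the two inclusions $\Gamma_r \alpha_p \subseteq \Gamma_r$ and $\Gamma_r \subseteq \Gamma_r \alpha_p$ directly from the definitions, much as in Proposition~\ref{prp:stable_growth_space} (which does not apply verbatim here, since $\Gamma_r$ need not be closed under pre-composition with an arbitrary map in $\GammaPoly^{nn}$ — only with the special map $\alpha_p$, which alters the last coordinate only). First one checks that $\alpha_p \in \GammaPoly^{nn}$, so that $\Gamma_r \alpha_p$ is defined: its first $n-1$ components are the coordinate polynomials $K_0, \ldots, K_{n-2}$ and its last component is $p \in \NatPoly$. Two elementary facts will do all the work: (a) $p(J,k) \ge k$ for every $J \in \Nats^{n-1}$, $k \in \Nats$, which follows from $p$ being (strictly) increasing in its last argument together with $p$ being $\Nats$-valued; (b) $r$ is non-decreasing in its last argument (condition~(ii) of Definition~\ref{def:sgrow}); and of course every $q \in \NatPoly$ is non-decreasing in each of its variables, since its coefficients are non-negative.

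For $\Gamma_r \alpha_p \subseteq \Gamma_r$: given $\gamma_\alpha \le \gamma \circ \alpha_p$ with $\gamma \in \Gamma_r$, choose $q \in \NatPoly$ with $\gamma(J,k) \le r(J, q(J,k))$; then $\gamma_\alpha(J,k) \le \gamma(J, p(J,k)) \le r(J, q(J, p(J,k)))$, and since $(J,k) \mapsto q(J,p(J,k))$ is again a polynomial with natural coefficients, this exhibits $\gamma_\alpha$ as an element of $\Gamma_r$. For $\Gamma_r \subseteq \Gamma_r \alpha_p$: given $\gamma \in \Gamma_r$ with $\gamma(J,k) \le r(J,q(J,k))$, set $\gamma'(J,m) := r(J, q(J,m))$; then $\gamma' \in \Gamma_r$ (witnessed by the polynomial $q$, and polynomially bounded because $r$ is). Using first $p(J,k) \ge k$ with monotonicity of $q$ in the last argument, and then monotonicity of $r$ in the last argument, we get
\[
\gamma(J,k) \le r(J, q(J,k)) \le r\bigl(J, q(J, p(J,k))\bigr) = \gamma'(J, p(J,k)) = (\gamma' \circ \alpha_p)(J,k),
\]
so $\gamma \le \gamma' \circ \alpha_p$ with $\gamma' \in \Gamma_r$, i.e.\ $\gamma \in \Gamma_r \alpha_p$.

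There is no serious obstacle here; the only point requiring care is the reverse inclusion, where one has to ``undo'' the substitution $k \mapsto p(J,k)$. The fact $p(J,k) \ge k$ — i.e.\ that $p$ really does grow in its last argument — is precisely what makes this possible: it lets the single polynomial $q$, once fed into $r$, still dominate $\gamma$ after the substitution. If $p$ were merely non-decreasing in $k$ (say, constant in $k$) this step, and the proposition, would fail, so the hypothesis that $p$ is increasing is used in an essential way.
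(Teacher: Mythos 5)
Your proof is correct and follows essentially the same route as the paper's: both inclusions are checked directly from the definition of $\Gamma_r\alpha_p$, using exactly the two facts that $p(J,k)\geq k$ and that $r$ (and every polynomial in $\NatPoly$) is non-decreasing in the last argument. If anything, your reverse inclusion is spelled out slightly more carefully than the paper's (you exhibit the witness $\gamma'(J,m)=r(J,q(J,m))$ explicitly and justify $p(J,k)\geq k$), but the underlying argument is the same.
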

\end{samepage}

\begin{proof}

Consider ${\gamma_\alpha \in \Gamma_r \alpha_p}$. There is ${\gamma \in \Gamma_r}$ s.t. ${\gamma_\alpha \leq \gamma \circ \alpha_p}$. There is ${q \in \NatPoly}$ s.t. ${\gamma(J,k) \leq r(J,q(J,k))}$. We get ${\gamma_\alpha(J,k) \leq \gamma(J,p(J,k)) \leq r(J,q(J,p(J,k)))}$ and therefore ${\gamma_\alpha \in \Gamma_r}$. Conversely, consider ${\gamma' \in \Gamma_r}$. There is ${q' \in \NatPoly}$ s..t ${\gamma'(J,k) \leq r(J,q'(J,k))}$.\\ $p(J,k) \geq k$ and ${r}$ is non-decreasing in the last argument, implying that ${r \leq r \circ \alpha_p}$. We conclude that ${\gamma'(J,k) \leq r(J,p(J,q'(J,k)))}$ and therefore ${\gamma' \in \Gamma_r \alpha_p}$.
\end{proof}

\subsection{Completeness}

%Fix ${\Fall^{(1)}}$, a fall space of rank 1. Assume that for any non-constant ${p \in \Nats[k]}$, ${\Fall^{(1)} p \subseteq \Fall^{(1)}}$. Denote\\ ${\Gamma_{\text{rand}}^1:=(\GammaPoly^1,\Gamma_0^1)}$.

Fix ${r,s: \Nats^n \Alg \Nats}$ s.t.

\begin{enumerate}[(i)]

\item ${\T_r, \T_s \in \GammaPoly^n}$

\item ${r}$ and ${s}$ are steadily growing.

\item ${\forall K \in \Nats^n: 1 \leq r(K) \leq s(K)}$

\end{enumerate}

Denote ${\Gamma_\text{det}:=(\Gamma_0^n,\Gamma_0^n)}$, ${\Gamma_{\textnormal{red}}:=(\Gamma_{r},\Gamma_0^n)}$, ${\Gamma_{\textnormal{smp}}:=(\Gamma_{s},\Gamma_0^n)}$.

We will show that certain classes of functions paired with ${\Fall(\Gamma_{\text{smp}})}$-samplable word ensembles have a distributional estimation problem which is complete w.r.t. precise pseudo-invertible ${\Fall(\Gamma_{\text{red}})}$-reductions. This construction is an adaption of the standard construction of a complete problem for ${\mathsf{SampNP}}$, as provided in Theorem 10.25 of \cite{Goldreich_2008}.

Due to the large number of variables and functions in the following theorem, some intuitive exposition of the result seems helpful. A universal function $\mathfrak{F}$ will be considered, which takes three inputs. There is an element $\phi$ of some encoded set $E$ that tells $\mathfrak{F}$ which (possibly hard-to-compute) function $f$ to emulate, a time parameter $k$ which controls the computational resources used in emulating $f$, and a bit string $x$ which is just the input to $f$. When $k$ is sufficiently large, and $b$ is chosen appropriately, $\mathfrak{F}(b,k,x)=f(x)$. The distribution $\Dist_{\mathfrak{F}}$ is over 4-element tuples of a bit string (which dictates what $f$ is), the last coordinate of $K$ ($k$), which serves as a time parameter, a bit string $a$ (which can be interpreted as a sampler), and a bit string $x$ (which is the output of the sampler when run for $k$ steps.

For samplable distributions $\Dist$, and functions $f$ which have a corresponding $b$ that makes $\mathfrak{F}$ emulate them, there is a reduction to this universal problem. Observe that if $\Dist$ is samplable, there is some bit string $a$ which encodes a turing machine that samples from $\Dist$. The reduction maps $x$ to the tuple $(b,p(K),a,x)$, and then reindexes. (In particular, to ensure that the time parameter is large enough to fully run the sampler.) %It is trivially possible to sample from $D$ given the tuple, $\mathfrak{F}$ perfectly mimics $f$ on the resulting distribution, there are no problems with reindexing for more computation time, and it is possible to reweight the universal distribution to copy the produced distribution (because $b$ and $a$ are finite bitstrings, so the image of the mapping is a small fixed portion of the universal distribution). Therefore, an optimal predictor for the universal problem can be used in a reduction to yield an optimal predictor for $(\Dist,f)$.}

\begin{samepage}
\begin{theorem}
\label{thm:complete}

Consider an encoded set ${E}$ which is prefix-free, i.e. for all ${\phi,\psi \in E}$ and ${z \in \Bool^{>0}}$, ${\En_E(\phi) \ne \En_E(\psi)z}$. Consider ${\mathfrak{F}: E \times \Nats \times \Words \rightarrow \Reals}$ bounded. For any ${K \in \Nats^n}$, define\\ $\zeta^K: \Words^2 \rightarrow \Words^2$ by 

\begin{equation}
%\label{eqn:tbd}
\zeta^K(a,w)=(a,\Ev^{K_{n-1}}(a;\En_{\Nats^n}(K),w))
\end{equation}

Define the distributional estimation problem $({\Dist_{\mathfrak{F}}},f_{\mathfrak{F}})$ by 

\begin{align}
\Dist_{\mathfrak{F}}^K&:=\En_*^4 (\Un^{r(K)} \times \En_{\Nats*} \delta_{K_{n-1}} \times \zeta_*^k(\Un^{r(K)} \times \Un^{s(K)})) \\
f_{\mathfrak{F}}(\Chev{b,\En_\Nats(k),a,x}) &:= \begin{cases}\mathfrak{F}(\phi,k,x) \textnormal{ if } \exists z \in \Words: b=\En_E(\phi)z \\ 0 \textnormal{ if } \forall \phi \in E, z \in \Words: b \ne \En_E(\phi)z\end{cases}
\end{align}

For any ${p \in \NatPoly}$, define ${\alpha_p: \NatFun \Nats^n}$ by

\begin{equation}
%\label{eqn:tbd}
\forall J \in \Nats^{n-1}, k \in \Nats: \alpha_p(J,k) = (J,p(J,k))
\end{equation}

Consider a distributional estimation problem ${(\Dist,f)}$ s.t. ${\Dist}$ is ${\Fall(\Gamma_{\textnormal{smp}})}$-samplable and there are ${\phi \in E}$ and ${q \in \Nats[k]}$ s.t. for any ${x \in \Supp \Dist}$ and ${k \geq q(\Abs{x})}$, ${f(x)=\mathfrak{F}(\phi,k,x)}$. Then, there is a precise pseudo-invertible ${\Fall(\Gamma_{\textnormal{red}})}$-reduction from ${(\Dist,f)}$ to ${(\Dist_{\mathfrak{F}}, f_{\mathfrak{F}})}$ over ${\alpha_p}$ for some\\ $p \in \NatPoly$ increasing in the last argument (it is easy to see that any such ${\alpha_p}$ is an efficient injection).

\end{theorem}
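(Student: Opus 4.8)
The plan is to send a problem instance $x$ of $(\Dist,f)$ to the $\Dist_{\mathfrak F}$-instance whose ``program'' component $a$ is a fixed encoding of a sampler for $\Dist$ and whose ``$b$'' component is a fixed encoding of $\phi$, both padded to the prescribed length $r(\alpha_p(L))$ with fresh random bits. Let $\tau$ be a polynomial-time $\Fall(\Gamma_{\textnormal{smp}})$-sampler of $\Dist$, so (as $\Gamma_{\textnormal{smp}}=(\Gamma_s,\Gamma_0^n)$) $\tau$ is uniform, uses $\R_\tau\in\Gamma_s$ random bits, and $\Dtv(\Dist^L,\tau_\bullet^L)\in\Fall$. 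Since $p$ will be chosen increasing in its last argument, $\alpha_p$ has the efficient inverse $\beta$ that copies the first $n-1$ coordinates and binary-searches the last one; in particular $\alpha_p$ is an efficient injection and — crucially — re-deriving $L$ from $\alpha_p(L)$ costs only poly-logarithmically in $p(J,l)$, hence is bounded by a polynomial in $(J,l)$ whose shape can be fixed before $p$ is chosen. Fix a program $\hat\tau$ for the machine that, on input $(\En_{\Nats^n}(K),w)$, computes $L=\beta(K)$ and simulates $\tau$ at parameter $L$ on the first $\R_\tau(L)$ bits of $w$, and set $c:=\Abs{\En_E(\phi)}+\Abs{\hat\tau}$. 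Then I would choose $p\in\NatPoly$, increasing in the last argument, large enough that for all $L=(J,l)$: (a) $p(J,l)$ exceeds the running time of $\hat\tau$ at parameter $\alpha_p(L)$ (non-circular by the previous sentence); (b) $s(\alpha_p(L))\ge\R_\tau(L)$, using $\R_\tau\in\Gamma_s$ and that $s$ is non-decreasing in its last argument; (c) $r(\alpha_p(L))\ge c$, which holds once $p(J,l)$ exceeds a fixed iterate at $(J,0)$ of the polynomial in condition~\ref{con:def__sgrow__poly} of Definition~\ref{def:sgrow} for $r$, since $r\ge1$; (d) $p(J,l)\ge q(P(J,l))$, where $P$ bounds the output length of $\tau$ at parameter $L$; and (e) $p$ dominates the bookkeeping polynomials bounding the runtimes of $\pi$, of the weight $W$, and of the relative sampler below. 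Finally I define a polynomial-time $\Gamma_{\textnormal{red}}\alpha_p$-scheme $\pi$ of signature $\Words\to\Words$, with no advice, by
\[\pi^L(x,z_1z_2):=\Chev{\En_E(\phi)z_1,\ \En_\Nats(p(J,l)),\ \hat\tau z_2,\ x},\qquad z_1\in\Bool^{r(\alpha_p(L))-\Abs{\En_E(\phi)}},\ z_2\in\Bool^{r(\alpha_p(L))-\Abs{\hat\tau}};\]
its random-bit count $\R_\pi\le2\,r\circ\alpha_p$ lies in $\Gamma_r\alpha_p=\Gamma_r$ (Proposition~\ref{prp:gamma_r_alpha_p}).

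Next I would verify the three requirements of Definition~\ref{def:pp_reduce} with $\mathcal G=\Fall$, $\Gamma=\Gamma_{\textnormal{red}}$ and $\alpha=\alpha_p$. For condition~\ref{con:def__pp_reduce__fun}: prefix-freeness of $E$ makes $\En_E(\phi)$ the unique codeword that is a prefix of $\En_E(\phi)z_1$, so whenever $\pi^L(x,z)\in\Supp\Dist_{\mathfrak F}^{\alpha_p(L)}$ we get $f_{\mathfrak F}(\pi^L(x,z))=\mathfrak F(\phi,p(J,l),x)$; for $x\in\Supp\tau_\bullet^L$ the instance does lie in that support and, by the hypothesis on $(\Dist,f)$ together with $p(J,l)\ge q(\Abs{x})$, this equals $f(x)$. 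Since the $\Dist^L$-mass of $\Words\setminus\Supp\tau_\bullet^L$ is at most $\Dtv(\Dist^L,\tau_\bullet^L)\in\Fall$ and $f,\mathfrak F$ are bounded, the expectation in~\ref{con:def__pp_reduce__fun} is in $\Fall$. For condition~\ref{con:def__pp_reduce__smp}: the fourth coordinate of $\pi^L(x,z)$ is $x$ itself, so for any $y$ in the support of $\pi_*\Dist^L$ the fibre $(\pi^L)^{-1}(y)$ pins down the source point, whence $\Dist^L\mid(\pi^L)^{-1}(y)=\delta_x$ where $x$ is the fourth coordinate of $y$; the deterministic poly-time scheme extracting that coordinate is thus a relative sampler with zero error, so $\Dist$ is $\Fall(\mathrm{M}\Gamma_{\textnormal{red}}\alpha_p)$-samplable relative to $\pi$.

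The substantial step is condition~\ref{con:def__pp_reduce__dist}: I would exhibit a bounded Radon--Nikodym $\Fall(\Gamma_{\textnormal{red}}\alpha_p)$-derivative of $\pi_*\Dist$ with respect to $\mathcal E^{\alpha_p}=\Dist_{\mathfrak F}^{\alpha_p}$, namely the deterministic $W$ with $W^L(y):=2^{c}$ when $y=\Chev{b,\En_\Nats(p(J,l)),a,x}$ with $\En_E(\phi)$ a prefix of $b$ and $\hat\tau$ a prefix of $a$, and $W^L(y):=0$ otherwise. The key observations are that $\mathcal U_n$ is prefix-free and $\Ev$ continues the program by zeros, so $\Ev^{p(J,l)}(\hat\tau z_2;\En_{\Nats^n}(\alpha_p(L)),w)$ does not depend on $z_2$ and — by the time and randomness budget secured in (a),(b) — equals the output of $\tau$ at parameter $\beta(\alpha_p(L))=L$ on the first $\R_\tau(L)$ bits of $w$. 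Hence on the support of $W^L$ one has $\mathcal E^{\alpha_p(L)}(y)=2^{-2r(\alpha_p(L))}\tau_\bullet^L(x)$ while $(\pi_*\Dist)^L(y)=2^{-2r(\alpha_p(L))+c}\Dist^L(x)$, and $\pi_*\Dist^L$ is supported there. Multiplying the former by $W^L(y)=2^c$ and summing — there are exactly $2^{2r(\alpha_p(L))-c}$ admissible pairs $(b,a)$ for each fixed $x$ — collapses $\sum_y\Abs{\mathcal E^{\alpha_p(L)}(y)W^L(y)-(\pi_*\Dist)^L(y)}$ to $\sum_x\Abs{\tau_\bullet^L(x)-\Dist^L(x)}=2\,\Dtv(\tau_\bullet^L,\Dist^L)\in\Fall$, which is condition~\ref{con:def__pp_reduce__dist}. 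Assembling the three conditions, $\pi$ is a precise pseudo-invertible $\Fall(\Gamma_{\textnormal{red}})$-reduction of $(\Dist,f)$ to $(\Dist_{\mathfrak F},f_{\mathfrak F})$ over $\alpha_p$.

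I expect the main obstacle to be exactly this last, bookkeeping-heavy step: matching the normalization constants precisely, and — more delicately — arranging that the fixed program $\hat\tau$ fits inside the $p(J,l)$-step and $s(\alpha_p(L))$-bit budgets of the complete problem. The latter is what forces the use of the binary-search inverse $\beta$, so that re-deriving $L$ from $\alpha_p(L)$ costs only poly-logarithmically in $p(J,l)$ (breaking the apparent circularity in (a)), together with the quantitative use of the steadily-growing property of $r$ and $s$ in (b) and (c).
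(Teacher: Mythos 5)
Your construction is the same as the paper's: pad $\En_E(\phi)$ and a fixed program for the sampler with fresh random bits, attach $\En_\Nats(p(J,l))$ and $x$, resolve the circularity in choosing $p$ via the binary-search inverse of $\alpha_p$, take $W\equiv 2^{\Abs{\En_E(\phi)}+\Abs{\hat\tau}}$ on the admissible tuples as the Radon--Nikodym derivative, and use fourth-coordinate extraction as the relative sampler; your verification of the function condition (splitting off the $\Dist$-mass outside $\Supp\tau_\bullet^L$) is if anything slightly cleaner than the paper's, which instead replaces $\Dist$ by $\sigma_\bullet$ via congruence before every pointwise computation.

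There is one concrete technical wrinkle you gloss over. As written, $\pi^L(x,z_1z_2):=\Chev{\En_E(\phi)z_1,\En_\Nats(p(J,l)),\hat\tau z_2,x}$ for \emph{every} $x\in\Words$ is not a legal polynomial-time scheme: the definition requires $\max_{x\in\Words}\max_{y,z}\T_\pi(K,x,y,z)\in\GammaPoly^n$, and copying an arbitrarily long $x$ into the output cannot be done in time bounded by a polynomial in the parameter alone (the same remark applies to your $W$ and to the coordinate extractor). Since $\Dist^L$ may put (small) mass on strings much longer than any polynomial bound, you must cap $\pi$ at some $h\in\NatPoly$ bounding the sampler's output length and let it output, say, $\Estr$ beyond that --- which is exactly why the paper defines $\pi$ by the formula only for $x\in\Supp\sigma_\bullet^K$ with an $\Estr$ fallback. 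Once you do this, two of your exact claims weaken: ``$\pi_*\Dist^L$ is supported on $\Supp W^L$'' and the ``zero error'' claim for the relative sampler hold only up to the $\Dist^L$-mass of $\Bool^{>h(L)}$, which is at most $\Dtv(\Dist^L,\tau_\bullet^L)\in\Fall$; so conditions (i) and (iii) become congruences modulo $\Fall$ rather than identities (alternatively, route all three verifications through $\tau_\bullet$ first, as the paper does). This is a routine repair, not a missing idea, but the argument is incomplete without it.
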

\end{samepage}

\begin{proof}

Let ${\sigma}$ be an ${\Fall(\Gamma_{\text{smp}})}$-sampler of ${\Dist}$. Denote ${b=\En_E(\phi)}$. Choose ${p \in \NatPoly}$ increasing in the last argument and ${a \in \Words}$ s.t. for any ${K \in \Nats^{n}}$, $z \in \Words$, ${w_1 \in \Bool^{\R_\sigma(K)}}$ and ${w_2 \in \Words}$: ${p(K) \geq q(\max_{x \in \Supp \sigma_\bullet^K} \Abs{x})}$, ${r(\alpha_p(K)) \geq \Abs{b}}$, ${r(\alpha_p(K)) \geq \Abs{a}}$, ${s(\alpha_p(K)) \geq \R_\sigma(K)}$ and 

\[\Ev^{p(K)}(az;\En_{\Nats^n}(\alpha_p(K)),w_1 w_2)=\sigma^{K}(w_1)\]

The latter is possible because ${\alpha_p}$ can be efficiently inverted using binary search over ${K_{n-1}}$.

Denote ${r_p := r \circ \alpha_p}$. Note that ${\Gamma_{\textnormal{red}} \alpha_p = \Gamma_{\textnormal{red}}}$ by Proposition~\ref{prp:gamma_r_alpha_p}. We construct\\ $\pi: \Words \xrightarrow{\Gamma_{\textnormal{red}}} \Words$ s.t. for any ${K \in \Nats^{n}}$, ${x \in \Supp \sigma_{\bullet}^{K}}$, ${z_b \in \Bool^{r_p(K)-\Abs{b}}}$ and ${z_a \in \Bool^{r_p(K)-\Abs{a}}}$

\begin{align}
\R_\pi(K) &= 2r_p(K)-\Abs{a}-\Abs{b} \\
\label{eqn:thm__complete__red}\pi^{K}(x,z_b z_a)&=\Chev{bz_b,\En_\Nats(p(K)),a z_a, x} 
\end{align}

We also ensure that for any ${K \in \Nats^n}$, ${x \in \Words}$ and ${z_b,z_a}$ as above, either \ref{eqn:thm__complete__red} holds or 

\[\pi^K(x,z_b z_a)=\Estr\]

To verify condition~\ref{con:def__pp_reduce__dist} of Definition~\ref{def:pp_reduce} (with ${\alpha_p}$ playing the role of the efficient injection), fix ${h \in \NatPoly}$ s.t. $h \geq r_p$ and $\Supp \sigma_\bullet^K \subseteq \Bool^{h(K)}$. Construct ${W: \Words \xrightarrow{\Gamma_{\textnormal{det}}} \Rats^{\geq 0}}$ s.t. 

\[W^K(y)=\begin{cases}2^{\Abs{a}+\Abs{b}} \text{ if } \exists z_b,z_a,x \in \Bool^{\leq h(K)}: y=\Chev{bz_b,\En_\Nats(p(K)), az_a,x}\\0 \text{ otherwise}\end{cases}\]

${\Dist^K \equiv \sigma_\bullet^K \pmod{\Fall}}$ since ${\sigma}$ is an ${\Fall(\Gamma_{\textnormal{smp}})}$-sampler of ${\Dist}$. By Proposition~\ref{prp:prob_cong_push}

\[\pi_*^K\Dist^K \equiv \pi_*^K\sigma_\bullet^K \pmod{\Fall}\]

It follows that

\[\sum_{y \in \Words} \Abs{\Dist_{\mathfrak{F}}^{\alpha_p(K)}(y)W^K(y)-(\pi_*^K\Dist^K)(y)} \equiv \sum_{y \in \Words} \Abs{\Dist_{\mathfrak{F}}^{\alpha_p(K)}(y)W^K(y)-(\pi_*^K\sigma_\bullet^K)(y)} \pmod{\Fall}\]

For any ${y \in \Words}$, if ${W^K(y)=0}$ then ${(\pi_*^K\sigma_\bullet^K)(y) = 0}$, so the corresponding terms contribute nothing to the sum on the right hand side. Denote ${\bar{\pi}^K(x,z_b,z_a):=\Chev{bz_b,\En_\Nats(p(K)),az_a,x}}$.

\[\sum_{\Words} \Abs{\Dist_{\mathfrak{F}}^{\alpha_p(K)}W^K-(\pi_*^K\Dist^K)} \equiv \sum_{\substack{z_b \in \Bool^{\leq h(K)}\\z_a \in \Bool^{\leq h(K)}\\x \in \Bool^{\leq h(K)}}} \Abs{\Dist_{\mathfrak{F}}^{\alpha_p(K)}(\bar{\pi}^K(x,z_b,z_a))2^{\Abs{a}+\Abs{b}}-(\pi_*^K\sigma_\bullet^K)(\bar{\pi}^K(x,z_b,z_a))} \pmod{\Fall}\]

\[\sum_{\Words} \Abs{\Dist_{\mathfrak{F}}^{\alpha_p(K)}W^K-\pi_*^K\Dist^K} \equiv \sum_{\substack{z_b \in \Bool^{r_p(K)-\Abs{b}}\\z_a \in \Bool^{r_p(K)-\Abs{a}}\\x \in \Bool^{\leq h(K)}}} \Abs{2^{-r_p(K)}2^{-r_p(K)}\sigma_\bullet^K(x) 2^{\Abs{a}+\Abs{b}}-(\pi_*^K\sigma_\bullet^K)(\bar{\pi}^K(x,z_b,z_a))} \pmod{\Fall}\]

\[\sum_{\Words} \Abs{\Dist_{\mathfrak{F}}^{\alpha_p(K)}W^K-\pi_*^K\Dist^K} \equiv \sum_{\substack{z_1 \in \Bool^{r_p(K)-\Abs{a}}\\z_2 \in \Bool^{r_p(K)-\Abs{b}}\\x \in \Bool^{\leq h(K)}}} \Abs{2^{-2r_p(K)+\Abs{a}+\Abs{b}}\sigma_\bullet^K(x) -(\pi_*^K\sigma_\bullet^K)(\bar{\pi}^K(x,z_b,z_a))} \pmod{\Fall}\]

\begin{align*}
&\sum_{\Words} \Abs{\Dist_{\mathfrak{F}}^{\alpha_p(K)}W^K-\pi_*^K\Dist^K} \equiv\\
&\sum_{\substack{z_1 \in \Bool^{r_p(K)-\Abs{a}}\\z_2 \in \Bool^{r_p(K)-\Abs{b}}\\x \in \Bool^{\leq h(K)}}} \Abs{2^{-2r_p(K)+\Abs{a}+\Abs{b}}\sigma_\bullet^K(x) -2^{-(r_p(K)-\Abs{a})}2^{-(r_p(K)-\Abs{b})}\sigma_\bullet^K(x)} \pmod{\Fall} 
\end{align*}

\[\sum_{\Words} \Abs{\Dist_{\mathfrak{F}}^{p(K)}W^K-\pi_*^K\Dist^K} \equiv 0 \pmod{\Fall}\]

To verify condition~\ref{con:def__pp_reduce__fun} of Definition~\ref{def:pp_reduce}, use Proposition~\ref{prp:prob_cong_ev} to get

\[\E_{\Dist^K \times \Un_\pi^K}[\Abs{f(x)-f_{\mathfrak{F}}(\pi^K(x,z))}] \equiv \E_{\sigma_\bullet^K \times \Un_\pi^K}[\Abs{f(x)-f_{\mathfrak{F}}(\pi^K(x,z))}] \pmod{\Fall}\]

\[\E_{\Dist^K \times \Un_\pi^K}[\Abs{f(x)-f_{\mathfrak{F}}(\pi^K(x,z))}] \equiv \E_{\sigma_\bullet^K \times \Un_\pi^K}[\Abs{f(x)-f_{\mathfrak{F}}(\Chev{bz_b,\En_\Nats(p(K)),az_a,x})}] \pmod{\Fall}\]

\[\E_{\Dist^K \times \Un_\pi^K}[\Abs{f(x)-f_{\mathfrak{F}}(\pi^K(x,z))}] \equiv \E_{\sigma_\bullet^K \times \Un_\pi^K}[\Abs{\mathfrak{F}(\phi,p(K),x)-\mathfrak{F}(\phi,p(K),x)}] \pmod{\Fall}\]

\[\E_{\Dist^K \times \Un_\pi^K}[\Abs{f(x)-f_{\mathfrak{F}}(\pi^K(x,z))}] \equiv 0 \pmod{\Fall}\]

To verify condition~\ref{con:def__pp_reduce__smp} of Definition~\ref{def:pp_reduce}, construct ${\tau: \Words \xrightarrow{\Gamma_{\textnormal{det}}} \Words}$ s.t. for any\\ $z_1,z_2 \in \Bool^{r_p(K)}$ and ${x \in \Supp \sigma_\bullet^K}$, ${\tau^K(\Chev{z_1,\En_\Nats(p(K)),z_2,x})=x}$. By Proposition~\ref{prp:prob_cong_push} and Proposition~\ref{prp:prob_cong_ev}

\[\E_{y \sim \pi_*^K\Dist^K}[\Dtv(\Dist^K \mid (\pi^K)^{-1}(y),\tau_y^K)] \equiv \E_{y \sim \pi_*^K\sigma_\bullet^K}[\Dtv(\Dist^K \mid (\pi^K)^{-1}(y),\tau_y^K)] \pmod{\Fall}\]

Denoting ${\Un_{ba}^K:=\Un^{r_p(K)-\Abs{b}} \times \Un^{r_p(K)-\Abs{a}}}$

\[\E[\Dtv(\Dist^K \mid (\pi^K)^{-1}(y),\tau_y^K)] \equiv \E_{(z_b,z_a,x) \sim \Un_{ba}^K \times \sigma_\bullet^K}[\Dtv(\Dist^K \mid (\pi^K)^{-1}(\bar{\pi}^K(x,z_b,z_a)),\tau_{\bar{\pi}^K(x,z_b,z_a)}^K)] \pmod{\Fall}\]

\[\E[\Dtv(\Dist^K \mid (\pi^K)^{-1}(y),\tau_y^K)] \equiv \E_{(z_b,z_a,x) \sim \Un_{ba}^K \times \sigma_\bullet^K}[\Dtv(\delta_x,\delta_x)] \pmod{\Fall}\]

\[\E[\Dtv(\Dist^K \mid (\pi^K)^{-1}(y),\tau_y^K)] \equiv 0 \pmod{\Fall}\]
\end{proof}

Denote ${\textsc{X}_\mathfrak{F}}$ the set of bounded functions ${f: D \rightarrow \Reals}$ (where ${D \subseteq \Words}$) satisfying the conditions of Theorem~\ref{thm:complete}, and ${\textsc{SampX}_\mathfrak{F}[\Fall(\Gamma_{\text{smp}})]}$ the set of distributional estimation problems of the form ${(\Dist,f)}$ for ${\Fall(\Gamma_{\text{smp}})}$-samplable ${\Dist}$ and ${f \in \textsc{X}_\mathfrak{F}}$. Obviously ${\Dist_{\mathfrak{F}}}$ is ${\Fall(\Gamma_{\text{smp}})}$-samplable. Therefore, if ${f_{\mathfrak{F}} \in \textsc{X}_\mathfrak{F}}$ then ${(\Dist_{\mathfrak{F}},f_{\mathfrak{F}})}$ is complete for ${\textsc{SampX}_\mathfrak{F}[\Fall(\Gamma_{\text{smp}})]}$ w.r.t. precise pseudo-invertible ${\Fall(\Gamma_{\text{red}})}$-reductions over efficient injections of the form ${\alpha_p}$.

\begin{samepage}
\begin{example}

${n = 1}$. ${E_{\textsc{NP}} \subseteq \Words}$ is the set of valid programs for the universal machine ${\mathcal{U}_2}$. ${\mathfrak{F}_{\textsc{NP}}}$ is given by 

\begin{equation}
%\label{eqn:tbd}
\mathfrak{F}_{\textsc{NP}}(\phi,k,x):=\begin{cases}1 \text{ if } \exists y\in \Bool^k: \Ev^k(\phi;x,y)=1 \\ 0 \text { otherwise} \end{cases}
\end{equation}

\end{example}
\end{samepage}

\begin{samepage}
\begin{example}

${n = 1}$. ${E_{\textsc{EXP}} \subseteq \Words}$ is the set of valid programs for the universal machine ${\mathcal{U}_1}$. ${\mathfrak{F}_{\textsc{EXP}}}$ is given by 

\begin{equation}
%\label{eqn:tbd}
\mathfrak{F}_{\textsc{EXP}}(\phi,k,x):=\begin{cases}1 \text{ if } \Ev^{2^k}(\phi;x)=1 \\ 0 \text { otherwise} \end{cases}
\end{equation}

\end{example}
\end{samepage}

This completeness property implies that, under certain assumptions, optimal polynomial-time estimators exist for all problems in ${\textsc{SampX}_\mathfrak{F}[\Fall(\Gamma_{\text{smp}})]}$ if an optimal polynomial-time estimator exists for ${(\Dist_{\mathfrak{F}},f_{\mathfrak{F}})}$. More precisely and slightly more generally, we have the following corollaries. For the remainder of the section, fix ${m \in \Nats}$ s.t. ${m \geq n}$. For any ${p \in \NatPoly}$, define ${\beta_p: \Nats^m \rightarrow \Nats^m}$ by 

\begin{equation}
%\label{eqn:tbd}
\forall J \in \Nats^{n-1}, k \in \Nats, L \in \Nats^{m-n}: \beta_p(J,k,L)=(J,p(J,k),L)
\end{equation}

Define ${\eta: \Nats^{m} \rightarrow \Nats^n}$ by

\begin{equation}
%\label{eqn:tbd}
\forall K \in \Nats^n, L \in \Nats^{m-n}: \eta(K,L)=K
\end{equation}

\begin{samepage}
\begin{corollary}
\label{crl:complete_sharp}

Fix ${\Fall^{(m)}}$ a fall space of rank ${m}$ and ${\Gamma^m=(\GrowR^m, \GrowA^m)}$ growth spaces of rank ${m}$. Assume that ${\Fall \eta \subseteq \Fall^{(m)}}$, ${\Gamma_r \eta \subseteq \GrowR^{m}}$ and for any ${p \in \NatPoly}$ increasing in the last argument, ${\Fall^{(m)} \beta_p \subseteq \Fall^{(m)}}$, ${\GrowR^m \beta_p = \GrowR^m}$ and ${\GrowA^m \beta_p = \GrowA^m}$. In the setting of Theorem~\ref{thm:complete}, assume there is an ${\Fall^{(m)\sharp}(\Gamma^m)}$-optimal estimator for ${(\Dist_{\mathfrak{F}}^\eta,f_{\mathfrak{F}})}$. Then, for any ${(\Dist,f) \in \textsc{SampX}_\mathfrak{F}[\Fall(\Gamma_{\text{smp}})]}$ there is an\\ $\Fall^{(m)\sharp}(\Gamma^m)$-optimal estimator for ${(\Dist^\eta,f)}$.

\end{corollary}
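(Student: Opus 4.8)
The plan is to feed the completeness reduction of Theorem~\ref{thm:complete} into Corollary~\ref{crl:pp_reduce_sharp}, after transporting everything from rank $n$ to rank $m$ by means of the re-indexing maps $\eta$ and $\beta_p$. The whole of Section~\ref{sec:reductions} applies equally with the globally fixed data $n,\Grow,\Fall$ replaced by $m,\Gamma^m,\Fall^{(m)}$, and it is in this ``rank-$m$'' instance of the theory that the final invocation of Corollary~\ref{crl:pp_reduce_sharp} will take place.

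\textbf{Step 1: the rank-$n$ reduction.} Since $(\Dist,f)\in\textsc{SampX}_\mathfrak{F}[\Fall(\Gamma_{\textnormal{smp}})]$, Theorem~\ref{thm:complete} yields $p\in\NatPoly$ increasing in the last argument and a precise pseudo-invertible $\Fall(\Gamma_{\textnormal{red}})$-reduction $\pi:\Words\xrightarrow{\Gamma_{\textnormal{red}}}\Words$ of $(\Dist,f)$ to $(\Dist_\mathfrak{F},f_\mathfrak{F})$ over $\alpha_p$. Inspecting its proof, the witnesses of conditions \ref{con:def__pp_reduce__dist} and \ref{con:def__pp_reduce__smp} can be taken to be a Radon--Nikodym $\Fall(\Gamma_{\textnormal{det}})$-derivative $W:\Words\xrightarrow{\Gamma_{\textnormal{det}}}\Rats^{\geq 0}$ and an $\Fall(\Gamma_{\textnormal{det}})$-sampler $\tau:\Words\xrightarrow{\Gamma_{\textnormal{det}}}\Words$ of $\Dist$ relative to $\pi$.

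\textbf{Step 2: lifting to rank $m$.} First note $\beta_p$ is an efficient injection: $\beta_p\in\GammaPoly^{mm}$, $\T_{\beta_p}\in\GammaPoly^m$, and $\beta_p$ is inverted by binary search over its $n$-th coordinate, using that $p$ is increasing in its last argument. Moreover $\alpha_p\circ\eta=\eta\circ\beta_p$ (both send $(J,k,L)$ to $(J,p(J,k))$), and $\eta$ is a projection so $\eta\in\GammaPoly^{mn}$ and $\T_\eta\in\GammaPoly^m$. Using the pullback-of-scheme construction, form the rank-$m$ schemes $\pi^\eta$, $W^\eta$, $\tau^\eta$ obtained by re-indexing through $\eta$; since $\Gamma_r\eta\subseteq\GrowR^m$, $\Gamma_0^n\eta=\Gamma_0^m\subseteq\GrowA^m$ and $\Gamma^m\beta_p=\Gamma^m$, these are schemes of the kinds required by Definition~\ref{def:pp_reduce} at rank $m$ over $\beta_p$ (in particular $\tau^\eta$, being a plain $\Gamma$-scheme, is also an $\mathrm{M}\Gamma^m$-scheme). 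I claim $\pi^\eta$ is a precise pseudo-invertible $\Fall^{(m)}(\Gamma^m)$-reduction of $(\Dist^\eta,f)$ to $(\Dist_\mathfrak{F}^\eta,f_\mathfrak{F})$ over $\beta_p$. Evaluating each of conditions \ref{con:def__pp_reduce__dist}--\ref{con:def__pp_reduce__smp} at an index $(K,L)\in\Nats^m$ and using $\eta(K,L)=K$, $\eta(\beta_p(K,L))=\alpha_p(K)$, every one reduces literally to the corresponding condition for $\pi$ at the index $K$: the dominance estimate for $(\pi^\eta)_*\Dist^\eta$ by $(\Dist_\mathfrak{F}^\eta)^{\beta_p}$ via $W^\eta$ is the $\eta$-pullback of the dominance estimate for $\pi_*\Dist$ by $\Dist_\mathfrak{F}^{\alpha_p}$ via $W$, and likewise for the function-approximation bound of \ref{con:def__pp_reduce__fun} and the relative-samplability bound of \ref{con:def__pp_reduce__smp}. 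Each of these bounds is therefore of the form $\varepsilon\circ\eta$ with $\varepsilon\in\Fall$, hence lies in $\Fall\eta\subseteq\Fall^{(m)}$.

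\textbf{Step 3: conclusion, and the main obstacle.} Now apply Corollary~\ref{crl:pp_reduce_sharp} to the rank-$m$ instance of the theory, with ambient data $m,\Gamma^m,\Fall^{(m)}$, lax fall space $\mathcal{G}:=\Fall^{(m)}$, efficient injection $\beta_p:\Nats^m\to\Nats^m$, source $(\Dist^\eta,f)$, target $(\Dist_\mathfrak{F}^\eta,f_\mathfrak{F})$, and the hypothesised $\Fall^{(m)\sharp}(\Gamma^m)$-optimal estimator for $(\Dist_\mathfrak{F}^\eta,f_\mathfrak{F})$ in the role of $P$; the required inclusion $\Fall^{(m)}\beta_p\subseteq\Fall^{(m)}$ is part of the hypotheses, and $\pi^\eta$ is the needed precise pseudo-invertible $\Fall^{(m)}(\Gamma^m)$-reduction by Step~2. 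The corollary produces a $\Fall^{(m)\sharp}(\Gamma^m\beta_p)$-optimal estimator for $(\Dist^\eta,f)$, and $\Gamma^m\beta_p=\Gamma^m$ gives the claim. The delicate part is Step~2 together with the change of ambient rank: one must check that re-indexing a reduction through $\eta$ and $\beta_p$ genuinely preserves dominance, the approximate-function identity, and relative samplability (including the $\mathrm{M}\Gamma$-scheme sampler), and that all the induced pullback growth and fall spaces sit inside $\GrowR^m$, $\GrowA^m$, $\Fall^{(m)}$ precisely as the hypotheses $\Fall\eta\subseteq\Fall^{(m)}$, $\Gamma_r\eta\subseteq\GrowR^m$, $\Gamma^m\beta_p=\Gamma^m$ and $\Fall^{(m)}\beta_p\subseteq\Fall^{(m)}$ are designed to guarantee; none of this is deep, but it is where all the bookkeeping lives.
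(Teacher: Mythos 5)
Your proposal is correct and follows essentially the same route as the paper's own (very terse) proof: obtain the reduction $\pi$ over $\alpha_p$ from Theorem~\ref{thm:complete}, observe that $\pi^\eta$ is a precise pseudo-invertible $\Fall^{(m)}(\Gamma^m)$-reduction of $(\Dist^\eta,f)$ to $(\Dist_{\mathfrak{F}}^\eta,f_{\mathfrak{F}})$ over $\beta_p$, and conclude via Corollary~\ref{crl:pp_reduce_sharp} with $\mathcal{G}=\Fall^{(m)}$ and $\Gamma^m\beta_p=\Gamma^m$. Your Step~2 merely spells out the re-indexing bookkeeping (the identity $\alpha_p\circ\eta=\eta\circ\beta_p$, pulling back the witnesses, and the inclusions $\Fall\eta\subseteq\Fall^{(m)}$, $\Gamma_r\eta\subseteq\GrowR^m$) that the paper leaves implicit in the phrase \enquote{This implies}.
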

\end{samepage}

\begin{proof}

According to Theorem~\ref{thm:complete}, there is ${\pi}$ a precise pseudo-invertible ${\Fall(\Gamma_{\textnormal{red}})}$-reduction of ${(\Dist,f)}$ to ${(\Dist_{\mathfrak{F}},f_{\mathfrak{F}})}$ over ${\alpha_p}$ for some ${p \in \NatPoly}$  increasing in the last argument. This implies ${\pi^\eta}$ is a precise pseudo-invertible ${\Fall^{(m)}(\Gamma^m)}$-reduction of ${(\Dist^\eta,f)}$ to ${(\Dist_{\mathfrak{F}}^\eta,f_{\mathfrak{F}})}$ over ${\beta_p}$. Applying Corollary~\ref{crl:pp_reduce_sharp}, we get the desired result. 
\end{proof}

\begin{samepage}
\begin{corollary}
\label{crl:complete}

Fix ${\Fall^{(m)}}$ a fall space of rank ${m}$ and ${\Gamma^m=(\GrowR^m, \GrowA^m)}$ growth spaces of rank ${m}$ s.t. ${\Fall^{(m)}}$ is ${\GrowA^m}$-ample. Assume that ${\Fall \eta \subseteq \Fall^{(m)}}$, ${\Gamma_r \eta \subseteq \GrowR^{m}}$ and for any ${p \in \NatPoly}$ increasing in the last argument, ${\Fall^{(m)} \beta_p \subseteq \Fall^{(m)}}$, ${\GrowR^m \beta_p = \GrowR^m}$ and ${\GrowA^m \beta_p = \GrowA^m}$. In the setting of Theorem~\ref{thm:complete}, assume there is an ${\Fall^{(m)}(\Gamma^m)}$-optimal estimator for ${(\Dist_{\mathfrak{F}}^\eta,f_{\mathfrak{F}})}$. Then, for any ${(\Dist,f) \in \textsc{SampX}_\mathfrak{F}[\Fall(\Gamma_{\text{smp}})]}$ there is an ${\Fall^{(m)}(\Gamma^m)}$-optimal estimator for ${(\Dist^\eta,f_\phi)}$.
\end{corollary}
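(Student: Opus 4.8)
The plan is to mirror the proof of Corollary~\ref{crl:complete_sharp} essentially verbatim, replacing the sharp transfer machinery by its non-sharp counterpart and invoking the additional ampleness hypothesis at the one place where it is needed. Concretely, I would first apply Theorem~\ref{thm:complete} to the given $(\Dist,f) \in \textsc{SampX}_\mathfrak{F}[\Fall(\Gamma_{\textnormal{smp}})]$ to obtain a precise pseudo-invertible $\Fall(\Gamma_{\textnormal{red}})$-reduction $\pi$ of $(\Dist,f)$ to $(\Dist_{\mathfrak{F}},f_{\mathfrak{F}})$ over $\alpha_p$ for some $p \in \NatPoly$ increasing in the last argument; here $\alpha_p$ is an efficient injection and $\Gamma_{\textnormal{red}}\alpha_p = \Gamma_{\textnormal{red}}$ by Proposition~\ref{prp:gamma_r_alpha_p}.

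Next I would re-index along $\eta$. Using the identity $\eta \circ \beta_p = \alpha_p \circ \eta$ (immediate from the definitions of $\beta_p$ and $\eta$) together with the hypotheses $\Fall\eta \subseteq \Fall^{(m)}$, $\Gamma_r\eta \subseteq \GrowR^m$, and the $\beta_p$-invariance of $\Fall^{(m)}$, $\GrowR^m$ and $\GrowA^m$, I would check that $\pi^\eta$ is a precise pseudo-invertible $\Fall^{(m)}(\Gamma^m)$-reduction of $(\Dist^\eta,f)$ to $(\Dist_{\mathfrak{F}}^\eta,f_{\mathfrak{F}})$ over $\beta_p$. This is exactly the step performed in Corollary~\ref{crl:complete_sharp}: the three conditions of Definition~\ref{def:pp_reduce} (dominance, the approximate-function condition, and relative $\MGrow$-samplability) are all phrased as asymptotic statements indexed by the word-ensemble parameter, so they transport unchanged through the pullback once the resource budgets are absorbed by the stated containments and the polynomial reparametrization $k \mapsto p(J,k)$ introduced by $\alpha_p$ is absorbed by $\beta_p$-invariance.

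Finally I would apply the non-sharp companion of Corollary~\ref{crl:pp_reduce_sharp} — the unnamed corollary immediately following it, whose proof is "completely analogous" and which takes an $\EG$-optimal estimator for the target problem and produces a $\mathcal{G}(\Gamma\alpha)$-optimal estimator for the source. In our instance $\mathcal{G} = \Fall^{(m)}$, the efficient injection is $\beta_p$, and its hypotheses $\Fall^{(m)}\beta_p \subseteq \Fall^{(m)}$ and "$\Fall^{(m)}$ is $\GrowA^m\beta_p$-ample" reduce, via $\GrowA^m\beta_p = \GrowA^m$, to the $\GrowA^m$-ampleness of $\Fall^{(m)}$ assumed in the statement. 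Feeding in the hypothesized $\Fall^{(m)}(\Gamma^m)$-optimal estimator for $(\Dist_{\mathfrak{F}}^\eta,f_{\mathfrak{F}})$ then yields an $\Fall^{(m)}(\Gamma^m)$-optimal estimator for $(\Dist^\eta,f)$, as required.

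I expect the only mildly delicate point to be the lifting step — confirming that an "$\Fall(\Gamma_{\textnormal{red}})$-reduction over $\alpha_p$" genuinely becomes an "$\Fall^{(m)}(\Gamma^m)$-reduction over $\beta_p$" — but since this is the identical bookkeeping already executed for Corollary~\ref{crl:complete_sharp}, it is routine rather than a real obstacle; the sole substantive difference from the sharp corollary is the appearance of the $\GrowA^m$-ampleness hypothesis, which is precisely what the non-sharp dominance/reduction results (Proposition~\ref{prp:dom_reduce} and Theorem~\ref{thm:psp_reduce}-style arguments underlying that corollary) demand via Proposition~\ref{prp:weight}.
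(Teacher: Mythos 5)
Your proposal is correct and follows exactly the route the paper intends: the paper's proof of this corollary is simply \enquote{completely analogous to the proof of Corollary~\ref{crl:complete_sharp}}, i.e.\ obtain the reduction from Theorem~\ref{thm:complete}, transport it along $\eta$ to a precise pseudo-invertible $\Fall^{(m)}(\Gamma^m)$-reduction over $\beta_p$, and apply the non-sharp corollary following Corollary~\ref{crl:pp_reduce_sharp}, with the extra $\GrowA^m$-ampleness hypothesis discharging (via $\GrowA^m\beta_p=\GrowA^m$) precisely the ampleness condition that corollary requires. Your identification of where the ampleness enters (through Proposition~\ref{prp:dom_reduce}, ultimately Proposition~\ref{prp:weight}) is also accurate.
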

\end{samepage}

\begin{proof}

Completely analogous to proof of Corollary~\ref{crl:complete_sharp}.
\end{proof}

In particular, the conditions of Corollary~\ref{crl:complete_sharp} and Corollary~\ref{crl:complete} can hold for ${\Fall=\Fall_\zeta}$ (the fall space of functions which are $O(\zeta)$; see Example~\ref{exm:fall_zeta}) and ${\Fall^{(m)}=\FallUt{\varphi}}$ (see Example~\ref{exm:e_uni}):

\begin{samepage}
\begin{proposition}
%\label{prp:tbd}

Consider ${\varphi: \Nats^n \rightarrow \Nats}$ non-decreasing in the last argument s.t. ${\varphi \geq 3}$. Define ${\zeta: \Nats^n \rightarrow \Reals}$ by

\begin{equation}
%\label{eqn:tbd}
\zeta(K):=\frac{\log \log (3+\sum_{i \in [n]} K_i)}{\log \log \varphi(K)}
\end{equation}

Assume ${\zeta}$ is bounded and there is ${h \in \NatPoly}$ s.t. ${\zeta \geq 2^{-h}}$. Let ${m = n + 1}$. Then, ${\Fall_\zeta \eta \subseteq \FallUt{\varphi}}$ and for any ${p \in \NatPoly}$ increasing in the last argument, ${\FallUt{\varphi} \beta_p \subseteq \FallUt{\varphi}}$.

\end{proposition}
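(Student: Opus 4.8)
The plan is to unwind the definitions and reduce both containments to the elementary estimate that there is an absolute constant $C$ with
\[\sum_{k=2}^{N-1}\frac{1}{k\log k}\leq C\log\log N\qquad\text{for every integer }N\geq 3.\]
This follows from the integral comparison $\sum_{k=3}^{N-1}\frac{1}{k\log k}\leq\int_2^{N}\frac{dx}{x\log x}=(\ln 2)^2\log\log N$ together with $\log\log N\geq\log\log 3>0$, which absorbs the single $k=2$ term into a constant multiple of $\log\log N$. Throughout I write a point of $\Nats^m=\Nats^{n+1}$ as $(J,k)$ with $J\in\Nats^n$ and $k\in\Nats$, so that for the rank-$m$ fall space $\FallUt{\varphi}$ the summation index is the last coordinate $k$, the function $\varphi$ is evaluated on $J$, and $\eta(J,k)=J$.

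For $\Fall_\zeta\eta\subseteq\FallUt{\varphi}$: take $\varepsilon_\eta\in\Fall_\zeta\eta$, choose $\varepsilon\in\Fall_\zeta$ with $\varepsilon_\eta\leq\varepsilon\circ\eta$, and choose $M$ with $\varepsilon\leq M\zeta$, so $\varepsilon_\eta(J,k)\leq M\zeta(J)$ independently of $k$. Since $\varphi\geq 3$, the estimate above gives for every $J\in\Nats^n$
\[\sum_{k=2}^{\varphi(J)-1}\frac{\varepsilon_\eta(J,k)}{k\log k}\leq M\zeta(J)\sum_{k=2}^{\varphi(J)-1}\frac{1}{k\log k}\leq MC\,\zeta(J)\log\log\varphi(J)=MC\log\log\Bigl(3+\sum_{i\in[n]}J_i\Bigr),\]
where the last equality is exactly the definition of $\zeta$ (this cancellation is the point of the chosen $\zeta$). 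Taking $p(J):=3+\sum_{i\in[n]}J_i$, a polynomial in $J_0,\dots,J_{n-1}$ with natural coefficients and $\log\log p(J)>0$, and $M':=MC$, we get $\varepsilon_\eta\in\FallUt{\varphi}$; its boundedness is part of the definition of $\Fall_\zeta\eta$.

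For $\FallUt{\varphi}\beta_p\subseteq\FallUt{\varphi}$ with $p\in\NatPoly$ increasing in the last argument: in the coordinates above $\beta_p$ acts only on $J$, sending $(J,k)$ to $(\bar\beta_p(J),k)$ with $\bar\beta_p(J):=(J_0,\dots,J_{n-2},p(J))$, and in particular it fixes the summation variable $k$. Since $p$ is increasing in its last argument, $p(J)\geq J_{n-1}$, and since $\varphi$ is non-decreasing in its last argument, $\varphi(\bar\beta_p(J))\geq\varphi(J)$. Now given $\varepsilon_\beta\in\FallUt{\varphi}\beta_p$, pick $\varepsilon\in\FallUt{\varphi}$ with $\varepsilon_\beta\leq\varepsilon\circ\beta_p$, witnessed by $M$ and a polynomial $q$; then for every $J$
\[\sum_{k=2}^{\varphi(J)-1}\frac{\varepsilon_\beta(J,k)}{k\log k}\leq\sum_{k=2}^{\varphi(J)-1}\frac{\varepsilon(\bar\beta_p(J),k)}{k\log k}\leq\sum_{k=2}^{\varphi(\bar\beta_p(J))-1}\frac{\varepsilon(\bar\beta_p(J),k)}{k\log k}\leq M\log\log q(\bar\beta_p(J)),\]
the middle step using $\varphi(\bar\beta_p(J))\geq\varphi(J)$ and nonnegativity of the summands. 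Since $q(\bar\beta_p(J))=q(J_0,\dots,J_{n-2},p(J))$ is again a polynomial in $J_0,\dots,J_{n-1}$ with natural coefficients (pad it additively so it is $\geq 3$ if needed), this exhibits $\varepsilon_\beta\in\FallUt{\varphi}$.

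Every step is routine; the only part demanding care is the bookkeeping of coordinate conventions — checking that in the rank-$m$ picture $\eta$ merely forgets the summation coordinate while $\beta_p$ leaves it untouched, and that the non-decreasing-in-the-last-argument hypothesis on $\varphi$ is precisely what lets us enlarge the summation range from $\varphi(J)$ to $\varphi(\bar\beta_p(J))$ in the second containment. The quantitative input $\sum_{k=2}^{N-1}1/(k\log k)=\Theta(\log\log N)$ is classical.
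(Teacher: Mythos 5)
Your proposal is correct and takes essentially the same route as the paper's own proof: for the first inclusion you bound $\sum_{k=2}^{\varphi(J)-1}\frac{1}{k\log k}$ by a constant times $\log\log\varphi(J)$ so that the definition of $\zeta$ cancels the denominator and leaves $\log\log(3+\sum_{i\in[n]}J_i)$, and for the second you use $p(J)\geq J_{n-1}$ plus the monotonicity of $\varphi$ in its last argument to enlarge the summation range and then compose the polynomial witness with $p$, exactly as the paper does. The only deviations are cosmetic: you absorb the $k=2$ term explicitly instead of using the paper's uniform comparison factor, and you argue on arbitrary elements of $\Fall_\zeta\eta$ and $\FallUt{\varphi}\beta_p$ rather than on generators followed by downward closure of the fall space.
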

\end{samepage}

\begin{proof}

Consider any ${\varepsilon_0 \in \Fall_\zeta}$.

\[\sum_{l=2}^{\varphi(K)-1} \frac{\varepsilon_0(K)}{l \log l} \leq \frac{3}{2} (\log 3) \varepsilon_0(K) \int_2^{\varphi(K)} \frac{\dif t}{t \log t}\]

\[\sum_{l=2}^{\varphi(K)-1} \frac{\varepsilon_0(K)}{l \log l} \leq \frac{3}{2} (\log 3) (\ln 2)^2 \varepsilon_0(K) \log \log \varphi(K)\]

For some ${M_0 \in \Reals^{>0}}$, ${\varepsilon_0 \leq M_0 \zeta}$, therefore

\[\sum_{l=2}^{\varphi(K)-1} \frac{\varepsilon_0(K)}{l \log l} \leq \frac{3}{2} (\log 3)(\ln 2)^2 M_0 \zeta(K) \log \log \varphi(K)\]

\[\sum_{l=2}^{\varphi(K)-1} \frac{\varepsilon_0(K)}{l \log l} \leq \frac{3}{2} (\log 3)(\ln 2)^2 M_0 \log \log (3+\sum_{i \in [n]} K_i)\]

We got ${\varepsilon_0 \circ \eta \in \FallUt{\varphi}}$. Now, consider any ${\varepsilon_1 \in \FallUt{\varphi}}$ and ${p \in \NatPoly}$ increasing in the last argument. Clearly, ${p(K) \geq K_{n-1}}$.

\[\sum_{l=2}^{\varphi(J,k)-1} \frac{\varepsilon_1(J,p(J,k),l)}{l \log l} \leq \sum_{l=2}^{\varphi(J,p(J,k))-1} \frac{\varepsilon_1(J,p(J,k),l)}{l \log l}\]

For some ${M_1 \in \Reals^{>0}}$ and ${q \in \NatPoly}$

\[\sum_{l=2}^{\varphi(J,k)-1} \frac{\varepsilon_1(J,p(J,k),l)}{l \log l} \leq M_1 \log \log q(J,p(J,k))\]

We got ${\varepsilon_1 \circ \beta_p \in \FallUt{\varphi}}$.
\end{proof}

\section{Existence and Uniqueness}
\label{sec:e_and_u}

\subsection{Existence}

\subsubsection{Positive Results}

We give two existence theorems for ${\FallU(\Gamma)}$-optimal estimators (the fall space $\FallU$ was defined in Example~\ref{exm:e_uni}). Theorem~\ref{thm:exists_all} shows that, for appropriate steadily growing functions ${r}$ and ${l}$, \emph{all} distributional estimation problems of rank ${n-1}$ admit ${\FallU(\Gamma_r,\Gamma_l)}$-optimal estimators when trivially extended to rank ${n}$. The extra parameter serves to control the resources available to the estimator. To illustrate its significance using the informal\footnote{Strictly speaking, this example cannot be formalized in the framework as presented here since the set of prime numbers is in $\textsc{P}$. We can tackle it by e.g. taking ${\textsc{NC}}$ instead of ${\textsc{P}}$ as the permissible time complexity for our estimators, but we don't explore this variant in the present work.} example from the introduction, observe that the question \enquote{what is the probability 7614829 is prime?} should depend on the amount of available time. For example, we can use additional time to test for divisibility by additional smaller primes (or in some more clever way) until eventually we are able to test primality and assign a probability in $\{0,1\}$. 

However, in general the estimators constructed in Theorem~\ref{thm:exists_all} are non-uniform because they rely on the advice string to emulate the $Q$ with the lowest Brier score. Theorem~\ref{thm:exists_smp} shows that, under certain stronger assumptions on ${r}$ and ${l}$, for \emph{samplable} distributional estimation problems there is an estimator which requires only as much advice as the sampler. In particular, the existence of a uniform sampler implies the existence of a uniform ${\FallU(\Gamma_r,\Gamma_l)}$-optimal estimator.

We will use the notation ${\eta: \NatFun \Nats^{n-1}}$ defined by

\[\forall J \in \Nats^{n-1}, k \in \Nats: \eta(J,k)=J\]

\begin{samepage}
\begin{theorem}
\label{thm:exists_all}

Fix ${l: \NatFun \Nats^{>0}}$ steadily growing. Denote ${\Gamma_{\textnormal{adv}}^n:=(\Gamma_0^n,\Gamma_l)}$. Fix ${r: \bm{1} \xrightarrow{\Gamma_{\textnormal{adv}}}\Nats}$ steadily growing. Assume ${\GrowR=\Gamma_r}$, ${\GrowA=\Gamma_l}$. Consider ${(\Dist,f)}$ a distributional estimation problem of rank ${n-1}$. Then, there exists an ${\FallU(\Gamma)}$-optimal estimator for ${(\Dist^\eta,f)}$.

\end{theorem}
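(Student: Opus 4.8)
The plan is to build the estimator $E$ non‑uniformly by storing an optimal short program as advice. Fix the (arbitrary) polynomial time budget $t(J,k):=k$ and set $M:=\Ceil{\sup\Abs{f}}+1$. For each $(J,k)$ let $a^{*}_{J,k}$ be a program of length at most $l(J,k)$ minimising
$e(J,k):=\E_{x\sim\Dist^{J},\,\omega\sim\Un^{r(J,k)}}[(C_M(\Ev^{k}(a;\En_{\Nats^n}(J,k),x,\omega))-f(x))^{2}]$
over all programs $a$ with $\Abs{a}\le l(J,k)$, where $C_M(\cdot)$ reinterprets a malformed word as $0$ and truncates to $[-M,M]$ (a minimiser exists since there are finitely many candidates, though it need not be computable — this is where non‑uniformity enters). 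The estimator is $E^{(J,k)}(x,\omega):=C_M(\Ev^{k}(a^{*}_{J,k};\En_{\Nats^n}(J,k),x,\omega))$, with $\R_E(J,k):=r(J,k)\in\GrowR=\Gamma_r$ and $\A_E(J,k)$ encoding $a^{*}_{J,k}$ together with the advice $r$ needs; its length lies in $\GrowA=\Gamma_l$ because $\Gamma_l$ is closed under sums and contains the constant functions, both being consequences of $l$ being steadily growing (cf.\ the example after Definition~\ref{def:sgrow}). Routine verification shows $E$ is a polynomial‑time $\Gamma$‑scheme of bounded range whose error at $(J,k)$ equals $e(J,k)$, and that $e(J,\cdot)$ is non‑increasing: a halted machine keeps its output, so a minimiser at level $m$ is still a candidate with the same error at any level $m'\ge m$, while the constant‑$0$ program keeps everything bounded by $M^{2}$.

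The crux is a \emph{resource‑boosting lemma}: if $g:\Nats^{n}\to[0,C]$ is non‑increasing in its last argument and $p\in\NatPoly$ with $p(J,k)\ge k$, then $(J,k)\mapsto g(J,k)-g(J,p(J,k))$ lies in $\FallU$. Writing $d(J,j):=g(J,j)-g(J,j+1)\ge0$ (so $\sum_{j\ge2}d(J,j)\le C$ and $g(J,k)-g(J,p(J,k))=\sum_{j=k}^{p(J,k)-1}d(J,j)$) and interchanging summations gives $\sum_{k\ge2}\frac{g(J,k)-g(J,p(J,k))}{k\log k}=\sum_{j\ge2}d(J,j)\,w_J(j)$ with $w_J(j):=\sum_{k:\,2\le k\le j<p(J,k)}\frac1{k\log k}$. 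Since $p$ is a polynomial, $p(J,k)>j$ forces $k>j^{1/D}-c_J$ for a fixed degree $D$ and a polynomial‑in‑$J$ constant $c_J$, so $w_J(j)\le\log\log j-\log\log(j^{1/D}-c_J-1)=O_D(1)$ once $j$ exceeds a polynomial‑in‑$J$ threshold $j_0(J)$, and at most $\log\log j_0(J)+O(1)$ below it. Hence the double sum is at most $(\log\log j_0(J)+O_D(1))C\le M'\log\log q(J)$ for a suitable polynomial $q$ and constant $M'$, which is precisely membership in $\FallU$. (The point is that a polynomial boost only makes the ``inner'' $\log\log$‑count $O(1)$; a quasi‑polynomial boost would make it grow like $\log\log j$ and the argument would fail.)

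Finally, fix an arbitrary $Q:\Words\Scheme\Rats$; by the truncation argument used in the proof of Theorem~\ref{thm:con_ort} we may assume $Q$ has range in $[-M,M]$. Let $\lambda_Q\in\Gamma_l$, $\rho_Q\in\Gamma_r$, $T_Q\in\GammaPoly^{n}$ bound the advice length, random‑bit count and runtime of $Q$, and pick polynomials with $\lambda_Q(J,k)\le l(J,p_\lambda(J,k))$, $\rho_Q(J,k)\le r(J,p_\rho(J,k))$, $T_Q(J,k)\le p_T(J,k)$. Consider the program $b_{J,k}$ which, on inputs $(\En_{\Nats^{n}}(J,m),x,\omega)$, first recovers $k$ by binary‑search inversion of a strictly increasing polynomial $p_Q$ at $(J,m)$ (possible since $p_Q\in\GammaPoly^{n}$), then outputs $C_M(Q(J,k,x,\omega_{<\rho_Q(J,k)},\alpha))$ with $\alpha=\A_Q(J,k)$ hard‑wired; its length is $c_Q+\lambda_Q(J,k)$ for a $Q$‑dependent constant $c_Q$. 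Define $p_Q(J,k):=\max(k,\,p_\rho(J,k),\,2p_T(J,k)+c_Q,\,s_l^{(c_Q+1)}(J,p_\lambda(J,k)))$, a genuine polynomial, where $s_l^{(c_Q+1)}$ is the $(c_Q+1)$‑fold composition in the last variable of the polynomial $s_l$ with $l(J,k)\le\tfrac12 l(J,s_l(J,k))$ (Definition~\ref{def:sgrow}\ref{con:def__sgrow__poly} for $l$). Iterating the doubling inequality and using $l\ge1$ yields $l(J,p_Q(J,k))\ge2^{c_Q+1}l(J,p_\lambda(J,k))\ge c_Q+l(J,p_\lambda(J,k))\ge\Abs{b_{J,k}}$, so $b_{J,k}$ is a legal candidate for the advice at level $(J,p_Q(J,k))$; and $p_Q(J,k)\ge\rho_Q(J,k)$‑worth of randomness and enough time budget $t=p_Q(J,k)$ ensure that when run there it faithfully simulates $Q^{(J,k)}$ on fresh randomness, so $e(J,p_Q(J,k))\le\E_{(\Dist^\eta)^{(J,k)}\times\Un_Q^{(J,k)}}[(Q^{(J,k)}-f)^{2}]$. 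Combining with the resource‑boosting lemma applied to $g:=e$, $p:=p_Q$ gives $\E_{(\Dist^\eta)^{(J,k)}\times\Un_E^{(J,k)}}[(E^{(J,k)}-f)^{2}]=e(J,k)\le e(J,p_Q(J,k))+(e(J,k)-e(J,p_Q(J,k)))\le\E_{(\Dist^\eta)^{(J,k)}\times\Un_Q^{(J,k)}}[(Q^{(J,k)}-f)^{2}]\pmod\FallU$, i.e.\ \ref{eqn:op}. The main obstacle is exactly this last step: keeping the resource‑inflation $p_Q$ polynomial, which forces the simulating program to reconstruct $k$ by inverting $p_Q$ rather than storing an $O(\log k)$ counter — the latter would only be affordable at quasi‑polynomially larger parameters (since $l$ may grow as slowly as $(\log k)^{\varepsilon}$) and would destroy membership in $\FallU$.
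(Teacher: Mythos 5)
Your construction of the estimator, the device of feeding $\En_{\Nats^n}(J,k)$ to the program and recovering $k$ at the boosted level by inverting $p_Q$, and your resource-boosting lemma are all sound (the telescoping/double-counting proof of the lemma is correct, and is a nicer, more elementary substitute for the averaging measures of Propositions~\ref{prp:fall_uni_weak}--\ref{prp:fall_uni_amp}). The genuine gap is the claim that $e(J,\cdot)$ is non-increasing in the last argument, which is exactly the hypothesis you need in order to apply the lemma with $g:=e$. Your one-line justification (``a halted machine keeps its output'') does not establish it: (i) the minimiser at level $(J,m)$ need not halt within $m$ steps --- $\Ev^{m}$ reads off whatever is on the output tape after exactly $m$ steps --- so with a larger budget its output, hence its error, can change; (ii) in your setup the program receives $\En_{\Nats^n}(J,k)$ as an input, so the level-$m$ minimiser run at level $m'>m$ is given a \emph{different} input and can behave arbitrarily worse; (iii) the random string has length $r(J,m')$ rather than $r(J,m)$, which can also change behaviour. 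Point (ii) is not incidental: passing the level as input is precisely how you avoid hard-coding $k$ and keep the advice within $\Gamma_l$ when $l$ grows like $(\log k)^{\varepsilon}$, so you cannot simply drop it; but with it there is no candidate at level $m'$ that provably reproduces the level-$m$ minimiser's error, since a wrapper would have to reconstruct $\En_{\Nats^n}(J,m)$ from $\En_{\Nats^n}(J,m')$ for \emph{arbitrary} $m'\ge m$, which cannot be done with $O(1)$ extra program bits.

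This is structural rather than cosmetic: your argument only compares $e(J,k)$ with the single level $p_Q(J,k)$, and without some cross-level control of $e$ the difference $e(J,k)-e(J,p_Q(J,k))$ need not be small, so the final congruence modulo $\FallU$ does not follow. The paper never asserts monotonicity; instead it proves the weaker ``boosted'' comparisons $\sup_{i}\E[(P^{\alpha_{q+i}(K)}-f)^2]\le\E[(Q^{K}-f)^2]$ and $\sup_{i}\E[(P^{\alpha_{p+i}(K)}-f)^2]\le\E[(P^{K}-f)^2]$, witnessed by explicitly constructed simulator programs that have already halted at the base budget (so the supremum over extra time comes for free), and then feeds these into Proposition~\ref{prp:fall_uni_amp}, whose averaging-measure argument tolerates a non-monotone error sequence. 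To repair your proof you would need either to prove such a sup-type self-comparison in your input-fed setup (which hits the same obstruction, since the simulator cannot recover $k$ from an arbitrary boosted level), or to modify the construction so that monotonicity of $e$ holds by design, or to fall back on the paper's averaging route; as written, the monotonicity step is a missing piece on which the whole conclusion rests.
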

\end{samepage}

\begin{samepage}

The following two propositions approximately state that, given an arbitrary function $\zeta(J,k)$ for which polynomial increases in $k$ lead to a decrease in $\zeta$, the difference between $\zeta(J,k)$ and some average of values after $\zeta(J,q(J,k))$ lies in $\FallU$. Roughly, this occurs because either $\zeta$ falls quickly enough that, in the asymptotic tail, the values approximately vanish, or $\zeta$ falls slowly enough that, going polynomially further out doesn't change $\zeta$ very much.

\begin{proposition}
\label{prp:fall_uni_weak}

For any ${q \in \NatPolyJ}$ s.t. ${q \geq 2}$ there are ${\{\omega_q^K \in \mathcal{P}(\Nats)\}_{K \in \Nats^n}}$ s.t. for any ${\zeta: \NatFun \Reals}$ bounded, if there is a function $\varepsilon\in\FallU$ s.t.

\begin{equation}
\label{eqn:prp__fall_uni_weak__prem}
\forall J \in \Nats^{n-1}, k,k' \in \Nats: k' \geq (k+2)^{\Floor{\log q(J)}}-2 \implies \zeta(J,k') \leq \zeta(J,k)+\varepsilon(J,k)
\end{equation}

then

\begin{equation}
\label{eqn:prp__fall_uni_weak__conc}
\zeta(J,k) \equiv \E_{i \sim \omega_p^{Jk}}[\zeta(J,(k+2)^{\Floor{\log q(J)}}-2+i)] \pmod \FallU
\end{equation}

\end{proposition}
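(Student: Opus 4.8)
The plan is to pass to the "doubly-logarithmic" coordinate in which the map $k\mapsto N(J,k):=(k+2)^{\Floor{\log q(J)}}-2$ becomes a constant shift, to choose $\omega_q^{Jk}$ as a (non-uniform) averaging distribution matched to that coordinate, and then to reduce the required bound to a short Fubini computation. Write $d=d(J):=\Floor{\log q(J)}$; since $q\geq 2$ we have $d\geq 1$. If $d=1$ then $N(J,k)=k$, and taking $\omega_q^{Jk}:=\delta_0$ makes $\zeta(J,k)-\E_{i\sim\omega_q^{Jk}}[\zeta(J,N(J,k)+i)]$ vanish identically, so assume from now on $d\geq 2$ and set $\Delta:=\log d\geq 1$ and $v(k):=\log\log(k+2)$ for $k\in\Nats$. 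The key identity is $v(N(J,k))=\log\log\big((k+2)^d\big)=\log d+\log\log(k+2)=v(k)+\Delta$, so the block of integers $\{N(J,k),\dots,N(J,N(J,k))-1\}$ occupies exactly the $v$-window $[v(k)+\Delta,\,v(k)+2\Delta)$. I would define $\omega_q^{Jk}$ to be the distribution on $\Nats$ supported on $\{0,\dots,N(J,N(J,k))-N(J,k)-1\}$ with $\omega_q^{Jk}(i):=\Delta^{-1}\big(v(N(J,k)+i+1)-v(N(J,k)+i)\big)$; it is a probability distribution by telescoping, it depends only on $q$ and $K=(J,k)$, and with this choice $A\zeta(J,k):=\E_{i\sim\omega_q^{Jk}}[\zeta(J,N(J,k)+i)]$ equals exactly $\Delta^{-1}\int_{v(k)+\Delta}^{v(k)+2\Delta}\tilde\zeta(w)\,dw$, where $\tilde\zeta$ is the step function on $[0,\infty)$ equal to $\zeta(J,m)$ on $[v(m),v(m+1))$.

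Set $\xi(J,k):=\zeta(J,k)-A\zeta(J,k)$. Since $A\zeta(J,k)$ averages values $\zeta(J,m)$ with $m\geq N(J,k)$, hypothesis \eqref{eqn:prp__fall_uni_weak__prem} gives $0\leq\xi(J,k)\leq 2\sup|\zeta|$, so it remains to bound $\sum_{k\geq 2}\xi(J,k)/(k\log k)$ by $M\log\log p(J)$ for a suitable $p\in\NatPolyJ$ and a constant $M$ (independent of $J$). First I would compare this sum with the integral of $\tilde\xi$ (the step reparametrization of $\xi$): there is an absolute $C_0$ with $1/(k\log k)\leq C_0\,(v(k+1)-v(k))$ for all $k\geq 2$, and $\xi\geq 0$, so $\sum_{k=2}^{K}\xi(J,k)/(k\log k)\leq C_0\int_{1}^{v(K+1)}\tilde\xi(w)\,dw$. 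Next, using $\big|A\zeta(J,k)-\Delta^{-1}\int_{v+\Delta}^{v+2\Delta}\tilde\zeta\,dw\big|\leq 2\Delta^{-1}\sup|\zeta|\,(v(k+1)-v(k))$ for $v$ in the $k$-th step, together with $\sum_{k}(v(k+1)-v(k))^2<\infty$, one replaces $\tilde\xi$ by the genuinely continuous $\xi_{\mathrm{cont}}(v):=\tilde\zeta(v)-\Delta^{-1}\int_{v+\Delta}^{v+2\Delta}\tilde\zeta(w)\,dw$ at the cost of an additive absolute constant times $\sup|\zeta|$. Finally, $\int_{1}^{V}\xi_{\mathrm{cont}}$ is computed by Fubini: the double integral $\Delta^{-1}\int_{1}^{V}\!\big(\int_{v+\Delta}^{v+2\Delta}\tilde\zeta\big)dv$ equals $\int_{1+2\Delta}^{V+\Delta}\tilde\zeta$ plus two "ramp" terms each bounded in absolute value by $\Delta\sup|\zeta|$, whence $\int_{1}^{V}\xi_{\mathrm{cont}}=\int_{1}^{1+2\Delta}\tilde\zeta-\int_{V}^{V+\Delta}\tilde\zeta+O(\Delta\sup|\zeta|)=O(\Delta\sup|\zeta|)$ uniformly in $V$.

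Combining the three steps gives $\sum_{k\geq 2}\xi(J,k)/(k\log k)\leq C_1\sup|\zeta|\,(\Delta+1)\leq C_1\sup|\zeta|\,(\log\log q(J)+1)$ for an absolute $C_1$ (this also holds, trivially, in the excluded case $d=1$, where the left side is $0$). Choosing $p:=q+4\in\NatPolyJ$, so that $\log\log p(J)\geq 1$ for all $J$, and $M:=2C_1\sup|\zeta|$, we obtain $\sum_{k\geq 2}\xi(J,k)/(k\log k)\leq M\log\log p(J)$ for every $J$; hence $\xi\in\FallU$, which is exactly the congruence \eqref{eqn:prp__fall_uni_weak__conc}. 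The main obstacle is the middle paragraph: getting the Fubini collapse right and verifying that every discretization error (the $1/(k\log k)$ versus $v(k+1)-v(k)$ comparison, the step-function versus continuous-average comparison, and the treatment of the first few indices) is bounded by an absolute constant times $\sup|\zeta|$. None of this is conceptually deep once the coordinate $v=\log\log(\cdot+2)$ and the $v$-weighted choice of $\omega_q^{Jk}$ are in place; the non-uniformity of $\omega_q^{Jk}$ in $k$ is what makes the cancellation exact enough to survive.
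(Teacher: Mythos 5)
Your proof is correct, and it rests on the same underlying mechanism as the paper's: in the coordinate $v(k)=\log\log(k+2)$ the map $k\mapsto(k+2)^{\Floor{\log q(J)}}-2$ becomes the shift by $\Delta=\log\Floor{\log q(J)}$, and $1/(k\log k)$ is comparable (up to an absolute constant) to the increment $v(k+1)-v(k)$, so the excess of $\zeta$ over its shifted average telescopes up to boundary terms of size $O(\Delta\sup\Abs{\zeta})=O(\sup\Abs{\zeta}\log\log q(J))$. Where you genuinely differ is in the choice of $\omega_q^{Jk}$ and in the bookkeeping: the paper takes $\omega_q^{Jk}(i)$ to be the Lebesgue measure of $\{t\in[0,1]:(k+t+2)^{\Floor{\log q(J)}}-(k+2)^{\Floor{\log q(J)}}\in[i,i+1)\}$ (the pushforward of the uniform distribution on the source interval) and gets the bound by subtracting two Stieltjes integrals of $\zeta(J,\Floor{t}-2)$ against $\dif(\log\log t)$ over $[2,a]$ and $[2^{\Floor{\log q(J)}},a^{\Floor{\log q(J)}}]$, using the invariance $\dif(\log\log t^{\Floor{\log q(J)}})=\dif(\log\log t)$ and letting $a\to\infty$; you instead weight the image block by the $v$-increments, which makes $\E_{i\sim\omega_q^{Jk}}[\zeta(J,N(J,k)+i)]$ an exact window average of the step function $\tilde\zeta$ over $[v(k)+\Delta,v(k)+2\Delta)$ and lets you finish with a single Fubini computation (two ramp terms of size $\Delta\sup\Abs{\zeta}$) plus two summable discretization errors. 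Both constructions are legitimate, since the proposition only asserts the existence of some family $\omega_q^K$ independent of $\zeta$; your route trades the paper's substitution-and-limit argument for explicit window-averaging, and, like the paper, uses hypothesis \ref{eqn:prp__fall_uni_weak__prem} exactly where it is needed, namely to make the difference non-negative (which you need both for the $1/(k\log k)$ versus $v(k+1)-v(k)$ comparison and for membership in $\FallU$). Two minor points, neither a gap: your constant $M=2C_1\sup\Abs{\zeta}$ depends on $\zeta$, which is fine because Example~\ref{exm:e_uni} quantifies $M$ and $p$ separately for each function; and the per-$J$ case split at $\Floor{\log q(J)}=1$ is harmless since the defect there is identically zero.
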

\end{samepage}

\begin{proof}

Take any ${a \in \Nats}$ s.t. ${a \geq 5}$.

\[\int_{t = a}^{a^{\Floor{\log q(J)}}} \dif (\log \log t) = \log \log a^{\Floor{\log q(J)}} - \log \log a\]

\[\int_{t = a}^{a^{\Floor{\log q(J)}}} \dif (\log \log t) = \log (\Floor{\log q(J)} \log a) - \log \log a\]

\[\int_{t = a}^{a^{\Floor{\log q(J)}}} \dif (\log \log t) = \log \Floor{\log q(J)} + \log \log a - \log \log a\]

\[\int_{t = a}^{a^{\Floor{\log q(J)}}} \dif (\log \log t) = \log \Floor{\log q(J)}\]

Consider any ${\zeta: \NatFun \Reals}$ bounded.

\[\Abs{\int_{t = a}^{a^{\Floor{\log q(J)}}} \zeta(J,\Floor{t}-2) \dif (\log \log t)} \leq (\sup \Abs{\zeta}) \log \Floor{\log q(J)}\]

In particular

\[\Abs{\int_{t = 2}^{2^{\Floor{\log q(J)}}} \zeta(J,\Floor{t}-2) \dif (\log \log t)} \leq (\sup \Abs{\zeta}) \log \Floor{\log q(J)}\]

Adding the last two inequalities

\[\Abs{\int_{t = 2}^{2^{\Floor{\log q(J)}}} \zeta(J,\Floor{t}-2) \dif (\log \log t)} + \Abs{\int_{t = a}^{a^{\Floor{\log q(J)}}} \zeta(J,\Floor{t}-2) \dif (\log \log t)} \leq 2(\sup \Abs{\zeta}) \log \Floor{\log q(J)}\]

\[\int_{t = 2}^{2^{\Floor{\log q(J)}}} \zeta(J,\Floor{t}-2) \dif (\log \log t) - \int_{t = a}^{a^{\Floor{\log q(J)}}} \zeta(J,\Floor{t}-2) \dif (\log \log t) \leq 2(\sup \Abs{\zeta}) \log \Floor{\log q(J)}\]

\[\int_{t = 2}^{a} \zeta(J,\Floor{t}-2) \dif (\log \log t) - \int_{t=2^{\Floor{\log q(J)}}}^{a^{\Floor{\log q(J)}}} \zeta(J,\Floor{t}-2) \dif (\log \log t) \leq 2(\sup \Abs{\zeta}) \log \Floor{\log q(J)}\]

We have ${\dif (\log \log t^{\Floor{\log q(J)}}) = \dif(\log \log t)}$ therefore we can substitute in the second term on the left hand side and get

\[\int_{t = 2}^{a} \zeta(J,\Floor{t}-2) \dif (\log \log t) - \int_{t=2}^{a} \zeta(J,\Floor{t^{\Floor{\log q(J)}}}-2) \dif (\log \log t) \leq 2(\sup \Abs{\zeta}) \log \Floor{\log q(J)}\]

\[\int_{t = 2}^{a} (\zeta(J,\Floor{t}-2) - \zeta(J,\Floor{t^{\Floor{\log q(J)}}}-2)) \dif (\log \log t) \leq 2(\sup \Abs{\zeta}) \log \Floor{\log q(J)}\]

%Assume ${\zeta}$ satisfies \ref{eqn:prp__fall_uni_weak__prem}. ${\Floor{t^{\Floor{\log q(J)}}}-2 \geq \Floor{t}^{\Floor{\log q(J)}}-2}$ therefore the integrand is non-negative. Taking ${a}$ to ${\infty}$ we get

%\[\int_{t = 2}^{\infty} (\zeta(J,\Floor{t}-2) - \zeta(J,\Floor{t^{\Floor{\log q(J)}}}-2)) \dif (\log \log t) \leq 2(\sup \Abs{\zeta}) \log \Floor{\log q(J)}\]

\[\int_2^a (\zeta(J,\Floor{t}-2) - \zeta(J,\Floor{t^{\Floor{\log q(J)}}}-2)) \frac{\dif t}{(\ln 2)^2 t \log t} \leq 2(\sup \Abs{\zeta}) \log \Floor{\log q(J)}\]

\[\sum_{k=0}^{a-3} \int_{k+2}^{k+3} \frac{\zeta(J,\Floor{t}-2) - \zeta(J,\Floor{t^{\Floor{\log q(J)}}}-2)}{t \log t} \dif t \leq 2(\ln 2)^2(\sup \Abs{\zeta}) \log \Floor{\log q(J)}\]

\[\sum_{k=0}^{a-3} \int_0^1 \frac{\zeta(J,k) - \zeta(J,\Floor{(k+t+2)^{\Floor{\log q(J)}}}-2)}{(k+t+2) \log (k+t+2)}\dif t \leq 2(\ln 2)^2(\sup \Abs{\zeta}) \log \Floor{\log q(J)}\]

For ${k \geq 2}$ we have ${(k + 3) \log (k + 3) \leq \frac{5}{2}k \log \frac{5}{2}k \leq \frac{5}{2}k \log k^{\log 5} = \frac{5}{2} (\log 5) k \log k}$.

\[\sum_{k=2}^{a-3} \frac{\zeta(J,k) - \int_0^1 \zeta(J,\Floor{(k+t+2)^{\Floor{\log q(J)}}}-2)\dif t}{\frac{5}{2} (\log 5) k \log k} \leq 2(\ln 2)^2(\sup \Abs{\zeta}) \log \Floor{\log q(J)}\]

Define

\[I_q^{Jk}(i):=\{t \in [0,1] \mid (k+t+2)^{\Floor{\log q(J)}} - (k+2)^{\Floor{\log q(J)}} \in [i,i+1)\}\]

\[\omega_q^K(i):=\begin{cases}\sup I_q^K-\inf I_q^K \text{ if } I_q^K \ne \varnothing\\0 \text{ otherwise}\end{cases}\]

We get

\[\sum_{k=2}^{a-3} \frac{\zeta(J,k) - \sum_{i=0}^\infty \zeta(J,(k+2)^{\Floor{\log q(J)}}-2+i)\omega_q^{Jk}(i) }{k \log k}  \leq \frac{4}{5}(\ln 2)(\ln 5)(\sup \Abs{\zeta}) \log \Floor{\log q(J)}\]

Denote $M:=\frac{4}{5}(\ln 2)(\ln 5)(\sup \Abs{\zeta})$ and $\bar{\zeta}(J,k):=\sum_{i=0}^\infty \zeta(J,(k+2)^{\Floor{\log q(J)}}-2+i)\omega_q^{Jk}(i)$. Using~\ref{eqn:prp__fall_uni_weak__prem}

\[\zeta(J,k) - \bar{\zeta}(J,k) \geq -\epsilon(J,k)\]

\[\Abs{\zeta(J,k) - \bar{\zeta}(J,k)} \leq \zeta(J,k) - \bar{\zeta}(J,k) + 2\epsilon(J,k)\]

\[\sum_{k=2}^{a-3} \frac{\Abs{\zeta(J,k) - \bar{\zeta}(J,k)}}{k \log k}  \leq M \log \Floor{\log q(J)}+2\sum_{k=2}^{a-3}\frac{\epsilon(J,k)}{k\log{k}} \]

Taking $a$ to infinity and using the fact that $\epsilon\in\FallU$, we get the desired result.
\end{proof}

\begin{samepage}
\begin{proposition}
\label{prp:fall_uni}

For any ${p \in \NatPoly}$ there are ${\{\omega_p^K \in \mathcal{P}(\Nats)\}_{K \in \Nats^n}}$ s.t. for any\\ $\zeta: \NatFun \Reals$ bounded, if there is a function $\varepsilon\in\FallU$ s.t.

\begin{equation}
\label{eqn:prp__fall_uni__prem}
\forall J \in \Nats^{n-1}, k,k' \in \Nats: k' \geq p(J,k) \implies \zeta(J,k') \leq \zeta(J,k) + \varepsilon(J,k)
\end{equation}

then

\begin{equation}
\label{eqn:prp__fall_uni__conc}
\zeta(J,k) \equiv \E_{i \sim \omega_p^{Jk}}[\zeta(J,p(J,k)+i)] \pmod \FallU
\end{equation}

\end{proposition}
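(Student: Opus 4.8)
The plan is to reduce Proposition~\ref{prp:fall_uni} to the already-proven Proposition~\ref{prp:fall_uni_weak} by iterating the latter. The key observation is that for a fixed polynomial $p \in \NatPoly$, increasing $k$ a fixed number $N$ of times via the map $k \mapsto (k+2)^{\Floor{\log q(J)}} - 2$ (with $q$ chosen so that a single step dominates $p$) eventually produces an index that exceeds $p(J,k)$. So first I would choose $q \in \NatPolyJ$ with $q \geq 2$ and a constant $N \in \Nats$ such that iterating the ``weak'' step $N$ times, starting from $k$, lands past $p(J,k)$; this is possible because each weak step raises $k$ to a power $\geq 1$ (once $q(J) \geq 4$, the exponent $\Floor{\log q(J)} \geq 2$), so $N$ steps give something like $(k+2)^{2^N}$ which beats any fixed polynomial in $k$ for $N$ large enough (uniformly in $J$, since $p \in \NatPolyJ$ so the degree in $k$ is a fixed constant — I need to be slightly careful that the $J$-dependence of $p$ doesn't spoil this, which it doesn't because the exponent in the weak step is $\Floor{\log q(J)}$ and I can take $q$ to grow polynomially in $J$ to absorb the $J$-dependent coefficients of $p$).

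Next I would set up the telescoping. Define $k^{(0)} := k$ and $k^{(j+1)} := (k^{(j)}+2)^{\Floor{\log q(J)}} - 2$, so $k^{(N)} \geq p(J,k)$. If $\zeta$ satisfies the premise \ref{eqn:prp__fall_uni__prem} of Proposition~\ref{prp:fall_uni}, then I claim it also satisfies the premise \ref{eqn:prp__fall_uni_weak__prem} of Proposition~\ref{prp:fall_uni_weak} at every intermediate scale $k^{(j)}$: indeed if $k' \geq (k^{(j)}+2)^{\Floor{\log q(J)}} - 2 = k^{(j+1)} \geq \cdots$, then since $k^{(j+1)} \geq k^{(j)} \geq \cdots \geq k \geq \dots$ and, eventually, $k^{(j)} \geq p(J, k^{(j-1)})$ for the monotonicity chain — actually the cleanest route is to note the premise of Proposition~\ref{prp:fall_uni} says $\zeta(J,\cdot)$ is ``eventually non-increasing past $p(J,k)$'' for every $k$, and $p(J,\cdot)$ being a polynomial with nonnegative coefficients is non-decreasing, so $\zeta(J,\cdot)$ is non-increasing on $[p(J,k),\infty)$ for the relevant ranges; this gives the weak premise at scale $k^{(j)}$ provided $(k^{(j)}+2)^{\Floor{\log q(J)}}-2 \geq p(J, k^{(j)})$, which I arrange by taking $q$ large enough that even one weak step dominates $p$. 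Then apply Proposition~\ref{prp:fall_uni_weak} successively to get
$$\zeta(J,k^{(j)}) \equiv \E_{i \sim \omega_q^{Jk^{(j)}}}[\zeta(J,k^{(j+1)}+i)] \pmod \FallU$$
for each $j \in [N]$, and compose these $N$ congruences (using that $\FallU$ is closed under addition, and that a bounded average of functions each $\equiv \zeta$-shifted stays within $\FallU$ of the target — here I need the uniform-in-$i$ bookkeeping, treating $i$ as the family index as in the $\overset{\alpha}{\equiv}$ notation, or simply absorbing the finitely-iterated averaging into a single distribution $\omega_p^K$ on $\Nats$).

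The final step is to define $\omega_p^K$ as the pushforward of the $N$-fold iterated sampling to the single offset variable $i := k^{(N)}(\text{with offsets}) - p(J,k)$ — that is, $\omega_p^K$ is the distribution of the total displacement above $p(J,k)$ obtained by composing the $N$ weak-step offset distributions $\omega_q$ (plus any slack between $k^{(N)}$ and $p(J,k)$), so that $\E_{i\sim\omega_p^{Jk}}[\zeta(J,p(J,k)+i)]$ matches the composed right-hand side; then \ref{eqn:prp__fall_uni__conc} follows. The main obstacle I anticipate is the uniformity bookkeeping: Proposition~\ref{prp:fall_uni_weak} gives a congruence mod $\FallU$ with a single error function independent of $\zeta$, and I must check that composing $N$ such congruences, where the ``input'' $\zeta$ at step $j+1$ is itself an average $\E_i[\zeta(J,\cdot+i)]$ of shifts of the original $\zeta$, still yields a single $\FallU$-error bound independent of $k$ and of the averaging variables — this works because $\FallU$ is defined by an integral condition $\sum_k \varepsilon(J,k)/(k\log k) \leq M\log\log p(J)$ that is preserved under the operation $\varepsilon \mapsto \E_{i\sim\omega}[\varepsilon(J,\cdot+i)]$ up to a multiplicative constant (shifts and averages only improve the sum), but I should verify this closure property carefully, perhaps as a small separate lemma, since it is the crux of why the iteration does not blow up the error.
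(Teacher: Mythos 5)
Your proposal contains the paper's proof inside it, but buried under machinery you don't need. You observe, correctly, that $q\in\NatPolyJ$ can be chosen so that \emph{a single} weak step dominates $p$: since the degree of $p$ in $k$ is a fixed constant and its $J$-dependent coefficients can be absorbed by taking $q(J)$ polynomially large, one can ensure ${p(J,k)\leq(k+2)^{\Floor{\log q(J)}}-2}$ for all $J,k$. Once you have that, your $N$-fold iteration is superfluous — with $N=1$ your argument collapses exactly to the paper's proof: premise \ref{eqn:prp__fall_uni__prem} immediately implies premise \ref{eqn:prp__fall_uni_weak__prem} for this $q$, a single application of Proposition~\ref{prp:fall_uni_weak} gives ${\zeta(J,k)\equiv\E_{i\sim\omega_q^{Jk}}[\zeta(J,(k+2)^{\Floor{\log q(J)}}-2+i)]\pmod\FallU}$, and one then defines $\omega_p^{Jk}$ as the pushforward of $\omega_q^{Jk}$ under the shift by the nonnegative slack ${(k+2)^{\Floor{\log q(J)}}-2-p(J,k)}$, which turns the right-hand side into ${\E_{i\sim\omega_p^{Jk}}[\zeta(J,p(J,k)+i)]}$. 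The benefit of dropping the iteration is precisely that the step you yourself flag as the crux — closure of $\FallU$ under the averaged reindexing ${\varepsilon\mapsto\E_{i\sim\omega_q^{Jk}}[\varepsilon(J,(k+2)^{\Floor{\log q(J)}}-2+i)]}$, needed to compose congruences whose error is evaluated at the intermediate scales — never arises. (That closure property is in fact provable by the same $\dif(\log\log t)$ change-of-variables used in the proof of Proposition~\ref{prp:fall_uni_weak}, so your longer route could be completed, but it adds a lemma and uniformity bookkeeping for no gain.) One small caution: the premise does not literally say $\zeta(J,\cdot)$ is non-increasing on a ray, only that $\zeta(J,k')\leq\zeta(J,k)$ whenever $k'\geq p(J,k)$; in the one-step argument this is exactly what is needed, so no monotonicity reformulation is required.
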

\end{samepage}

\begin{proof}

Fix ${p \in \NatPoly}$. Choose ${q \in \NatPolyJ}$ s.t. ${p(J,k) \leq (k+2)^{\Floor{\log q(J)}}}-2$. Let ${\{\omega_q^K \in \mathcal{P}(\Nats)\}_{K \in \Nats^n}}$ be as in Proposition~\ref{prp:fall_uni_weak}. Define ${\{\omega_p^K \in \mathcal{P}(\Nats)\}_{K \in \Nats^n}}$ by

\[\Prb_{i \sim \omega_p^{Jk}}[i \geq k] = \Prb_{i \sim \omega_q^{Jk}}[i + (k+2)^{\Floor{\log q(J)}} - 2 - p(J,k) \geq k]\]

Suppose ${\zeta: \NatFun \Reals}$ is bounded and s.t. \ref{eqn:prp__fall_uni__prem} holds. In particular, \ref{eqn:prp__fall_uni_weak__prem} also holds. Therefore, we have \ref{eqn:prp__fall_uni_weak__conc}. We rewrite it as follows

\[\zeta(J,k) \equiv \E_{i \sim \omega_q^{Jk}}[\zeta(J, p(J,k) + i+ (k+2)^{\Floor{\log q(J)}} - 2 - p(J,k))] \pmod \FallU\]

By definition of ${\omega_p}$, \ref{eqn:prp__fall_uni__conc} follows.
\end{proof}

In the following, we use the notation ${\alpha_{p}(J,k):=(J,p(J,k))}$.

\begin{samepage}
\begin{proposition}
\label{prp:fall_uni_amp}

Consider ${p \in \NatPoly}$, ${(\Dist,f)}$ a distributional estimation problem and\\ ${P,Q: \Words \Scheme \Rats}$ bounded. Suppose that

\begin{align}
\label{eqn:prp__fall_uni_amp__p}\sup_{i \in \Nats} \E_{ \Dist^{\alpha_{p+i}(K)} \times \Un_P^{\alpha_{p+i}(K)}}[(P^{\alpha_{p+i}(K)}-f)^2] &\leq \E_{ \Dist^{K} \times \Un_P^{K}}[(P^{K}-f)^2] \pmod\FallU \\
\label{eqn:prp__fall_uni_amp__q}\sup_{i \in \Nats} \E_{\Dist^{\alpha_{p+i}(K)} \times \Un_P^{\alpha_{p+i}(K)}}[(P^{\alpha_{p+i}(K)}-f)^2] &\leq \E_{\Dist^{K} \times \Un_Q^{K}}[(Q^{K}-f)^2] \pmod\FallU
\end{align}

Then

\begin{equation}
\label{eqn:prp__fall_uni_amp}
\E_{\Dist^{K} \times \Un_P^{K}}[(P^{K}-f)^2] \leq \E_{\Dist^{K} \times \Un_Q^{K}}[(Q^{K}-f)^2] \pmod \FallU
\end{equation}

\end{proposition}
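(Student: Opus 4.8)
The plan is to deduce \ref{eqn:prp__fall_uni_amp} from Proposition~\ref{prp:fall_uni} applied to a monotonized version of the error of $P$. Set
\[\zeta(K):=\E_{\Dist^{K}\times\Un_P^{K}}[(P^{K}-f)^2],\qquad \xi(K):=\E_{\Dist^{K}\times\Un_Q^{K}}[(Q^{K}-f)^2],\]
which are bounded functions $\NatFun\Reals^{\geq 0}$ since $P$, $Q$ and $f$ are bounded; the goal becomes $\zeta(K)\leq\xi(K)\pmod\FallU$. First observe that we may assume $p(J,k)\geq k$ for all $J,k$: replacing $p$ by $p+K_{n-1}$ only shrinks the index set $\{p(J,k)+i\mid i\in\Nats\}$ appearing in the suprema of \ref{eqn:prp__fall_uni_amp__p} and \ref{eqn:prp__fall_uni_amp__q}, so both hypotheses are inherited by the new polynomial.

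Next, define $\bar\zeta(J,k):=\sup_{l\geq k}\zeta(J,l)$. This is still bounded, and it is non-increasing in $k$, so whenever $k'\geq p(J,k)$ (and hence $k'\geq k$) we get $\bar\zeta(J,k')\leq\bar\zeta(J,k)$; that is, $\bar\zeta$ satisfies the premise \ref{eqn:prp__fall_uni__prem} of Proposition~\ref{prp:fall_uni}. Applying that proposition produces distributions $\{\omega_p^K\}$ for which
\[\bar\zeta(J,k)\equiv\E_{i\sim\omega_p^{Jk}}[\bar\zeta(J,p(J,k)+i)]\pmod\FallU.\]
It remains to bound the right-hand side in terms of $\xi$. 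For any $i\in\Nats$ we have $\bar\zeta(J,p(J,k)+i)=\sup_{l\geq p(J,k)+i}\zeta(J,l)\leq\sup_{m\in\Nats}\zeta(J,p(J,k)+m)=\sup_{m\in\Nats}\zeta(\alpha_{p+m}(J,k))$, and hypothesis \ref{eqn:prp__fall_uni_amp__q} says exactly that this last expression is $\leq\xi(J,k)$ modulo a function in $\FallU$ not depending on $i$. Hence the same bound survives averaging over $i\sim\omega_p^{Jk}$, giving $\E_{i\sim\omega_p^{Jk}}[\bar\zeta(J,p(J,k)+i)]\leq\xi(J,k)\pmod\FallU$.

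Chaining the three facts $\zeta(J,k)\leq\bar\zeta(J,k)$, the congruence from Proposition~\ref{prp:fall_uni}, and the last estimate, and using that a sum of functions in $\FallU$ is again in $\FallU$ (Definition~\ref{def:fall}), we obtain $\zeta(J,k)\leq\xi(J,k)\pmod\FallU$, which is \ref{eqn:prp__fall_uni_amp}. The only genuinely nontrivial step is the passage from $\zeta$ to $\bar\zeta$: the hypotheses supply only an approximate, $\FallU$-mod monotonicity of $\zeta$ itself, which is too weak to feed directly into Proposition~\ref{prp:fall_uni}, whereas $\bar\zeta$ is exactly monotone yet still sandwiched between $\zeta(J,k)$ and $\sup_{m}\zeta(J,p(J,k)+m)$, which is precisely what makes the estimate close up. I do not anticipate any further obstacle; the remaining manipulations are routine bookkeeping with the closure properties of fall spaces.
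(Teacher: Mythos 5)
Your argument is correct, and it uses the same engine as the paper --- Proposition~\ref{prp:fall_uni} followed by hypothesis \ref{eqn:prp__fall_uni_amp__q} --- but routed through a genuinely different intermediate object. The paper feeds $\zeta(K):=\E_{\Dist^{K}\times\Un_P^{K}}[(P^{K}-f)^2]$ directly into Proposition~\ref{prp:fall_uni}, asserting that \ref{eqn:prp__fall_uni_amp__p} supplies the premise \ref{eqn:prp__fall_uni__prem}; strictly speaking \ref{eqn:prp__fall_uni_amp__p} only gives monotonicity up to a function in $\FallU$, whereas \ref{eqn:prp__fall_uni__prem} (and the nonnegativity of the differences used in the proof of Proposition~\ref{prp:fall_uni_weak}) requires the exact inequality, so the paper's one-line justification is loose --- harmlessly so in the intended applications (Theorems~\ref{thm:exists_all} and \ref{thm:exists_smp}), where the exact inequality is actually available. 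Your monotone envelope $\bar\zeta(J,k)=\sup_{l\geq k}\zeta(J,l)$, together with the replacement of $p$ by $p+K_{n-1}$, makes the premise hold exactly, and the sandwich $\zeta(J,k)\leq\bar\zeta(J,k)$, $\bar\zeta(J,p(J,k)+i)\leq\sup_m\zeta(J,p(J,k)+m)$ closes the estimate; all the steps (inheritance of the hypotheses by the larger polynomial, uniformity in $i$ before averaging over $\omega_p^{Jk}$, closure of $\FallU$ under addition) check out. What your route buys is twofold: it is rigorous exactly as the proposition is stated, and it shows that hypothesis \ref{eqn:prp__fall_uni_amp__p} is never used, i.e.\ you prove the (formally stronger) statement with \ref{eqn:prp__fall_uni_amp__q} alone. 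What the paper's route buys is brevity and a congruence ($\equiv$, not just $\leq$) between $\zeta(K)$ and the averaged tail values of $\zeta$ itself, which is the form the author reuses verbatim; if you wanted that two-sided statement you would need \ref{eqn:prp__fall_uni_amp__p} (or exact monotonicity) after all.
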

\end{samepage}

\begin{proof}

Define ${\zeta(K):=\E_{\Dist^{K} \times \Un_P^{K}}[(P^{K}(x,y)-f(x))^2]}$ and observe that \ref{eqn:prp__fall_uni_amp__p} implies \ref{eqn:prp__fall_uni__prem}, allowing us to apply Proposition~\ref{prp:fall_uni} and get

\[\E_{\Dist^{K} \times \Un_P^{K}}[(P^{K}(x,y)-f(x))^2] \equiv \E_{\omega_p^K}[\E_{ \Dist^{\alpha_{p+i}(K)} \times \Un_P^{\alpha_{p+i}(K)}}[(P^{\alpha_{p+i}(K)}(x,y)-f(x))^2]] \pmod \FallU\]

Applying \ref{eqn:prp__fall_uni_amp__q} to the right hand side, we get \ref{eqn:prp__fall_uni_amp}.
\end{proof}

\begin{proof}[Proof of Theorem \ref{thm:exists_all}]

Fix $M \geq \sup \Abs{f}$ and construct ${D: \Words \Alg \Rats}$ s.t.

\[D(x)=\begin{cases}D(x)=\max(\min(t,+M),-M) \text{ if ${x = \En_\Rats(t)}$}\\D(x)=0 \text{ if } x \not\in \Img \En_\Rats\end{cases}\]

Choose ${a^*: \Nats^n \rightarrow \Words}$ s.t.

\begin{equation}
\label{eqn:thm__exists_all__astar}
a^*(K) \in \Argmin{a \in \Bool^{\leq l(K)}} \E_{\Dist^{\eta(K)} \times \Un^{r(K)}}[(D(\Ev^{K_{n-1}}(a;x,y))-f(x))^2]
\end{equation} 

Construct ${P: \Words \Scheme \Rats}$ s.t. for any ${K \in \Nats^n}$, ${x,y,b_0 \in \Words}$ and ${a_0 \in \Bool^{\leq l(K)}}$

\begin{align}
\A_P(K) &= \Chev{a^*(K),\A_r(K)}\\
\R_P(K,\Chev{a_0,b_0}) &= r(K,b_0) \\
P^K(x,y,\Chev{a_0,b_0}) &= D(\Ev^{K_{n-1}}(a_0;x,y))
\end{align}

Consider ${Q: \Words \Scheme \Rats}$ bounded. Without loss of generality we can assume $\sup \Abs{Q} \leq M$ (otherwise we can replace ${Q}$ by ${\tilde{Q}:=\max(\min(Q,+M),-M)}$ and have ${\E[(\tilde{Q}-f)^2]\leq\E[(Q-f)^2]}$). Choose ${q \in \NatPoly}$ s.t. for any ${K \in \Nats^n}$ there exists ${a_Q^K \in \Bool^{l(\alpha_q(K))}}$ for which

\begin{align}
\label{eqn:thm__exists_all__q_rnd}\R_Q(K) &\leq r(\alpha_q(K)) \\
\label{eqn:thm__exists_all__q_adv}\forall i \in \Nats, x,z \in \Words, y \in \BoolR{Q}: D(\Ev^{q(K)+i}(a_Q^K;x,yz))&=Q^K(x,y)
\end{align}

Take any ${i \in \Nats}$.

\[\E_{ \Dist^{\eta(K)} \times \Un_P^{\alpha_{q+i}(K)}}[(P^{\alpha_{q+i}(K)}(x,y)-f(x))^2]=\E_{ \Dist^{\eta(K)} \times \Un^{r(\alpha_{q+i}(K))}}[(D(\Ev^{q(K)+i}(a^*(\alpha_{q+i}(K));x,y))-f(x))^2]\]

Using \ref{eqn:thm__exists_all__astar}

\[\E_{ \Dist^{\eta(K)} \times \Un_P^{\alpha_{q+i}(K)}}[(P^{\alpha_{q+i}(K)}(x,y)-f(x))^2] \leq \E_{\Dist^{\eta(K)} \times \Un^{r(\alpha_{q+i}(K))}}[(D(\Ev^{q(K)+i}(a_Q^K;x,y))-f(x))^2]\]

\[\E_{ \Dist^{\eta(K)} \times \Un_P^{\alpha_{q+i}(K)}}[(P^{\alpha_{q+i}(K)}(x,y)-f(x))^2] \leq \E_{\Dist^{\eta(K)} \times \Un_Q^K}[(Q^K(x,y)-f(x))^2]\]

By the same reasoning we can choose ${p \in \NatPoly}$ s.t. ${p \geq q}$ and

\[\E_{ \Dist^{\eta(K)} \times \Un_P^{\alpha_{p+i}(K)}}[(P^{\alpha_{p+i}(K)}(x,y)-f(x))^2] \leq \E_{\Dist^{\eta(K)} \times \Un_P^K}[(P^K(x,y)-f(x))^2]\]

Applying Proposition~\ref{prp:fall_uni_amp}, we conclude that ${P}$ is an ${\FallU(\Gamma)}$-optimal estimator for ${(\Dist^\eta,f)}$.
\end{proof}

We now proceed to study the special case of samplable problems. These problems admit an optimal polynomial-time estimator which is essentially a brute-force implementation of the empirical risk minimization principle in statistical learning. In particular, the optimality of this algorithm can be regarded as a manifestation of the fundamental theorem of agnostic PAC learning (see e.g. Theorem 6.7 in \cite{Shalev-Shwartz_2014}). In our case the hypothesis space of the space of programs, so this algorithm can also be regarded as a variation of Levin's universal search. The advantage of this optimal polynomial-time estimator on the fully general construction of Theorem~\ref{thm:exists_all} is that the required advice is only the advice of the sampler. The notation $\FallM$ below refers to the fall space defined in Example~\ref{exm:e_mon}.

\begin{samepage}
\begin{theorem}
\label{thm:exists_smp}

Fix ${r: \Nats^n \Alg \Nats}$ s.t.

\begin{enumerate}[(i)]

\item ${\T_r \in \GammaPoly^n}$

\item As a function, ${r \in \GammaPoly^n}$.

\item ${r}$ is non-decreasing in the last argument.

\item There is ${s \in \NatPoly}$ s.t. ${\forall K \in \Nats^n: \log(K_{n-1}+4)r(K) \leq r(\alpha_s(K))}$.

\end{enumerate}

In particular, ${r}$ is steadily growing. Assume ${\GrowR=\Gamma_r}$ and ${\GrowA=\Gamma_\textnormal{log}^n}$. Consider ${(\Dist,f)}$ an distributional estimation problem of rank ${n-1}$ and ${\sigma}$ an ${\FallM(\Gamma)}$-sampler of ${(\Dist^\eta,f)}$. Then, there exists ${P}$ an ${\FallU(\Gamma)}$-optimal estimator for ${(\Dist^\eta,f)}$ s.t. ${\A_P=\A_\sigma}$. In particular, if ${\sigma}$ is uniform (i.e. ${\A_\sigma \equiv \Estr}$) then so is ${P}$.

% The sampler fall space should satisfy some monotonicity requirement

\end{theorem}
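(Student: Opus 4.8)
The plan is to construct $P$ as a brute-force implementation of empirical risk minimization over the hypothesis space of programs of logarithmically bounded length, using $\sigma$ to generate labelled samples. Concretely, on input $x$ with parameter $K=(J,k)\in\Nats^n$, the estimator $P^K$ will: (1) use the advice $\A_P(K):=\A_\sigma(K)$ to run $\sigma^K$ roughly $l(K)^4$ times (where $l(K):=c\sum_i\log(K_i+1)$ bounds the advice length, and the number of samples is polynomial in $K$, hence the required number of random bits stays within $\Gamma_r$ thanks to condition (iv) on $r$) to obtain samples $\{(x_i,t_i)\}_{i\in[l(K)^4]}$ with $x_i\sim\Dist^J$ and $t_i$ an (approximately unbiased) label; (2) iterate over all programs $a\in\Bool^{\le l(K)}$, run each for $k$ time steps on each $x_i$ to get $p_i^a:=D(\Ev^k(a;x_i,\cdot))$ (clipped to $[-M,M]$ with $M\ge\sup\Abs f$), and compute the empirical risk $\hat R(a):=\sum_i(p_i^a-t_i)^2$; (3) select $\hat a\in\Argmin{a}\hat R(a)$ and output $D(\Ev^k(\hat a;x,\cdot))$. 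Since there are $2^{l(K)+1}-1$ programs and $l(K)^4$ samples and everything runs in time polynomial in $K$, $P$ is a legitimate polynomial-time $\Gamma$-scheme with $\A_P=\A_\sigma$ and $\GrowA=\GammaLog^n$.

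The optimality argument proceeds in three layers. First, by Proposition~\ref{prp:gen} (applied with the sampler $\sigma$ relative to $Y=\bm1$), for any fixed program $a$ the true risk $\E_{\Dist^J\times\Un^k}[(p^a-f)^2]$ is congruent modulo $\Fall$ to the ``sampler risk'' $\E[(p^a-t)^2]$ up to the bias term in condition (ii) of Definition~\ref{def:smp_prob}; here we need to be careful because $\sigma$ is an $\FallM(\Gamma)$-sampler, so these estimates are controlled in the monotone fall space, and we pass to $\FallU$ at the end via Example~\ref{exm:e_mon}. Second, a uniform-convergence (Hoeffding/union-bound) estimate shows that with $l(K)^4$ samples the empirical risk $\hat R(a)/l(K)^4$ is within $O(l(K)^{-1})$ (roughly $\sqrt{\log(\#\text{programs})/\#\text{samples}}=\sqrt{l(K)/l(K)^4}=l(K)^{-3/2}$) of the sampler risk, simultaneously over all $a\in\Bool^{\le l(K)}$, with failure probability $2^{-\Omega(l(K)^2)}$ which is negligible; this is exactly the agnostic-PAC fundamental theorem (Theorem 6.7 in \cite{Shalev-Shwartz_2014}) with hypothesis class of size $2^{l(K)+1}$. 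Combining, the chosen $\hat a$ has true risk within $O(l(K)^{-1})$ plus $\Fall$-terms of $\min_a\E_{\Dist^J\times\Un^k}[(p^a-f)^2]$.

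Third and finally, I must compare this to an arbitrary competitor $Q:\Words\Scheme\Rats$. Given bounded $Q$ running in time $\le q_0(K)$ with $\R_Q\le r_Q(K)$, there is (by the universal machine construction, as in the proof of Theorem~\ref{thm:exists_all}) a program $a_Q$ of length $O(1)$ — hence certainly $\le l(K)$ for large enough $K$ after a polynomial reparametrization $\alpha_p$ of the last coordinate — such that $D(\Ev^{p(K)}(a_Q;x,yz))=Q^K(x,y)$; running $a_Q$ for only $k$ steps at parameter $\alpha_p(K)$ simulates $Q$ at parameter $K$. So $\min_a\E[(p^a-f)^2]\le\E_{\Dist^J\times\Un_Q^K}[(Q^K-f)^2]$ at the reparametrized scale, and the self-comparison for $P$ itself (also giving monotone-in-$k$ behaviour after reparametrization) lets us invoke Proposition~\ref{prp:fall_uni_amp} to absorb the reparametrization into a $\FallU$-error, concluding that $P$ is $\FallU(\Gamma)$-optimal. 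The uniformity claim is immediate since $\A_P=\A_\sigma$, and $\A_\sigma\equiv\Estr$ makes $P$ uniform.

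The main obstacle I expect is the bookkeeping in layer three: matching the sample-complexity error $O(l(K)^{-1})$ against the definition of $\FallU$ (which controls $\sum_k \varepsilon(J,k)/(k\log k)$), ensuring that the reparametrization $\alpha_p$ needed to fit $Q$'s program and time budget into $(\,\le l(K),\,k\text{ steps}\,)$ is handled by Proposition~\ref{prp:fall_uni_amp} rather than breaking the asymptotics, and threading the $\FallM$-vs-$\FallU$ distinction (via $\bar\varepsilon$ in Example~\ref{exm:e_mon}) through the sampler estimates so that the final bound genuinely lands in $\FallU$. The uniform-convergence step itself is standard but needs the sample size $l(K)^4$ to be chosen large enough that the union bound over $2^{l(K)+1}$ hypotheses still yields a negligible (indeed $\FallU$) error term.
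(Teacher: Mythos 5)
Your proposal follows essentially the same route as the paper's proof: empirical risk minimization over programs of length at most $\Floor{\log(K_{n-1}+2)}$ on $l(K)^4$ sampler-generated labelled instances, a Hoeffding-plus-union-bound uniform-convergence step over the $2^{l(K)+1}$ hypotheses, simulation of an arbitrary competitor $Q$ by a short program after a polynomial reparametrization of the last coordinate, passage from $\FallM$ to $\FallU$ via the monotone envelope, and Proposition~\ref{prp:fall_uni_amp} to absorb the reparametrization. Two small points to tighten (neither changes the argument): the sampler risk differs from the true risk by the hypothesis-independent offset $\E[(f\circ\sigma^K_0-\sigma^K_1)^2]$ rather than only by the bias of condition (ii) of Definition~\ref{def:smp_prob} (the paper packages this as Proposition~\ref{prp:est_err_smp}, and it is harmless precisely because the offset is the same for every program), and the simulating program must also hard-code $Q$'s logarithmic advice, so its length is $O(\log K)$ rather than $O(1)$ --- which is exactly what the reparametrization you already invoke accommodates.
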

\end{samepage}

\begin{samepage}
\begin{proposition}
\label{prp:std_grow_log}

Fix ${r \in \GammaPoly^n}$ s.t.

\begin{enumerate}[(i)]

\item ${r}$ is non-decreasing in the last argument.

\item There is ${s \in \NatPoly}$ s.t. ${\forall K \in \Nats^n: \log(K_{n-1}+4)r(K) \leq r(\alpha_s(K))}$.

\end{enumerate}

In particular, ${r}$ is steadily growing. Consider any ${\gamma \in \Gamma_r}$ and define ${\gamma': \Nats \rightarrow \Nats}$ by 

\[\gamma'(K):=\Floor{\log(K_{n-1}+2)}\gamma(K)\]

Then, ${\gamma' \in \Gamma_r}$

\end{proposition}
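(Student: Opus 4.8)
The plan is to first check that $r$ is steadily growing, and then to produce, for any given $\gamma \in \Gamma_r$, an explicit polynomial witnessing $\gamma' \in \Gamma_r$. For the first part, conditions (i) and (ii) of Definition~\ref{def:sgrow} are precisely the two hypotheses imposed on $r$. For condition (iii), I would use that $\log(k+4) \geq \log 4 = 2$ for every $k \in \Nats$, so that the hypothesis $\log(K_{n-1}+4)\, r(K) \leq r(\alpha_s(K))$ gives $2\,r(J,k) \leq \log(k+4)\, r(J,k) \leq r(J,s(J,k))$, i.e.\ $r(J,k) \leq \frac{1}{2} r(J,s(J,k))$, which is condition (iii) with the same $s$. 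This makes $\Gamma_r = \Gamma_{\gamma^*}$ (with $\gamma^* = r$) well-defined, and recall that $\gamma \in \Gamma_r$ iff there is $p \in \NatPoly$ with $\gamma(J,k) \leq r(J,p(J,k))$ for all $(J,k)$.

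Now fix $\gamma \in \Gamma_r$ and such a $p$. First I would replace $p$ by $p + K_{n-1}$: since $r$ is non-decreasing in its last argument this can only weaken the inequality $\gamma \leq r \circ \alpha_p$, so we may assume $p(J,k) \geq k$. Then $\Floor{\log(k+2)} \leq \log(k+2) \leq \log(p(J,k)+4)$, and applying the hypothesis of the proposition at the point $(J,p(J,k))$ (so that $K_{n-1} = p(J,k)$) yields
\[
\gamma'(J,k) = \Floor{\log(k+2)}\, \gamma(J,k) \leq \log(p(J,k)+4)\, r(J,p(J,k)) \leq r\bigl(J, s(J,p(J,k))\bigr).
\]
Since $(J,k) \mapsto s(J,p(J,k))$ is a composition of polynomials with natural coefficients, it lies in $\NatPoly$; calling it $p'$, the displayed bound reads $\gamma' \leq r \circ \alpha_{p'}$ pointwise, hence $\gamma' \in \Gamma_r$. (Note also that $\gamma'$ is $\Nats$-valued, because $\Floor{\log(K_{n-1}+2)} \geq 1$, so it is a legitimate element of a growth space.)

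There is essentially no obstacle here — the argument is pure bookkeeping. The only two points that deserve a moment of care are the reduction to $p(J,k) \geq k$, which is what lets the ``$+4$'' inside the logarithm dominate the ``$+2$'', and the observation that substituting one $\NatPoly$ polynomial into another stays within $\NatPoly$; both are routine.
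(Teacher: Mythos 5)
Your proof is correct and follows essentially the same route as the paper's: choose $p \in \NatPoly$ with $p(J,k) \geq k$ and $\gamma \leq r \circ \alpha_p$, bound $\Floor{\log(k+2)}$ by $\log(p(J,k)+4)$, and apply hypothesis (ii) at the shifted index to get $\gamma' \leq r \circ \alpha_{s \circ \alpha_p}$. Your additional verification that $r$ is steadily growing (via $\log(k+4) \geq 2$) is a correct filling-in of a claim the paper's proof leaves implicit.
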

\end{samepage}

\begin{proof}

Choose ${p \in \NatPoly}$ s.t. ${p(K) \geq K_{n-1}}$ and ${r(\alpha_p(K)) \geq \gamma(K)}$. We get 

\[\gamma'(K) \leq \Floor{\log(K_{n-1}+2)} r(\alpha_p(K))\]

\[\gamma'(K) \leq \Floor{\log(p(K)+4)} r(\alpha_p(K))\]

\[\gamma'(K) \leq r(\alpha_s(\alpha_p(K)))\]
\end{proof}

\begin{samepage}
\begin{proposition}
\label{prp:est_err_smp}

Consider ${(\Dist,f)}$ a distributional estimation problem, ${\sigma}$ an ${\EG}$-sampler of ${(\Dist,f)}$, ${I}$ a set and ${\{h_\alpha^K: \Words \Markov \Reals\}_{\alpha \in I, K \in \Nats^n}}$ uniformly bounded. Then

\begin{equation}
\label{eqn:prp__est_err_smp}
\E_{\Un_\sigma^K}[\E[(h_\alpha^K \circ \sigma^K_0-\sigma^K_1)^2]] \overset{\alpha}{\equiv} \E_{\Dist^K}[\E[(h_\alpha^K-f)^2]] + \E_{\Un_\sigma^K}[(f \circ \sigma^K_0-\sigma^K_1)^2] \pmod \Fall
\end{equation}

\end{proposition}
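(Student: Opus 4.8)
The plan is to expand the square on the left-hand side around $f$ and match the three resulting terms against the right-hand side. Write $x := \sigma^K(z)_0$ and $t := \sigma^K(z)_1$ for $z \sim \Un_\sigma^K$, with the internal randomness of the Markov kernel $h_\alpha^K$ drawn independently given $x$. From
\[
(h_\alpha^K(x) - t)^2 = (h_\alpha^K(x) - f(x))^2 + 2(h_\alpha^K(x) - f(x))(f(x) - t) + (f(x) - t)^2,
\]
taking $\E_{\Un_\sigma^K}[\E[\cdot]]$ decomposes the left-hand side as $A + 2B + C$, where $C = \E_{\Un_\sigma^K}[(f \circ \sigma^K_0 - \sigma^K_1)^2]$ is already the last term of the right-hand side, $A = \E_{z \sim \Un_\sigma^K}[\E[(h_\alpha^K(\sigma^K(z)_0) - f(\sigma^K(z)_0))^2]]$, and $B$ is the cross term.

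For $A$, I would use that $\sigma_0$ is a polynomial-time $\EG$-sampler of $\Dist$, so the pushforward of $\Un_\sigma^K$ by $\sigma^K_0$, namely $\sigma_{0\bullet}^K$, satisfies $\sigma_{0\bullet}^K \equiv \Dist^K \pmod \Fall$. Since $\{h_\alpha^K\}$ is uniformly bounded and $f$ is bounded, the family $x \mapsto \E[(h_\alpha^K(x) - f(x))^2]$ is uniformly bounded in $K$, $x$ and $\alpha$; because the error bound in Proposition~\ref{prp:prob_cong_ev} depends only on this uniform bound and on $\Dtv(\sigma_{0\bullet}^K,\Dist^K)$, it is independent of $\alpha$, giving $A \overset{\alpha}{\equiv} \E_{\Dist^K}[\E[(h_\alpha^K - f)^2]] \pmod \Fall$, which is the first term of the right-hand side.

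The main work is the cross term $B$. I would condition on $\sigma^K_0 = x$: letting $\nu_x^K$ denote the conditional law of $\sigma^K_1$ given $\sigma^K_0 = x$ and using that $h_\alpha^K(x)$ is independent of $\sigma^K_1$ given $x$, one obtains
\[
B = \E_{x \sim \sigma_{0\bullet}^K}\bigl[\E[h_\alpha^K(x) - f(x)]\,(f(x) - f_\sigma^K(x))\bigr],
\]
since $\E_{t \sim \nu_x^K}[t] = f_\sigma^K(x)$ for $x \in X_\sigma^K = \Supp \sigma_{0\bullet}^K$ by Definition~\ref{def:smp_prob}. Bounding $\Abs{\E[h_\alpha^K(x) - f(x)]}$ by an $\alpha$-independent constant $M$ and noting that $\Abs{f - f_\sigma^K}$ is bounded (as $\sup\Abs{\sigma_1} < \infty$ for a sampler of $(\Dist,f)$), Proposition~\ref{prp:prob_cong_ev} converts $\E_{\sigma_{0\bullet}^K}[\Abs{f - f_\sigma^K}]$ into $\E_{\Dist^K}[\Abs{f - f_\sigma^K}]$ modulo $\Fall$, and the latter lies in $\Fall$ by condition (ii) of Definition~\ref{def:smp_prob}. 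Hence $\Abs{B} \leq M\,\E_{\Dist^K}[\Abs{f - f_\sigma^K}] \pmod \Fall$, so $B \overset{\alpha}{\equiv} 0 \pmod \Fall$, uniformly in $\alpha$. Combining the estimates for $A$, $B$, $C$ and absorbing the finitely many $\Fall$-errors into a single element of $\Fall$ yields the claim. I expect the only delicate point to be the bookkeeping for uniformity in $\alpha$, but since every error term above is dominated by an $\alpha$-independent function in $\Fall$, this presents no real obstacle.
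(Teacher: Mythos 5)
Your proof is correct and follows the same overall strategy as the paper's: expand $(h_\alpha^K\circ\sigma_0^K-\sigma_1^K)^2$ around $f\circ\sigma_0^K$, identify the three terms, and transfer the pure $(h_\alpha^K-f)^2$ term from the sampler's output distribution to $\Dist^K$ using the congruence $\sigma_{0\bullet}^K\equiv\Dist^K \pmod\Fall$ with a bound uniform in $\alpha$ (your direct use of Proposition~\ref{prp:prob_cong_ev}, observing $\alpha$-uniformity, is exactly the $Y=\bm{1}$ case of Proposition~\ref{prp:smp}). The only real divergence is the cross term: the paper kills it by shuttling it through $\E_{\Dist^K}$, citing Proposition~\ref{prp:smp} and then Proposition~\ref{prp:gen} to replace $f\circ\sigma_0^K$ by $\sigma_1^K$, whereas you condition on $\sigma_0^K=x$, use that the kernel randomness of $h_\alpha^K$ is independent of $\sigma_1^K$ given $x$, and invoke condition (ii) of Definition~\ref{def:smp_prob} (plus the boundedness of $\sigma_1$) directly. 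This amounts to re-deriving inline the special case of Proposition~\ref{prp:gen} that the paper cites; it makes the argument more self-contained at the cost of repeating work already packaged in that proposition, and your bookkeeping of the $\alpha$-independent constants is exactly what is needed for the $\overset{\alpha}{\equiv}$ conclusion.
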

\end{samepage}

\begin{proof}

Denote ${h_{\sigma\alpha }^K := h_\alpha^K \circ \sigma^K_0}$, ${f_\sigma^K := f \circ \sigma^K_0}$. Proposition~\ref{prp:smp} implies

\[\E_{\Un_\sigma^K}[(\E[h_{\sigma\alpha}^K]-f_\sigma^K)f_\sigma^K] \overset{\alpha}{\equiv} \E_{\Dist^K}[(\E[h_\alpha^K]-f)f] \pmod \Fall\]

Applying Proposition~\ref{prp:gen} to the right hand side

\[\E_{\Un_\sigma^K}[(\E[h_{\sigma\alpha}^K]-f_\sigma^K)f_\sigma^K]] \overset{\alpha}{\equiv} \E_{\Un_\sigma^K}[(\E[h_{\sigma\alpha}^K]-f_\sigma^K)\sigma^K_1] \pmod \Fall\]

\begin{equation}
\label{eqn:prp__est_err_smp__prf}
\E_{\Un_\sigma^K}[(\E[h_{\sigma\alpha}^K]-f_\sigma^K)(f_\sigma^K-\sigma_1^K)]] \overset{\alpha}{\equiv} 0 \pmod \Fall
\end{equation}

On the other hand

\[\E_{\Un_\sigma^K}[\E[(h_{\sigma\alpha}^K-\sigma^K_1)^2]] = \E_{\Un_\sigma^K}[\E[(h_{\sigma\alpha}^K-f_\sigma^K+f_\sigma^K-\sigma^K_1)^2]]\]

\[\E_{\Un_\sigma^K}[\E[(h_{\sigma\alpha}^K-\sigma^K_1)^2]] = \E_{\Un_\sigma^K}[\E[(h_{\sigma\alpha}^K-f_\sigma^K)^2]]+2\E_{\Un_\sigma^K}[(\E[h_{\sigma\alpha}^K]-f_\sigma^K)(f_\sigma^K-\sigma^K_1)]]+\E_{\Un_\sigma^K}[\E[(f_\sigma^K-\sigma^K_1)^2]]\]

Applying Proposition~\ref{prp:smp} to the first term on the right hand side and \ref{eqn:prp__est_err_smp__prf} to the second term on the right hand side, we get \ref{eqn:prp__est_err_smp}.
\end{proof}

\begin{proof}[Proof of Theorem \ref{thm:exists_smp}]

Fix $M \geq \sup \Abs{f}$ and construct ${D: \Words \Alg \Rats}$ s.t.

\[D(x)=\begin{cases}D(x)=\max(\min(t,M),-M) \text{ if ${x = \En_\Rats(t)}$}\\D(x)=0 \text{ if } x \not\in \Img \En_\Rats\end{cases}\]

Denote ${l(K):=\Floor{\log (K_{n-1} + 2)}}$. Denote $s(K):=2\Ceil{M^2}l(K)^2$. Construct ${R: \Words \Scheme \Rats}$ s.t. for any ${K \in \Nats^n}$, $w \in \Words$, ${a \in \Bool^{l(K)}}$, ${\{y_i \in \Bool^{r_\sigma(K,w)}\}_{i \in [s(K)]}}$ and ${\{z_i \in \Bool^{r(K)}\}_{i \in [s(K)]}}$

\begin{align}
\label{eqn:thm__exists_smp__l_adv}\A_R(K) &= \A_\sigma(K) \\
\label{eqn:thm__exists_smp__l_rnd}\R_R(K,w) &= s(K) (\R_\sigma(K,w) + r(K)) \\
\label{eqn:thm__exists_smp__l_alg}R^K\left(a,\prod_{i \in [s(K)]}y_i z_i,w\right) &= \frac{1}{s(K)}\sum_{i \in [s(K)]}(D(\Ev^{K_{n-1}}(a;\sigma^K(y_i,w)_0,z_i))-\sigma^K(y_i,w)_1)^2 
\end{align}

That is, ${R}$ generates ${2\Ceil{M^2}l(K)^2}$ estimates of ${f}$ using ${\sigma}$ and computes the \enquote{empirical risk} of the program ${a}$ w.r.t. these estimates. Here, \ref{eqn:thm__exists_smp__l_rnd} is legitimate due to Proposition~\ref{prp:std_grow_log}. 

Construct ${A: \bm{1} \Scheme \Words}$ s.t. for any ${K \in \Nats^n}$, ${w \in \Words}$, ${\{y_i \in \Bool^{r_\sigma(K,w)}\}_{i \in [s(K)]}}$ and\\ ${\{z_i \in \Bool^{r(K)}\}_{i \in [s(K)]}}$

\begin{align}
\label{eqn:thm__exists_smp__a_adv}\A_A(K) &= \A_\sigma(K) \\
\label{eqn:thm__exists_smp__a_rnd}\R_A(K,w) &= \R_R(K,w) \\
\label{eqn:thm__exists_smp__a_alg}A^K\left(\prod_{i \in [s(K)]}y_i z_i,w\right) &\in \Argmin{a \in \Bool^{\leq l(K)}} R^K\left(a,\prod_{i \in [s(K)]}y_i z_i,w\right)
\end{align}

Finally, construct ${P: \Words \Scheme \Rats}$ s.t. for any ${K \in \Nats^n}$, ${w \in \Words}$, ${\{y_i \in \Bool^{r_\sigma(K,w)}\}_{i \in [s(K)]}}$,\\ ${\{z_i \in \Bool^{r(K)}\}_{i \in [s(K)]}}$ and ${z_* \in \Bool^{r(K)}}$

\begin{align}
\label{eqn:thm__exists_smp__p_adv}\A_P(K) &= \A_\sigma(K) \\
\label{eqn:thm__exists_smp__p_rnd}\R_P(K,w) &= \R_R(K,w) + r(K) \\
\label{eqn:thm__exists_smp__p_alg}P^K(x,\left(\prod_{i \in [s(K)]}y_i z_i\right)z_{*},w) &= D(\Ev^{K_{n-1}}(A^K\left(\prod_{i \in [s(K)]}y_i z_i,w\right);x,z_{*}))
\end{align}

Define ${\varrho_0^K \in \Reals}$ by

\[\varrho_0^K:=\E_{\Un_\sigma^K}[(f(\sigma^K(y)_0)-\sigma^K(y)_1)^2]\]

For any ${b \in \Words}$, define ${\varrho^K(b)}$ by

\[\varrho^K(b):=\E_{\Dist^{\eta(K)} \times \Un^{r(K)}}[(D(\Ev^{K_{n-1}}(b;x,z))-f(x))^2]\]

Consider any ${\alpha: \Nats^n \rightarrow \Words}$ s.t. ${\Abs{\alpha(K)} \leq l(K)}$. Define ${h_\alpha^K: \Words \Markov \Reals}$ by

\[\forall s,t \in \Reals: \Pr[h_\alpha^K(x) \in (s,t)]:=\Prb_{z \sim \Un^{r(K)}}[D(\Ev^{K_{n-1}}(\alpha(K);x,z)) \in (s,t)]\]

By Proposition~\ref{prp:est_err_smp}

\begin{align*}
&\E_{\Un_\sigma^K}[\E[(h_\alpha^K(\sigma^K(y)_0)-\sigma^K(y)_1)^2]] \overset{\alpha}{\equiv}\\ 
&\E_{\Dist^{\eta(K)}}[\E[(h_\alpha^K(x)-f(x))^2]] + \E_{\Un_\sigma^K}[(f(\sigma^K(y)_0)-\sigma^K(y)_1)^2] \pmod \FallM 
\end{align*}

\begin{equation}
\label{eqn:thm__exists_smp__prf1}
\E_{\Un_\sigma^K}[\E[(h_\alpha^K(\sigma^K(y)_0)-\sigma^K(y)_1)^2]] \overset{\alpha}{\equiv} \varrho^K(\alpha(K)) + \varrho_0^K \pmod \FallM
\end{equation}

${R^K(\alpha(K),y)}$ is the average of ${2\Ceil{M^2}l(K)^2}$ independent and and identically distributed bounded random variables. By \ref{eqn:thm__exists_smp__prf1}, there is ${\varepsilon \in \FallM}$ that doesn't depend on ${\alpha}$ s.t. the expected value of these random variables is in ${[\varrho^K(\alpha(K)) + \varrho_0^K - \varepsilon(K), \varrho^K(\alpha(K)) + \varrho_0^K + \varepsilon(K)]}$. Applying Hoeffding's inequality we conclude that

\[\forall b \in \Bool^{\leq l(K)}: \Prb_{\Un_R^K}[R^K(b,y) > \varrho^K(b) + \varrho_0^K  + \varepsilon(K) + l(K)^{-1/2}] \leq 2^{-\log(e) l(K)}\]

In particular, since for any ${b \in \Bool^{l(K)}}$, ${R^K(A^K(y),y) \leq R^K(b,y)}$

\begin{equation}
\label{eqn:thm__exists_smp__prf2}
\forall b \in \Bool^{\leq l(K)}: \Prb_{\Un_R^K}[R^K(A^K(y),y) > \varrho^K(b) + \varrho_0^K + \varepsilon(K) + l(K)^{-1/2}] \leq 2^{-\log(e) l(K)}
\end{equation}

Similarly, we have

\[\forall b \in \Bool^{\leq l(K)}: \Prb_{\Un_R^K}[R^K(b,y) < \varrho^K(b) + \varrho_0^K - \varepsilon(K) - l(K)^{-1/2}] \leq 2^{-\log(e) l(K)}\]

\[\Prb_{\Un_R^K}[\exists b \in \Bool^{\leq l(K)}: R^K(b,y) < \varrho^K(b) + \varrho_0^K - \varepsilon(K) - l(K)^{-1/2}] \leq 2^{-(\log(e)-1) l(K)+1}\]

\begin{equation}
\label{eqn:thm__exists_smp__prf3}
\Prb_{\Un_R^K}[R^K(A^K(y),y) < \varrho^K(A^K(y)) + \varrho_0^K - \varepsilon(K) - l(K)^{-1/2}] \leq 2^{-(\log(e)-1)l(K)+1}
\end{equation}

Combining \ref{eqn:thm__exists_smp__prf2} and \ref{eqn:thm__exists_smp__prf3}, we conclude that for any ${b \in \Bool^{\leq l(K)}}$

\[\Prb_{\Un_R^K}[\varrho^K(A^K(y)) + \varrho_0^K - \varepsilon(K) - l(K)^{-1/2} > \varrho^K(b) + \varrho_0^K + \varepsilon(K) + l(K)^{-1/2}] \leq 2^{-\log(e) l(K)} + 2^{-(\log(e)-1) l(K)+1}\]

\[\Prb_{\Un_R^K}[\varrho^K(A^K(y)) > \varrho^K(b) + 2(\varepsilon(K) + l(K)^{-1/2})] \leq 2^{-(\log(e)-1) l(K)+2}\]

It follows that for some ${M_0 \in \Reals^{>0}}$

\[\E_{\Un_R^K}[\varrho^K(A^K(y)] \leq \varrho^K(b) + 2(\varepsilon(K) + l(K)^{-1/2}) + 2^{-(\log(e)-1)l(K)+2} M_0\]

 Denote ${\varepsilon_1(K):=2(\varepsilon(K) + l(K)^{-1/2}) + 2^{-(\log(e)-1) l(K)+2} M_0}$. Note that ${\varepsilon_1 \in \FallM}$ because $\varepsilon$ is by assumption, the last two terms are monotonically decreasing, and both $\sum_{k=2}^{\infty}\frac{1}{k\log k \sqrt{\Floor{\log(k+2)}}}$ and $\sum_{k=2}^{\infty}\frac{2^{-(\log(e)-1)\Floor{\log(k+2)}}}{k\log k}$ converge.

\[\E_{\Un_R^K}[\E_{\Dist^{\eta(K)} \times \Un^{r(K)}}[(D(\Ev^{K_{n-1}}(A^K(y);x,z))-f(x))^2]] \leq \varrho^K(b) + \varepsilon_1(K)\]

\[\forall b \in \Bool^{\leq l(K)}: \E_{\Dist^{\eta(K)} \times \Un_P^K}[(P^K(x,y)-f(x))^2] \leq \varrho^K(b) + \varepsilon_1(K)\]

Consider ${Q: \Words \Scheme \Rats}$ bounded. Without loss of generality we can assume ${\sup \Abs{Q} \leq M}$. Choose ${q \in \NatPoly}$ s.t. ${q(K) \geq K_{n-1}}$ and for all ${K \in \Nats^n}$, \ref{eqn:thm__exists_all__q_rnd} and \ref{eqn:thm__exists_all__q_adv} hold.

\[\E_{\Dist^{\eta(K)} \times \Un_P^{\alpha_{q+i}(K)}}[(P^{\alpha_{q+i}(K)}(x,y)-f(x))^2] \leq \varrho^{\alpha_{q+i}(K)}(a_Q^K) + \varepsilon_1(\alpha_{q+i}(K))\]

\begin{align*}
&\E_{\Dist^{\eta(K)} \times \Un_P^{\alpha_{q+i}(K)}}[(P^{\alpha_{q+i}(K)}(x,y)-f(x))^2] \leq\\ 
&\E_{\Dist^{\eta(K)} \times \Un^{r(\alpha_{q+i}(K))}}[(D(\Ev^{q(K)+i}(a_Q^K;x,z))-f(x))^2] + \varepsilon_1(\alpha_{q+i}(K))
\end{align*}

\[\E_{\Dist^{\eta(K)} \times \Un_P^{\alpha_{q+i}(K)}}[(P^{\alpha_{q+i}(K)}(x,y)-f(x))^2] \leq \E_{\Dist^{\eta(K)} \times \Un_Q^K}[(Q^K(x,z)-f(x))^2] + \varepsilon_1(\alpha_{q+i}(K))\]

Define ${\bar{\varepsilon}_1(K):=\sup_{k \geq K_{n-1}}\varepsilon_1(\eta(K),k)}$. We have ${\bar{\varepsilon}_1 \in \FallU}$ and ${\varepsilon_1(\alpha_{q+i}(K)) \leq \bar{\varepsilon}_1(K)}$ therefore

\[\sup_{i \in \Nats} \E_{\Dist^{\eta(K)} \times \Un_P^{\alpha_{q+i}(K)}}[(P^{\alpha_{q+i}(K)}(x,y)-f(x))^2] \leq \E_{\Dist^{\eta(K)} \times \Un_Q^K}[(Q^K(x,z)-f(x))^2] \pmod \FallU\]

By the same reasoning we can choose ${p \in \NatPoly}$ s.t. ${p \geq q}$ and

\[\sup_{i \in \Nats} \E_{ \Dist^{\eta(K)} \times \Un_P^{\alpha_{p+i}(K)}}[(P^{\alpha_{p+i}(K)}(x,y)-f(x))^2] \leq \E_{\Dist^{\eta(K)} \times \Un_P^K}[(P^K(x,y)-f(x))^2] \pmod \FallU\]

Applying Proposition~\ref{prp:fall_uni_amp}, we conclude that ${P}$ is an ${\FallU(\Gamma)}$-optimal estimator for ${(\Dist^\eta,f)}$.
\end{proof}

The above existence theorems employ the fall space ${\FallU}$ whose meaning might seem somewhat obscure. To shed some light on this, consider the following observation. Informally, optimal polynomial-time estimators represent \enquote{expected values} corresponding to the uncertainty resulting from bounding computing resources. When a function can be computed in polynomial time, this \enquote{expected value} has to approximate the function within ${\Fall}$ which corresponds to a state of \enquote{complete certainty.} However, we will now demonstrate that when a function can only be computed in \emph{quasi-polynomial} time, it still corresponds to complete certainty in the context of ${\FallU(\Gamma)}$-optimal estimators.

\begin{samepage}
\begin{definition}
\label{def:perfect}

Consider ${(\Dist, f)}$ a distributional estimation problem and ${P: \Words \Scheme \Rats}$ bounded. ${P}$ is called an \emph{${\EG}$-perfect polynomial-time estimator for ${(\Dist,f)}$} when 

\begin{equation}
%\label{eqn:tbd}
\E_{(x,y) \sim \Dist^K \times \Un_P^K}[(P^K(x,y)-f(x))^2] \equiv 0 \pmod \Fall
\end{equation}

For the sake of brevity, we will say \enquote{${\EG}$-perfect estimator} rather than \enquote{${\EG}$-perfect polynomial-time estimator.}

\end{definition}
\end{samepage}

Perfect polynomial-time estimators are essentially objects of \enquote{classical} average-case complexity theory. In particular, perfect polynomial-time estimators for distributional decision problems of rank 1 are closely related to heuristic algorithms in the sense of \cite{Bogdanov_2006} (their existence is equivalent under mild assumptions), whereas perfect polynomial-time estimators for rank 2 problems of the form ${(\Dist^\eta,\chi_L)}$ with ${\Dist}$ of rank 1 are related to heuristic schemes.

Comparing the definition of a perfect estimator to the definition of an inapproximable predicate, (Definition 7.9 in \cite{Goldreich_2008}), if $f$ is $(\text{poly},\rho)$-inapproximable, and $\Dist^k=\Un^k$, then for any $\zeta\in o(\rho)$, there is no $\Fall_{\zeta}(\Gamma^{1}_{0},\Gamma^{1}_{\text{poly}})$-perfect estimator for $(\Dist,f)$.

\begin{samepage}
\begin{proposition}
\label{prp:fall_uni_qpoly}
Consider ${(\Dist,f)}$ a distributional estimation problem, ${P: \Words \Scheme \Rats}$ bounded,\\ $m \in \Nats^{>0}$ and ${p \in \NatPolyJ}$ s.t. ${p \geq 2}$. Define ${q: \NatFun \Nats}$ by $q(J,k):=2^{\Floor{\log p(J)\log \max(k,1)}^m}$. Suppose that 

\begin{equation}
\label{eqn:prp__fall_uni_qpoly}
\sup_{i \in \Nats} \E_{(x,y) \sim \Dist^K \times \Un_P^{\alpha_{q+i}(K)}}[(P^{\alpha_{q+i}(K)}(x,y)-f(x))^2] \equiv 0 \pmod \FallU
\end{equation}

Then, ${P}$ is an ${\FallU(\Gamma)}$-perfect estimator for ${(\Dist,f)}$.

\end{proposition}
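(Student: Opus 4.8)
The plan is to work with the function ${\zeta(K) := \E_{(x,y) \sim \Dist^K \times \Un_P^K}[(P^K(x,y)-f(x))^2]}$, written with ${K = (J,k)}$, ${J \in \Nats^{n-1}}$, ${k \in \Nats}$. Since ${P}$ has bounded range and ${f}$ is bounded, ${\zeta}$ is a bounded function ${\NatFun \Reals^{\geq 0}}$, and the goal is ${\zeta \in \FallU}$. First I would pass to the monotone envelope ${\bar\zeta(J,k) := \sup_{l \geq k} \zeta(J,l)}$: it is bounded, non-increasing in its last argument, and ${\bar\zeta \geq \zeta}$, so by condition~\ref{con:def__fall__ineq} of Definition~\ref{def:fall} it suffices to prove ${\bar\zeta \in \FallU}$. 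The key reformulation is that hypothesis~\eqref{eqn:prp__fall_uni_qpoly} says exactly that ${(J,k) \mapsto \bar\zeta(J,q(J,k))}$ lies in ${\FallU}$, since ${\sup_{i \in \Nats} \zeta(\alpha_{q+i}(J,k)) = \sup_{l \geq q(J,k)} \zeta(J,l) = \bar\zeta(J,q(J,k))}$.

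The driving idea is a change of variables. Recall that ${\varepsilon(J,\cdot) \in \FallU}$ is equivalent to a bound ${\sum_{k=2}^{\infty} \frac{\varepsilon(J,k)}{k\log k} \leq M\log\log p_1(J)}$ for some ${M \in \Reals^{>0}}$ and ${p_1 \in \NatPolyJ}$. Now ${\log\log q(J,k) = m\log\Floor{\log p(J)\log k}}$, so up to the floor ${\log\log q(J,k) \approx m\log\log k + m\log\log p(J)}$; hence the measure ${\frac{\dif l}{l\log l} = (\ln 2)^2\,\dif(\log\log l)}$ pulls back under the substitution ${l = q(J,k)}$ to roughly ${m}$ copies of ${\frac{\dif k}{k\log k}}$, and one expects ${\sum_l \frac{\bar\zeta(J,l)}{l\log l} \approx m\sum_k \frac{\bar\zeta(J,q(J,k))}{k\log k}}$ up to boundary terms. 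To make this rigorous I would, for ${l \geq q(J,2)}$, put ${k(l) := \max\{k \geq 2 : q(J,k) \leq l\}}$, so that ${\bar\zeta(J,l) \leq \bar\zeta(J,q(J,k(l)))}$ by monotonicity, and bound
\[\sum_{l \geq q(J,2)} \frac{\bar\zeta(J,l)}{l\log l} \leq \sum_{k \geq 2} \bar\zeta(J,q(J,k)) \sum_{l\,:\,k(l)=k} \frac{1}{l\log l},\]
where the inner sum vanishes when ${q(J,k) = q(J,k+1)}$ and is otherwise at most ${(\ln 2)^2(\log\log q(J,k+1) - \log\log q(J,k)) + O(\frac{1}{q(J,k)\log q(J,k)})}$ by comparison with ${\int \frac{\dif t}{t\log t}}$. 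Reindexing the surviving terms by the successive distinct values ${2^{N_1^m} < 2^{N_2^m} < \cdots}$ taken by ${q(J,k)}$ for ${k \geq 2}$ turns the bound into ${(\ln 2)^2 m\sum_j \bar\zeta(J,2^{N_j^m})(\log N_{j+1}-\log N_j) + O(1)}$; then ${\log N_{j+1}-\log N_j = O(1/N_j)}$ (valid once ${N_{j+1}=N_j+1}$), together with the matching lower bound ${\sum_{k : q(J,k)=2^{N_j^m}} \frac{1}{k\log k} = \Omega(1/N_j)}$ (valid as soon as ${N_j}$ exceeds a threshold of order ${\log p(J)\log\log p(J)}$), lets one recognise ${\sum_j \bar\zeta(J,2^{N_j^m})(\log N_{j+1}-\log N_j)}$ as ${O(1)}$ times the hypothesis sum, hence ${O(\log\log p_1(J))}$.

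It then remains to dispose of two leftover pieces. The range ${l < q(J,2)}$ contributes at most ${(\sup\bar\zeta)\sum_{l=2}^{q(J,2)}\frac{1}{l\log l} = O(\log\log q(J,2)) = O(m\log\log p(J))}$. The blocks ${2^{N_j^m}}$ lying below the threshold contribute a telescoping total of ${O((\sup\bar\zeta)\log(\log p(J)\log\log p(J)))}$. Both are ${O(\log\log(\text{poly}(J)))}$, and adding everything gives ${\sum_{l=2}^{\infty}\frac{\bar\zeta(J,l)}{l\log l} = O(\log\log(\text{poly}(J)))}$. Hence ${\bar\zeta \in \FallU}$, so ${\zeta \in \FallU}$ and ${P}$ is an ${\FallU(\Gamma)}$-perfect estimator for ${(\Dist,f)}$.

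The hard part will be making the change of variables honest in the presence of the floor in ${\Floor{\log p(J)\log k}^m}$, which both makes ${q(J,\cdot)}$ skip values and breaks the exact identity ${\dif(\log\log q(J,k)) = m\,\dif(\log\log k)}$, so the whole argument has to be carried out at the level of sums, much as in the proof of Proposition~\ref{prp:fall_uni_weak}. Two bookkeeping points deserve care: for the finitely many low indices (those ${k}$ that are bounded relative to ${\log p(J)}$, where ${\Floor{\log p(J)\log k}}$ may jump by more than one, and the correspondingly small attained values) one must replace term-by-term bounds by a single telescoping estimate and verify the resulting initial-segment contribution is only ${O(\log\log(\text{poly}(J)))}$; and the ${O(\frac{1}{q(J,k)\log q(J,k)})}$ corrections arising from ${\log\log(q(J,k)-1)}$ versus ${\log\log q(J,k)}$ must be charged only at the \enquote{jump} indices, so that they sum to ${O(1)}$ rather than diverging. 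Everything else reduces to routine comparisons of ${\sum \frac{1}{l\log l}}$ with ${\int \frac{\dif t}{t\log t}}$.
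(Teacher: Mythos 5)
Your proposal is correct and follows essentially the same route as the paper: both arguments bound ${\sum_{k\geq 2}\varepsilon(J,k)/(k\log k)}$ by the hypothesis sum via the change of variables ${l = q(J,k)}$, exploiting that ${\log\log q(J,k)\approx m\log\log k + m\log\log p(J)}$, so the measure ${\frac{\dif l}{l\log l}}$ pulls back to ${m}$ copies of ${\frac{\dif k}{k\log k}}$ up to an initial-segment contribution of order ${\log\log(\mathrm{poly}(J))}$. The paper carries this out as a single integral substitution ${t = 2^{(\log p(J)\log s)^m}}$ with the ${\sup_i}$ in the hypothesis absorbing the floor mismatch, whereas you work discretely with the monotone envelope, level-set bucketing, telescoping and a lower bound on bucket masses --- more laborious bookkeeping, but the same underlying estimate, and your sketched handling of the low indices and correction terms checks out.
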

\end{samepage}

\begin{proof}

Define ${\varepsilon: \Nats^n \rightarrow \Reals}$ by

\[\varepsilon(K):=\E_{(x,y) \sim \Dist^K \times \Un_P^K}[(P^K(x,y)-f(x))^2]\]

We have

\[\sum_{k=2}^\infty \frac{\varepsilon(J,k)}{k \log k}=\int_2^\infty \frac{\varepsilon(J,\Floor{t})}{\Floor{t} \log \Floor{t}} \dif t\]

\[\sum_{k=2}^\infty \frac{\varepsilon(J,k)}{k \log k} \leq \frac{3}{2} \log 3 \int_2^\infty \frac{\varepsilon(J,\Floor{t})}{t \log t} \dif t\]

\[\sum_{k=2}^\infty \frac{\varepsilon(J,k)}{k \log k} \leq \frac{3}{2} (\log 3) (\ln 2)^2 \int_2^\infty \varepsilon(J,\Floor{t}) \dif (\log \log t)\]

Substitute ${t=2^{(\log p(J) \log s)^m}}$. Denoting ${s_0=2^{(\log p(J))^{-1}}}$

\[\sum_{k=2}^\infty \frac{\varepsilon(J,k)}{k \log k} \leq \frac{3}{2} (\log 3) (\ln 2)^2 m \int_{s=s_0}^\infty \varepsilon(J,\Floor{2^{(\log p(J) \log s)^m}}) \dif (\log \log s)\]

\[\sum_{k=2}^\infty \frac{\varepsilon(J,k)}{k \log k} \leq \frac{3}{2} (\log 3) m \int_{s_0}^\infty \frac{\varepsilon(J,\Floor{2^{(\log p(J) \log s)^m}})}{s \log s} \dif s\]

\[\sum_{k=2}^\infty \frac{\varepsilon(J,k)}{k \log k} \leq \frac{3}{2} (\log 3) m \int_{s_0}^\infty \frac{\sup_{i \in \Nats} \varepsilon(J,2^{\Floor{\log p(J) \log \Floor{s}}^m}+i)}{s \log s} \dif s\]

For some ${M \in \Reals}$

\[\sum_{k=2}^\infty \frac{\varepsilon(J,k)}{k \log k} \leq M + \frac{3}{2} (\log 3) m \int_{2}^\infty \frac{\sup_{i \in \Nats} \varepsilon(J,2^{\Floor{\log p(J) \log \Floor{s}}^m}+i)}{\Floor{s} \log \Floor{s}} \dif s\]

\[\sum_{k=2}^\infty \frac{\varepsilon(J,k)}{k \log k} \leq M + \frac{3}{2} (\log 3) m \sum_{k=2}^\infty \frac{\sup_{i \in \Nats} \varepsilon(J,2^{\Floor{\log p(J) \log k}^m}+i)}{k \log k}\]

Using \ref{eqn:prp__fall_uni_qpoly} we get that for some ${M_1 \in \Reals^{>0}}$ and ${p_1 \in \NatPolyJ}$

\[\sum_{k=2}^\infty \frac{\varepsilon(J,k)}{k \log k} \leq M + M_1 \log \log p_1(J)\]

Denoting ${M_2 := 2^{M_1^{-1} M}}$

\[\sum_{k=2}^\infty \frac{\varepsilon(J,k)}{k \log k} \leq M_1 \log \log p_1(J)^{M_2}\]
\end{proof}

\subsubsection{Negative Results}

The following propositions lead to disproving the existence of optimal polynomial-time estimators with no advice for certain distributional estimation problems.

\begin{samepage}
\begin{proposition}
\label{prp:tally_perfect}

Consider ${h: \Nats^n \rightarrow \Reals}$ bounded and ${\Dist}$ a word ensemble s.t. given ${K_1, K_2 \in \Nats^n}$, if ${K_1 \ne K_2}$ then ${\Supp \Dist^{K_1} \cap \Supp \Dist^{K_2} = \varnothing}$. Assume that either ${1 \in \GrowA}$ and the image of ${h}$ is a finite subset of ${\Rats}$ or ${\Fall^{\frac{1}{2}}}$ is ${\GrowA}$-ample. Define ${f: \Supp \Dist \rightarrow \Reals}$ by requiring that for any ${K \in \Nats^n}$ and ${x \in \Supp \Dist^K}$, ${f(x)=h(K)}$. Then, there exists an ${\EG}$-perfect estimator for ${(\Dist,f)}$.

\end{proposition}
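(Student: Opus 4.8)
The plan is to exhibit a single estimator $P$ whose advice string at parameter $K$ stores (a rational approximation of) the value $h(K)$, and which outputs that stored value on every input, ignoring both its coin tosses and its argument $x$. The disjoint-support hypothesis is exactly what makes $f$ well defined, and it forces $f \equiv h(K)$ on $\Supp \Dist^K$; so the mean squared error of such a $P$ on $\Dist^K$ equals the square of the approximation error, and the whole statement reduces to storing $h(K)$ accurately enough with few enough advice bits.

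First I would dispatch the case $1 \in \GrowA$ with $\Img h$ a finite subset of $\Rats$. Put $\A_P(K) := \En_\Rats(h(K))$; finiteness of $\Img h$ gives a uniform bound $N$ on $\Abs{\A_P(K)}$, and since $1 \in \GrowA$ and growth spaces are closed under addition and downward domination (Definition~\ref{def:grow}), the constant $N$, hence $\Abs{\A_P}$, lies in $\GrowA$. Taking $\R_P \equiv 0 \in \GrowR$ and letting $P^K(x)$ just decode the advice, we get $P^K(x) = h(K)$ for all $x$, so $\E_{\Dist^K \times \Un_P^K}[(P^K - f)^2] = 0 \in \Fall$ by Proposition~\ref{prp:err_spc_zero}; thus $P$ is an $\EG$-perfect estimator.

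For the case in which $\Fall^{\frac12}$ is $\GrowA$-ample, fix $\zeta: \NatFun (0,\tfrac12]$ with $\zeta \in \Fall^{\frac12}$ and $\Floor{\log \frac1\zeta} \in \GrowA$; note $\zeta \in \Fall^{\frac12}$ means $\zeta^2 \in \Fall$, and $\Floor{\log \frac1\zeta} \ge 1$ together with downward domination gives $1 \in \GrowA$. For each $K$ let $s_K \in \Rats \cap [h(K) - \zeta(K),\, h(K) + \zeta(K)]$ have minimal encoding length; the pigeonhole estimate already used in the proofs of Proposition~\ref{prp:weight} and Proposition~\ref{prp:adv_mgamma_gen} (denominator about $\lceil 1/\zeta(K)\rceil$, numerator bounded using $\sup\Abs{f} < \infty$) yields $\Abs{\En_\Rats(s_K)} = O(\log\frac1{\zeta(K)})$, so $\Abs{\A_P} \in \GrowA$ for $\A_P(K) := \En_\Rats(s_K)$ by the same closure properties. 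With $\R_P \equiv 0$ and $P^K(x)$ decoding the advice (clipped to a fixed bounded interval to meet the boundedness requirement), we have $\Abs{P^K(x) - f(x)} \le \zeta(K)$ for every $x \in \Supp \Dist^K$, so $\E_{\Dist^K \times \Un_P^K}[(P^K - f)^2] \le \zeta(K)^2$, which is in $\Fall$ by downward domination; hence $P$ is an $\EG$-perfect estimator.

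I expect no genuine obstacle here: the only routine verifications are that $P$ is a bona fide polynomial-time $\Gamma$-scheme (its running time is polynomial in $\Abs{\A_P(K)}$, which is itself bounded by a polynomial in $K$ because $\GrowA \subseteq \GammaPoly^n$) and that the closure axioms of Definitions~\ref{def:grow} and~\ref{def:fall} justify each membership claim above. All of the actual content is in the two hypotheses on $\GrowA$ and $\Fall$, which are calibrated precisely so that roughly $\log\frac1\zeta$ advice bits suffice to pin down $h(K)$ with squared error in $\Fall$.
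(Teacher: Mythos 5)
Your proposal is correct and is essentially the paper's own proof: store a rational approximation $\rho(K)$ of $h(K)$ (exact when $\Img h$ is a finite subset of $\Rats$ and $1 \in \GrowA$, within $\zeta(K)$ when $\Fall^{\frac{1}{2}}$ is $\GrowA$-ample) as advice of length $O(\log \frac{1}{\zeta(K)})$, output it regardless of $x$, and bound the error by $\zeta(K)^2 \in \Fall$. No further comment is needed.
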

\end{samepage}

\begin{proof}

The idea is that, because the estimation problem only depends on the index $K$, the advice allows the estimator to either memorize $f$ directly or closely approximate it.

In the case ${\Fall^{\frac{1}{2}}}$ is ${\GrowA}$-ample, let $\zeta: \NatFun (0,\frac{1}{2}]$ be s.t.  $\zeta \in \Fall^{\frac{1}{2}}$ and $\Floor{\log \frac{1}{\zeta}} \in \GrowA$. In the other case, let ${\zeta \equiv 0}$. For any $K \in \Nats^n$, let ${\rho(K) \in \Argmin{s \in \Rats \cap [h(K)-\zeta(K),h(K)+\zeta(K)]} \Abs{\En_\Rats(s)}}$. It is easy to see that there is $\gamma \in \GrowA$ s.t. for any ${K \in \Nats^n}$, ${\Abs{\En_\Rats(\rho(K))} \leq \gamma(K)}$. Construct ${P: \Words \Scheme \Rats}$ s.t. for any ${K \in \Nats^n}$, ${x \in \Words}$ and ${t \in \Rats}$ s.t. ${\Abs{\En_\Rats(t)} \leq \gamma(K)}$ 

\begin{align*}
\A_P(K) &= \En_\Rats(\rho(K)) \\
\R_P(K) &= 0 \\
P^K(x,\Estr,\En_\Rats(t)) &= t
\end{align*}

We have

\[\E_{x \sim \Dist^K}[(P^K(x)-f(x))^2] = (\rho(K)-h(K))^2\]

\[\E_{x \sim \Dist^K}[(P^K(x)-f(x))^2] \leq \zeta(K)^2\]
\end{proof}

In the setting of Proposition~\ref{prp:tally_perfect}, any ${\EG}$-optimal estimator for ${(\Dist,f)}$ has to be an ${\EG}$-perfect estimator. In particular, if no \emph{uniform} ${\EG}$-perfect estimator exists then no uniform ${\EG}$-optimal estimator exists (and likewise for any other condition on the estimator). 

Denote ${\Gamma_0:=(\GrowR,\Gamma_0^n)}$, ${\Gamma_1:=(\GrowR,\Gamma_1^n)}$. Taking ${\Gamma=\Gamma_1}$ in Proposition~\ref{prp:tally_perfect} and using Proposition~\ref{prp:adv_amp}, we conclude that if the image of ${h}$ is a finite subset of ${\Rats}$ and there is no ${\Fall(\Gamma_0)}$-perfect estimator for ${(\Dist,f)}$ then there is no ${\Fall(\Gamma_0)}$-optimal estimator for ${(\Dist,f)}$. 

For distributional decision problems and ${\EG}$-samplable word ensembles we have the following stronger proposition: given an optimal estimator, we get not just a perfect estimator, but a \enquote{heuristic} algorithm that depends only on $K$ and doesn't need a problem instance.

\begin{samepage}
\begin{proposition}
\label{prp:tally_smp_bpp}

Let ${\Delta=(\Delta_{\mathfrak{R}}, \Delta_{\mathfrak{A}})}$ be a pair of growth spaces of rank ${n}$ s.t. ${\Delta_{\mathfrak{R}} \subseteq \GrowR}$, ${\Delta_{\mathfrak{A}} \subseteq \GrowA}$ and ${1 \in \Delta_{\mathfrak{A}}}$. Consider ${L \subseteq \Nats^n}$ and ${\Dist}$ a word ensemble s.t. given ${K_1, K_2 \in \Nats^n}$, if ${K_1 \ne K_2}$ then ${\Supp \Dist^{K_1} \cap \Supp \Dist^{K_2} = \varnothing}$. Define ${\chi: \Supp \Dist \rightarrow \Bool}$ by requiring that for any ${K \in \Nats^n}$ and\\ $x \in \Supp \Dist^K$, ${\chi(x)=\chi_L(K)}$. Assume ${\sigma}$ is an ${\EG}$-sampler of ${\Dist}$ and ${P}$ is an ${\Fall(\Delta)}$-optimal estimator for ${(\Dist, \chi)}$. Then there is ${A: \bm{1} \Scheme \Bool}$ s.t. ${\A_A(K)=\Chev{\A_\sigma(K),\A_P(K)}}$ and

\begin{equation}
%\label{eqn:tbd}
\Prb_{y \sim \Un_A^K}[A^K(y)=\chi_L(K)] \equiv 1 \pmod \Fall
\end{equation}

\end{proposition}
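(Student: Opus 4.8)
The guiding observation is that $\chi$ encodes nothing beyond $K$ itself: on $\Supp\Dist^K$ it is the constant $\chi_L(K)$, so an estimator that is good for $(\Dist,\chi)$ must already ``know'' that bit, and the sampler $\sigma$ supplies a legitimate instance on which to run it. First I would apply Proposition~\ref{prp:tally_perfect} with $\Gamma$ replaced by $\Delta$ and $h:=\chi_L$: the image of $\chi_L$ is the finite set $\{0,1\}\subseteq\Rats$ and $1\in\Delta_{\mathfrak{R}}$'s partner $\Delta_{\mathfrak{A}}$, so the hypotheses hold and there is an $\Fall(\Delta)$-perfect estimator for $(\Dist,\chi)$. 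Comparing $P$ against this estimator in the $\Fall(\Delta)$-optimality inequality and using closure of $\Fall$ under pointwise domination (the error is nonnegative) gives $\E_{\Dist^K\times\Un_P^K}[(P^K-\chi)^2]\in\Fall$; since $\chi(x)=\chi_L(K)$ for $x\in\Supp\Dist^K$ this reads
\[ \E_{(x,y)\sim\Dist^K\times\Un_P^K}\big[(P^K(x,y)-\chi_L(K))^2\big]\in\Fall. \]
Because $\sigma$ is an $\EG$-sampler of $\Dist$ we have $\sigma_\bullet^K\equiv\Dist^K\pmod\Fall$, and $x\mapsto\E_{y\sim\Un_P^K}[(P^K(x,y)-\chi_L(K))^2]$ is a uniformly bounded family (well defined on all of $\Words$, as $P$ is total and $\chi_L(K)$ does not depend on $x$), so Proposition~\ref{prp:prob_cong_ev} upgrades this to
\[ \E_{(x,y)\sim\sigma_\bullet^K\times\Un_P^K}\big[(P^K(x,y)-\chi_L(K))^2\big]\in\Fall. \]

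Next I would take $A:\bm{1}\Scheme\Bool$ to be the scheme that, on advice $\Chev{\A_\sigma(K),\A_P(K)}$ and random string $y_\sigma y_P$ with $\Abs{y_\sigma}=\R_\sigma(K)$ and $\Abs{y_P}=\R_P(K)$, computes $x:=\sigma^K(y_\sigma)$, then $p:=P^K(x,y_P)$, and outputs $\theta(p-\tfrac12)$; thus $\A_A(K)=\Chev{\A_\sigma(K),\A_P(K)}$ and $\R_A(K)=\R_\sigma(K)+\R_P(K)$. This is a genuine polynomial-time $\Gamma$-scheme: it runs in polynomial time; $\R_A=\R_\sigma+\R_P\in\GrowR$ because $\R_\sigma\in\GrowR$ and $\R_P\in\Delta_{\mathfrak{R}}\subseteq\GrowR$; $\Abs{\A_A(K)}\in\GrowA$ because $\Abs{\A_\sigma(K)},\Abs{\A_P(K)}\in\GrowA$ (using $\Delta_{\mathfrak{A}}\subseteq\GrowA$ and closure of growth spaces under addition and domination); and $\R_A$ is efficiently computable modulo the advice because $\R_\sigma$ and $\R_P$ are. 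Sampling $y\sim\Un_A^K$ is the same as sampling $y_\sigma\sim\Un_\sigma^K$ and $y_P\sim\Un_P^K$ independently, so $x=\sigma^K(y_\sigma)$ is distributed as $\sigma_\bullet^K$ and $P^K(x,y_P)$ is a sample from $P^K(x)$.

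Finally I would note that whenever $A^K(y)\ne\chi_L(K)$ — i.e. the rounding of $p=P^K(x,y_P)$ disagrees with $\chi_L(K)$, whatever the tie-breaking at $\tfrac12$ — one has $(p-\chi_L(K))^2\ge\tfrac14$, so the indicator of disagreement is at most $4(p-\chi_L(K))^2$. Averaging over $\sigma_\bullet^K\times\Un_P^K$,
\[ \Prb_{y\sim\Un_A^K}[A^K(y)\ne\chi_L(K)]\le 4\,\E_{(x,y)\sim\sigma_\bullet^K\times\Un_P^K}\big[(P^K(x,y)-\chi_L(K))^2\big]\in\Fall, \]
which is exactly $\Prb_{y\sim\Un_A^K}[A^K(y)=\chi_L(K)]\equiv1\pmod\Fall$. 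The only step with genuine content is the first — invoking Proposition~\ref{prp:tally_perfect} to convert optimality of $P$ into perfectness of $P$; the remainder is a Markov-type inequality, the total-variation transfer from $\Dist^K$ to $\sigma_\bullet^K$, and the routine verification that $A$ satisfies all four clauses of the definition of a polynomial-time $\Gamma$-scheme with the prescribed advice string.
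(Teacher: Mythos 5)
Your proposal is correct and follows essentially the same route as the paper: use Proposition~\ref{prp:tally_perfect} together with optimality to conclude that $P$ is in fact an $\Fall(\Delta)$-perfect estimator for $(\Dist,\chi)$, define $A$ by feeding the sampler's output into $P$ and rounding at $\tfrac12$, and bound the disagreement probability by a Markov-type inequality of the form $\Prb[A^K \ne \chi_L(K)] \le 4\,\E[(P^K-\chi_L(K))^2]$. The only cosmetic difference is that you transfer between $\Dist^K$ and $\sigma_\bullet^K$ via Proposition~\ref{prp:prob_cong_ev} applied to the congruence $\sigma_\bullet^K \equiv \Dist^K \pmod\Fall$, whereas the paper invokes Proposition~\ref{prp:smp} with $Y=\bm{1}$; these are the same estimate.
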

\end{samepage}

\begin{proof}

Construct ${A}$ s.t. for any ${K \in \Nats^n}$, ${y_1 \in \BoolR{L}}$, ${y_1 \in \BoolR{P}}$

\begin{align*}
\R_A(K) &= \R_\sigma(K)+\R_P(K) \\
A^K(y_1 y_2) &= \begin{cases}0 \text{ if } P^K(\sigma^K(y_1),y_2) \leq \frac{1}{2} \\ 1 \text{ if } P^K(\sigma^K(y_1),y_2) > \frac{1}{2} \end{cases}
\end{align*}

We get

\[\Prb_{y \sim \Un_A^K}[A^K(y) \neq \chi_L(K)] \leq \Prb_{y_1 \sim \Un_\sigma^K,y_2 \sim \Un_P^K}\left[\Abs{P^K(\sigma^K(y_1),y_2) - \chi_L(K)} \geq \frac{1}{2}\right]\]

\[\Prb_{y \sim \Un_A^K}[A^K(y) \neq \chi_L(K)] \leq \Prb_{y_1 \sim \Un_\sigma^K,y_2 \sim \Un_P^K}\left[(P^K(\sigma^K(y_1),y_2) - \chi_L(K))^2 \geq \frac{1}{4}\right]\]

\[\Prb_{y \sim \Un_A^K}[A^K(y) \neq \chi_L(K)] \leq 4 \E_{y_1 \sim \Un_\sigma^K,y_2 \sim \Un_P^K}[(P^K(\sigma^K(y_1),y_2) - \chi_L(K))^2]\]

By Proposition~\ref{prp:smp}

\[\Prb_{y \sim \Un_A^K}[A^K(y) \neq \chi_L(K)] \leq 4 \E_{x \sim \Dist^K,y_2 \sim \Un_P^K}[(P^K(x,y_2) - \chi_L(K))^2] \pmod \Fall\]

\[\Prb_{y \sim \Un_A^K}[A^K(y) \neq \chi_L(K)] \leq 4 \E_{x \sim \Dist^K,y_2 \sim \Un_P^K}[(P^K(x,y_2) - \chi(x))^2] \pmod \Fall\]

By Proposition~\ref{prp:tally_perfect}, ${P}$ is an ${\Fall(\Delta)}$-perfect estimator for ${(\Dist,\chi)}$, therefore

\[\Prb_{y \sim \Un_A^K}[A^K(y) \neq \chi_L(K)] \equiv 0 \pmod \Fall\]
\end{proof}

Again, the statement can be reversed to disprove the existence of ${\Fall(\Delta)}$-optimal estimators for ${\Delta_{\mathfrak{A}}}=\Gamma_0^n$.

Now we consider the special case ${\Fall=\FallUt{\varphi}}$, ${\GrowR = \GammaPoly^n}$. Consider the standard decomposition of the index into two parameters $J$ (which is going to be the only relevant variable in the estimation problem) and $k$ which controls the computation time available. The following proposition states that if there is an $\FallUt{\varphi}(\Delta)$-optimal estimator for $(\Dist,\chi)$, and an $\FallUt{\varphi}(\Gamma)$ sampler for $\Dist$, then quasi-polynomial computing resources suffice to get a bounded-error randomized algorithm for computing $\chi$.

\begin{samepage}
\begin{proposition}
\label{prp:tally_fall_uni}

Consider ${\varphi: \Nats^{n-1} \rightarrow \Nats}$ superquasi-polynomial i.e. for any ${m \in \Nats}$ and\\ ${p \in \NatPolyJ}$ there is at most a finite number of ${J \in \Nats^{n-1}}$ s.t. ${\varphi(J) \leq 2^{\Ceil{\log p(J)}^m}}$. Suppose ${\GrowR  = \GammaPoly^n}$. Let ${\Delta=(\Delta_{\mathfrak{R}}, \Delta_{\mathfrak{A}})}$ be a pair of growth spaces of rank ${n}$ s.t. ${\Delta_{\mathfrak{A}} \subseteq \GrowA}$ and ${1 \in \Delta_{\mathfrak{A}}}$. Consider ${L \subseteq \Nats^{n-1}}$ and ${\Dist}$ a word ensemble s.t. given ${K_1, K_2 \in \Nats^n}$, if ${K_1 \ne K_2}$ then ${\Supp \Dist^{K_1} \cap \Supp \Dist^{K_2} = \varnothing}$. Define ${\chi: \Supp \Dist \rightarrow \Bool}$ by requiring that for any ${J \in \Nats^{n-1}}$, ${k \in \Nats}$ and ${x \in \Supp \Dist^{Jk}}$, ${\chi(x)=\chi_L(J)}$. Assume ${\sigma}$ is an ${\FallUt{\varphi}(\Gamma)}$-sampler of ${\Dist}$ and ${P}$ is an ${\FallUt{\varphi}(\Delta)}$-optimal estimator for ${(\Dist, \chi)}$ s.t. ${\A_\sigma(J,k)}$ and ${\A_P(J,k)}$ don't depend on ${k}$. Then, there are\\ $m \in \Nats$, ${p \in \NatPolyJ}$ and ${B: \bm{1} \Scheme \Bool}$ s.t. ${p \geq 1}$, ${\A_B(K)=\Chev{\A_\sigma(K),\A_P(K)}}$ and, defining ${q: \Nats^{n-1} \rightarrow \Nats}$ by ${q(J):=2^{\Ceil{\log p(J)}^m}}$

\begin{equation}
\label{eqn:prp__tally_fall_uni}
\forall J \in \Nats^{n-1}:\Prb_{y \sim \Un_B^{J,q(J)}}[B^{J,q(J)}(y)=\chi_L(J)] \geq \frac{2}{3}
\end{equation}

\end{proposition}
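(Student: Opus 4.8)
The plan is to imitate the construction of Proposition~\ref{prp:tally_smp_bpp} — have $B$ run the sampler $\sigma$ and the estimator $P$ and round the result — but to compensate for the fact that membership in $\FallUt{\varphi}$ only controls errors on a $\frac{1}{k\log k}$-weighted average over $k \in [2,\varphi(J)-1]$, rather than pointwise, by letting $B$ randomly search for a \enquote{good} value of its last parameter. First I would assume $\sup\Abs{P}\le 1$ (clipping preserves optimality) and argue that $P$ is actually an $\FallUt{\varphi}(\Delta)$-perfect estimator for $(\Dist,\chi)$: Proposition~\ref{prp:tally_perfect}, applied with the ambient pair $\Delta$ (legitimate since $1\in\Delta_{\mathfrak{A}}$ and the image of $\chi$ is $\Bool\subseteq\Rats$), produces a $\Delta$-scheme which is an $\Fall(\Delta)$-perfect estimator, and the $\FallUt{\varphi}(\Delta)$-optimality of $P$ then forces $\psi(J,k):=\E_{\Dist^{(J,k)}\times\Un_P^{(J,k)}}[(P^{(J,k)}-\chi_L(J))^2]\in\FallUt{\varphi}$. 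Writing also $\tau(J,k):=\Dtv(\Dist^{(J,k)},\sigma_\bullet^{(J,k)})\in\FallUt{\varphi}$ and $v(J,k):=\E_{\Dist^{(J,k)}\times\Un_P^{(J,k)}}[P^{(J,k)}]$, the fact that $\chi$ is constant on $\Supp\Dist^{(J,k)}$ gives $\Abs{v(J,k)-\chi_L(J)}\le\sqrt{\psi(J,k)}$ by convexity of the square. Fix $M>0$ and $p_0\in\NatPolyJ$ with $\sum_{k=2}^{\varphi(J)-1}\frac{\psi(J,k)+\tau(J,k)}{k\log k}\le M\log\log p_0(J)$ for all $J$.

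The core of the argument is the choice of $m,p$ and the design of $B$. Take $p(J):=J_0+\dots+J_{n-2}+p_0(J)+2$ (so $p\ge 2$ and $p$ is increasing, with $p\ge p_0$) and $m$ a large constant. Since $\varphi$ is superquasi-polynomial, $q(J):=2^{\Ceil{\log p(J)}^m}$ satisfies $q(J)\le\varphi(J)-1$ for all but finitely many $J$, and for such $J$ one has $\sum_{k=2}^{q(J)}\frac{1}{k\log k}\ge c\log\log q(J)-c' = cm\log\Ceil{\log p(J)}-c' \ge cm\log\log p_0(J)-c'$ for absolute constants $c,c'>0$. Call $k$ \enquote{bad} if $\psi(J,k)\ge\frac1{100}$ or $\tau(J,k)\ge\frac1{100}$; then $\psi(J,k)+\tau(J,k)\ge\frac1{100}$ for bad $k$, and since bad $k\le q(J)\le\varphi(J)-1$ we get $\sum_{\text{bad }k\le q(J)}\frac1{k\log k}\le 100M\log\log p_0(J)$. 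Hence, choosing $m$ large enough relative to $M,c,c'$, the bad values of $k$ carry at most a $\frac1{20}$-fraction of $\sum_{k=2}^{q(J)}\frac1{k\log k}$, for every $J$ outside a finite set $F$ (the exceptional $J$ just mentioned together with the finitely many small $J$ where $\log\log p_0(J)$ or $\log\Ceil{\log p(J)}$ falls below the relevant threshold).

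Now define $B$: on parameters $(J,N)$ (with no proper input), if $J\in F$ output the hardcoded bit $\chi_L(J)$; otherwise sample $k\in\{2,\dots,N\}$ with $\Prb[k]\propto\frac1{k\log k}$ (feasible in polynomial time — the normalizing sum has $N$ polynomially-precise dyadic terms), run $O(1)$ independent trials consisting of drawing $x\leftarrow\sigma^{(J,k)}$ and then evaluating $P^{(J,k)}(x)$, let $\bar v$ be the empirical mean of the results, and output $\theta(\bar v-\tfrac12)$. The resource bounds are routine: the running time at $(J,N)$ is $\mathrm{poly}(J,N)$, the random bits lie in $\GammaPoly^n=\GrowR$, and the advice is $\Chev{\A_\sigma,\A_P}$, of length in $\GrowA$ by $\Delta_{\mathfrak{A}}\subseteq\GrowA$; here the hypothesis that $\A_\sigma(J,k)$ and $\A_P(J,k)$ do not depend on $k$ is exactly what lets $B$, handed the advice for $(J,N)$, simulate $\sigma^{(J,k)}$ and $P^{(J,k)}$ for the sampled $k<N$.

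Finally, correctness at $(J,q(J))$ for $J\notin F$: with probability at least $\frac{19}{20}$ the sampled $k$ is good, whence $\Abs{v(J,k)-\chi_L(J)}\le\frac1{10}$ and $\Abs{\E[\bar v]-v(J,k)}\le\tau(J,k)\le\frac1{100}$ (using $\Abs{P}\le 1$ and Proposition~\ref{prp:prob_cong_ev}-style bounds); Hoeffding's inequality with $O(1)$ trials gives $\Abs{\bar v-\E[\bar v]}\le\frac1{20}$ except with probability at most $\frac1{100}$, and on the intersection of these events $\Abs{\bar v-\chi_L(J)}<\frac12$, so $\theta(\bar v-\frac12)=\chi_L(J)$. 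A union bound bounds the failure probability by $\frac1{20}+\frac1{100}<\frac13$, giving the claim. The main obstacle is the step in the second paragraph: converting \enquote{$\psi,\tau\in\FallUt{\varphi}$} into \enquote{the $\frac1{k\log k}$-bad subset of $[2,q(J)]$ has small relative weighted measure}. This is precisely where the shape of the $\FallUt{\varphi}$-bound and the superquasi-polynomial growth of $\varphi$ must be combined, and it is what dictates the quasi-polynomial form $q(J)=2^{\Ceil{\log p(J)}^m}$; the remaining ingredients (clipping, the perfect-estimator reduction, Hoeffding, the finite hardcoding, the resource accounting) are standard.
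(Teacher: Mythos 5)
Your proposal is correct and follows essentially the same route as the paper: you convert $\FallUt{\varphi}(\Delta)$-optimality plus the $O(1)$-advice perfect estimator of Proposition~\ref{prp:tally_perfect} into a per-$(J,k)$ error bound in $\FallUt{\varphi}$ (the paper packages this step as Proposition~\ref{prp:tally_smp_bpp}), then sample $k$ from the $\frac{1}{k\log k}$-weighted distribution up to the quasi-polynomial cutoff $q(J)=2^{\Ceil{\log p(J)}^m}$, pitting the $M\log\log p_0(J)$ numerator against the normalization $\approx m\log\Ceil{\log p(J)}$, and hardcode the finitely many exceptional $J$ into the machine. The only nits — the sampling of the $\frac{1}{k\log k}$ weights should be stated as sampling within small total variation distance rather than exactly (as the paper does via its scheme $\tau$), and in the edge case of constant $p_0$ your ``finitely many small $J$'' remark must instead be routed through $p(J)\ge J_0+\dots+J_{n-2}+2\to\infty$ — are easily absorbed by the ample slack in your constants.
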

\end{samepage}

\begin{proof}

Obviously it is enough to construct ${m}$, ${p}$ and ${B}$ s.t. \ref{eqn:prp__tally_fall_uni} holds for all but a finite number of ${J \in \Nats^{n-1}}$. Use Proposition~\ref{prp:tally_smp_bpp} to construct ${A: \bm{1} \Scheme \Bool}$. Given any ${k \in \Nats}$, define ${\omega^k \in \mathcal{P}(\Nats)}$ s.t. for some ${N \in \Reals^{>0}}$

\[\omega^k(i):=\begin{cases}\frac{N}{i \log i} \text{ if } 2 \leq i < k\\0 \text{ if } i < 2 \text { or } i \geq k\end{cases}\]

Denote ${\Gamma^1:=(\GammaPoly^1,\Gamma_0^1)}$. Adapting the standard argument that any computable distribution is samplable, we can construct ${\tau: \bm{1} \xrightarrow{\Gamma^1} \Nats}$ s.t. ${\Supp \tau_\bullet^{k} \subseteq [k]}$ and ${\Dtv(\tau_\bullet^{k}, \omega^k) \leq \frac{1}{6}}$. Construct ${B: \bm{1} \Scheme \Bool}$ s.t. for any ${J \in \Nats^{n-1}}$, ${k \in \Nats}$, ${y \in \Bool^{\R_\tau(J,k)}}$ and ${z \in \Words}$

\begin{align*}
\R_B(J,k) &\geq \R_\tau(k) + \max_{i \in [k]} \R_A(J,i) \\
B^{Jk}(y,z) &= A^{J,\tau^{k}(y)}(z_{<\R_A(J,\tau^{k}(y))}) 
\end{align*}

That is, $B$ functions by generating a distribution over numbers up to $k$ that is approximately $\frac{1}{i\log i}$, and then sampling from it to determine how much computing resources to allocate to $A$, which is a perfect estimator.

We know that for some ${M \in \Reals^{\geq 0}}$ and ${p \in \NatPolyJ}$ s.t. ${p \geq 1}$

\[\sum_{k=2}^{\varphi(J)-1} \frac{\Prb_{z \sim \Un_A^{Jk}}[A^{Jk}(z) \neq \chi_L(J)]}{k \log k} \leq M \log \log p(J)\]

Take ${m = \Ceil{\frac{6M}{(\ln 2)^2}}}$. We get

\[\E_{k \sim \omega^{q(J)}}[\Prb_{z \sim \Un_A^{Jk}}[A^{Jk}(z) \neq \chi_L(J)]]=\frac{\sum_{k=2}^{q(J)-1} \frac{\Prb_{z \sim \Un_A^{Jk}}[A^{Jk}(z) \neq \chi_L(J)]}{k \log k}}{\sum_{k=2}^{q(J)-1} \frac{1}{k \log k}}\]

Denote $I:=\{J \in \Nats^{n-1} \mid \varphi(J) < q(J)\}$. We get

\[\forall J \in \Nats^{n-1} \setminus I: \E_{k \sim \omega^{q(J)}}[\Prb_{z \sim \Un_A^{Jk}}[A^{Jk}(z) \neq \chi_L(J)]] \leq \frac{M \log \log p(J)}{\int_2^{q(J)} \frac{\dif t}{t \log t}}\]

\[\forall J \in \Nats^{n-1} \setminus I:\E_{k \sim \omega^{q(J)}}[\Prb_{z \sim \Un_A^{Jk}}[A^{Jk}(z) \neq \chi_L(J)]] \leq \frac{M \log \log p(J)}{(\ln 2)^2 \log \log q(J)}\]

\[\forall J \in \Nats^{n-1} \setminus I: \E_{k \sim \omega^{q(J)}}[\Prb_{z \sim \Un_A^{Jk}}[A^{Jk}(z) \neq \chi_L(J)]] \leq \frac{M \log \log p(J)}{(\ln 2)^2 m \log \Ceil{\log p(J)}}\]

\[\forall J \in \Nats^{n-1} \setminus I: \E_{k \sim \omega^{q(J)}}[\Prb_{z \sim \Un_A^{Jk}}[A^{Jk}(z) \neq \chi_L(J)]] \leq \frac{1}{6}\]

\[\forall J \in \Nats^{n-1} \setminus I: \E_{y \sim \Un_\tau^{q(J)}}[\Prb_{z \sim \Un_A^{J,\tau^{q(J)}(y)}}[A^{J,\tau^{q(J)}(y)}(z) \neq \chi_L(J)]] \leq \frac{1}{6}+\Dtv(\tau_\bullet^{q(J)},\omega^{q(J)})\]

\[\forall J \in \Nats^{n-1} \setminus I: \Prb_{y \sim \Un_B^{J,q(J)}}[B^{J,q(J)}(y) \ne \chi_L(J)] \leq \frac{1}{3}\]

By the assumption on ${\varphi}$, ${I}$ is a finite set therefore we got the desired result.
\end{proof}

For ${n=2}$, we can think of ${L}$ as a language using \emph{unary} encoding of natural numbers. Proposition~\ref{prp:tally_fall_uni} and Proposition~\ref{prp:adv_amp} imply that if ${\Delta_{\mathfrak{A}}=\Gamma_0^n}$, ${\sigma}$ is uniform, and this language cannot be decided in quasi-polynomial time by a bounded-error randomized algorithm, then there is no ${\FallUt{\varphi}(\Delta)}$-optimal estimator for ${(\Dist,\chi)}$.

Thanks to the results of section~\ref{sec:reductions} and Theorem~\ref{thm:con_ort}, these negative results imply non-existence results for ${\Fall^\sharp(\Delta)}$-optimal estimators\footnote{The need to use ${\Fall^\sharp(\Delta)}$-optimal estimators rather than ${\Fall(\Delta)}$-optimal estimators arises because the theorems about reductions as we formulated them don't apply to ${\Fall(\Delta)}$-optimal estimators with ${\Delta=\Gamma_0^n}$ or ${\Delta=\Gamma_1^n}$. This can be overcome by using somewhat more special reductions which still admit a similar completeness theorem, but we omit details in the present work.} for any distributional estimation problem s.t. a problem admitting a negative result has an appropriate reduction to it.

\subsection{Uniqueness}

Since we view optimal polynomial-time estimators as computing \enquote{expected values}, it is natural to expect that their values only depend on the distributional estimation problem rather than the particular optimal polynomial-time estimator. However, since they are defined via an asymptotic property exact uniqueness is impossible. Instead, different $\Fall^\sharp(\Gamma)$-optimal estimators have the expectation of the squared difference between their estimates fall fast enough to be in $\Fall$ (which is an equivalence relation on the set of arbitrary estimators).

\begin{samepage}
\begin{theorem}
\label{thm:uniq}

Consider ${(\Dist,f)}$ a distributional estimation problem. Assume there is\\ ${p \in \NatPoly}$ s.t. 

\begin{equation}
\label{eqn:thm__uniq__dist}
\Dist^K(\Bool^{\leq p(K)}) \equiv 1 \pmod \Fall
\end{equation}

Suppose ${P}$ and ${Q}$ are ${\ESG}$-optimal estimators for ${(\Dist,f)}$. Then

\begin{equation}
\label{eqn:thm__uniq}
\E_{(x,y,z) \sim \Dist^K \times \Un_P^K \times \Un_Q^K}[(P^K(x,y)-Q^K(x,z))^2] \equiv 0 \pmod \Fall
\end{equation}

\end{theorem}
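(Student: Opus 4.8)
The plan is to transcribe the Hilbert-space fact that two nearest points to a vector $f$ inside a closed subspace must coincide, relaxing ``coincide'' to ``agree up to a function in $\Fall$'' and letting the class of polynomial-time $\Gamma$-schemes play the role of the subspace. The appropriate orthogonality statement is then precisely equation \ref{eqn:op_sharp} from Definition~\ref{def:obe_sharp}, applied once to $P$ (with a test scheme built from $Q$) and once to $Q$ (with a test scheme built from $P$). Fix $M$ with $M \geq \sup\Abs{P}$, $M \geq \sup\Abs{Q}$, $M \geq \sup\Abs{f}$, which exists since $P$ and $Q$ have bounded range and $f$ is bounded, and for $t \in \Rats$ write $[t]_M := \min(\max(t,-M),M)$.

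First I would record the elementary algebraic decomposition: for any real numbers $a,b$ and any $x$, $(a-b)^2 = (a-b)(a-f(x)) - (a-b)(b-f(x))$. Sampling $y \sim \Un_P^K$ and $z \sim \Un_Q^K$ independently and taking expectations gives
\[
\E_{\Dist^K \times \Un_P^K \times \Un_Q^K}[(P^K(x,y)-Q^K(x,z))^2] = \E[(P^K(x,y)-f(x))(P^K(x,y)-Q^K(x,z))] - \E[(Q^K(x,z)-f(x))(P^K(x,y)-Q^K(x,z))],
\]
so it suffices to show each of the two terms on the right is $\equiv 0 \pmod \Fall$.

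For the first term, I would construct $S \colon \Words \times \Rats \Scheme \Rats$ which on $(K,x,t,w)$ simulates $Q^K(x,w)$ and outputs $[t]_M - Q^K(x,w)$; this is a legitimate polynomial-time $\Gamma$-scheme because $Q$ is one (hardcode its machine, take $\A_S := \A_Q$ so $\Abs{\A_S} \in \GrowA$, take $\R_S := \R_Q \in \GrowR$), and its range lies in $[-2M,2M]$, so it is bounded. Since $\Abs{\En_\Rats(P^K(x,y))}$ is polynomial in $K$ and the hypothesis \ref{eqn:thm__uniq__dist} keeps the relevant inputs $x$ short (up to $\Fall$-small mass), for $x \in \Supp\Dist^K$ and $y \in \Un_P^K$ we have $[P^K(x,y)]_M = P^K(x,y)$ and $S$ reads enough of its inputs to return $P^K(x,y)-Q^K(x,w)$ exactly. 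Feeding this $S$ into \ref{eqn:op_sharp} for $P$ — whose $\Rats$-valued argument receives $P^K(x,y)$ and whose internal coins $w$ are distributed as $\Un_Q^K$ — yields $\E[(P^K(x,y)-f(x))(P^K(x,y)-Q^K(x,z))] \equiv 0 \pmod \Fall$. Symmetrically, a scheme $S'$ that simulates $P^K(x,w)$ and outputs $P^K(x,w)-[t]_M$, inserted into \ref{eqn:op_sharp} for $Q$, gives $\E[(Q^K(x,z)-f(x))(P^K(x,y)-Q^K(x,z))] \equiv 0 \pmod \Fall$. Combining with the decomposition and closure of $\Fall$ under addition (condition \ref{con:def__fall__add}) produces \ref{eqn:thm__uniq}.

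The main obstacle is the careful verification that $S$ and $S'$ really are polynomial-time $\Gamma$-schemes: one must check that simulating $Q$ (resp.\ $P$) as a subroutine, reading the passed-in rational to recover $P^K(x,y)$ (resp.\ $Q^K(x,z)$), and performing the clipping and subtraction can all be carried out within a time bound polynomial in $K$ \emph{uniformly over all inputs}, while the random and advice budgets stay inside $\GrowR$ and $\GrowA$; the clipping is the device that turns the a priori unbounded map $t \mapsto t - Q^K(x,w)$ into a bounded-range scheme without altering its value on $\Supp\Dist^K \times \Un_P^K$, and the length hypothesis \ref{eqn:thm__uniq__dist} is what makes the input-length bookkeeping go through. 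Everything else is the routine expansion above.
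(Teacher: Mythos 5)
Your proposal is correct and follows essentially the same route as the paper: the same decomposition $(P^K-Q^K)^2=(P^K-f)(P^K-Q^K)-(Q^K-f)(P^K-Q^K)$, with the cross terms killed by applying \ref{eqn:op_sharp} to $P$ with a test scheme simulating $Q$ and to $Q$ with a test scheme simulating $P$, and with \ref{eqn:thm__uniq__dist} handling inputs too long to process in time polynomial in $K$. The only cosmetic difference is your clipping $[t]_M$ to enforce boundedness, where the paper instead restricts the schemes' defining behavior to $t \in \Img P^K$ (resp.\ $\Img Q^K$), which is already bounded.
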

\end{samepage}

\begin{proof}

Construct ${S: \Words \times \Rats \Scheme \Rats}$ bounded s.t. for any ${K \in \Nats^n}$, ${x \in \Bool^{\leq p(K)}}$, ${t \in \Img P^K}$ and ${z \in \BoolR{Q}}$

\begin{align*}
\R_{S}(K) &= \R_Q(K) \\
S^K(x,t,z) &= t-Q^K(x,z) 
\end{align*}

Construct ${T: \Words \times \Rats \Scheme \Rats}$ bounded s.t. for any ${K \in \Nats^n}$, ${x \in \Bool^{\leq p(K)}}$, ${s \in \Img Q^K}$ and\\ $y \in \BoolR{P}$

\begin{align*}
\R_{T}(K) &= \R_P(K) \\
T^K(x,s,y) &= P^K(x,y)-s 
\end{align*}

${P}$ is an ${\ESG}$-optimal estimator for ${(\Dist,f)}$, therefore

\[\E_{(x,y,z) \sim \Dist^K \times \Un_P^K \times \Un_S^K}[(P^K(x,y)-f(x))S^K(x,P^K(x,y),z)] \equiv 0 \pmod \Fall\]

The construction of ${S}$ and \ref{eqn:thm__uniq__dist} give

\begin{equation}
\label{eqn:thm__uniq__prf1}
\E_{(x,y,z) \sim \Dist^K \times \Un_P^K \times \Un_Q^K}[(P^K(x,y)-f(x))(P^K(x,y)-Q^K(x,z))] \equiv 0 \pmod \Fall
\end{equation}

${Q}$ is an ${\ESG}$-optimal estimator for ${(\Dist,f)}$, therefore

\[\E_{(x,z,y) \sim \Dist^K \times \Un_Q^K \times \Un_T^K}[(Q^K(x,z)-f(x))T^K(x,Q^K(x,z),y)] \equiv 0 \pmod \Fall\]

The construction of ${T}$ and \ref{eqn:thm__uniq__dist} give

\begin{equation}
\label{eqn:thm__uniq__prf2}
\E_{(x,z,y) \sim \Dist^K \times \Un_Q^K \times \Un_P^K}[(Q^K(x,z)-f(x))(P^K(x,y)-Q^K(x,z))] \equiv 0 \pmod \Fall
\end{equation}

Subtracting \ref{eqn:thm__uniq__prf2} from \ref{eqn:thm__uniq__prf1}, we get \ref{eqn:thm__uniq}.
\end{proof}

The notion of \enquote{conditional expected value} introduced in subsection~\ref{subsec:alg} allows conditions which are occasionally \emph{false}. In some sense this provides us with well-defined (probabilistic) answers to \enquote{what if} questions that are meaningless in formal logic due to the principle of explosion, a concept which was hypothesized to be useful for solving paradoxes in decision theory\cite{Soares_2015}. However, Theorem~\ref{thm:uniq} suggests that the values of an optimal polynomial-time estimator are only meaningful inside ${\Supp \Dist^K}$ whereas \enquote{conditional expected values} require using the word ensemble ${\Dist \mid L}$ (see Theorem~\ref{thm:cond}) so violation of the condition (i.e. ${x \not\in L}$) means falling outside the support of the word ensemble. On the other hand, we will now show that when the condition is unpredictable with the given amount of computational resources, a stronger uniqueness theorem holds that ensures \enquote{counterfactual} values are also stable, although the fall space measuring the difference of the optimal estimators is scaled up by a factor decreasing with the \enquote{degree of unpredictability}.

\begin{samepage}
\begin{theorem}
\label{thm:uniq_cond}

Consider ${(\Dist,f)}$ a distributional estimation problem and ${L \subseteq \Words}$ s.t. for all ${K \in \Nats^n}$, $\Dist^K(L) > 0$. Define $\gamma_L: \NatFun \Reals$ by $\gamma_L(K):=\Dist^{K}(L)^{-1}$ and $\Fall_L:=\gamma_L \Fall$. Assume there is\\ ${p \in \NatPoly}$ s.t. \ref{eqn:thm__uniq__dist} holds. Let ${R}$ be an ${\ESG}$-optimal estimator for ${(\Dist, \chi_L)}$. Assume ${\epsilon: \NatFun \Reals^{>0}}$ is s.t. for all ${x,y \in \Words}$, ${R^K(x,y) \geq \epsilon(K) \Dist^K(L)}$. Suppose ${P}$ and ${Q}$ are ${\Fall_L^\sharp(\Gamma)}$-optimal estimators for ${(\Dist \mid L,f)}$. Then

\begin{equation}
%\label{eqn:thm__?}
\E_{(x,y,z) \sim \Dist^K \times \Un_P^K \times \Un_Q^K}[(P^K(x,y)-Q^K(x,z))^2] \equiv 0 \pmod {\epsilon^{-1} \Fall_L}
\end{equation}

\end{theorem}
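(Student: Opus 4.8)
The plan is to mimic the proof of Theorem~\ref{thm:uniq}, but carry the extra weight factor $R^K$ through the computation so that the error is controlled in $\Fall$ with respect to $\Dist$ (not $\Dist\mid L$), and only at the very end divide by $\Dist^K(L)$ and $\epsilon(K)$ to land in $\epsilon^{-1}\Fall_L$. First I would fix $M\geq\sup\Abs{f}$ and assume without loss of generality that $P$ and $Q$ take values in $[-M,M]$, since clipping preserves $\Fall_L^\sharp(\Gamma)$-optimality. The key idea is that $P,Q$ being $\Fall_L^\sharp(\Gamma)$-optimal for $(\Dist\mid L,f)$ means that $\Dist^K(L)\,\E_{(\Dist^K\mid L)\times\cdots}[(P^K-f)S^K]\in\Fall$ for every bounded $S$ of the appropriate signature, i.e.\ multiplying by $\chi_L$ and reading the expectation with respect to $\Dist$ gives $\E_{\Dist^K\times\cdots}[\chi_L(P^K-f)S^K]\equiv 0\pmod\Fall$. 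Since $R$ is $\ESG$-optimal for $(\Dist,\chi_L)$, we also have $\E_{\Dist^K\times\cdots}[(R^K-\chi_L)\tilde S^K]\equiv0\pmod\Fall$ for bounded $\tilde S$, so in expressions of the form $\E_{\Dist^K}[R^K(\cdots)]$ we may freely replace $R^K$ by $\chi_L$ modulo $\Fall$ (as long as the multiplier is bounded and built from $P^K(x),Q^K(x),R^K(x)$, which Proposition~\ref{prp:thm__cond__lemma}-style arguments or the definition of $\ESG$-optimality applied to a suitable $S$ handle).

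The main steps: (1) Construct $S\colon\Words\times\Rats\Markov\Rats$ bounded so that (using \ref{eqn:thm__uniq__dist} to restrict to $x\in\Bool^{\leq p(K)}$ where the $\Rats$-arguments have bounded encoding length) $S^K$ evaluates to $R^K(x,\cdot)(P^K(x,\cdot)-Q^K(x,\cdot))$ — here the factor $R^K$ plays the role of the weight, and $R^K$ can be supplied as a subroutine since $R$ is a fixed polynomial-time $\Gamma$-scheme. Because $P$ is $\Fall_L^\sharp(\Gamma)$-optimal for $(\Dist\mid L,f)$, applying the defining inequality and multiplying by $\Dist^K(L)=\gamma_L(K)^{-1}$ gives
\[\E_{\Dist^K\times\Un_P^K\times\Un_Q^K\times\Un_R^K}[\chi_L(x)(P^K(x,y)-f(x))R^K(x,\cdot)(P^K(x,y)-Q^K(x,z))]\equiv0\pmod\Fall.\]
(2) Replace $\chi_L$ by $R^K$ in the last display using the $\ESG$-optimality of $R$ (with an appropriate bounded test $S$ absorbing $(P^K-f)R^K(P^K-Q^K)$ — this needs a bit of care, either via Proposition~\ref{prp:mixed_ort}/Theorem~\ref{thm:ort}-type bookkeeping or by noting $R^K$ itself is bounded so the product is a legitimate bounded test). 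Symmetrically, using the $\Fall_L^\sharp(\Gamma)$-optimality of $Q$, obtain
\[\E_{\Dist^K\times\cdots}[R^K(x,\cdot)(Q^K(x,z)-f(x))R^K(x,\cdot)(P^K(x,y)-Q^K(x,z))]\equiv0\pmod\Fall.\]
Actually, to keep the two $R$-factors consistent it is cleaner to put a single $R^K$ weight in the first identity and a single $R^K$ in the second, then subtract. (3) Subtracting the two identities kills the $f$-terms and leaves
\[\E_{\Dist^K\times\Un_P^K\times\Un_Q^K\times\Un_R^K}[R^K(x,\cdot)(P^K(x,y)-Q^K(x,z))^2]\equiv0\pmod\Fall.\]
(4) Use the hypothesis $R^K(x,y)\geq\epsilon(K)\Dist^K(L)$ pointwise: since $(P^K-Q^K)^2\geq0$, we get $\epsilon(K)\Dist^K(L)\,\E_{\Dist^K\times\Un_P^K\times\Un_Q^K}[(P^K-Q^K)^2]\leq\E_{\Dist^K\times\cdots}[R^K(P^K-Q^K)^2]\pmod\Fall$, hence that left-hand side is $\equiv0\pmod\Fall$. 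Dividing by $\epsilon(K)\Dist^K(L)=\epsilon(K)\gamma_L(K)^{-1}$ — legitimate after capping the quotient at its (bounded) supremum $4M^2$ exactly as in the proof of Corollary~\ref{crl:weight} — shows the left-hand side is in $\epsilon^{-1}\gamma_L\Fall=\epsilon^{-1}\Fall_L$, which is the claim.

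The main obstacle I expect is step~(2): rigorously justifying that in the weighted identities one may swap $\chi_L$ for $R^K$ (and that the objects $R^K(x,\cdot)(P^K(x,y)-Q^K(x,z))$ and their products are genuinely of the form admissible as the test scheme $S$ in Definition~\ref{def:obe_sharp}, whose second argument is only a single rational). Because $S$ in Definition~\ref{def:obe_sharp} sees $P^K(x,y)$ but not $Q^K(x,z)$ directly, constructing the test that realizes the full weighted product requires, as in the proof of Theorem~\ref{thm:uniq}, building two auxiliary schemes $S$ and $T$ that each use the \emph{other} estimator as an internal subroutine (feeding its randomness through $\R_S$), and additionally incorporating $R$ as a subroutine; one must check the random/advice budgets still lie in $\GrowR,\GrowA$, which is fine since $P$, $Q$, $R$ are all polynomial-time $\Gamma$-schemes and their random lengths add. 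The other mild subtlety is that $\epsilon$ is not assumed bounded away from $0$ by anything in $\Fall^{-1}$, so $\epsilon^{-1}\Fall_L$ must be interpreted via Proposition~\ref{prp:tbd} applied to $\epsilon^{-1}\gamma_L$ (one needs $\inf(\epsilon^{-1}\gamma_L)>0$, which holds since $\epsilon\leq 1$ WLOG — $R$ is bounded by, say, $1$ after clipping — and $\gamma_L\geq1$); the capping trick in step~(4) sidesteps any issue with unbounded quotients. Everything else is routine algebra of the $\pmod\Fall$ calculus already developed in the excerpt.
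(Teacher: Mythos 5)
Your steps (1), (3) and (4) are fine, and your final division-by-$\epsilon(K)\Dist^K(L)$ is exactly how the paper finishes. The genuine gap is step (2). The identities you can actually extract from the $\Fall_L^\sharp(\Gamma)$-optimality of $P$ and $Q$ always carry a $\chi_L$ weight (they are expectations over $\Dist\mid L$, multiplied by $\Dist^K(L)$), and to upgrade them to the $R^K$-weighted identities (A)/(B) you propose to apply the $\ESG$-optimality of $R$ with a ``test'' absorbing $(P^K-f)R^K(P^K-Q^K)$. But the test $S$ in Definition~\ref{def:obe_sharp} must be a polynomial-time $\Gamma$-scheme, i.e.\ an algorithm; it cannot evaluate the arbitrary bounded function $f$, so $(P^K-f)R^K(P^K-Q^K)$ is not an admissible test. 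Boundedness (your proposed justification) is not the obstruction, and neither Proposition~\ref{prp:mixed_ort} nor Theorem~\ref{thm:ort} helps, since the problem is computability of $f$, not randomness/advice bookkeeping. Nor can the discrepancy be bounded crudely: $\E_{\Dist^K}[\Abs{R^K-\chi_L}]$ is \emph{not} small in the interesting cases (that $L$ is hard to predict is the whole point of the theorem), so the cross term $\E[(R^K-\chi_L)(P^K-f)R^K(P^K-Q^K)]$ has no direct estimate. If you instead repair the argument by keeping the $\chi_L$-weighted identities, subtracting them, and only then applying $R$'s optimality with the legitimate $f$-free test $R^K(P^K-Q^K)^2$, you end up with $\E[(R^K)^2(P^K-Q^K)^2]\equiv 0\pmod\Fall$ and the pointwise bound $(R^K)^2\geq\epsilon(K)^2\Dist^K(L)^2$ only yields the weaker modulus $\epsilon^{-2}\gamma_L\Fall_L$, so the gap is not merely cosmetic.

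The paper's proof avoids inserting $R$ into the $P$/$Q$ tests altogether. It applies $R$'s $\ESG$-optimality exactly once, with the single $f$-free test $(P^K(x,y)-Q^K(x,z))^2$ (computable using $P,Q$ as subroutines), which gives
\begin{equation*}
\E_{\Dist^K\times\Un_P^K\times\Un_Q^K\times\Un_R^K}[R^K\cdot(P^K-Q^K)^2]\equiv\Dist^K(L)\,\E_{(\Dist^K\mid L)\times\Un_P^K\times\Un_Q^K\times\Un_R^K}[(P^K-Q^K)^2]\pmod\Fall,
\end{equation*}
and then quotes Theorem~\ref{thm:uniq} for the problem $(\Dist\mid L,f)$ with fall space $\Fall_L$ to conclude the right-hand side is $\equiv 0\pmod\Fall$; all dependence on $f$ is confined to the already-proved uniqueness theorem, where $f$ enters only through the optimality of $P$ and $Q$ and never inside a test scheme. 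After that, the pointwise bound $R^K\geq\epsilon(K)\Dist^K(L)$ and division by $\Dist^K(L)$ give the stated $\epsilon^{-1}\Fall_L$ bound, exactly as in your step (4). So the fix is to reorganize your steps (1)--(3) along these lines rather than to justify the $\chi_L\leftrightarrow R^K$ swap, which cannot be done with the tools available.
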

\end{samepage}

\begin{proof}

${R}$ is an ${\ESG}$-optimal estimator for ${(\Dist, \chi_L)}$, therefore

\[\E_{(x,y,z,w) \sim \Dist^K \times \Un_P^K \times \Un_Q^K \times \Un_R^K}[(R^K(x,w)-\chi_L(x))(P^K(x,y)-Q^K(x,z))^2] = 0 \pmod \Fall\]

\[\E_{\Dist^K \times \Un_P^K \times \Un_Q^K \times \Un_R^K}[R^K \cdot (P^K-Q^K)^2] = \E_{\Dist^K \times \Un_P^K \times \Un_Q^K \times \Un_R^K}[\chi_L \cdot (P^K-Q^K)^2] \pmod \Fall\]

\[\E_{\Dist^K \times \Un_P^K \times \Un_Q^K \times \Un_R^K}[R^K \cdot (P^K-Q^K)^2] = \Dist^K(L) \E_{\Dist^K \mid L \times \Un_P^K \times \Un_Q^K \times \Un_R^K}[(P^K-Q^K)^2] \pmod \Fall\]

\[\E_{\Dist^K \times \Un_P^K \times \Un_Q^K \times \Un_R^K}[\epsilon(K)\Dist^K(L)(P^K-Q^K)^2] \leq \Dist^K(L) \E_{\Dist^K \mid L \times \Un_P^K \times \Un_Q^K \times \Un_R^K}[(P^K-Q^K)^2] \pmod \Fall\]

\[\epsilon(K)\E_{\Dist^K \times \Un_P^K \times \Un_Q^K \times \Un_R^K}[(P^K-Q^K)^2] \leq \E_{\Dist^K \mid L \times \Un_P^K \times \Un_Q^K \times \Un_R^K}[(P^K-Q^K)^2] \pmod {\Fall_L}\]

Applying Theorem~\ref{thm:uniq} to the right hand side, we conclude

\[\epsilon(K)\E_{\Dist^K \times \Un_P^K \times \Un_Q^K \times \Un_R^K}[(P^K-Q^K)^2] \equiv 0 \pmod {\Fall_L}\]

\[\E_{\Dist^K \times \Un_P^K \times \Un_Q^K \times \Un_R^K}[(P^K-Q^K)^2] \equiv 0 \pmod {\epsilon^{-1}\Fall_L}\]
\end{proof}

Theorem~\ref{thm:uniq_cond} implies that in simple scenarios, \enquote{counterfactual} optimal estimates behave as intuitively expected, assuming ${L}$ is \enquote{sufficiently unpredictable}. For example, if there is an efficient algorithm that evaluates ${f}$ correctly given the promise ${x \in L}$ then a conditional optimal polynomial-time estimator constructed using Theorem~\ref{thm:cond} will produce approximately the same values as this algorithm whether ${x}$ is in ${L}$ or not.

\section{Discussion}
\label{sec:discussion}

The motivation for optimal polynomial-time estimators comes from the desire to quantify the uncertainty originating in computational resource bounds. We used this motivation to arrive at an intuitive definition, and proceeded to show the resulting object has many properties of \enquote{normal} probability theory, justifying its interpretation as a brand of expected value. Moreover, there are associated concepts of reductions and complete problems analogous to standard constructions in average-case complexity theory. 

Thus, the class of distributional estimation problems admitting ${\EG}$-optimal estimators (or ${\ESG}$-optimal estimators) is a natural distributional complexity class. In light of the positive and negative existence results we have demonstrated, these new classes are unlikely to trivially coincide with any of the previously known classes. Mapping the boundary of these classes and understanding their relationships with other classes in average-case complexity theory seems to be ground for much further work. Moreover, it is possible to consider generalizations by including more types of computational resources e.g. space, parallelism and/or non-determinism.

As an example of a natural open problem, consider ${(\Dist_{\textsc{NP}}, f_{\textsc{NP}})}$, the complete problem for ${\textsc{SampNP}}$ resulting from Theorem~\ref{thm:complete} with ${n=1}$, ${r(k)=s(k)=k}$, ${E=E_{\textsc{NP}}}$ and ${\mathfrak{F}=\mathfrak{F}_{\textsc{NP}}}$. Theorem~\ref{thm:exists_all} implies that e.g. there is an ${\mathcal{F}_{\text{uni}}^{(2)}(\GammaPoly^2,\GammaLog^2)}$-optimal estimator for ${(\Dist_{\textsc{NP}}^\eta, f_{\textsc{NP}})}$. On the other hand, Proposition~\ref{prp:tally_fall_uni} implies that it is unlikely that there is an ${\mathcal{F}_{\text{uni}}^{(2)}(\GammaPoly^2,\Gamma_0^2)}$-optimal estimator\footnote{More precisely, it cannot exist assuming there is a unary language in ${\textsc{NP}}$ that cannot be decided by a randomized algorithm in quasi-polynomial time with bounded probability of error.}. This, however, doesn't tell us anything about the existence of an ${\mathcal{F}_{\text{uni}}^{(2)}(\GammaPoly^2,\Gamma_1^2)}$-optimal estimator. This question fits naturally into the theme of Impagliazzo's \enquote{worlds}\cite{Impagliazzo_1995}: if there is an ${\mathcal{F}_{\text{uni}}^{(2)}(\GammaPoly^2,\Gamma_0^2)}$-\emph{perfect} estimator for ${(\Dist_{\textsc{NP}}^\eta, f_{\textsc{NP}})}$ (a version of Impagliazzo's \enquote{Heuristica} which is considered unlikely), then the answer is tautologically positive. However, if there is no such perfect polynomial-time estimator then the optimal polynomial-time estimator may or may not exist, a possible new partition of \enquote{worlds}\footnote{The relation to the worlds is somewhat disturbed by the role of $O(1)$ advice. We think there is a natural variant of this question that doesn't involve advice but it is out of the present scope.}.

One area where applying these concepts seems natural is Artificial General Intelligence. Indeed, the von Neumann–Morgenstern theorem shows that perfect rational agents are expected utility maximizers but in general the exact evaluation of expected utility is intractable. It is thus natural to substitute an optimal polynomial-time estimator for utility, as the analogue of expected value in the computationally bounded case. Further illuminating the connection, Theorem~\ref{thm:exists_smp} shows how optimal polynomial-time estimators result from agnostic PAC learning.

Some results we left out of the present work show the existence of systems of optimal polynomial-time estimators that are \enquote{reflective} i.e. estimate systems of functions which depend on the estimators themselves. We constructed such systems using the Kakutani-Glicksberg-Fan theorem which requires the use of random advice strings, as in the definition of ${\EMG}$-samplers. Such systems can be used to model game theoretic behavior of computationally bounded rational agents, similarly to the use of reflective oracles\cite{Fallenstein_2015} for unbounded agents. 

Finally, we wish to express the hope that the present work will lead to incorporating more concepts from complexity theory into the theory of AGI, serving to create a stronger theoretical foundation for AI in general. The importance of building such a theoretical foundation is enormous since it is necessary to predict and control the outcome of the eventual creation of artificial agents with superhuman intelligence, an event which might otherwise trigger a catastrophe\cite{Bostrom_2014}.

% Studying the complexity classes
% Bayesian reasoning and AGI
% Parellel with reflective oracles
% Fixed point theorem and game theory
% AI safety

\appendix

\section{Appendix}

We review the definitions of hard-core predicate and one-way function and state the Goldreich-Levin theorem.

We will use the notation $\Gamma_{\text{det}}:=(\Gamma_0^1,\Gamma_0^1)$, $\Gamma_{\text{rand}}:=(\GammaPoly^1,\Gamma_0^1)$, ${\Gamma_{\text{circ}}:=(\Gamma_0^1,\GammaPoly^1)}$.

\begin{samepage}
\begin{definition}
\label{def:hard_core}

Given $\Dist$ a word ensemble\footnote{The standard definition of a hard-core predicate corresponds to the case $\Dist^k=\Un^k$. Here we allow for slightly greater generality.}, $f: \Supp \Dist \rightarrow \Words$ and ${B: \Words \xrightarrow{\Gamma_{\text{det}}} \Bool}$, $B$ is a called a \emph{hard-core predicate} of $(\Dist,f)$ when for any $S: \Words \xrightarrow{\Gamma_{\textnormal{rand}}} \Bool$

\begin{equation}
\Prb_{(x,y) \sim \Dist^k \times \Un_S^k}[S^k(f(x),y)=B^k(x)] \leq \frac{1}{2} \pmod {\Fall_{\text{neg}}}
\end{equation}

\end{definition}
\end{samepage}

\begin{samepage}
\begin{definition}
\label{def:hard_core_adv}

Given $\Dist$ a word ensemble, $f: \Supp \Dist \rightarrow \Words$ and ${B: \Words \xrightarrow{\Gamma_{\text{det}}} \Bool}$, $B$ is a called a \emph{non-uniformly hard-core predicate} of $(\Dist,f)$ when for any ${S: \Words \xrightarrow{\Gamma_{\textnormal{circ}}} \Bool}$ 

\begin{equation}
\Prb_{x \sim \Dist^k}[S^k(f(x))=B^k(x)] \leq \frac{1}{2} \pmod {\Fall_{\text{neg}}}
\end{equation}

\end{definition}
\end{samepage}

\begin{samepage}
\begin{definition}

$f: \Words \Alg \Words$ is called an \emph{one-way function}
when

\begin{enumerate}[(i)]

\item There is $p: \Nats \rightarrow \Nats$ polynomial s.t. $\forall x \in \Words: \T_f(x) \leq p(\Abs{x})$.

\item For any $S: \Words \xrightarrow{\Gamma_{\text{rand}}} \Words$

\begin{equation}
\Prb_{(x,y) \sim \Un^k \times \Un_S^k}[f(S^k(f(x),y))=x] \equiv 0 \pmod {\Fall_{\textnormal{neg}}}
\end{equation}

\end{enumerate}

\end{definition}
\end{samepage}

\begin{samepage}
\begin{definition}

$f: \Words \Alg \Words$ is called a \emph{non-uniformly hard to invert} one-way function
when

\begin{enumerate}[(i)]

\item There is $p: \Nats \rightarrow \Nats$ polynomial s.t. $\forall x \in \Words: \T_f(x) \leq p(\Abs{x})$.

\item For any $S: \Words \xrightarrow{\Gamma_{\text{circ}}} \Words$

\begin{equation}
\Prb_{x \sim \Un^k}[f(S^k(f(x)))=x] \equiv 0 \pmod {\Fall_{\textnormal{neg}}}
\end{equation}

\end{enumerate}

\end{definition}
\end{samepage}

It is easy to see that any non-uniformly hard-core predicate is in particular a hard-core predicate and any non-uniformly hard to invert one-way function is in particular a one-way function.

The following appears in \cite{Goldreich_2008} as Theorem 7.7. Here we state it in the notation of the present work.

\begin{theorem}[Goldreich-Levin]
\label{thm:goldreich_levin}

Consider a one-way function ${f: \Words \Alg \Words}$. Let\\ $\Dist^k:=\Un^{2k}$, $f_{\textnormal{GL}}: \Supp \Dist \rightarrow \Words$ and ${B: \Words \xrightarrow{\Gamma_{\textnormal{det}}} \Bool}$ be s.t. for any $x,y \in \WordsLen{k}$, $f_{\textnormal{GL}}(xy)=\Chev{f(x),y}$ and ${B^k(xy)=x \cdot y}$. Then, $B$ is a hard-core predicate of $(\Dist, f_{\textnormal{GL}})$.

\end{theorem}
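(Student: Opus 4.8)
The statement is the classical Goldreich--Levin theorem, and the proof is the standard one (cf. \cite{Goldreich_2008}); the only framework-specific point is to check that the reconstruction algorithm is a legitimate $\Gamma_{\textnormal{rand}}$-scheme. I would argue by contraposition. Suppose $B$ is \emph{not} a hard-core predicate of $(\Dist, f_{\textnormal{GL}})$, i.e. there is a polynomial-time randomized algorithm without advice $S: \Words \xrightarrow{\Gamma_{\textnormal{rand}}} \Bool$ for which $\Prb_{(xy,r) \sim \Un^{2k} \times \Un_S^k}[S^k(\Chev{f(x),y},r) = x\cdot y] \leq \tfrac12 \pmod{\Fall_{\textnormal{neg}}}$ fails. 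Unwinding the definition of $\Fall_{\textnormal{neg}}$, this yields $c \in \Nats$, $\epsilon \in \Reals^{>0}$ and an infinite set $I \subseteq \Nats$ such that for every $k \in I$,
\[
\Prb_{(x,y) \sim \Un^k \times \Un^k,\, r \sim \Un_S^k}\big[S^k(\Chev{f(x),y},r) = x\cdot y\big] \geq \tfrac12 + \delta(k), \qquad \delta(k) := \epsilon k^{-c}.
\]
The target is to build $I_f: \Words \xrightarrow{\Gamma_{\textnormal{rand}}} \Words$ with $\Prb_{(x,y) \sim \Un^k \times \Un_{I_f}^k}[f(I_f^k(f(x),y)) = x] \geq \mathrm{poly}(k)^{-1}$ for all $k$ in an infinite set, contradicting the one-wayness of $f$.

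Next, a Markov-type averaging argument shows that for each $k \in I$ at least a $\delta(k)$-fraction of $x \in \WordsLen{k}$ are ``good'' in the sense that $\Prb_{y,r}[S^k(\Chev{f(x),y},r) = x\cdot y] \geq \tfrac12 + \tfrac{\delta(k)}{2}$, and it suffices to invert $f(x)$ for good $x$. Fix such an $x$. Recovering $x$ reduces to recovering each bit $x_i = x \cdot \mathbf{e}_i$ ($\mathbf{e}_i$ the string with a single $1$ in position $i$), for which I would use the self-correction identity $x \cdot (y \oplus \mathbf{e}_i) = (x\cdot y) \oplus x_i$. The Goldreich--Levin device is to sample $\ell := \Ceil{\log(2k/\delta(k)^2)} + O(1) = O(\log k)$ uniform strings $s_1,\dots,s_\ell \in \WordsLen{k}$ and to enumerate all $2^\ell = \mathrm{poly}(k)$ candidates for $(x\cdot s_1,\dots,x\cdot s_\ell)$; for the correct candidate, $x\cdot\big(\bigoplus_{j \in T} s_j\big) = \bigoplus_{j\in T}(x\cdot s_j)$ is known for every nonempty $T \subseteq [\ell]$, and the $2^\ell - 1$ vectors $\{\bigoplus_{j\in T}s_j\}_{\varnothing \neq T \subseteq [\ell]}$ are uniform and pairwise independent. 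For each $i \in [k]$ I would then set
\[
\widehat{x}_i := \mathrm{maj}_{\varnothing \neq T \subseteq [\ell]}\Big( S^k\big(\Chev{f(x),\,\textstyle\bigoplus_{j\in T}s_j \oplus \mathbf{e}_i},\, r_{T,i}\big) \;\oplus\; \textstyle\bigoplus_{j\in T}(x\cdot s_j)\Big),
\]
with independent coins $r_{T,i}$ for each invocation of $S$. Pairwise independence plus Chebyshev's inequality gives $\Prb[\widehat{x}_i \neq x_i] < \tfrac1{2k}$ for good $x$ and the correct candidate, so a union bound over $i$ yields $\widehat{x} = x$ with probability $> \tfrac12$. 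The inverter $I_f$ runs this for all $2^\ell$ candidates, applies $f$ to each resulting string, and outputs one whose image equals the given $f(x)$.

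Finally I would collect the bounds: $I_f$ runs in time polynomial in $k$, uses polynomially many random bits (the $s_j$'s and the coins for the $\mathrm{poly}(k)$ invocations of $S$) and no advice, hence is a genuine $\Gamma_{\textnormal{rand}}$-scheme. For $k \in I$, conditioning on $x$ being good (probability $\geq \delta(k)$) and on the candidate tuple being correct (probability $2^{-\ell} \geq \mathrm{poly}(k)^{-1}$), $I_f^k(f(x),\cdot)$ outputs $x$ with probability $> \tfrac12$, so $\Prb[f(I_f^k(f(x),y)) = x] \geq \delta(k)2^{-\ell-1} = \mathrm{poly}(k)^{-1}$ on the infinite set $I$ --- a function not in $\Fall_{\textnormal{neg}}$, contradicting one-wayness. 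The main technical obstacle is the combinatorial core of the reconstruction: verifying the pairwise independence and uniformity of $\{\bigoplus_{j\in T} s_j\}$, running the Chebyshev estimate with $\ell$ chosen so that all probabilities compose to $1/\mathrm{poly}$, and keeping every resource bound polynomial so that $I_f$ really lies in $\Gamma_{\textnormal{rand}}$; the surrounding quantifier bookkeeping (turning ``fails $\pmod{\Fall_{\textnormal{neg}}}$'' into a concrete polynomial advantage on an infinite set, and back for the conclusion) is routine but should be written out.
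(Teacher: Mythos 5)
Your proposal is correct: it is the standard Goldreich--Levin reconstruction argument (contraposition, the pairwise-independent $O(\log k)$-seed trick, Chebyshev plus majority vote, and the resource check that the inverter is a genuine $\Gamma_{\textnormal{rand}}$-scheme), which is exactly the proof the paper relies on --- the paper gives no proof of its own and simply cites Theorem 7.7 of \cite{Goldreich_2008}, restated in its notation. The only cosmetic wrinkle is that after saying the inverter enumerates all $2^\ell$ candidate tuples you still charge a $2^{-\ell}$ factor for ``guessing'' the correct one; this is merely an over-conservative bound, not an error, since $\delta(k)2^{-\ell-1}$ is still $1/\mathrm{poly}(k)$.
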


There is also a non-uniform version of the theorem which is not stated in \cite{Goldreich_2008}, but its proof is a straightforward adaptation.

\begin{theorem}
\label{thm:goldreich_levin_circ}

In the setting of Theorem~\ref{thm:goldreich_levin}, assume $f$ is non-uniformly hard to invert. Then $B$ is a non-uniformly hard-core predicate of $(\Dist, f_{\textnormal{GL}})$.
\end{theorem}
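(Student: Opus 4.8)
The plan is to mirror the proof of the Goldreich–Levin theorem (Theorem~\ref{thm:goldreich_levin}) in the non-uniform setting, replacing every appeal to a uniform polynomial-time randomized inverter by an appeal to a family of polynomial-size circuits (i.e.\ a $\Gamma_{\text{circ}}$-scheme). Recall that the uniform proof proceeds by contradiction: given an algorithm $S^k$ that predicts $B^k(xy)=x\cdot y$ from $\Chev{f(x),y}$ with advantage $\geq \epsilon(k)$ on an $\epsilon(k)$-fraction of inputs $x$, one builds an inverter for $f$. The construction uses: (1) fixing $x$ and viewing $S$ as a noisy oracle for the linear function $y\mapsto x\cdot y$; (2) the list-decoding algorithm for the Hadamard code, which uses random sampling to reduce the number of queries and recovers a short list of candidate preimages $x'$; (3) checking each candidate by applying $f$ and comparing with $f(x)$. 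The only sources of nondeterminism in this reduction are the internal coin tosses used in the list-decoding step.

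The key steps, in order, would be as follows. First, assume for contradiction that $B$ is \emph{not} a non-uniformly hard-core predicate of $(\Dist,f_{\textnormal{GL}})$, i.e.\ there is ${S: \Words \xrightarrow{\Gamma_{\textnormal{circ}}} \Bool}$ with $\Prb_{x \sim \Dist^k}[S^k(f_{\textnormal{GL}}(x))=B^k(x)] \geq \frac{1}{2}+\epsilon(k)$ for $\epsilon$ not negligible along some infinite set of $k$. Second, run the standard averaging argument to pass to an $\Omega(\epsilon(k))$-fraction of $x$-values on which $S^k$ has advantage $\Omega(\epsilon(k))$. Third, apply the Goldreich–Levin list-decoding machinery to $S^k(f(x),\cdot)$; this yields a randomized algorithm that, on input $f(x)$, outputs a list of $\mathrm{poly}(1/\epsilon(k))$ strings containing $x$ with constant probability. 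Fourth, derandomize the random sampling by hardwiring a good choice of the sampling randomness into the advice string — this is the place where we genuinely use that we are allowed polynomial-length advice ($\Gamma_{\text{circ}}$ rather than $\Gamma_{\text{rand}}$): a suitable fixed choice of the $O(\log 1/\epsilon)$ sampled pairs works for a noticeable fraction of $x$, by an averaging/union-bound argument, so we obtain ${S': \Words \xrightarrow{\Gamma_{\textnormal{circ}}} \Words}$. Fifth, compose with $f$ (which is polynomial-time computable, hence a $\Gamma_{\text{circ}}$-scheme after hardwiring) to check the candidate list and output a correct preimage; this contradicts the assumption that $f$ is non-uniformly hard to invert.

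The main obstacle, and the step deserving the most care, is the derandomization in step four: one must verify that fixing the internal randomness of the list-decoder into the advice leaves us with a circuit (equivalently, a $\Gamma_{\text{circ}}$-scheme) whose inversion probability over $x \sim \Un^k$ is still not negligible. This follows from the standard observation that the list-decoder's success probability, over its own coins, is at least a constant for each good $x$, so by averaging there is a single setting of the coins that succeeds for at least a constant fraction of the good $x$'s; that fraction times $\Omega(\epsilon(k))$ is still not negligible, contradicting non-uniform hardness. Since everything else is a verbatim transcription of the argument in \cite{Goldreich_2008} with \enquote{polynomial-time randomized} replaced by \enquote{polynomial-size circuit} and \enquote{negligible} handled exactly as in Definition~\ref{def:hard_core_adv} and Definition~\ref{def:hard_core}, I would present this as a short proof sketch emphasizing only the advice-hardwiring point, exactly as the paper already does for Theorem~\ref{thm:hard_core_circ}.

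\begin{proof}[Proof sketch]
The argument is identical to the proof of Theorem~\ref{thm:goldreich_levin} given in \cite{Goldreich_2008}, with two modifications. First, wherever that proof invokes a uniform polynomial-time randomized predictor or inverter, we instead invoke a $\Gamma_{\textnormal{circ}}$-scheme, i.e.\ a polynomial-time algorithm with advice of polynomial length. Second, the list-decoding step of the Goldreich–Levin reduction uses internal randomness only to sample $O(\log(1/\epsilon(k)))$ pairs; by an averaging argument, for each $k$ there is a fixed choice of these sampled pairs which is simultaneously good for a constant fraction of the inputs $x$ that were already good for the predictor $S$. Hardwiring such a choice into the advice string turns the list-decoder into a $\Gamma_{\textnormal{circ}}$-scheme. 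Composing it with $f$ (which is polynomial-time, hence expressible as a $\Gamma_{\textnormal{circ}}$-scheme) to test the resulting candidate preimages produces a $\Gamma_{\textnormal{circ}}$-scheme that inverts $f$ with non-negligible probability over $x \sim \Un^k$, contradicting the assumption that $f$ is non-uniformly hard to invert. Hence $B$ is a non-uniformly hard-core predicate of $(\Dist, f_{\textnormal{GL}})$.
\end{proof}
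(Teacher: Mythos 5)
Your proposal is correct and is exactly the argument the paper has in mind: the paper omits the proof, remarking only that it is a straightforward adaptation of the uniform Goldreich--Levin proof in \cite{Goldreich_2008}, and your sketch supplies precisely that adaptation, with the one genuinely non-uniform ingredient (hardwiring a good setting of the list-decoder's coins into the polynomial-length advice via an averaging argument, then composing with $f$ to test candidates) correctly identified and justified. Nothing further is needed.
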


\section*{Funding}

This work was partially supported by the Machine Intelligence Research Institute in Berkeley, California.

\section*{Acknowledgments}

We thank Patrick LaVictoire for many useful discussions and also suggesting some corrections in the paper. We thank Scott Aaronson for reading a draft version of the paper and providing important advice on the form of presentation. We thank an anonymous reviewer, writing by the request of the Open Philanthropy Project, who read a draft version of the paper and provided useful suggestions. We thank Rob Bensinger for helping to polish some of the English. We thank Brian Njenga for locating typos. We thank Scott Garrabrant for useful discussions.

\bibliographystyle{plain}
\bibliography{Optimal_Estimators}

\end{document}